\newif\ifcomments   
\newif\ifanon       
\newif\ifcrypto     
\newif\ifllncs      
\newif\ifpublish    
  \newtheorem{construction}[theorem]{Construction} 
  \theoremstyle{plain}
  \newtheorem{theorem}{Theorem}[section]
  \newtheorem{lemma}[theorem]{Lemma}
  \newtheorem{claim}[theorem]{Claim}
  \newtheorem{corollary}[theorem]{Corollary}
  \newtheorem{definition}[theorem]{Definition}
  \newtheorem{remark}[theorem]{Remark}
  \newtheorem{conjecture}[theorem]{Conjecture} 
  \newtheorem{construction}[theorem]{Construction} 
\newtheorem{fact}[theorem]{Fact}
\newcommand{\YTnote}[1]{{\color{blue} [YT: #1]}}
\newcommand{\pnote}[1]{{\color{red} [Prab: #1]}}
\newcommand{\AG}[1]{{\color{green} [AG: #1]}}
\newcommand{\fatih}[1]{{\color{magenta} [F: #1]}}
\newcommand{\YTnote}[1]{}
\newcommand{\pnote}[1]{}
\newcommand{\AG}[1]{}
\newcommand{\fatih}[1]{}
\newcommand{\swaptest}{\mathsf{SwapTest}}
\newcommand{\permtest}{\mathsf{PermTest}}
\newcommand{\hybrid}{\mathsf{H}}
\newcommand{\adversary}{\mathcal{A}}
\newcommand{\ketbra}[2]{\ket{#1}\!\bra{#2}}
\renewcommand{\cal}[1]{\mathcal{#1}}
\newcommand{\C}{\mathbb{C}}
\newcommand{\N}{\mathbb{N}}
\newcommand{\E}{\mathop{\mathbb{E}}}
\newcommand{\Tr}{\mathrm{Tr}}
\newcommand{\distinct}{\mathsf{distinct}}
\newcommand{\sym}{\mathsf{sym}}
\newcommand{\eps}{\varepsilon}
\newcommand{\Haar}{\mathscr{H}}
\newcommand{\pqstates}[1]{{\cal S}\left( \C^{2^{#1}} \right)}
\DeclareMathOperator{\TD}{TD}
\newcommand{\norm}[1]{{\left\lVert#1\right\rVert}}
\newcommand{\bit}{{\{0, 1\}}}
\newcommand{\bfX}{\mathbf{X}}
\newcommand{\bfZ}{\mathbf{Z}}
\newcommand{\secparam}{\lambda}
\DeclareFontFamily{U}{skulls}{}
\DeclareFontShape{U}{skulls}{m}{n}{ <-> skull }{}
\newcommand{\prob}{{\sf Pr}}
\newcommand{\nn}{\nonumber}
\newcommand{\negl}{\mathsf{negl}}
\newcommand{\poly}{\mathsf{poly}}
\newcommand{\secp}{\secparam}
\newcommand{\unique}{\mathsf{uni}}
\newcommand{\dis}{\mathsf{dis}}
\newcommand{\Ex}{\E}
\newcommand{\cD}{\mathcal{D}}
\newcommand{\cH}{\mathcal{H}}
\newcommand{\cI}{\mathcal{I}}
\newcommand{\cL}{\mathcal{L}}
\newcommand{\cS}{\mathcal{S}}
\newcommand{\cU}{\mathcal{U}}
\newcommand{\qprf}{\mathsf{QPRF}}
\newcommand{\qprp}{\mathsf{QPRP}}
\newcommand{\expand}{\mathsf{Expand}}
\newcommand{\veps}{\varepsilon}
\newcommand{\ugets}{\xleftarrow{\$}}
\renewcommand{\set}[1]{\{#1\}}
\newcommand{\expbraket}[3]{\langle#1|#2|#3\rangle}
\newcommand{\inner}[2]{\langle#1|#2\rangle}
\newcommand{\bracS}[1]{\left[ #1 \right]}
\newcommand{\bracC}[1]{\left\{ #1 \right\}}
\newcommand{\alice}{\mathcal{A}}
\newcommand{\tuni}[3]{\mathcal{T}_{\unique^{ #1 }_{ { #2 } , { #3 } }}}
\newcommand{\tdis}[3]{\mathcal{T}_{\dis^{ #1 }_{ { #2 } , { #3 } }}}
\newcommand{\hamming}{\mathsf{size}}
\newcommand{\type}{\mathsf{type}}
\newcommand{\setT}{\mathsf{set}}
\newcommand{\freq}[2]{\mathsf{freq}_{#2} (#1)}
\newcommand{\supp}{\mathsf{supp}}
\newcommand{\inv}{\mathsf{Inv}}
\newcommand{\cmp}{\comp}
\newcommand{\setup}{\mathsf{Setup}}
\newcommand{\enc}{\mathsf{Enc}}
\newcommand{\dec}{\mathsf{Dec}}
\newcommand{\distinguisher}{{\cal D}}
\newcommand{\sk}{\mathsf{sk}}
\newcommand{\ct}{\mathsf{ct}}
\newcommand{\pk}{\mathsf{pk}}
\newcommand{\smsetup}{\mathsf{setup}}
\newcommand{\smenc}{\mathsf{enc}}
\newcommand{\smdec}{\mathsf{dec}}
\newcommand{\sign}{\mathsf{Sign}}
\newcommand{\ver}{\mathsf{Ver}}
\newcommand{\aux}{\mathsf{Aux}}
\newcommand{\comp}{{\sf Comp}}
\newcommand{\single}{{\sf Single}}
\newcommand{\pri}{{\sf PRI}}
\newcommand{\prfsi}{{F}}
\newcommand{\haarisometry}{{\cal I}}
\newcommand{\prp}{{\sf PRP}}
\newcommand{\prf}{{\sf PRF}}
\newcommand{\sphere}{\mathcal{S}}
\date{}
\title{Pseudorandom Isometries}
\author{}
\institute{}
\author{Prabhanjan Ananth\thanks{\texttt{prabhanjan@cs.ucsb.edu}} \\{\small UCSB} \and Aditya Gulati\thanks{\texttt{adityagulati@ucsb.edu}} \\ {\small UCSB} \and Fatih Kaleoglu\thanks{\texttt{kaleoglu@ucsb.edu}} \\ {\small UCSB} \and Yao-Ting Lin\thanks{\texttt{yao-ting\_lin@ucsb.edu}} \\ {\small UCSB}} 
\begin{document}

\maketitle

\abstract{
\noindent We introduce a new notion called ${\cal Q}$-secure pseudorandom isometries (PRI). A pseudorandom isometry is an efficient quantum circuit that maps an $n$-qubit state to an $(n+m)$-qubit state in an isometric manner. In terms of security, we require that the output of a $q$-fold PRI on $\rho$, for $ \rho \in {\cal Q}$, for any polynomial $q$, should be computationally indistinguishable from the output of a $q$-fold Haar isometry on $\rho$. 
\par By fine-tuning ${\cal Q}$, we recover many existing notions of pseudorandomness. We present a construction of PRIs and assuming post-quantum one-way functions, we prove the security of ${\cal Q}$-secure pseudorandom isometries (PRI) for different interesting settings of ${\cal Q}$. 
\par We also demonstrate many cryptographic applications of PRIs, including, length extension theorems for quantum pseudorandomness notions, message authentication schemes for quantum states, multi-copy secure public and private encryption schemes, and succinct quantum commitments. 
}

\ifllncs

\else
\newpage
\tableofcontents
\newpage
\fi

\newcommand{\qclass}{\mathcal{Q}}
\newcommand{\haarunitary}{\mathcal{U}}

\section{Introduction}

\noindent Pseudorandomness has played an important role in theoretical computer science. In classical cryptography, the notions of pseudorandom generators and functions have been foundational, with applications to traditional and advanced encryption schemes, signatures, secure computation, secret sharing schemes, and proof systems. On the other hand, we have only just begun to scratch the surface of understanding the implications pseudorandomness holds for quantum cryptography, and there is still a vast uncharted territory waiting to be explored.
\par When defining pseudorandomness in the quantum world, there are two broad directions one can consider. 
\ \\
\paragraph{Quantum States.} Firstly, we can study pseudorandomness in the context of quantum states. Ji, Liu, and Song (JLS)~\cite{JLS18} proposed the notion of a pseudorandom quantum state generator, which is an efficient quantum circuit that on input a secret key $k$ produces a quantum state (referred to as a pseudorandom quantum state) that is computationally indistinguishable from a Haar state as long as $k$ is picked uniformly at random and moreover, the distinguisher is given many copies of the state. JLS and the followup works by Brakerski and Shmueli~\cite{BS19,BrakerskiS20} presented constructions of pseudorandom quantum state generators from one-way functions. Ananth, Qian, and Yuen~\cite{AQY21} defined the notion of a pseudorandom function-like quantum state generator, which is similar to pseudorandom quantum state generators, except that the same key can be used to generate multiple pseudorandom quantum states. These two notions have many applications, including in quantum gravity theory~\cite{BFV19,ABFGVZZ23}, quantum machine learning~\cite{HBCLMNBKP22}, quantum complexity~\cite{Kretschmer21}, and quantum cryptography~\cite{AQY21,MY22}. Other notions of pseudorandomness for quantum states have also been recently explored~\cite{ABFGVZZ23,ABKKK23,GLGEYQ23}.

\paragraph{Quantum Operations.} Secondly, we can consider pseudorandomness in the context of quantum operations. This direction is relatively less explored. One prominent example, proposed in the same work of~\cite{JLS18}, is the notion of pseudorandom unitaries, which are efficient quantum circuits such that any efficient distinguisher should not be able to distinguish whether they are given oracle access to a pseudorandom unitary or a Haar unitary. Establishing the feasibility of pseudorandom unitaries could have ramifications for quantum gravity theory (as noted under open problems in~\cite{GLGEYQ23}),  quantum complexity theory~\cite{Kretschmer21}, and cryptography~\cite{GJMZ23}. Unfortunately, to date, we do not have any provably secure construction of pseudorandom unitaries, although some candidates have been proposed in~\cite{JLS18}. A recent independent work by by Lu, Qin, Song, Yao, and Zhao~\cite{LQSYZ23} takes an important step towards formulating and investigating the feasibility of pseudorandomness of quantum operations. They define a notion called pseudorandom state scramblers that isometrically maps a quantum state $\ket{\psi}$ into another state $\ket{\psi'}$ such that $t$ copies of $\ket{\psi'}$, where $t$ is a polynomial, is computationally indistinguishable from $t$ copies of a Haar state. They establish its feasibility based on post-quantum one-way functions. In the same work, they also explored cryptographic applications of pseudorandom state scramblers. \\

\noindent Although pseudorandom state scramblers can be instantiated from one-way functions, the definition inherently allows for scrambling only a single state. On the other extreme, pseudorandom unitaries allow for scrambling polynomially many states but unfortunately, establishing their feasibility remains an important open problem. Thus, we pose the following question: 

\begin{quote}
\begin{center}
{\em Is there a pseudorandomness notion that can scramble polynomially many states and\\ can be provably instantiated based on well studied cryptographic assumptions?}    
\end{center}
\end{quote}

\paragraph{Our Work in a Nutshell.} We address the above question in this work. Our contribution is three-fold:
\begin{enumerate}
    \item \underline{\textsc{New definitions}}: We introduce a new notion called ${\cal Q}$-secure pseudorandom isometries that can be leveraged to scramble many quantum states coming from the set ${\cal Q}$.  
    \item \underline{\textsc{Construction}}: We present a construction of pseudorandom isometries and investigate its security for different settings of ${\cal Q}$.
    \item \underline{\textsc{Applications}}: Finally, we explore many cryptographic applications of pseudorandom isometries. 
\end{enumerate}

\subsection{Our Results}
\label{sec:ourresults}
\noindent Roughly speaking, a pseudorandom isometry is an efficient quantum circuit, denoted by $\pri_k$, parameterized by a key\footnote{We denote $\secparam$ to be the security parameter.} $k \in \{0,1\}^{\secparam}$ that takes as input an $n$-qubit state and outputs an $(n+m)$-qubit state with the guarantee that $\pri_k$ is functionally equivalent to an isometry. In terms of security, we require that any efficient distinguisher should not be able to distinguish whether they are given oracle access to $\pri_k$ or a Haar isometry\footnote{The Haar distribution of isometries is defined as follows: first, sample a unitary from the Haar measure, and then set the isometry, that on input a quantum state $\ket{\psi}$, first initializes an ancilla register containing zeroes and then applies the Haar unitary on $\ket{\psi}$ and the ancilla register.} ${\cal I}$. We consider a more fine-grained version of this definition in this work, where we could fine-tune the set of allowable queries. 
\par More precisely, we introduce a concept called \emph{$(n,n+m)$-$\qclass$-secure-pseudorandom isometries} (PRIs). Let us first consider a simplified version of this definition. Suppose $n(\secparam),q(\secparam)$ are polynomials and $\qclass_{n,q,\secparam}$ is a subset of $nq$-qubit (mixed) states. Let $\qclass = \{\qclass_{n,q, \secparam}\}_{\secparam \in \mathbb{N}}$. The definition states that it should be computationally infeasible to distinguish the following two distributions: for any polynomials $q$,
\begin{itemize}
    \item $\left( \rho,\ \pri_k^{\otimes q} \left( \rho \right) \right)$, 
    \item $\left( \rho,\ {\cal I}^{\otimes q} \left( \rho \right) ({\cal I}^{\dagger})^{\otimes q}\right)$,
\end{itemize}
where $\rho \in \qclass_{n,q,\secparam}$ and ${\cal I}$ is a Haar isometry. 
\par Let us consider some examples. 
\begin{enumerate}
    \item If $\qclass_{n,q,\secparam}=\{\ket{0^n}^{\otimes q} \}$ then this notion implies a pseudorandom state generator (PRSG)~\cite{JLS18}.
    \item If $\qclass_{n,q,\secparam}$ consists of all possible $q$ computational basis states then this notion implies a pseudorandom function-like state generator  (PRFSG) \cite{AQY21,AGQY22}.
    \item If $\qclass_{n,q,\secparam}$ consists of $q$-fold tensor of all possible $n$-qubit states then this notion implies a pseudorandom state scrambler (PSS)~\cite{LQSYZ23}. 
\end{enumerate}
\noindent We can generalize this definition even further. Specifically, we allow the adversary to hold an auxiliary register that is entangled with the register on which the $q$-fold isometry ($\pri_k$ or Haar) is applied and we could require the stronger security property that the above indistinguishability should hold even in this setting. 
\par In more detail, $\rho$ is now an $(nq+\ell)$-qubit state and the distinguisher is given either of the following: 
\begin{itemize}
\item $\left( \rho, \left( I_{\ell} \otimes \pri_k^{\otimes q} \right) (\rho) \right)$, 
\item $\left( \rho, \left( I_{\ell} \otimes {\cal I}_k^{\otimes q} \right) \rho \left( I_{\ell} \otimes {{\cal I}^{\dagger}_k}^{\otimes q} \right) \right)$
\end{itemize}
where $I_{\ell}$ is an $\ell$-qubit identity operator. We can correspondingly define $\qclass$ to be instead parameterized by $n, q, \ell, \secparam$, and we require $\rho \in \qclass_{n,q,\ell,\secparam}$.
\par The above generalization captures the notion of pseudorandom isometries (discussed in the beginning of~\Cref{sec:ourresults}) against selective queries. Specifically, if $\pri_k$ is a ${\cal Q}$-secure pseudorandom isometry (according to the above-generalized definition), where ${\cal Q}$ is the set of all possible $nq+\ell$-qubit states then indeed it is infeasible for an efficient distinguisher making selective queries\footnote{Roughly speaking, the selective query setting is one where all the queries are made at the same time. In contrast, in the adaptive query setting, each query could depend on the previous queries and answers.} to distinguish whether it has oracle access to $\pri_k$ or a Haar isometry oracle. 
\par Thus, by fine-tuning ${\cal Q}$, we recover many notions of pseudorandomness in the context of both quantum states and operations. 

\paragraph{Construction.} We first study the feasibility of PRIs.
\par We present a construction of PRIs and investigate its security for different settings of $\qclass$. On input an $n$-qubit state $\ket{\psi} = \sum_{x \in \{0,1\}^n} \alpha_{x } \ket{x}$, define $\pri_k \ket{\psi}$ as follows: 
$$\boxed{\pri_k \ket{\psi} = \frac{1}{\sqrt{2^m}}\sum_{x \in \{0,1\}^{n},y \in \{0,1\}^m} \alpha_x \cdot {\omega}_p^{f_{k_1}(x||y)} \ket{g_{k_2}(x||y)}}$$

\noindent In the above construction, we parse $k$ as a concatenation of two $\secparam_1$-bit strings $k_1$ and $k_2$, where $\secparam = 2\secparam_1$. The first key $k_1$ would serve as a key for a pseudorandom function $f:\{0,1\}^{\secparam_1} \times \{0,1\}^{n+m} \rightarrow \mathbb{Z}_p$, where $p\sim 2^{\secparam_1}$ is an integer. The second key $k_2$ would serve as a key for a pseudorandom permutation $g:\{0,1\}^{\secparam_1} \times \{0,1\}^{n+m} \rightarrow \{0,1\}^{n+m}$. Both $f$ and $g$ should satisfy quantum query security. Moreover, both of them can be instantiated from post-quantum one-way functions~\cite{Zha12,Zha16}. We require $n$ to be a polynomial in $\secparam$, larger than $\secparam$, and similarly, we set $m$ to be a polynomial in $\secparam$, larger than $\secparam$. 

The above construction was first studied by~\cite{BBBS23,ABFGVZZ23}, perhaps surprisingly, in completely different contexts. Brakerski, Behera, Sattath, and Shmueli~\cite{BBBS23} introduced a new notion of PRSG and PRFSG and instantiated these two notions using the above construction. Aaronson, Bouland, Fefferman, Ghosh, Vazirani, Zhang, and Zhou~\cite{ABFGVZZ23} introduced the notion of pseudo-entanglement and instantiated this notion using the above construction. 
\noindent An important property of this construction is that it is \emph{invertible}, that is, given the key $k$, it is efficient to implement $\inv_k$ such that $\inv_k \pri_k$ is the identity map. 
\par It is natural to wonder if it is possible to modify the above construction to have binary phase as against $p^{th}$ roots of unity, for a large $p$. There is some recent evidence to believe since~\cite{HBK23} showed that pseudorandom unitaries cannot just have real entries.

\paragraph{Security.} We look at different possible settings of $\qclass$ and study their security\footnote{We only consider a simplified version of these settings here and in the technical sections, we consider the most general version.}. \\

\noindent \underline{\textsc{I. Haar states.}} Our main contribution is showing that the output of $\pri_k$ on many copies of many $n$-qubit Haar states, namely, $(\ket{\psi_1}^{\otimes t},\ldots,\ket{\psi_s}^{\otimes t})$ with $t$ being a polynomial and $\ket{\psi_1},\ldots,\ket{\psi_s}$ are Haar states, is computationally indistinguishable from a Haar isometry on $(\ket{\psi_1}^{\otimes t},\allowbreak \ldots,\allowbreak \ket{\psi_s}^{\otimes t})$. Moreover, the computational indistinguishability should hold even if $(\ket{\psi_1}^{\otimes t},\allowbreak \ldots,\allowbreak \ket{\psi_s}^{\otimes t})$ is given to the QPT adversary. In other words, $\pri_k$ can be used to map maximally mixed states on smaller dimensional symmetric subspaces onto pseudorandom states on larger dimensional symmetric subspaces. We consider the following setting:
\begin{itemize}
    \item Let $t(\secparam)$ and $s(\secparam)$ be two polynomials. Let $q= s \cdot t$ and $\ell=n \cdot q$. 
    \item We define $\qclass_{{\sf Haar}}=\left\{ \qclass_{n,q,\ell,\secparam} \right\}_{\secparam \in \mathbb{N}}$, where $\qclass_{n,q,\ell,\secparam}$ is defined as follows\footnote{$\Haar_n$ denotes the Haar distribution on $n$-qubit Haar states.}:
    $$\qclass_{n,q,\ell,\secparam} =\left\{\mathbb{E}_{\ket{\psi_1},\ldots,\ket{\psi_s} \leftarrow \Haar_{n} }\left[ {\color{red} \bigotimes_{i=1}^{s}\ketbra{\psi_i}{\psi_i}^{\otimes t}} \otimes {\color{blue} \bigotimes_{i=1}^{s}\ketbra{\psi_i}{\psi_i}^{\otimes t}} \right]\right\}$$
    Recall that the first $\ell$ qubits (in the above case, it is the first $t$ {\color{red} red}-colored copies of $n$-qubit Haar states $\ket{\psi_1},\ldots,\ket{\psi_s}$) are not touched. On the next $q$ $n$-qubit states (colored in {\color{blue} blue}), either $\pri_k^{\otimes q}$ or ${\cal I}^{\otimes q}$ is applied.   
\end{itemize}

\noindent We prove the following. 

\begin{theorem}[Informal]
    Assuming post-quantum one-way functions exist, $\pri_k$ is a $\qclass_{{\sf Haar}}$-secure pseudorandom isometry. 
\end{theorem}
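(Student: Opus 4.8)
## Proof Proposal

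The plan is to prove the theorem via a sequence of hybrids that gradually replaces the pseudorandom building blocks with truly random ones, and then argues that the resulting "random-function, random-permutation" isometry is statistically close to a Haar isometry when evaluated on the relevant input states. First I would invoke the quantum query security of the pseudorandom function $f$ and the pseudorandom permutation $g$: moving from $\pri_k$ (with $f_{k_1}, g_{k_2}$) to an idealized circuit $\widetilde{\pri}$ that uses a truly random function $F \colon \bit^{n+m} \to \mathbb{Z}_p$ and a truly random permutation $G \colon \bit^{n+m} \to \bit^{n+m}$ incurs only negligible advantage, since the adversary together with the $q$-fold application of the isometry on a fixed (polynomially-sized) input makes at most polynomially many queries to $f$ and $g$. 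Here I would also use the standard fact that a random permutation is statistically indistinguishable (under polynomially many queries) from a random function from $\bit^{n+m}$ to $\bit^{n+m}$, so I can replace $G$ by a random function as well — this simplifies the combinatorics, since we no longer need to track the permutation structure.

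Next, the core of the argument is purely information-theoretic: show that for $\rho \in \qclass_{n,q,\ell,\secparam}$, the state $\bigl(I_\ell \otimes \widetilde{\pri}^{\otimes q}\bigr)(\rho)$ is statistically close to $\bigl(I_\ell \otimes {\cal I}^{\otimes q}\bigr)\rho\bigl(I_\ell \otimes ({\cal I}^\dagger)^{\otimes q}\bigr)$. The key observation is that $\rho$ is (an average over Haar states $\ket{\psi_1},\dots,\ket{\psi_s}$ of) a tensor of $t$ untouched copies and $t$ touched copies of each $\ket{\psi_i}$, which by symmetry of the Haar measure means the touched register is, after averaging, the (normalized) projector onto a direct product of symmetric subspaces — i.e. a mixture over orthonormal "type" basis states. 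Thus it suffices to understand the action of $\widetilde{\pri}^{\otimes q}$ on such highly structured inputs. On a computational basis input $\ket{x_1}\otimes\cdots\otimes\ket{x_q}$, the idealized isometry produces $\frac{1}{2^{mq/2}}\sum_{y_1,\dots,y_q} \prod_j \omega_p^{F(x_j\|y_j)} \ket{G(x_1\|y_1)}\otimes\cdots$; the random phases $\omega_p^{F(\cdot)}$ and the random relabeling $G$ are exactly the ingredients that, for $p, 2^m$ large enough compared to $q$, make this look like a Haar-random embedding. I would formalize this by a Fourier/expectation computation over $F$ and $G$ showing that the first few moments of $\widetilde{\pri}^{\otimes q}$ on these inputs match those of a Haar isometry up to $\negl(\secparam)$ error, relying on the collision probability of $G$ on distinct short strings being $\poly/2^{n+m} = \negl$, and on the phase randomness to kill cross terms — essentially a "path-recording" or moment-matching argument in the spirit of the PRS analyses of \cite{JLS18, BS19} and the pseudo-entanglement analysis of \cite{ABFGVZZ23}, adapted to the isometric / auxiliary-register setting.

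The main obstacle I expect is handling the entanglement with the untouched $\ell$-qubit register together with the many-copies structure. Unlike the plain PRS setting, here the adversary holds $t$ pristine copies of each $\ket{\psi_i}$ that are perfectly correlated with the $t$ scrambled copies, so the indistinguishability must be argued about the full joint state, not just the marginal on the output. The right way to control this is to note that, by Schur–Weyl / the symmetry of the Haar average, the joint state over the $2t$ copies of each $\ket{\psi_i}$ lives in the symmetric subspace of the combined system, so that testing $\widetilde{\pri}^{\otimes q}$ against ${\cal I}^{\otimes q}$ on $\rho$ reduces to comparing the (averaged) action on symmetric-subspace states of modest dimension $\binom{2^n + 2t - 1}{2t}$, which is still $\poly$-bounded in the sense that the relevant "number of distinct basis labels" is at most $q = st$. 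With that reduction in hand, the moment computation above goes through. A secondary subtlety is that the ancilla register of ${\cal I}$ starts in $\ket{0^m}$ while $\widetilde{\pri}$ effectively symmetrizes over a random $\ket{y}$ via $\ket{+}^{\otimes m}$ states and phases; one must check the two produce the same reduced state on the output to within negligible trace distance after averaging over $F$, which is again a collision-probability estimate. I would close by chaining the hybrid bounds: $\pri_k \approx_{\text{comp}} \widetilde{\pri} \approx_{\text{stat}} {\cal I}$, yielding $\qclass_{{\sf Haar}}$-security.
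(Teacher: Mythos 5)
Your high-level skeleton is correct and matches the paper: invoke PRF/PRP security to pass to an information-theoretic construction $G_{(f,\pi)}$, observe that the Haar-averaged query is a mixture of tensor products of type states, and then argue statistical closeness to the Haar isometry's action. However, the central technical step is missing. The adversary holds $t$ pristine copies of each $\ket{\psi_i}$ entangled with the $t$ copies that get scrambled, and the crux of the paper's argument (Lemma~\ref{lem:tuni_info}) is a concrete \emph{unentangling} mechanism: expand $\ketbra{\type_{T_i}}{\type_{T_i}}$ as $\frac{1}{(2t)!}\sum_{\sigma\in S_{2t}}\sum_{\vec{v}}\ketbra{\vec{v}}{\sigma(\vec{v})}$ and observe that, after applying $I_{nt}\otimes G_{(f,\pi)}^{\otimes t}$ and averaging over $f$, the random $\mathbb{Z}_p$-phase annihilates every permutation $\sigma$ that mixes the first $t$ coordinates (untouched) with the last $t$ coordinates (scrambled), because such a $\sigma$ changes the type of the scrambled half and $\Ex_f[\omega_p^{f(\vec{z})-f(\vec{z}')}]=0$ whenever $\type(\vec z)\neq\type(\vec z')$. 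Only the product permutations $\sigma = \sigma^1\times\sigma^2$ with $\sigma^1,\sigma^2\in S_t$ survive, which means the state factorizes as $\sigma\otimes(\text{output on scrambled half})$, and the scrambled half is then handled by the distinct-type analysis. Your appeal to ``Schur--Weyl / moment matching'' and ``phase randomness to kill cross terms'' gestures at the right ingredients but does not identify this specific splitting argument, and without it there is no route from ``inputs live in symmetric subspaces'' to a bound on the joint state.

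Two smaller issues. First, you propose replacing the random permutation $G$ by a random function on the grounds of the PRP/PRF switching lemma; but the construction applies $G$ as a unitary $O_G\ket{x}=\ket{G(x)}$, which is not even well-defined when $G$ has collisions, so this substitution cannot be done at the level of the channel — the paper keeps the permutation throughout and handles collisions by a direct counting argument (the weight on colliding output vectors is $O(\poly/2^m)$). Second, the paper packages the statistical step via an ``almost-invariance under $q$-fold Haar'' abstraction (\Cref{clm:almostinvariance:closeness} and \Cref{lem:tuni_invar}): rather than directly comparing moments of $G_{(f,\pi)}^{\otimes q}$ and a Haar isometry, it shows that $G_{(f,\pi)}^{\otimes q}(\rho)$ is close to a fixed state $\rho_{\unique_{s,t}}$ (a uniform mixture of unique type states) that is itself approximately invariant under a $q$-fold Haar twirl, and then invokes unitary invariance of the Haar measure. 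Your ``moment-matching'' framing is less precise and would require separately bounding the distance to Haar for each $\rho$; the almost-invariance route reduces all cases to one closeness check against a single target state.
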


\noindent This setting is reminiscent of {\em weak} pseudorandom functions~\cite{DN02,ABGKR14} studied in the classical cryptography literature, where we require the pseudorandomness to hold only on inputs chosen from the uniform distribution on binary strings.\\

\noindent \underline{\textsc{Application: Length Extension Theorem.}} As an application, we demonstrate a length extension theorem for PRSGs and PRFSGs. Specifically, we show how to extend the output length of both these pseudorandomness notions assuming PRIs secure against Haar queries\footnote{An $(n,n+m)$-pseudorandom isometry secure against any $\qclass$ trivially gives a PRSG or PRFSG on $n+m$ qubits. However, our length extension theorem requires the underlying PRI to only be secure against Haar queries.}. Specifically, we show the following. 

\begin{theorem}[Length Extension Theorem; Informal]
Assuming $\qclass_{{\sf Haar}}$-secure pseudorandom isometry, mapping $n$ qubits to $n+m$ qubits, and an $n$-qubit $\mathrm{PRSG}$, there exists an $n+m$-qubit $\mathrm{PRSG}$. 
\par Similarly, assuming $\qclass_{{\sf Haar}}$-secure pseudorandom isometry, mapping $n$ qubits to $n+m$ qubits, and $n$-qubit $\mathrm{PRFSG}$, there exists an $(n+m)$-qubit $\mathrm{PRFSG}$. \
\end{theorem}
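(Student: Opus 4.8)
The plan is to use the $\qclass_{\sf Haar}$-secure PRI to "stretch" a short pseudorandom state into a long one, replacing the idealized Haar-random input of the security game with the output of the underlying short PRSG/PRFSG via a hybrid argument. Let me describe the PRSG case first. Let $G$ be the $n$-qubit PRSG with key space $\{0,1\}^{\secparam}$, and let $\pri$ be the $(n,n+m)$-PRI. Define the new generator $G'$ on key $(k_G, k_{\pri})$ by $G'(k_G, k_{\pri}) = \pri_{k_{\pri}}\!\left( G(k_G) \right)$, an $(n+m)$-qubit state. To prove $t$-copy security of $G'$ for an arbitrary polynomial $t$, I would pass through the following hybrids:
\begin{itemize}
    \item $\hybrid_0$: $t$ copies of $G'(k_G,k_{\pri}) = \pri_{k_{\pri}}(G(k_G))^{\otimes t}$, i.e.\ $\pri_{k_{\pri}}^{\otimes t}$ applied to $G(k_G)^{\otimes t}$.
    \item $\hybrid_1$: replace $G(k_G)^{\otimes t}$ by $\ket{\psi}^{\otimes t}$ for $\ket{\psi} \leftarrow \Haar_n$, keeping $\pri_{k_{\pri}}^{\otimes t}$. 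Indistinguishability from $\hybrid_0$ follows from $t$-copy security of the PRSG $G$ (the reduction runs $\pri_{k_{\pri}}$ itself, since it knows $k_{\pri}$).
    \item $\hybrid_2$: replace $\pri_{k_{\pri}}^{\otimes t}$ by $\cal{I}^{\otimes t}$, a Haar isometry, still on input $\ket{\psi}^{\otimes t}$ with $\ket{\psi}\leftarrow\Haar_n$. This is exactly the $\qclass_{\sf Haar}$ security game with $s=1$ (and with no entangled auxiliary register, or equivalently the $\ell$-register traced out), so the two hybrids are computationally indistinguishable by the Theorem above.
    \item $\hybrid_3$: observe that $\cal{I}$ applied to a Haar-random $n$-qubit state, and then taking $t$ copies, yields $t$ copies of a Haar-random $(n+m)$-qubit state (a Haar unitary applied to $\ket{\psi}\ket{0^m}$ with $\ket{\psi}$ Haar-random is Haar-random on $n+m$ qubits). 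Hence $\hybrid_2$ is \emph{identical} to $t$ copies of a genuine $(n+m)$-qubit Haar state.
\end{itemize}
Chaining these gives that $G'$ is an $(n+m)$-qubit PRSG. The PRFSG case is analogous: set $G'((k_G,k_{\pri}), x) = \pri_{k_{\pri}}(G(k_G, x))$; the hybrid replacing the PRFSG outputs by fresh Haar states, over all queried inputs $x_1,\dots,x_s$ each asked $t$ times, lands precisely in the $\qclass_{\sf Haar}$ game with general $s$ (the $s$ distinct inputs correspond to the $s$ independent Haar states $\ket{\psi_1},\dots,\ket{\psi_s}$, each appearing $t$ times), after which one invokes $\qclass_{\sf Haar}$-security and then the same "$\cal{I}$ on Haar input $=$ Haar" observation per input. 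One must handle both the selective and adaptive query regimes; in the selective case the reduction is immediate, and the adaptive case reduces to the selective one by the standard argument for state-generator-type primitives (guessing or enumerating the query pattern, or using that the PRFSG and $\qclass_{\sf Haar}$ definitions are themselves stated adaptively).

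I would carry out the steps in the order above, establishing the PRSG length extension first as a warm-up and then lifting it to PRFSG by allowing multiple inputs. The one place that needs care beyond routine bookkeeping is the matching between the quantifier structure of the PRFSG game (polynomially many adaptively chosen inputs, each queried polynomially many times) and the precise shape of $\qclass_{n,q,\ell,\secparam}$ in $\qclass_{\sf Haar}$, namely $q = s\cdot t$ with the $s$ Haar states each tensored $t$ times; I need the number of distinct inputs $s$ and the per-input copy count $t$ to be simultaneously polynomial and to know that $\qclass_{\sf Haar}$-security holds for all such polynomial pairs $(s,t)$ uniformly, which the informal theorem statement provides.

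The main obstacle I anticipate is \emph{not} any single inequality but the adaptivity reduction for PRFSG: the $\qclass_{\sf Haar}$ guarantee as stated is naturally a selective-style statement (a fixed ensemble $\rho \in \qclass_{n,q,\ell,\secparam}$ is handed to the distinguisher), whereas a PRFSG adversary may choose which inputs to query, how many copies of each, and may interleave queries adaptively. Bridging this gap cleanly — arguing that an adaptive PRFSG distinguisher can be simulated by a selective $\qclass_{\sf Haar}$ distinguisher (or that the PRI construction's security extends to the adaptive oracle setting for these input distributions, which should follow since the reduction holds $k_{\pri}$ and can answer oracle calls on the fly) — is the step I expect to require the most attention. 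Everything else is a direct three-step hybrid invoking, in turn, PRSG/PRFSG security, $\qclass_{\sf Haar}$-security of $\pri$, and the elementary fact that a Haar isometry turns a Haar-random input state into a Haar-random larger state.
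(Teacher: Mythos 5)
Your proposal matches the paper's proof: the same three-step hybrid (invoke PRSG security to swap $G(k_G)^{\otimes t}$ for $\ket{\psi}^{\otimes t}$ with $\ket{\psi}\gets\Haar_n$; then $\qclass_{\sf Haar}$-security of $\pri$ with $s=1$, discarding the untouched side-copies; then the observation that a Haar isometry applied to a Haar state is itself a Haar state), and the same lift to PRFSG with general $s$. Your worry about adaptivity is moot: the PRFSG notion the paper actually extends is the selectively-secure one (\Cref{def:prfs}), in which the index set $\{x_1,\dots,x_s\}$ is fixed in advance, so the reduction to the fixed-ensemble $\qclass_{\sf Haar}$-security statement goes through directly without any adaptivity-to-selectivity bridging.
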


\noindent Prior to our work, the only known length extension theorem was by Gunn, Ju, Ma, and Zhandry~\cite{GJMZ23} who demonstrated a method to increase the output length of pseudorandom states and pseudorandom unitaries but at the cost of reducing the number of copies given to the adversary. That is, the resulting PRSG in their transformation is only secure if the adversary is given one copy. On the other hand, in the above theorem, the number of copies of the PRSG is preserved in the above transformation.  \\

\noindent \underline{\textsc{II. Many copies of an $n$-qubit state.}} We also consider the setting where we have multiple copies of a single state. Specifically, we consider the following setting: 
\begin{itemize}
    \item Let $q=q(\secparam)$ be a polynomial. Let $\ell = n \cdot q$.  
    \item We define $\qclass_{{\sf Single}} = \left\{ \qclass_{n,q,\ell,\secparam} \right\}_{\secparam \in \mathbb{N}}$, where $\qclass_{n,q,\ell,\secparam}$ is defined as follows:
    $$\qclass_{n,q,\ell,\secparam} = \left\{ {\color{red} \ket{\psi}^{\otimes q}} \otimes {\color{blue} \ket{\psi}^{\otimes q}} \ :\  \ket{\psi} \in {\cal S}(\C^{2^n}) \right\}$$
\end{itemize}

\noindent We prove the following. 

\begin{theorem}[Informal]
Assuming post-quantum one-way functions exist, $\pri_k$ is a $\qclass_{{\sf Single}}$-secure pseudorandom isometry. 
\end{theorem}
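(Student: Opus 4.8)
\noindent The plan is a hybrid argument that first swaps the two cryptographic primitives for their information-theoretic idealizations and then finishes with an explicit computation of the averaged $q$-copy density matrix. Fix an arbitrary $n$-qubit pure state $\ket{\psi}=\sum_{x}\alpha_x\ket{x}$. A Haar isometry maps $\ket{\psi}$ to $U\ket{\psi}\ket{0^m}$ for a Haar-random unitary $U$, and $U\ket{\psi}\ket{0^m}$ is distributed exactly as a Haar-random state $\ket{\phi}\leftarrow\Haar_{n+m}$ independent of $\ket{\psi}$; hence, using that the first $nq$ qubits of $\rho\in\qclass_{{\sf Single}}$ (the register $\ket{\psi}^{\otimes q}$) are untouched and carry no dependence on the secret key, establishing $\qclass_{{\sf Single}}$-security reduces to showing
\[
\mathbb{E}_{k}\!\Big[\big(\pri_k\ket{\psi}\big)^{\otimes q}\big(\bra{\psi}\pri_k^{\dagger}\big)^{\otimes q}\Big]
\ \approx_c\
\mathbb{E}_{\ket{\phi}\leftarrow\Haar_{n+m}}\!\big[\ketbra{\phi}{\phi}^{\otimes q}\big],
\]
even when the common $k$-independent factor $\ketbra{\psi}{\psi}^{\otimes q}$ (or a few copies of it) is additionally handed to the distinguisher.

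\noindent\emph{Computational hybrids.} In Hybrid~$1$ we replace the PRF $f_{k_1}$ by a truly random function $f\colon\{0,1\}^{n+m}\to\mathbb{Z}_p$: a reduction that samples $k_2$ itself, implements $g_{k_2}$ directly, and makes two quantum queries to its $f$-oracle per copy (to apply and then uncompute each phase) before invoking the purported distinguisher breaks the post-quantum security of $f$, so the real world $\approx_c$ Hybrid~$1$. In Hybrid~$2$ we further replace the PRP $g_{k_2}$ by a uniformly random permutation $g$ of $\{0,1\}^{n+m}$; the reduction now replaces the random $f$ by a $4q$-wise independent function (perfectly indistinguishable from a truly random function under the $2q$ quantum queries made to it) and makes $q$ quantum queries to its permutation oracle, so Hybrid~$1\approx_c$ Hybrid~$2$. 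Both primitives exist from post-quantum one-way functions.

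\noindent\emph{The statistical step.} It remains to show the Hybrid~$2$ state is within negligible trace distance of the ideal state; this is the crux. In Hybrid~$2$ the output on $\ket{\psi}$ is $\tfrac{1}{\sqrt{2^m}}\sum_{x\in\{0,1\}^n,\,y\in\{0,1\}^m}\alpha_x\,\omega_p^{f(x\Vert y)}\ket{g(x\Vert y)}$, and in the $q$-copy outer product we first average over the random phases $f$: since the $f(z)$ are i.i.d.\ uniform on $\mathbb{Z}_p$ with $p>q$, each relevant factor $\mathbb{E}\big[\omega_p^{e\,f(z)}\big]$ vanishes unless $e\equiv 0$, so only terms whose two index multisets $\{(x_i,y_i)\}$ and $\{(x'_i,y'_i)\}$ agree survive, and on those the amplitude product collapses to $\prod_i|\alpha_{x_i}|^2$ (the matching permutation permutes the $x_i$ among themselves, so each $\alpha_{x_i}$ is paired with its own conjugate). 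Averaging next over the random permutation $g$, the (generically distinct) $q$ indices are mapped to a uniformly random ordered tuple of distinct points of $\{0,1\}^{n+m}$, so summing over the matching permutations yields a scalar multiple of the projector $\Pi_{\mathrm{sym}}$ onto the symmetric subspace of $(\C^{2^{n+m}})^{\otimes q}$. Bounding (i) the weight of terms with a repeated index by $\binom{q}{2}2^{-m}$, (ii) the gap between $\Pi_{\mathrm{sym}}$ and the span of its distinct-label basis vectors by $O(q^2\,2^{-(n+m)})$, and (iii) the normalization error $\bigl|\tfrac{q!}{2^{(n+m)q}}\dim(\Pi_{\mathrm{sym}})-1\bigr|$ by $O(q^2\,2^{-(n+m)})$, one gets that the averaged $q$-copy density matrix equals the maximally mixed state $\dim(\Pi_{\mathrm{sym}})^{-1}\,\Pi_{\mathrm{sym}}$ up to trace distance $\poly(q)\cdot2^{-\Omega(\lambda)}$, and this maximally mixed state is precisely $\mathbb{E}_{\ket{\phi}\leftarrow\Haar_{n+m}}\big[\ketbra{\phi}{\phi}^{\otimes q}\big]$. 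Since $n,m>\lambda$ the error is negligible; tensoring back the common factor $\ketbra{\psi}{\psi}^{\otimes q}$, the Hybrid~$2$ state matches the ideal state up to negligible trace distance, and chaining the three steps (real $\approx_c$ Hybrid~$1$ $\approx_c$ Hybrid~$2$ $\approx_s$ ideal) completes the proof.

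\noindent The main obstacle is this last computation: arranging the phase-expectation so that the amplitude products collapse cleanly, recognizing the permutation twirl as $\Pi_{\mathrm{sym}}$, and carefully collecting the several $\poly(q)\cdot2^{-\Omega(\lambda)}$ error terms (index collisions within a single copy, the discrepancy between the full symmetric subspace and the span of its distinct-label basis vectors, and the twirl normalization). The computational hybrids are routine. In contrast to the $\qclass_{{\sf Haar}}$ setting, here there is a single fixed input state, so no cross terms between distinct inputs appear and the amplitude coefficients average out exactly via the multiset constraint --- which is what keeps this argument essentially self-contained.
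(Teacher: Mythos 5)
Your proof is correct and arrives at the same destination, but it is organized differently from the paper's argument. The paper proves this theorem by expanding $\ketbra{\psi}{\psi}^{\otimes q}$ in the \emph{type-state} basis of the symmetric subspace, then showing in two separate lemmas that (a) the off-diagonal type-state terms $\ketbra{\type_T}{\type_{T'}}$ with $T\neq T'$ are annihilated by the averaged $q$-fold channel (\Cref{lem:type_outer}), and (b) each diagonal term $\ketbra{\type_T}{\type_T}$ is mapped $O(q^2/2^m)$-close to the fixed mixture $\rho_{\unique_{1,q}}$ over unique types (\Cref{lem:tdis_info}). It then routes through a general \emph{almost-invariance under $q$-fold Haar} framework (\Cref{lem:tuni_invar}, \Cref{clm:observation:almostinvariance}, \Cref{clm:almostinvariance:closeness}), which avoids ever computing the Haar isometry's action on the query explicitly. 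Your argument bypasses all of this: you expand directly in the computational basis, observe that the phase averaging forces multiset-matching of the index pairs (so the amplitudes collapse to $\prod_i|\alpha_{x_i}|^2$ thanks to the permutation acting on indices), and then the random-permutation twirl lands you, after stripping collision and normalization errors, on the maximally mixed state of $\vee^q\C^{2^{n+m}}$, which you identify as the ideal via $\E_{\ket{\phi}}[\ketbra{\phi}{\phi}^{\otimes q}]=\Pi_\sym/\Tr(\Pi_\sym)$. The two arguments have the same three engines (phase kill of cross terms; permutation twirl; collision counting), and your error bounds $O(q^2/2^m)$ and $O(q^2/2^{n+m})$ match the paper's. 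What the paper's machinery buys is reusability: the same $\rho_{\unique}$-closeness target and almost-invariance toolkit are deployed verbatim for the $\qclass_{\distinct}$ and $\qclass_{\sf Haar}$ cases, where the ideal state is entangled with the adversary's side register and your ``compute the Haar side directly'' tactic would be significantly messier. What your approach buys is self-containment and transparency for the single-state case: because the side register in $\qclass_{\sf Single}$ is a tensor factor (and hence irrelevant, as you correctly note), the ideal is just the symmetric fully mixed state and your direct computation is the shortest path there.
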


\noindent Informally, the above theorem ensures that even if an efficient distinguisher is given polynomially many copies of $\ket{\psi}$, for an arbitrary $n$-qubit state $\ket{\psi}$, it should not be able to efficiently distinguish $q$ copies of $\pri_k \ket{\psi}$ versus $q$ copies of ${\cal I} \ket{\psi}$, for any polynomial $q(\secparam)$. \\

\noindent \underline{\textsc{Application: Pseudorandom State Scamblers.}} A recent work~\cite{LQSYZ23} shows how to isometrically scramble a state such that many copies of the scrambled state should be computationally indistinguishable from many copies of a Haar state. Our notion of $\qclass_{{\sf Single}}$-secure pseudorandom isometry is equivalent to pseudorandom state scramblers. Thus, we have the following. 

\begin{theorem}[Informal]
$\qclass_{{\sf Single}}$-secure pseudorandom isometry exists if and only if pseudorandom state scramblers exist. 
\end{theorem}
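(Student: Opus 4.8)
The plan is to prove the equivalence by unfolding the two definitions; once the conventions are lined up, both directions reduce to the syntactic observation that a $q$-fold PRI (resp.\ a $q$-fold Haar isometry) applied to $q$ copies of a single state behaves like $q$ copies of a single pseudorandom (resp.\ Haar) state on the output register. Concretely, fix any $n$-qubit pure state $\ket{\psi}$ and any polynomial $q=q(\secparam)$. On the real side, $\pri_k^{\otimes q}\!\left(\ketbra{\psi}{\psi}^{\otimes q}\right)=\left(\pri_k\ket{\psi}\right)^{\otimes q}\!\left(\bra{\psi}\pri_k^{\dagger}\right)^{\otimes q}$, i.e.\ exactly $q$ copies of the single state $\pri_k\ket{\psi}$. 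On the ideal side, using the footnote's definition of a Haar isometry --- sample a Haar unitary $U$ on $n+m$ qubits and set $\cI\ket{\psi}=U(\ket{\psi}\otimes\ket{0^m})$ --- the \emph{same} $\cI$ is applied to all $q$ copies, so $\cI^{\otimes q}\!\left(\ketbra{\psi}{\psi}^{\otimes q}\right)(\cI^{\dagger})^{\otimes q}=\left(\cI\ket{\psi}\right)^{\otimes q}(\bra{\psi}\cI^{\dagger})^{\otimes q}$, and by left-invariance of the Haar measure $U(\ket{\psi}\otimes\ket{0^m})$ is a Haar-random state on $\C^{2^{n+m}}$ for \emph{every} fixed $\ket{\psi}$. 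Hence the ideal $\qclass_{{\sf Single}}$ experiment on input $\ket{\psi}$ is literally $\mathbb{E}_{\ket{\phi}\leftarrow\Haar_{n+m}}\!\left[\ketbra{\psi}{\psi}^{\otimes q}\otimes\ketbra{\phi}{\phi}^{\otimes q}\right]$, which is exactly the target in the definition of a pseudorandom state scrambler (PSS) with $q$ output copies, the untouched copies $\ket{\psi}^{\otimes q}$ playing the role of the PSS distinguisher's copies of the pre-image state.

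\textbf{PRI $\Rightarrow$ PSS.} Given a $\qclass_{{\sf Single}}$-secure $(n,n+m)$-PRI $\pri_k$, set the scrambler $\mathsf{Scr}_k:=\pri_k$; it is an efficiently implementable isometry by hypothesis. By the computation above, for every $\ket{\psi}$ and every polynomial $t$, the tuple $\bigl(\ket{\psi}^{\otimes t},\ (\mathsf{Scr}_k\ket{\psi})^{\otimes t}\bigr)$ is computationally indistinguishable from $\bigl(\ket{\psi}^{\otimes t},\ \mathbb{E}_{\phi}[\ketbra{\phi}{\phi}^{\otimes t}]\bigr)$, which is exactly PSS security: a PSS distinguisher is just a $\qclass_{{\sf Single}}$ distinguisher that, in case it is not handed pre-image copies, ignores the ``red'' register, and a request for $t_1\neq t_2$ copies of the two kinds is absorbed by running with $q=\max(t_1,t_2)$ and discarding the surplus. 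This direction is immediate.

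\textbf{PSS $\Rightarrow$ PRI.} Conversely, let $\mathsf{Scr}_k$ be a PSS mapping $n$ qubits to $n+m$ qubits and set $\pri_k:=\mathsf{Scr}_k$; this is a valid $\qclass_{{\sf Single}}$-PRI candidate since the abstract notion of a PRI does not require invertibility --- only the concrete construction of the paper happens to admit $\inv_k$. We must show that for every $\ket{\psi}$ and every polynomial $q$, $\bigl(\ket{\psi}^{\otimes q},\ (\pri_k\ket{\psi})^{\otimes q}\bigr)$ is indistinguishable from $\bigl(\ket{\psi}^{\otimes q},\ \cI^{\otimes q}(\ketbra{\psi}{\psi}^{\otimes q})(\cI^{\dagger})^{\otimes q}\bigr)$, and by the identity above the right-hand side equals $\bigl(\ket{\psi}^{\otimes q},\ \mathbb{E}_{\phi}[\ketbra{\phi}{\phi}^{\otimes q}]\bigr)$. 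A $\qclass_{{\sf Single}}$ distinguisher $\mathcal{D}$ is converted into a PSS distinguisher $\mathcal{B}$ that forwards its challenge ($q$ copies of either $\mathsf{Scr}_k\ket{\psi}$ or a Haar state, together with $q$ copies of the pre-image $\ket{\psi}$) verbatim to $\mathcal{D}$ and echoes $\mathcal{D}$'s output; the advantages are equal, so $\pri_k$ is $\qclass_{{\sf Single}}$-secure.

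\textbf{Where the care is needed.} There is no genuine heavy lifting here beyond bookkeeping; the one point to pin down is matching the exact form of the PSS notion of~\cite{LQSYZ23} to $\qclass_{{\sf Single}}$: namely (i) that ``$t$ copies of a Haar state'' is read literally as $\mathbb{E}_{\ket{\phi}\leftarrow\Haar_{n+m}}[\ketbra{\phi}{\phi}^{\otimes t}]$, which --- by left-invariance of the Haar measure --- is precisely what the $q$-fold Haar isometry of the PRI definition outputs on any fixed input, and (ii) the handling of the pre-image copies $\ket{\psi}^{\otimes q}$, which constitute exactly the $\ell=nq$ untouched (``red'') qubits in $\qclass_{n,q,\ell,\secparam}$ and which are unentangled with the scrambled register, so that the general entangled-auxiliary feature of the $\qclass$-PRI definition is not exercised here. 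The only spot where a mismatch could cost something is the backward direction if one starts from a PSS notion that withholds pre-image copies from the distinguisher --- then one would additionally need that these copies cannot help --- so the equivalence is cleanest, and is stated, with respect to the PSS notion that includes them, which is consistent with~\cite{LQSYZ23}.
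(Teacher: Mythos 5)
Your proof is correct and matches the paper's (implicit) argument: the paper treats this equivalence as a definitional observation, offering no formal proof beyond the sentence asserting it, and your unfolding of the two definitions — that both sides reduce to comparing $q$ copies of a scrambled state against $q$ copies of a Haar state, with the untouched copies of $\ket{\psi}$ aligning on either side and the general entangled-auxiliary feature of $\qclass$-PRI never being exercised since the red register is a tensor factor absorbable into non-uniform advice — is exactly what the authors had in mind. Your flagging of the one potential mismatch (a PSS variant that withholds pre-image copies) is a fair caveat, and your resolution (the equivalence is cleanest and is asserted for the PSS notion that includes them) is consistent with the paper's framing.
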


\noindent The work of~\cite{LQSYZ23} presents an instantiation of pseudorandom scramblers from post-quantum one-way functions. While our result does not give anything new for pseudorandom scramblers in terms of assumptions, we argue that our construction and analysis are (in our eyes) much simpler than~\cite{LQSYZ23}. In addition to pseudorandom permutations and functions, they also use rotation unitaries in the construction. Their analysis also relies on novel and sophistical tools such as Kac random walks whereas our analysis is more elementary. \\

\noindent \underline{\textsc{Application: Multi-Copy Secure Public-Key Encryption.}} There is a simple technique to encrypt a quantum state, say $\ket{\psi}$: apply a quantum one-time pad on $\ket{\psi}$ and then encrypt the one-time pad keys using a post-quantum encryption scheme. However, the disadvantage of this construction is that the security is not guaranteed to hold if the adversary receives many copies of the ciphertext state. A natural idea is to apply a unitary $t$-design on $\ket{\psi}$ rather than a quantum one-time pad but this again only guarantees security if the adversary receives at most $t$ queries. On the other hand, we formalize a security notion called multi-copy secure public-key and private-key encryption schemes, where the security should hold even if the adversary receives arbitrary polynomially many copies of the ciphertext. 

\begin{theorem}[Informal]
Assuming $\qclass_{{\sf Single}}$-secure pseudorandom isometry\footnote{We additionally require that the pseudorandom isometry satisfy an invertibility condition. We define this more formally in the technical sections.}, there exists multi-copy secure private-key and public-key encryption schemes. 
\end{theorem}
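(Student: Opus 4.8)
## Proof Proposal

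The plan is to build both the private-key and public-key multi-copy secure encryption schemes directly from a $\qclass_{\sf Single}$-secure (invertible) pseudorandom isometry $\pri_k$, using $\pri$ as a quantum-state "one-time pad" replacement that tolerates arbitrarily many ciphertext copies. For the \textbf{private-key} scheme: the secret key is the PRI key $k$; to encrypt $\ket{\psi}$ one outputs $\ct = \pri_k(\ket{\psi})$; to decrypt one applies $\inv_k$, which by the invertibility condition recovers $\ket{\psi}$ exactly. Correctness is immediate from $\inv_k \pri_k = \mathrm{Id}$. For \textbf{security}, the adversary chooses a challenge state (or a description thereof), receives $q$ copies of the ciphertext for an adversarially large polynomial $q$, and must distinguish the encryptions of two chosen states (or, equivalently in the multi-copy semantic-security formulation, distinguish $\ct^{\otimes q}$ from an encryption of a fixed state). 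The key step is to invoke the $\qclass_{\sf Single}$-security of $\pri_k$ with $\rho = \ket{\psi}^{\otimes q} \otimes \ket{\psi}^{\otimes q}$: this lets us replace $\pri_k^{\otimes q}(\ket{\psi})$ by ${\cal I}^{\otimes q}(\ket{\psi})$ for a Haar isometry ${\cal I}$, up to computational indistinguishability, \emph{even though the distinguisher also holds} the untouched copies $\ket{\psi}^{\otimes q}$ (which model the adversary's own knowledge of/ability to prepare the challenge plaintext in the chosen-plaintext setting). Once we are in the Haar-isometry world, security is information-theoretic: a Haar isometry applied to $\ket{\psi}$ with fresh ancilla $\ket{0^m}$ produces (jointly over the $q$ copies) a state whose reduced density matrix — after tracing out nothing, since the adversary gets all $q$ copies — is by symmetry of the Haar measure a state supported on the symmetric subspace of $(\C^{2^{n+m}})^{\otimes q}$ that depends on $\ket{\psi}$ only through its being a unit vector; hence it is identical for any two plaintexts, and the distinguishing advantage is $0$. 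Combining the two steps gives negligible advantage overall.

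For the \textbf{public-key} scheme, the natural route is to bootstrap from the private-key scheme using a post-quantum public-key encryption scheme for classical messages (which exists assuming post-quantum one-way functions plus the appropriate structured assumption — or, if we only want to assume OWFs, one should instead note that $\qclass_{\sf Single}$-secure PRI already implies one-way functions, and public-key encryption is typically assumed separately; I would state the theorem as "assuming $\qclass_{\sf Single}$-secure PRI and post-quantum PKE" if PKE is not otherwise available). Concretely: sample a PRI key $k$, encrypt $k$ under the classical PKE public key to obtain a classical header $\mathsf{hdr}$, and output $(\mathsf{hdr}, \pri_k(\ket{\psi}))$. Decryption recovers $k$ from $\mathsf{hdr}$ and applies $\inv_k$. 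For multi-copy security one gives the adversary $q$ copies of $(\mathsf{hdr}, \pri_k(\ket{\psi}))$ — but note $\mathsf{hdr}$ is classical and can be copied freely, so this is really $q$ copies of the quantum part plus one header. A hybrid argument then proceeds: first replace $\mathsf{hdr}$ (an encryption of $k$) by an encryption of $0$ using IND-CPA security of the classical PKE — crucially, this hybrid is valid because, once we are about to invoke PRI security, the reduction no longer needs $k$ in the clear except inside the $\pri_k^{\otimes q}$ oracle, so we must be slightly careful about the order: I would first switch to providing the PRI outputs via the (external) PRI challenger rather than computing them with a known $k$, and only then switch the header. After both switches, the view is ($\mathsf{Enc}_{\mathsf{pk}}(0)$, ${\cal I}^{\otimes q}(\ket{\psi}))$, which as above is plaintext-independent.

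The sequence of steps I would carry out, in order: (1) define the multi-copy semantic/indistinguishability security game precisely for both settings (adversary outputs $1^q$ and two plaintext states, or a circuit producing them; gets $q$ ciphertext copies; one header in the PKE case); (2) state and prove correctness from invertibility; (3) prove private-key security via the two-step hybrid above (PRI security $\Rightarrow$ Haar isometry; then information-theoretic plaintext-independence of Haar isometry applied to a pure state with zero ancilla); (4) prove public-key security by the additional IND-CPA hybrid on the header, taking care to re-order the hybrids so that the PRI challenger is brought in before the header is switched; (5) conclude. The main obstacle I expect is Step (3)'s information-theoretic core — specifically, verifying that the \emph{entire} $q$-copy output of a Haar isometry on $\ket{\psi}$ (with no register traced out, and with the adversary additionally holding $q$ independent copies of $\ket{\psi}$ itself in the CPA model) genuinely carries no information distinguishing two plaintexts: one must argue that conditioned on the $q$ plaintext copies, the Haar-random unitary's action is still "fully mixing" on the relevant symmetric subspace, i.e. the joint state $\ket{\psi}^{\otimes q} \otimes ({\cal I}\ket{\psi})^{\otimes q}$ has the same density matrix (averaged over the Haar measure) regardless of which unit vector $\ket{\psi}$ is — this follows from unitary covariance of the Haar measure (for any fixed unitary $V$, $V^{\otimes 2q}$ applied to the state equals the state for plaintext $V\ket{\psi}$, and averaging is $V$-invariant), but it needs to be spelled out. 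A secondary subtlety is ensuring the $\qclass_{\sf Single}$ definition as stated — with the red block being the untouched $\ell = nq$ qubits — exactly matches what the CPA adversary can hold; if the adversary in the chosen-plaintext game can prepare unboundedly many copies of $\ket{\psi}$, one should either take $\ell$ large enough (still polynomial, since $q$ is the adversary's query bound) or observe that extra copies beyond $q$ give no additional power once the output is plaintext-independent.
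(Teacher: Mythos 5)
Your high-level blueprint for the public-key scheme---sample a fresh PRI key per encryption, encrypt it under a classical PKE, apply the PRI to the quantum message, and then run a hybrid argument alternating between PKE security and PRI security---matches the paper's construction and proof. However, there are three concrete problems.

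First, and most significantly, your information-theoretic step is based on a false claim. You take the query $\rho = \ketbra{\psi}{\psi}^{\otimes q} \otimes \ketbra{\psi}{\psi}^{\otimes q}$ and then assert that $\ketbra{\psi}{\psi}^{\otimes q} \otimes \E_{\haarisometry}\bigl[(\haarisometry\ketbra{\psi}{\psi}\haarisometry^\dagger)^{\otimes q}\bigr]$ is the same density matrix for every unit vector $\ket{\psi}$. It is not: the second tensor factor equals $\Pi_\sym^{2^{n+m},q}/\dim(\vee^q\C^{2^{n+m}})$ and is indeed $\ket{\psi}$-independent, but the \emph{first} factor $\ketbra{\psi}{\psi}^{\otimes q}$ plainly depends on $\ket{\psi}$, and for orthogonal $\ket{\psi_0},\ket{\psi_1}$ the two joint states are perfectly distinguishable. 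Your ``unitary covariance'' argument only shows the states for $\ket{\psi}$ and $V\ket{\psi}$ are \emph{unitarily equivalent} via an efficiently computable $V$, which gives zero protection against a distinguisher that knows $V$. The paper avoids this trap: it does not include plaintext copies on the untouched register at all. The adversary's knowledge of $\ket{\psi_0},\ket{\psi_1}$ is modeled by non-uniform advice in the PRI distinguisher, the PRI security is invoked on $\rho = \ketbra{\psi_b}{\psi_b}^{\otimes t}$ with trivial side register, and the Haar-world identity $\E_{\haarisometry}[(\haarisometry\ketbra{\psi_0}{\psi_0}\haarisometry^\dagger)^{\otimes t}] = \E_{\haarisometry}[(\haarisometry\ketbra{\psi_1}{\psi_1}\haarisometry^\dagger)^{\otimes t}]$ then holds exactly. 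Your insistence on putting $\ket{\psi}^{\otimes q}$ on the side register is both unnecessary (the non-uniform PRI adversary can prepare any copies it wants) and breaks the equality you need.

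Second, your private-key scheme---use the PRI key directly as the secret key and output $\pri_k(\ket{\psi})$---is not what the paper proves, and is insufficient for the paper's definition. The paper's multi-copy private-key notion lets the adversary receive $t$-copy encryptions of $q$ \emph{different} challenge messages under the same long-term secret key. Reusing a single PRI key $k$ across those messages yields a query of the form $\bigotimes_i \ketbra{\psi_i}{\psi_i}^{\otimes t}$, which is outside $\qclass_{\sf Single}$, so $\qclass_{\sf Single}$-security gives you nothing. The paper instead uses the same hybrid template as in the public-key case---a fresh PRI key per message, encrypted under a classical private-key scheme---so that each invocation of PRI security is indeed on $t$ copies of a single pure state. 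Your scheme would only satisfy a weaker, one-time/single-message notion.

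Third, your hybrid-reordering suggestion for the public-key proof is backwards. You propose to first replace the PRI outputs by those of an external PRI challenger and only afterwards switch the header from $\smenc(\pk,k)$ to $\smenc(\pk,0)$. But in that intermediate hybrid the header still encrypts a $k$ which the reduction would need to know, while the PRI challenger does not reveal its key; the reduction cannot produce a consistent header. The paper's ordering is correct as written: switch the header to $\smenc(\pk,0)$ first (via classical PKE security, with the reduction sampling its own $k$ for the PRI step), \emph{then} invoke PRI security, where the reduction no longer touches $k$ at all.
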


\noindent The investigation of multi-copy security was independently conducted by~\cite{LQSYZ23}. However, they only studied multi-copy security in the context of one-time encryption schemes whereas we introduce the definition of multi-copy security for private-key and public-key encryption schemes and establish their feasibility for the first time. 

\ \\
\noindent \underline{\textsc{Conjecture.}} Unfortunately, we currently do not know how to prove that $\pri_k$ is a $\qclass$-secure pseudorandom isometry for every $\qclass$. We leave the investigation of this question as an interesting open problem. 

\begin{conjecture}
For every $\qclass=\{\qclass_{n,q,\ell,\secparam}\}_{\secparam \in \mathbb{N}}$, where $\qclass_{n,q,\ell,\secparam}$ consists of $nq$-qubit states, $\pri_k$ is a $\qclass$-secure pseudorandom isometry. 
\end{conjecture}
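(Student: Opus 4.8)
The plan is to lift the two-step argument already used for $\qclass_{\sf Haar}$ and $\qclass_{\sf Single}$ to an arbitrary query set, which (taking $\qclass$ to be a singleton) amounts to showing that for \emph{every} $nq+\ell$-qubit state the $q$-fold outputs of $\pri_k$ and of a Haar isometry are indistinguishable. \emph{Step one} is computational and routine: by a hybrid argument over the quantum-query security of $f$ and $g$, replace $f_{k_1}$ by a truly random function $F\colon\bit^{n+m}\to\mathbb{Z}_p$ and $g_{k_2}$ by a truly random permutation $G$ on $\bit^{n+m}$; since each of the $q$ parallel copies of $\pri_k$ issues one (superposition) query to $f$ and one to $g$, the total loss is $\negl(\secparam)$. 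Call the resulting idealized isometry $\widetilde{\pri}$. \emph{Step two} is information-theoretic: by convexity of trace distance and purification we may assume the input is a pure state $\ketbra{\phi}{\phi}$ on $nq+\ell$ qubits, the $\ell$ qubits being an arbitrary register entangled with the $q$ input blocks, and it then suffices to prove
\[
\TD\!\left(\ \E_{F,G}\!\left[\big(I_\ell\otimes\widetilde{\pri}^{\otimes q}\big)\ketbra{\phi}{\phi}\big(I_\ell\otimes\widetilde{\pri}^{\otimes q}\big)^{\dagger}\right],\ \E_{U}\!\left[\big(I_\ell\otimes\mathcal{I}_U^{\otimes q}\big)\ketbra{\phi}{\phi}\big(I_\ell\otimes\mathcal{I}_U^{\otimes q}\big)^{\dagger}\right]\right)\le\negl(\secparam),
\]
where $\mathcal{I}_U\colon\ket{\psi}\mapsto U(\ket{\psi}\otimes\ket{0^m})$ and $U$ is Haar on $n+m$ qubits. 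Equivalently, the two \emph{averaged} channels from $nq$ to $q(n+m)$ qubits must be diamond-close; note this already holds exactly (zero error after idealization) for $q=1$, and a direct two-fold twirl computation shows it holds with error $O(2^{-(n+m)})$ for $q=2$ and arbitrary $\ket{\phi}$, so the conjecture is plausibly true.

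\paragraph{The core comparison.} Expand the Haar side by Schur--Weyl duality / the Weingarten calculus: with $d=2^{n+m}$ and $R_\pi$ permuting the $q$ output blocks according to $\pi\in S_q$, it equals $\sum_{\pi,\tau\in S_q}\Wg(\tau^{-1}\pi,d)\cdot\big(\text{partial contraction of the system part of }\ket{\phi}\otimes\ket{0^{mq}}\text{ by }R_\tau\big)\otimes R_\pi$, with the $\ell$-qubit register carried along. For the $\widetilde{\pri}$ side, take $\E_F$ first: a branch indexed by $(\vec x,\vec y)$ meets a branch $(\vec x',\vec y')$ only when the multisets $\{x_i\|y_i\}_i$ and $\{x_i'\|y_i'\}_i$ coincide (because every $\mathbb{Z}_p$-exponent has absolute value $\le q<p$, so it must vanish identically), and when the strings are all distinct this precisely selects a permutation $\pi\in S_q$. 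Then take $\E_G$: a uniform permutation is within $O(q^2/d)=\negl$ trace distance of a uniform injection on the relevant support, which further decouples into a uniform function up to the same order; the function-average collapses the surviving terms into $\sum_{\pi\in S_q}\big(\text{contraction of }\ket{\phi}\text{ by }R_\pi\big)\otimes R_\pi$ times $(1+o(1))d^{-q}$, which matches the diagonal ($\tau=\pi$) Weingarten terms since $\Wg(\mathrm{id},d)=(1+o(1))d^{-q}$. It remains to show that the off-diagonal Weingarten terms ($\tau\neq\pi$), which carry extra inverse powers of $d$, together with the corrections coming from $G$-collisions and from input branches with nontrivial collision patterns among the $x_i$, all contribute only $\negl(\secparam)$ in trace distance.

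\paragraph{Main obstacle.} The last point --- a bound that is uniform over \emph{all} $\ket{\phi}$ --- is exactly where the argument for a general $\qclass$ currently breaks down. For the families already handled, the states live in (tensor powers of) symmetric subspaces: $q$ copies of a handful of Haar states, or $q$ copies of a single state. Their computational-basis branches therefore have a tame, highly symmetric collision pattern, so one can diagonalize the twirl against the symmetric-subspace representation and obtain closed forms with manageable error. An arbitrary $\qclass$ removes this crutch: the input may place amplitude on configurations $(x_1,\dots,x_q)$ realizing every possible collision pattern (e.g.\ all $x_i$ pinned to one value, or exotic superpositions) and may be entangled with the side register in a worst-case way, so one must control the Weingarten-type remainder uniformly over all such inputs, including the ``defective-permutation'' degeneracies that arise when several of the $x_i\|y_i$ coincide. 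It is conceivable --- though the $q\le 2$ computations above point the other way --- that phase-plus-permutation fails to reproduce the subleading Weingarten structure for some pathological input, with a distinguisher exploiting that $G$ is a genuine (injective) permutation rather than matching the full Haar $q$-th moment. A natural route is to argue that the trace-distance error, viewed as a function of $\ket{\phi}$, is maximized on inputs reducible --- via symmetrization over the $S_q$-action together with a de Finetti-type argument --- to the already-analyzed structured cases; executing such a reduction with a tight enough error, or instead exhibiting a counterexample, is precisely the open problem.
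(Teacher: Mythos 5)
This statement is labeled a \emph{conjecture} in the paper, and the authors explicitly say they ``currently do not know how to prove that $\pri_k$ is a $\qclass$-secure pseudorandom isometry for every $\qclass$'' and leave it open. There is no proof in the paper to compare against, so the review has to be of a different character: you have not proved the statement, and neither does anyone else.

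That said, your write-up is an honest and largely sound \emph{exploration} rather than a proof, and it correctly locates the obstruction. Your Step~1 (idealize $f_{k_1}$, $g_{k_2}$ to a random function and permutation) is exactly the paper's \Cref{lem:comp_to_info}. Your Step~2 then diverges from the paper's method: where you try to match the idealized channel against the $q$-fold Haar twirl term-by-term via Schur--Weyl / Weingarten, the paper instead introduces the notion of \emph{almost invariance} under $q$-fold Haar (\Cref{sec:haar_invariance}), exhibits one fixed almost-invariant target $\rho_\unique$ (\Cref{lem:tuni_invar}), and for each of the three proven query classes shows the idealized output lands close to $\rho_\unique$ (\Cref{sec:closeness_to_rho_uni}). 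The paper's route sidesteps Weingarten coefficients entirely but pays for it in generality: it only works because those three query classes have the symmetric-subspace structure that funnels the output near a single fixed state. Your Weingarten route is in principle more general, but as you yourself note, it requires controlling the off-diagonal ($\tau\neq\pi$) and degenerate-collision remainders \emph{uniformly over all inputs $\ket{\phi}$, including worst-case entanglement with the side register} --- and that is precisely the step nobody knows how to do. So you are not disagreeing with the paper; you have rediscovered why it is a conjecture.

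Two concrete cautions if you pursue this. First, double-check your claimed $O(2^{-(n+m)})$ error for $q=2$: for inputs like $\ket{x}\ket{x}$ the only source of distinctness after the isometry is the appended $m$-qubit register, so the natural collision bound gives $O(q^2/2^m)$, not $O(q^2/2^{n+m})$. This is still negligible (the construction requires $m=\poly(\secparam)>\secparam$) but changes which parameter the argument actually leans on, and is a useful sanity check on the degenerate-branch bookkeeping. Second, the reduction ``by convexity and purification we may assume the input is pure'' is fine as stated for the conjecture (where $\qclass$ ranges over all query sets, so enlarging $\ell$ to absorb the purification is harmless), but be explicit that this step changes $\ell$; in the definitional framework of \Cref{def:Q-pri} the side register is part of the query set, so you are implicitly passing to a different $\qclass$. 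Neither of these is fatal --- they just need to be stated carefully if and when the main obstacle is overcome.
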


\paragraph{Other Applications.} We explore other applications of PRIs that were not covered before.\\

\noindent \underline{\textsc{Application: Quantum MACs.}} We explore novel notions of message authentication codes (MAC) for quantum states. Roughly speaking, in a MAC for quantum states, there is a signing algorithm using a signing key $sk$ that on input a state, say $\ket{\psi}$, outputs a tag that can be verified using the same signing key $sk$. Intuitively, we require that any adversary who receives tags on message states of their choice should not be able to produce a tag on a challenge message state. For the notion to be meaningful, we require that the challenge message state should be orthogonal (or small fidelity) to all the message states seen so far. 
\par There are different settings we consider: 
\begin{itemize}
    \item In the first setting, the verification algorithm gets as input multiple copies of the message state $\ket{\psi}$ and the tag state. In this case, we require the probability that the adversary should succeed is negligible. 
    \item In the second setting, the verification algorithm gets as input many copies of the message state but only a single copy of the tag. In this case, we weaken the security by only requiring that the adversary should only be able to succeed with inverse polynomial probability. 
    \item Finally, we consider the setting where we restrict the type of message states that can be signed. Specifically, we impose the condition that for every message state $\ket{\psi}$, there is a circuit $C$ that on input an all-zero state outputs $\ket{\psi}$. Moreover this circuit $C$ is known to the verification algorithm.  In this case, we require that the adversary only be able to succeed with negligible probability. 
\end{itemize}
\noindent We show how to achieve all of the above three settings using PRIs. \\

\noindent \underline{\textsc{Application: Length Extension Theorem.}} Previously, we explored a length extension theorem where we showed how to generically increase the output length of pseudorandom (function-like) state generators assuming only PRIs secure against Haar queries. We explore a qualitatively different method to extend the output length of pseudorandom states. Specifically, we show the following. 

\begin{theorem}[Informal]
Assuming the existence of $(n,n+m)$-secure pseudorandom isometry and an $(2n)$-output PRSG secure against $o(m)$ queries, there exists a $(2n+m)$-output PRSG secure against the same number of queries. Moreover, the key of the resulting PRSG is a concatenation of the $(2n)$-output PRSG and the $(n,n+m)$-secure PRI. 
\end{theorem}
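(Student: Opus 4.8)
The plan is to build the new $(2n+m)$-output generator $G'$ by running the given $(2n)$-output $\mathrm{PRSG}$ $G$ and feeding half of its output through the PRI. Concretely, write the key of $G'$ as $k = k_1 \| k_2$ with $k_1$ a key for the $(2n)$-output $\mathrm{PRSG}$ and $k_2$ a key for the $(n,n+m)$-PRI; on input $k$, compute the $2n$-qubit state $\ket{\phi} = G(k_1)$, parse it over an $n$-qubit register $\reg{A}$ and an $n$-qubit register $\reg{B}$, apply $\pri_{k_2}$ to $\reg{B}$, and output the resulting $(2n+m)$-qubit state $(I_n \otimes \pri_{k_2})\ket{\phi}$. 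The key is exactly the concatenation promised in the statement, and $G'$ is clearly efficient, so the task is to show that for every $t$ against which $G$ is secure --- in particular $t = o(m)$ --- the state $\big((I_n \otimes \pri_{k_2})\,G(k_1)\big)^{\otimes t}$ is computationally indistinguishable from $\ket{\Theta}^{\otimes t}$ for a Haar state $\ket{\Theta} \leftarrow \Haar_{2n+m}$.

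I would proceed through four hybrids. Let $\hybrid_0$ be the real distribution. $\hybrid_1$ replaces the $\mathrm{PRSG}$ output $G(k_1)$ by a Haar state $\ket{\vartheta} \leftarrow \Haar_{2n}$ (keeping $k_2$ uniform), giving $\big((I_n \otimes \pri_{k_2})\ket{\vartheta}\big)^{\otimes t}$; $\hybrid_0 \approx \hybrid_1$ follows from the $t$-copy security of $G$ via the reduction that samples $k_2$ itself and applies $I_n \otimes \pri_{k_2}$ to each of its $t$ input copies. $\hybrid_2$ further replaces $\pri_{k_2}$ by a Haar isometry $\mathcal{I} \colon n \mapsto n+m$ qubits, giving $\big((I_n \otimes \mathcal{I})\ket{\vartheta}\big)^{\otimes t}$; $\hybrid_1 \approx \hybrid_2$ follows from the security of the PRI on the query state $\rho = \mathbb{E}_{\ket{\vartheta} \leftarrow \Haar_{2n}}\big[ \ketbra{\vartheta}{\vartheta}^{\otimes t} \big]$, in which the $t$ copies of $\reg{A}$ play the role of the untouched $\ell = nt$-qubit auxiliary and the $t$ copies of $\reg{B}$ are the $q = t$ inputs fed to the isometry; the reduction discards the copy of $\rho$ returned by the PRI challenger and forwards the post-isometry state to the $G'$-distinguisher. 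Finally $\hybrid_3$ is the ideal distribution $\ket{\Theta}^{\otimes t}$, so it remains to show $\hybrid_2$ and $\hybrid_3$ are statistically close.

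The key observation for the last step is that $\hybrid_2$ and $\hybrid_3$ coincide once we condition on the reduced state of the first $n$ qubits. Writing the Schmidt decomposition across $\reg{A}$ versus the remaining $n+m$ qubits, a Haar state on $2n+m$ qubits is $\sum_i \sqrt{\mu_i}\,\ket{u_i}_{\reg{A}}\ket{h_i}$ where $\{u_i\}$ is a Haar orthonormal basis of $\reg{A}$, $\{h_i\}$ is a uniformly random orthonormal $2^n$-tuple in the $(n+m)$-qubit space, and $(\mu_i)$ is the induced-measure Schmidt spectrum for the $2^n \times 2^{n+m}$ cut; while $(I_n \otimes \mathcal{I})\ket{\vartheta}$ is $\sum_i \sqrt{\lambda_i}\,\ket{u_i}_{\reg{A}}\ket{h_i}$ with $\{u_i\}, \{h_i\}$ having \emph{exactly the same} joint law (since $\mathcal{I}$ maps the right Schmidt basis of $\ket{\vartheta}$ to a uniformly random orthonormal $2^n$-tuple, independent of the left basis) and $(\lambda_i)$ the induced-measure Schmidt spectrum for the \emph{balanced} $2^n \times 2^n$ cut. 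Hence the trace distance between the $t$-copy states equals that between $\mathbb{E}_{\lambda}[\Xi_t(\lambda)]$ and $\mathbb{E}_{\mu}[\Xi_t(\lambda)]$, where $\Xi_t(\cdot)$ denotes the $t$-copy density operator determined by a Schmidt spectrum (after averaging over the common basis randomness). Both spectra concentrate sharply at the uniform spectrum $(2^{-n}, \ldots, 2^{-n})$; their power sums $p_j = \sum_i \lambda_i^{\,j}$ agree up to additive error $2^{-\Omega(n)}$; and $\Xi_t(\lambda)$ depends on $\lambda$ only through symmetric functions of degree at most $t$. A Weingarten / symmetric-subspace calculation then bounds the trace distance by a negligible quantity (using $t = o(m)$ and $n \ge \secparam$), and chaining the four hybrids shows that $G'$ is a $(2n+m)$-output $\mathrm{PRSG}$ secure against $t = o(m)$ queries.

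I expect the $\hybrid_2 \approx \hybrid_3$ estimate to be the main obstacle: the ``identical up to the Schmidt spectrum'' reduction is clean, but turning it into a quantitative bound requires controlling how the $t$-copy operator $\Xi_t(\lambda)$ varies with $\lambda$ (via its Weingarten / immanant expansion), and this is where the restriction to $t = o(m)$ queries enters. A secondary subtlety is that the PRI used in the $\hybrid_1 \to \hybrid_2$ step must be secure against the class ``$t$ copies of a $2n$-qubit Haar state, cut in half, with the first halves as auxiliary'' --- which is not literally the $\qclass_{\sf Haar}$ of the statement (there, the auxiliary and the isometry input are copies of the \emph{same} $n$-qubit Haar state, rather than entangled halves of one $2n$-qubit Haar state), so one either invokes the more general PRI security established in the technical sections or checks that the proof of $\qclass_{\sf Haar}$-security extends to this class.
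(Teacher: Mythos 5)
Your construction and the first two hybrids ($\hybrid_0 \to \hybrid_1$ via PRSG security, $\hybrid_1 \to \hybrid_2$ via PRI security) coincide with the paper's. Your worry about the PRI query class is a non-issue: the theorem assumes a full $(n,n+m)$-secure (selective) PRI, so the PRI challenger accepts the query state you describe, and the more restrictive $\qclass_{{\sf Haar}}$-security of the paper's concrete construction is never needed here.

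The genuine divergence, and the genuine gap, is in $\hybrid_2 \approx \hybrid_3$. The paper expands $\ketbra{\theta}{\theta}^{\otimes t}$ in the type basis, restricts to ``good'' types where both the $x$-halves and $y$-halves are distinct (cost $O(t^2/2^n)$), and then, for each fixed good type, writes $\ketbra{\type_T}{\type_T}$ as $\frac{1}{t!}\sum_{\sigma,\pi} \ketbra{\sigma(\vec x)}{\pi(\vec x)} \otimes P_\sigma \, (\cdots)\, P_\pi^\dagger$ and applies \Cref{lem:haar_perp} term by term. The union bound over the $t!$ permutation pairs gives an explicit $O(t!\,t^2/2^{n+m})$ loss, which is precisely what forces the restriction on $t$. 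Your Schmidt-decomposition reduction is a different and conceptually clean idea: both $(I_n\otimes\haarisometry)\ket\vartheta$ and $\ket\Theta$ factor as $\sum_i \sqrt{\text{spectrum}}\,\ket{u_i}\ket{h_i}$ with identical, independent laws on the bases, so the trace distance collapses to a comparison of the two Schmidt-spectrum ensembles (balanced $2^n\times 2^n$ cut vs.\ unbalanced $2^n\times 2^{n+m}$ cut) through the $t$-copy observable $\Xi_t(\cdot)$. But this is exactly where you stop. The phrase ``a Weingarten / symmetric-subspace calculation then bounds the trace distance by a negligible quantity'' is the entire content of the lemma, and it is not obvious how to carry it out. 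In particular, the power sums of the two spectra do \emph{not} agree in the benign way the sentence suggests: $\E[p_2]$ is roughly $2/2^n$ for the balanced cut versus $1/2^n$ for the unbalanced cut --- additively $2^{-n}$, but multiplicatively a constant factor --- and you must argue that a $t$-copy POVM cannot resolve this, which is a nontrivial quantitative statement about the sensitivity of $\Xi_t$ to the spectrum. You also never show where the $t=o(m)$ (or $t\log t = o(n+m)$) restriction falls out of this route; in the paper's argument it is visibly the $t!$ factor. As written, the proof is incomplete at the step you yourself identify as the main obstacle.
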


\noindent One might be tempted to conclude that a unitary $o(m)$-design can be used to get the above result. The main issue with using a $o(m)$-design is that it increases the key size significantly~\cite{BCHKP21}. However, in the above theorem, if we start with a PRI with short keys (i.e., $\secparam \ll m$) then the above transformation gets a PRSG with a much larger stretch without increasing the key size by much. 

\subsection{Technical Overview} 

\subsubsection{Haar Unitaries: Observations}

Before we talk about proving security of our construction, we point out some useful properties of Haar unitaries. Note that Haar isometries are closely related to Haar unitaries since the former can be implemented by appending suitably many zeroes\footnote{The state being appended and the position of the new qubits is not important.} followed by a Haar random unitary.

\paragraph{Behavior on Orthogonal Inputs.}
In the classical world, a random function $f$ with polynomial output length is indistinguishable from the corresponding random permutation $g$ against a query-bounded black-box adversary $\alice$. One can prove this fact in three simple steps: \begin{enumerate}
    \item Without loss of generality one can assume $\alice$ only makes distinct queries $\bracC{x_1,\dots,x_q}$. \label{item:distinctquery}
    \item $f$ is perfectly indistinguishable from $g$ conditioned on the fact that $f(x_i) \ne f(x_j)$ for $i \ne j$.
    \item If the number $q$ is polynomial, then the probability that $f$ has a collision on $\bracC{x_1,\dots,x_q}$ is negligible. \label{item:nocollision}
\end{enumerate}

Now consider the quantum analogue of the same problem. Namely, consider two oracles $O_1, O_2$ that can only be queried on classical inputs, where: (1) $O_1$ on input ${x}$ outputs $\haarunitary\ket{x}$, where $\haarunitary$ is a Haar unitary; and (2) $O_2$ for each distinct input $x$, outputs an i.i.d. Haar-random state $\ket{\psi_x}$. Our goal is to show that $O_1,O_2$ are indistinguishable against a query-bounded quantum adversary $\alice$. If we try to replicate the classical proof above, we run into problems: we can no longer assume distinct queries due to the principle of no-cloning, and we need to generalize step \ref{item:nocollision} in a non-trivial to an almost-orthogonality argument. Instead, we consider an alternative proof for the classical case.
\par Fix the set of queries $\bracC{x_1,\dots,x_q}$ and for $0 \le i \le q$ define a hybrid oracle $O_i$ as follows: \begin{itemize}
    \item For $1\le j \le q$, if $x_j\in \{x_1,\dots,x_{q-1}\}$, then output consistently as the previous instance of the same query.
    \item Otherwise, for $1 \le j \le i$: On input $x_j$, sample $y_j \notin \{y_1,\dots,y_{j-1}\}$ uniformly at random and output $y_j$. For $i+1 \le j \le q$, sample an i.i.d. random answer $y_j$ and output $y_j$.
\end{itemize}

Now, one can argue that $O_{i}$ is perfectly indistinguishable from $O_{i+1}$ conditioned on the answer $y_{i+1}$ sampled by $O_i$ satisfying $y_{i+1} \notin \{y_1,\dots,y_i\}$. It turns out this argument is more easily generalizable to the quantum case, where we can define oracle $\widetilde{O}_i$ as answering $x_1,\dots,x_i$ using a random isometry and answering $x_{i+1},\dots,x_{q}$ using i.i.d. Haar-random states (while maintaining consistency). Indistinguishability of $\widetilde{O}_i$ and $\widetilde{O}_{i+1}$ follows from an analysis comparing the dimensions of the subspaces the hybrid oracles sample outputs from.

\paragraph{Almost-Invariance Property.}

\noindent The security definition for a pseudorandom unitary, and similarly isometry, can be cumbersome to work with. Let us focus on the information-theoretic setting first, i.e. when there is no computational assumption on the adversary besides a query bound. We investigate what it means for a candidate pseudorandom unitary $F_k$ to be information theoretically indistinguishable from a Haar unitary $\haarunitary$ for different query sets $\qclass$; in other words, we consider \emph{statistical $\qclass$-security} of $F_k$. Rather than attempting to directly calculate the trace distance between the output of $F_k$ on a given query $\rho$ and the output of a Haar unitary $\mathcal{U}$ on the same input, which may look significantly different for different values of $\rho$, we are naturally drawn to look for a simpler condition that suffices for security. 
\par Accordingly, we show that $F_k$ is statistically $\qclass$-secure if and only if for every $\rho \in \qclass$ which describes $q$ queries to $F_k$, we have that $F_k^{\otimes q}\rho (F_k^\dagger)^{\otimes q}$ changes only negligibly (in trace distance) under the action of $q$-fold Haar unitary $\haarunitary^{\otimes q}(\cdot)(\haarunitary^\dagger)^{\otimes q}$. We prove this fact for any quantum channel $\Phi$ (in particular for $\Phi(\cdot) = F_k(\cdot)F_k^{\dagger}$) as long as $\Phi$ is a mixture of unitary maps, and the proof follows by the unitary invariance of the Haar measure.
\par We note that the argument above can be easily generalized to a pseudorandom isometry (PRI), since an isometry can be decomposed into appending zeroes followed by applying a unitary.
The detailed proofs of the almost-invariance property can be found in \Cref{sec:haar_invariance}.
\\
\par Next, we will describe our construction, then discuss its security and applications in more detail.

\newcommand{\isom}{\widetilde I}
\subsubsection{Construction}
We describe how to naturally arrive at our construction of pseudorandom isometry, which was recently studied by~\cite{BBBS23,ABFGVZZ23} in different contexts. Given an input state $\ket{\psi} = \sum \alpha_x \ket{x}$, we will first apply an isometry $\isom$ to get a state $\ket{\varphi} = \sum \theta_z \ket{z}$, followed by unitary operations. A commonly used technique to scramble a given input state $\ket{\varphi}$ is to apply a random binary function $f$ with a phase kickback \cite{JLS18}, i.e. apply the unitary $O_{f}\ket{\psi} = \sum (-1)^{f(z)}\theta_z \ket{z} $. The action of $O_f$ on a mixed state $q$-query input $\rho = \sum_{\Vec{z},\Vec{z}'} \beta_{\vec{z},\vec{z}'} \ket{\vec{z}}\bra{\vec{z}'}$ can be calculated as \begin{align*}
    \E_f \bracS{ O_f^{\otimes q} \rho (O_f^\dagger)^{\otimes q} } & = \E_f \bracS{ \sum_{\Vec{z},\Vec{z}'} (-1)^{\sum_i f(z_i) + f(z_i')} \beta_{\vec{z},\vec{z}'} \ket{\vec{z}}\bra{\vec{z}'} } \\
    &= \sum_{\Vec{z},\Vec{z}'}\beta_{\vec{z},\vec{z}'} \ket{\vec{z}}\bra{\vec{z}'}  \E_f \bracS{(-1)^{\sum_i f(z_i) + f(z_i')} }.
\end{align*}
Observe that if $\Vec{z}$ and $\Vec{z}'$ are related by a permutation\footnote{This condition will later be referred to as $\Vec{z}$ and $\vec{z}'$ having the same \emph{type}.}, then $(-1)^{\sum_i f(z_i) + f(z_i')} = 1$. Otherwise, if there exists $z$, which occurs odd number of times in $\vec{z}$ and even number of times in $\vec{z}'$ (or vice versa), we get $(-1)^{\sum_i f(z_i) + f(z_i')} = 0$. Ideally we would like all terms $\ket{\vec{z}}\bra{\vec{z}'}$ to vanish when $\vec{z}$ and $\vec{z}'$ are not related by a permutation. We can easily fix this by switching to $p$-th root of unity phase kickback, i.e. apply $\widetilde{O}_f$ for a random function $f$ with codomain $\mathbb{Z}_p$, where $\widetilde{O}_f \ket{\psi} = \sum_x \omega_p^{f(x)}\ket{x}$ and $\omega_p = e^{2\pi i/p}$. As long as $q \ll p$ (e.g. $q$ is polynomial and $p$ is super-polynomial), we get that \begin{align*}
    \E_f \bracS{ \widetilde{O}_f^{\otimes q} \rho (\widetilde{O}_f^\dagger)^{\otimes q} } & = \sum_{\substack{\vec{z},\vec{z}' \\ \exists \sigma :\; \vec{z}' = \sigma(\vec{z})}} \beta_{\vec{z},\vec{z}'} \ket{\vec{z}}\bra{\vec{z}'}.
\end{align*}

\noindent Now we would like to scramble the remaining terms $\ket{\Vec{z}}\bra{\Vec{z}'}$ in the equation above. A natural try is to apply a random permutation $\pi$ in the computational basis, denoted by $O_\pi$ as a unitary operation. Such an operation would scramble the term above as $O_\pi^{\otimes q}\ket{\Vec{z}}\bra{\Vec{z}'} (O_\pi^\dagger)^{\otimes q}$, which only depends on $\sigma$ as long as $\vec{z}$ has distinct entries. Hence, to achieve maximal scrambling we would like $\ket{\varphi}$ to have negligible weight on states $\ket{\vec{z}}$ with collisions of the form $z_i = z_j$.

\par In order to make sure that the weight on $\ket{\Vec{z}}$ with distinct entries is close to 1, we pick $\isom$ to append a uniform superposition of strings\footnote{Note that this step crucially relies on the fact that we are constructing a pseudorandom isometry, not a pseudorandom unitary.},
which brings us to the information-theoretic inefficient construction 
\begin{align}
    G_{(f,\pi)} \ket{\psi} = \frac{1}{\sqrt{2^m}} \sum_{x \in \bit^n, y \in \bit^m} \alpha_x \cdot {\omega}_p^{f(x||y)} \ket{\pi(x||y)}, \label{eq:gfpi}
\end{align} 
To make the construction efficient, we instantiate $f$ and $g$ with a post-quantum pseudorandom function and a post-quantum pseudorandom permutation, respectively, hence reaching our construction \begin{align*}
     F_{(k_1,k_2)} \ket{\psi} = \frac{1}{\sqrt{2^m}} \sum_{x \in \bit^n, y \in \bit^m} \alpha_x \cdot {\omega}_p^{f_{k_1}(x||y)} \ket{g_{k_2}(x||y)}.
\end{align*}

\subsubsection{Security Proof}
As a first step, we argue that a QPT adversary cannot distinguish the PRF ($f_{k_1}$) and the PRP ($g_{k_2}$) from a random function and a random permutation, respectively. To show this we use a $2q$-wise independent hash function as an intermediate hybrid for $f_{k_1}$ to get an efficient reduction, following \cite{Zha12} who showed that such a hash function is indistinguishable from a random function under $q$ queries. Combining this with \cite{Zha16} who showed how to instantiate the PRP ($g_{k_2}$) from post-quantum one-way functions, we successfully invoke computational assumptions.
\par Now that we have invoked the computational assumptions as per the existence of quantum-secure PRF and PRP, we are left with the information theoretic construction given by $G_{(f, \pi)}$ (\cref{eq:gfpi}),
which is parametrized by a random function $f$ and a random permutation $\pi$. Below, we write $\rho \in \qclass$ as a short-hand to mean $\rho \in \qclass_{n,q,\ell,\secparam}$ for some $\secparam \in \mathbb{N}$. To show that $G_{(f,\pi)}$ is \emph{statistically} $\qclass$-secure for different query sets $\qclass$, we will show that the output of $G_{(f,\pi)}$ under any query $\rho \in \qclass$ is \emph{almost-invariant} under $q$-fold Haar unitary as per our second observation above. We achieve this in two steps: \begin{itemize}
    \item[]\textbf{Step 1:} Find a particular mixed state $\rho_\unique$, to be defined later, which is almost-invariant under $q$-fold Haar unitary. Conclude that if the output of $G_{(f,\pi)}$ under any query $\rho \in \qclass$ is negligibly close (in trace distance) to $\rho_\unique$, then it is $q$-fold Haar almost invariant, hence $G_{(f,\pi)}$ satisfies statistical $\qclass$-security.
    \item[]\textbf{Step 2:} For 3 different instantiations of $\qclass$, prove that the condition in Step 1 is satisfied, hence $G_{(f,\pi)}$ is statistically $\qclass$-secure.
\end{itemize}
Note that our proof-strategy outlined above is a top-down approach, and the first two steps can be viewed as reducing the problem of PRI-security to a simpler condition that is easier to check for different query sets, and is independent of the action of Haar isometry on $\qclass$. In Step 3, we show instantiations of $\qclass$ that satisfy the simpler condition. Next, we delve into the details of each step.


\paragraph{Step 1: An Almost-Invariant State: $\rho_{\unique}$.} 
\par Having established $q$-fold Haar almost-invariance as a sufficient condition for statistical security of $G_{(f,\pi)}$, it is natural to ask the question: \begin{quote}
    {\em Can we find a state $\rho^*$ which is both:\\ (a) close to the output of $G_{(f,\pi)}$ on certain inputs, and\\ (b) $q$-fold Haar almost-invariant?}
\end{quote}
\noindent This would allow us to use negligible closeness to $\rho^*$ as a sufficient condition for $q$-fold Haar almost-invariance, hence for statistical security of $G_{(f,\pi)}$. We start by analyzing condition (a).
\par We restrict our attention to queries with a particular, yet quite general, structure. Namely, suppose $\qclass = \bracC{\qclass_{n,q,\ell,\secparam}}$ is such that every $\rho \in \qclass$ is a mixture of pure states of the form $\bigotimes_{i=1}^s \ket{\psi_i}^{\otimes t}$, where $q = st$. In other words, the adversary makes queries in the form of $s$ states with $t$-copies each, or formally queries from the $s$-fold tensor product of symmetric subspaces, denoted by $\cH = \left( \vee^t \mathbb{C}^N \right)^s$. For such inputs, the output of the isometry will belong to the corresponding tensor product of symmetric subspaces $\cH' := \left( \vee^t \mathbb{C}^{NM} \right)^s$, where $N = 2^n$ and $M = 2^m$. It is known~\cite{Harrow13church} that $\cH$ is spanned by $s$-fold tensor product of \emph{type states} $\ket{\psi_{T_1,\dots,T_s}} = \bigotimes_{i=1}^s \ket{\type_{T_i}} $, where $\ket{\type_{T_i}}$ is a uniform superposition over computational basis states $\ket{\vec{x}} \in \C^{Nt}$ of the same \emph{type} ($T_i$), where $\vec{x}$ and $\vec{y}$ are said to have the same type if $\vec{y} = \sigma \vec{x}$ for some permutation $\sigma \in S_t$ over $t$ elements.
\par To understand the action of $G_{(f,\pi)}$ on $\qclass$, we consider its action on a basis state $\ket{\psi_{T_1,\dots,T_s}}$ of $\cH$. We first look at the action of a random isometry $\mathcal{I}$ on $\ket{\psi_{T_1,\dots,T_s}}$ and see that $$ \E_{\haarisometry} \bracS{ \haarisometry^{\otimes q} \ketbra{\psi_{T_1,\dots,T_s}}{\psi_{T_1,\dots,T_s}} \haarisometry^{\otimes q}} = \E_{T_1',\dots,T_s'} \bracS{ \ketbra{\psi_{T_1',\dots,T_s'}}{\psi_{T_1',\dots,T_s'}} } $$
is maximally mixed over $\cH'$, where $T_1',\dots,T_s'$ are types over $\C^{NMt}$. The same fact is not quite true for $G_{(f,\pi)}$ due to cross terms. Nonetheless, such terms cancel out whenever $(T_1,\dots,T_s)$ form a set of \emph{unique} types, denoted by $(T_1,\dots,T_s) \in \tuni{n}{s}{t}$, meaning collectively they span $st$ distinct computational basis states $\ket{x} \in \C^N$, thanks to the nice algebraic structure of the image of $f$, i.e. $\mathbb{Z}_p$. As a result, we get \begin{align} 
& \E_{f,\pi} \bracS{ G_{(f,\pi)}^{\otimes q} \ketbra{\psi_{T_1,\dots,T_s}}{\psi_{T_1,\dots,T_s}} G_{(f,\pi)}^{\otimes q}} \nn \\
& = \E_{(T_1',\dots,T_s') \leftarrow \tuni{n+m}{s}{t}} \bracS{ \ketbra{\psi_{T_1',\dots,T_s'}}{\psi_{T_1',\dots,T_s'}} } 
=: \rho_\unique
\label{eq:1}
\end{align}

for any $(T_1,\dots,T_s) \in \tuni{n}{s}{t}$. 
\noindent Fortunately, $\rho_\unique$ satisfies\footnote{We note that $\rho_\unique = \rho_{\unique_{s,t}}$ is parametrized by $s,t$ in the tecnhical sections, which we omit here for simplicity of notation.} property (b) as well. The reason is that the $q$-fold unique type states $\ket{\psi_{T_1,\dots,T_s}}$ constitute the vast majority\footnote{This follows from the fact that a random type will contain no repetitions with overwhelming probability as long as $t=\poly(\secparam)$.} of the basis for $\cH'$, so that $\rho_\unique$ is negligibly close to the maximally mixed state over $\cH'$, which is invariant under $q$-fold unitary operations. Therefore, if $G_{(f,\pi)}^{\otimes q}\rho (G_{(f,\pi)}^\dagger)^{\otimes q}$ is negligible close to $\rho_\unique$, then it is $q$-fold Haar almost-invariant, hence we have a simpler sufficient condition to check for PRI security as desired. Note that so far we have ignored the $\ell$-qubit (purification) register held by the adversary, but the arguments generalize without trouble. The detailed proofs of this step can be found in \Cref{sec:invariance_rho_uni}.

\paragraph{Step 2: Closeness to $\rho_\unique$.}
In the final step of our security proof, we show that $G_{(f,\pi)}$ is statistically $\qclass$-secure for three instantiations of $\qclass$ by showing that the output of $G_{(f,\pi)}$ is close to $\rho_\unique$ in each case. \\
\ \\
\noindent \underline{\textsc{Distinct Types:}}
By $\cref{eq:1}$, it follows that $G_{(f,\pi)}$ is $\qclass$-secure for\footnote{The reader may observe that we can also consider the convex closure of $\tuni{n}{s}{t}$.} $\qclass = \tuni{n}{s}{t}$. We can generalize this to \emph{distinct} type states $\ket{\psi_{T_1,\dots,T_s}}$, which are defined by the condition that the computational basis states spanned by the types $T_i$ are mutually disjoint, denoted by $(T_1, \dots, T_s) \in \tdis{n}{s}{t}$. Note that $\tuni{n}{s}{t} \subset \tdis{n}{s}{t}$ since for types $(T_1,\dots,T_s) \in \tdis{n}{s}{t}$ each $T_j$ may contain repetitions. Fortunately, a careful analysis shows that the output of $G_{(f,\pi)}$ on a distinct type state acquires a nice form and is close to $\rho_\unique$ as well. Intuitively, the reason for this is that the first step in our construction appends a random string $\vec{a}$ to the input query, and after this step the internal collisions in $\tdis{n}{s}{t}$ get eliminated except with negligible weight. Accordingly, we get security for the query set $$\qclass_{\distinct_{t,s}} = \bracC{\bigotimes_{i=1}^{s}\ketbra{\type_{T_i}}{\type_{T_i}}:(T_1,\cdots,T_s)\in\tdis{n}{s}{t}}.$$

\noindent As a corollary, we conclude that our construction is secure against computational basis queries. \\

\noindent \underline{\textsc{Many Copies of an $n$-Qubit State:}}
Next, we show security for many copies of the same pure state, defined by the query set 
    $$\qclass_{{\sf Single}} = \left\{ {\color{red} \ket{\psi}^{\otimes t}} \otimes {\color{blue} \ket{\psi}^{\otimes t}} \ :\  \ket{\psi} \in {\cal S}(\C^{2^n}) \right\},$$
which allows for the adversary to keep $t$ copies of the state that are not fed into the PRI, with $\ell = q = t$. We can write the input state in the type-basis of the symmetric subspace as $$\ketbra{\psi}{\psi}^{\otimes t} = \sum_{T,T'}\alpha_{T,T'}\ketbra{\type_T}{\type_{T'}}.$$

Thanks to the algebraic structure of $\mathbb{Z}_p$, the terms with $T \ne T'$ vanish under the application of $G_{(f,\pi)}^{\otimes q}(\cdot)(G_{(f,\pi)}^{\dagger})^{\otimes q}$. The rest of the terms are approximately mapped to $\rho_\unique$ as we showed in $\qclass_{\distinct_{t,s}}$-security above (by taking $s = 1$). Hence, the result follows.

\ \\
\noindent \underline{\textsc{Haar States:}}
Finally, we consider the case when the query contains a collection of $s$ i.i.d. Haar states, with $t$ copies of each kept by the adversary and $t$ copies given as input to the PRI, i.e. the query set is
$$ \qclass_{{\sf Haar}} = \left\{\mathbb{E}_{\ket{\psi_1},\ldots,\ket{\psi_s} \leftarrow \Haar_{n} }\left[ {\color{red}\bigotimes_{i=1}^{s}\ketbra{\psi_i}{\psi_i}^{\otimes t}} \otimes {\color{blue} \bigotimes_{i=1}^{s}\ketbra{\psi_i}{\psi_i}^{\otimes t}} \right]\right\} .$$
Note that without the {\color{red} red part}, the security would simply follow by taking an expectation over unique types in \cref{eq:1}. Since the adversary will keep $t$ copies of each Haar state to herself, she holds an entangled register (purification) to the query register, hence we need to work more.
We first recall that the query $\rho_{{\sf Haar}} \in \qclass_{{\sf Haar}}$ is negligibly close to the uniform mixture of unique $s$-fold type states (for $2t$ copies). We combine this with the useful expression    
\begin{align}
\ketbra{\type_T}{\type_T} = \frac{1}{(2t)!}\sum_{\sigma\in S_{2t}}\sum_{\substack{\vec{v}\in[N]^{2t}\\ \type(\vec{v}) = T}}\ketbra{\vec{v}}{\sigma(\vec{v})}. \label{eq:2}
\end{align} 
to express the output as
$$\rho \propto \E_{\substack{(f,\pi)\\ T_1,\dots,T_s \\ (\vec{x_1},\cdots,\vec{x_s})\in (T_1,\cdots,T_s) \\ \sigma_1,\cdots,\sigma_s\in S_{2t}}} \left[\bigotimes_{i=1}^{s} \left(\left(I_{nt} \otimes \left(G_{(f,\pi)}\right)^{\otimes t}\right)\ketbra{\vec{x_i}}{\sigma_i(\vec{x_i})} \left(I_{nt}\otimes \left(G_{(f,\pi)}^{\dagger}\right)^{\otimes t}\right)\right)\right].$$

\noindent Above, due to the nice structure of $G_{(f,\pi)}$, the only terms that do not vanish are those with permutations $\sigma_i$ that act separately on the first and the last $n$ qubits, i.e. $\sigma_i(\vec{x_i}) = \sigma_i^1(\vec{x_i^1}) || \sigma_i^2(\vec{x_i^2})$ with $\sigma_i^b \in S_n, x_i^b \in \bit^n$. With this observation, and using \cref{eq:2} in reverse, we see that the $q$-fold application of $G_{(f,\pi)}$ effectively \emph{unentangles} the state, which was the only barrier against security. 

\ \\
\noindent The detailed proofs of this step for all three query sets can be found in \Cref{sec:closeness_to_rho_uni}.

\subsubsection{Applications.}
We discuss several applications of PRIs, giving an overview of \Cref{sec:applications}.
\paragraph{Multi-Copy Secure Encryption.}
As a first application, we achieve multi-copy secure public-key and private-key encryption for quantum messages. Multi-copy security is defined via a chosen-plaintext attack (CPA) with the modification that the CPA adversary gets polynomially many copies of the ciphertext in the security experiment. This modification only affects security in the quantum setting due to the no-cloning principle, with the ciphertexts being quantum states. We note that using $t$-designs one can achieve multi-copy security if the number of copies is fixed a-priori before the construction, whereas using PRI we can achieve it for \emph{arbitrary} polynomially many copies. Multi-copy security was independently studied by~\cite{LQSYZ23} albeit in the one-time setting.
\par We will focus on the public-key setting, for the private-key setting is similar. Formally, we would like an encryption scheme $(\setup, \enc, \dec)$ with the property that no QPT adversary, given $\rho^{\otimes t}$, where $\rho \leftarrow \enc(\ket{\psi_b})$, can distinguish the cases $b=0$ and $b=1$ with non-negligible advantage, for any quantum messages $\ket{\psi_0}, \ket{\psi_1}$. In the construction, we will use a post-quantum public-key encryption scheme $(\smsetup, \smenc, \smdec)$ and a secure pseudorandom isometry $\pri$. The public-secret keys are those generated by $\smsetup(1^\secparam)$. To encrypt a quantum message $\ket{\psi}$, we sample a PRI key $k$ and output $(\ct, \varphi)$, where $\ct$ is encryption of $k$ using $\smenc$, and $\varphi \leftarrow \pri_k(\ket{\psi})$. Note that for correctness we need the ability to efficiently invert the PRI, which is a property satisfied by our PRI construction.
\par To show security, we deploy a standard hybrid argument where we invoke the security of $(\smsetup, \smenc, \smdec)$ as well as the $\qclass_{{\sf Single}}$-security of $\pri$. This suffices since we only run $\pri$ on copies of the same pure-state input (the quantum message). 

\paragraph{Succinct Commitments.}
\cite{GJMZ23} showed how to achieve succinct quantum commitments using pseudorandom unitaries (PRU) by first achieving one-time secure quantum encryption, and then showing that one-time secure quantum encryption implies succinct commitments. We adapt their approach to achieve succinct quantum commitments from PRIs. \cite{LQSYZ23} uses the work of \cite{GJMZ23} in a similar fashion to achieve succinct commitments from quantum pseudorandom state scramblers.
\par To one-time encrypt a quantum message, we apply in order: (1) inverse Schur transform, (2) PRI, and (3) Schur transform. Note that in contrast with \cite{GJMZ23}, the Schur transforms in (1) and (3) have different dimensions. The security proof follows that of \cite{GJMZ23} closely and relies on Schur's Lemma.

\paragraph{Quantum MACs.}
We show how to achieve a restricted version of quantum message authentication codes (QMACs) using an invertible pseudorandom isometry $\pri$. We face definitional challenges in this task. 
\par Similar to an injective function, an isometry does not have a unique inverse\footnote{We remind the reader that the map $\haarisometry^\dagger$ is not a physical map (quantum channel) for a general isometry $\haarisometry$.}. We discuss this and give a natural definition of the inverse in \Cref{sec:invertibility}.
\par There is extensive literature~\cite{BCG+02FOCS,DNS12,GYZ17,AM17} on \emph{one-time}, private-key quantum state authentications, i.e., the honest parties can detect whether the signed quantum state has been tempered. However, defining \emph{many-time} security, such as existentially unforgeable security under a chosen-message attack, is quite challenging. In particular, defining QMACs is non-trivial for several reasons, explicitly pointed out by~\cite{AGM18}. Firstly, one needs to carefully define what constitutes a \emph{forgery}, and secondly, verification may require multiple copies of the message and/or the tag. We give a new syntax which differs from the classical setting in that the verification algorithm outputs a message instead of Accept/Reject. 
\par In our construction, the signing algorithm simply applies $\pri$ to the quantum message, whereas the verification applies the inverse of $\pri$. Given this syntax, we show that our construction satisfies three different security notions: \begin{itemize}
    \item In the first setting, the verification algorithm is run polynomially many times in parallel on fresh (message, tag) pairs, and the outputs of the verifier is compared with the message using a SWAP test. We argue that during a forgery, each swap test succeeds with constant probability, hence the forgery succeeds with exponentially small probability due to independent repetition of SWAP tests.
    \item In the second setting, the verification is run once on the tag, and the output is compared to polynomially many copies of the message using a generalized SWAP test called \emph{the permutation test} \cite{BBA+97,KNY08,GGH+15,BS20}. The upside of this security notion is that it requires only one copy of the tag, yet the downside is that the it yields inverse polynomial security rather than negligible security.
    \item In the third setting, the adversary is asked to output the description of an invertible quantum circuit that generates the forgery message on input $\ket{0^n}$, together with the tag. In this setting, the verification is run on the tag, and the inverse of the circuit is computed on the output to see if the outcome is $\ket{0^n}$. We show that negligible security in this setting follows as a direct consequence of PRI security.
\end{itemize}

Now we will describe the security proof for the first and the second settings. Firstly, we can replace the PRI with a Haar isometry $\haarisometry$ using PRI security. Next, suppose the adversary $\alice$ makes $q$ queries $\ket{\psi_1}, \dots, \ket{\psi_q}$ to the signing oracle, receiving tags $\ket{v_1}, \dots, \ket{v_q}$ in return. Let the forgery output by $\alice$ be $(\ket{\psi^*}, \ket{\phi^*})$. It is forced by definition that $\ket{\psi^*}$ is orthogonal to $V := \mathsf{span}(\ket{\psi_1}, \dots, \ket{\psi_q})$. From $\alice$'s point of view, $\haarisometry\ket{\psi^*}$ is a Haar-random state sampled from $V^\perp$. Therefore, any $\ket{\phi^*} \in V$ will be mapped to a state orthogonal to $\ket{\psi^*}$ by the verification, whereas a forgery satisfying $\ket{\phi^*} \in V^\perp$ is as good as any other such forgery. Putting these together, a straightforward calculation using the fact that $\dim V \le q \ll 2^\secparam$ suffices for the proof in both settings.

\paragraph{PRS Length Extension.}
We show how to generically extend the length of a Haar-random state using a small amount of randomness assuming the existence of PRIs. Formally, we show that if $\pri$ is a secure $(n,n+m)$-pseudorandom isometry, then given $t$ copies of a $2n$-qubit Haar-random state $\ket{\theta}$, the state $(I_n \otimes \pri_k)^{\otimes t} \ket{\theta}^{\otimes t}$, obtained by applying $\pri_k$ to the last $n$ qubits, is computationally indistinguishable from $t$ copies of a $(2n+m)$-qubit Haar-random state $\ket{\gamma}^{\otimes t}$.
\par In the proof, we can replace $\pri$ with a random isometry $\haarisometry$ up to negligible loss invoking security. After writing $\ket{\theta}\bra{\theta}^{\otimes t}$ as a uniform mixture of type states, we obtain the expression $$ \rho' = \Ex_{T, \haarisometry} \left[ (I_n \otimes \haarisometry)^{\otimes t} \ketbra{\type_T}{\type_T} (I_n \otimes \haarisometry^\dagger)^{\otimes t} \right], $$
where by a collision-bound we can assume (up to a negligible loss) that $T$ is sampled as a \emph{good} type, meaning if it contains strings $\{x_1||y_1 \dots x_t || y_t \}$, then $x_i \ne x_j$ and $y_i \ne y_j$ for $i \ne j$. For such good types $T$, we can show that the state $\rho'$ is close to the uniform mixture of type states $\ket{\type_{T'}}\bra{\type_{T'}}$ spanning states of the form $\ket{\vec{x}}\ket{\vec{z}}$, where $\vec{z} \in \bit^{(n+m)t}$ is a random vector with pairwise distinct coordinates. This is because the mapping $(I_n \otimes \haarisometry)^{\otimes t}$ \emph{scrambles} $\vec{y}$ and leaves $\vec{x}$ untouched. In the proof we use our (first) observation about how $t$-fold Haar unitary acts on orthogonal inputs. 

\par For technical reasons, our loss in this step is proportional to $t!$, which necessitates the assumption that $t$ must be sublinear in the security parameter (e.g. $t = \poly\log(\secparam)$. In more detail, we expand $\rho'$ by expressing the type state $
\ket{\type_T}$ as superposition of computational basis states pairwise related by a permutation to get \begin{align*}
    \rho' & = \frac{1}{t!} \sum_{\sigma,\pi\in S_t} \ketbra{\sigma(\Vec{x})}{\pi(\Vec{x})} \otimes \Ex_{\haarisometry}[ \haarisometry^{\otimes t} \ketbra{\sigma(\Vec{y})}{\pi(\Vec{y})} (\haarisometry^\dagger)^{\otimes t}] \\
    & = \frac{1}{t!} \sum_{\sigma,\pi\in S_t} \ketbra{\sigma(\Vec{x})}{\pi(\Vec{x})} \otimes P_\sigma \Ex_{\haarisometry}[ \haarisometry^{\otimes t} \ketbra{\Vec{y}}{\Vec{y}} (\haarisometry^\dagger)^{\otimes t}] P_\pi^\dagger,
\end{align*}
where we used the fact that the permutation operators $P_\sigma, P_\pi$ commute\footnote{Technically the permutation operator acts on a larger Hilbert space after applying the isometry, but it applies the same permutation to the order of $t$ copies.} with the $t$-fold isometry $\haarisometry^{\otimes t}$. We can show that the term between the permutation operators $P_\sigma, P_\pi^\dagger$ is maximally scrambled for any given $\sigma, \pi$, which can be combined with a union bound over $\sigma, \pi$ that yields a factor of $t!$ in the loss. Unfortunately we do not know how to relate the terms across different $\sigma, \pi$ to avoid this loss.
Finally, the uniform mixture we obtained is negligibly close to the distribution of $\ket{\gamma}^{\otimes t}$ by another collision-bound.

\section{Preliminaries} \label{sec:prelim}
We denote the security parameter to be $\secparam$. We assume that the reader is familiar with the fundamentals of quantum computing covered in~\cite{nielsen_chuang_2010}. 
\par We define $\sphere(\C^{N})$ to be the set of $N$-dimensional vectors with unit norm. An element in  $\sphere(\C^{N})$ is denoted using the ket notation $\ket{\cdot}$. We use ${\cal D}(\C^{N})$ to denote the set of $N$-dimensional density matrices. Let $H_A,H_B$ be finite-dimensional Hilbert spaces, we use $\cL(H_A,H_B)$ to denote the set of all linear operators from $H_A$ to $H_B$. If $H_A \cong H_B$, then we write $\cL(H_A)$ instead of $\cL(H_A,H_B)$ for short. Sometimes we abuse the notation and denote a density matrix of the form $\ketbra{\psi}{\psi}$ to be $\ket{\psi}$. We denote the trace distance between quantum states $\rho, \rho'$ by $\TD(\rho, \rho') := \frac{1}{2} \|\rho - \rho\|_1$. We denote the operator norm of $A$ by $\norm{A}_\infty$.
\par We refer to Section 2.1 in~\cite{AQY21} for the definition of quantum polynomial-time (QPT) algorithms adopted in this work. 

\subsection{Notation}
\begin{itemize}
\item Let $n,p\in\N$, we use $[n]$ to denote the set $\{0,\ldots,n-1\}$. 
\item We denote by $S_{n}$ the symmetric group on $n$ elements. 
\item We denote by $\mathcal{F}_{n,p}$ the set of all functions from $[n]$ to $[p]$. 
\item For a set $A$ and $t \in \N$, we define $A^t := \{(a_1,\ldots,a_t)\ :\ \forall i, a_i \in A\}$. 
\item Let $n,m,t\in\N$, $\vec{x} = (x_1,\cdots,x_t) \in \{0,1\}^{nt}$, $\vec{y} = (y_1,\cdots,y_t)\in \{0,1\}^{mt}$, we define $\vec{x}||\vec{y}:= (x_1||y_1,\cdots,x_t||y_t)\in\{0,1\}^{(n+m)t}$.
\item Let $\sigma\in S_t$, we define $\sigma(\vec{x}) := (x_{\sigma^{-1}(1)},\cdots,x_{\sigma^{-1}(t)})\in\{0,1\}^{nt}$. 
\item Let $\pi\in S_{2^n}$, we define $\vec{x}_{\pi} := (\pi(x_{1}),\cdots,\pi(x_{t}))\in\{0,1\}^{nt}$.
\item Let $F:\bit^{n}\to\mathbb{Z}$, we define $F(\vec{x}):=\sum_{i=1}^{t} F(x_i)$.
\item Let $X_{AB} \in \cL(H_A\otimes H_B)$. By $\Tr_{B}(X_{AB})$ we mean the partial trace over $B$.
\end{itemize}

\subsection{Haar Measure, Symmetric Subspaces, and Type States}
\paragraph{Haar Unitaries, Haar States, and Haar Isometries.}
\begin{definition}[Haar Unitaries and Haar States]
We denote by $\overline{\Haar_n}$ the \emph{Haar measure} over $2^n\times 2^n$ unitaries. We call a $2^n\times 2^n$ unitary $U$ a \emph{Haar unitary} if $U\gets \overline{\Haar_n}$. Let $V$ be a finite-dimensional Hilbert space, we denote by $\Haar(V)$ the \emph{uniform spherical measure} on the unit sphere $\sphere(V)$. If $V\cong \C^{2^n}$, then we write $\Haar_n$ instead of $\Haar(\C^{2^n})$ for short. Moreover, $\Haar_n$ is equivalent to the distribution of $U\ket{0^n}$ induced by $U\leftarrow\overline{\Haar_n}$. We call a state $\ket{\vartheta}\in\pqstates{n}$ a \emph{Haar state} if $\ket{\vartheta}\gets \Haar_n$. We refer the readers to~\cite[Chapter~7]{WatrousBook} and \cite[Chapter~1]{Meckes19} for formal definitions.
\end{definition}

\begin{definition}[Haar Isometries] 
We call an isometry $\haarisometry: \C^{N} \to \C^{NM}$ a \emph{Haar isometry} if $\haarisometry\ket{x} = U\ket{x}\ket{\hat{0}}$, where $U: \C^{NM} \to \C^{NM}$ is a Haar unitary and $\ket{\hat{0}} \in \C^{M}$ is an arbitrary\footnote{Note that the choice of $\ket{\hat{0}}$ does not affect the distribution of $\haarisometry$ because $U$ is distributed according to the Haar distribution.} and fixed pure state. Equivalently, $\haarisometry$ is obtained by truncating an $NM\times NM$ Haar unitary to its first $N$ columns. We denote by $\overline{\Haar_{n,n+m}}$ the distribution of a Haar isometry from $n$ qubits to $n+m$ qubits. We refer the readers to~\cite{ZS00,KNPPZ21} for more details.
\end{definition}

\paragraph{An Explicit Geometric Construction of Haar Unitaries.}
According to~\cite[page~19]{Meckes19}, sampling a Haar unitary $U$ has a nice geometric interpretation. Intuitively, the procedure goes by ``uniformly'' sampling $U$ column-by-column conditioned on being orthogonal to all the previously sampled columns.

\begin{fact}[Sampling Haar Unitaries] \label{fact:sampling_Haar_unitary}
For any $d\in \N$, the following procedures output a $d\times d$ Haar unitary $U$.
\begin{enumerate}
    \item Let $V_0 := \set{0}$.
    \item For $i = 1,2,\dots,d$, samples $\ket{v_i} \gets \Haar(V_{i-1}^\perp)$ and let $V_i := \mathsf{span}\set{\ket{v_1},\ket{v_2},\dots,\ket{v_i}}\subseteq \C^d$.
    \item Output $U := \sum_{i=1}^d \ketbra{v_i}{i}$.
\end{enumerate}
\end{fact}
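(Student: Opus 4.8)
The plan is to show that the distribution $\mu$ of the output $U$ coincides with the Haar measure on the group of $d \times d$ unitaries by exploiting the standard fact that the Haar measure is the \emph{unique} left-translation-invariant Borel probability measure on this compact group. Thus it suffices to prove that $\mu$ is left-invariant, i.e. that for every fixed unitary $W$ the random matrix $WU$ has the same distribution as $U$. Before that, I would record two preliminaries. First, the procedure is well-defined and always outputs a unitary: for $1 \le i \le d$ the subspace $V_{i-1}^\perp$ has dimension $d-(i-1)\ge 1$, so $\ket{v_i}$ can be sampled, and by construction $\ket{v_1},\dots,\ket{v_d}$ are mutually orthogonal unit vectors, hence $U=\sum_i \ketbra{v_i}{i}$ maps the standard basis to an orthonormal basis and is unitary. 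Second, for any subspace $V\subseteq\C^d$ and any unitary $W$, the pushforward of $\Haar(V)$ through $W$ is $\Haar(WV)$; this is immediate since $W$ restricts to an isometry $V\to WV$ and the uniform spherical measure is characterized by invariance under the unitary group of the ambient subspace.

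For the main step, fix $W$ and note $WU = \sum_{i=1}^d \ketbra{Wv_i}{i}$, so the columns of $WU$ are $W\ket{v_1},\dots,W\ket{v_d}$; it therefore suffices to show that $(W\ket{v_1},\dots,W\ket{v_d})$ has the same joint law as $(\ket{v_1},\dots,\ket{v_d})$. I would prove this by induction on $i$, comparing conditional distributions. The base case: $\ket{v_1}\gets\Haar(\C^d)$, so by the second preliminary $W\ket{v_1}\gets\Haar(W\C^d)=\Haar(\C^d)$, which matches the rule for the first column. For the inductive step, condition on $\ket{v_1},\dots,\ket{v_{i-1}}$; then $\ket{v_i}\gets\Haar(V_{i-1}^\perp)$ with $V_{i-1}=\mathsf{span}\set{\ket{v_1},\dots,\ket{v_{i-1}}}$, so the conditional law of $W\ket{v_i}$ given $W\ket{v_1},\dots,W\ket{v_{i-1}}$ is $\Haar\!\big((WV_{i-1})^\perp\big)=\Haar\!\big(\mathsf{span}\set{W\ket{v_1},\dots,W\ket{v_{i-1}}}^\perp\big)$, which is exactly the rule by which the recipe produces its $i$-th vector from the first $i-1$. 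Hence the two processes are equal in distribution, $WU\sim U$, so $\mu$ is left-invariant, and by uniqueness of Haar measure $\mu$ is the Haar measure.

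The only genuinely delicate part is the measure-theoretic bookkeeping: justifying that one may condition on the earlier vectors and that the pushforward identity in the second preliminary applies verbatim to the conditional kernels. This is routine once one works with regular conditional distributions on the product of unit spheres, but it is the step that requires care. An essentially equivalent alternative would be to establish the recursive characterization of the Haar measure directly --- a Haar unitary has uniform first column on the sphere and, conditioned on it, its remaining block is Haar on the orthogonal complement --- but since that characterization is itself most naturally proved via the same left-invariance argument, I would present the proof as above.
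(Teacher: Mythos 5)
The paper does not prove this statement; it cites it as a known fact from~\cite[page~19]{Meckes19} and only offers the informal gloss that a Haar unitary can be built column-by-column, each column uniform on the orthogonal complement of the previous ones. So there is no in-paper proof to compare against. Your argument supplies a correct, self-contained proof and follows the standard route one would expect such a reference to take: verify the construction is a well-defined probability measure on $U(d)$, then establish left-invariance ($WU \stackrel{d}{=} U$ for every fixed $W\in U(d)$) via the column-by-column pushforward identity $W\cdot\Haar(V)=\Haar(WV)$ together with $(WV)^\perp = W(V^\perp)$, and finally invoke uniqueness of the Haar probability measure on a compact group. Each of these steps is sound. The one place you flag as delicate --- conditioning on the earlier columns and pushing the spherical measure through $W$ at the level of regular conditional distributions --- is indeed where a fully rigorous write-up would dwell, but it is routine because the ambient spaces are nice (products of spheres, continuous maps), and you identify the right tool (regular conditional distributions) to handle it. Nothing is missing or wrong.
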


\noindent Similarly, a Haar random isometry from $\C^{d'}$ to $\C^d$ ($d'\leq d$) is identically distributed to running the above procedure right after $d'$ columns are sampled (or equivalently, truncating the last $d-d'$ columns of $U$~\cite{ZS00}).

\begin{fact}[Sampling Haar Isometries] \label{fact:sampling_Haar_isometry}
For any $d',d\in \N$ such that $d'\leq d$, the following procedures output a $d\times d'$ Haar isometry $\haarisometry$.
\begin{enumerate}
    \item Let $V_0 := \set{0}$.
    \item For $i = 1,2,\dots,d'$, samples $\ket{v_i} \gets \Haar(V_{i-1}^\perp)$ and let $V_i := \mathsf{span}\set{\ket{v_1},\ket{v_2},\dots,\ket{v_i}}\subseteq \C^d$.
    \item Output $\haarisometry := \sum_{i=1}^{d'} \ketbra{v_i}{i}$.
\end{enumerate}
\end{fact}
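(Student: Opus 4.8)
The plan is to reduce directly to \Cref{fact:sampling_Haar_unitary}, which already establishes that the full column-by-column procedure on $d$ columns outputs a genuine $d\times d$ Haar unitary. Recall that, by definition, a Haar isometry $\haarisometry:\C^{d'}\to\C^d$ is obtained by truncating a $d\times d$ Haar unitary $U$ to its first $d'$ columns (equivalently, $\haarisometry\ket{x}=U\ket{x}\ket{\hat 0}$ for a fixed $\ket{\hat 0}\in\C^{M}$, with $\C^{d}\cong\C^{d'}\otimes\C^{M}$). So it suffices to show that the output of the procedure in the statement is distributed as the first $d'$ columns of such a $U$.

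First I would observe that the procedure in \Cref{fact:sampling_Haar_isometry} is precisely the first $d'$ steps of the procedure in \Cref{fact:sampling_Haar_unitary}: both maintain the same nested chain of subspaces $V_0\subseteq V_1\subseteq\cdots$ and sample $\ket{v_i}\gets\Haar(V_{i-1}^\perp)$ in the same way. Now consider running the unitary procedure to completion, i.e. continuing to sample $\ket{v_{d'+1}},\dots,\ket{v_d}$; by \Cref{fact:sampling_Haar_unitary} the resulting $U=\sum_{i=1}^d\ketbra{v_i}{i}$ is a Haar unitary, and by construction its first $d'$ columns are exactly the vectors $\ket{v_1},\dots,\ket{v_{d'}}$ produced by the isometry procedure. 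Since the later columns $\ket{v_{d'+1}},\dots,\ket{v_d}$ are sampled conditionally on $\ket{v_1},\dots,\ket{v_{d'}}$, marginalizing them out leaves the joint law of $(\ket{v_1},\dots,\ket{v_{d'}})$ unchanged. Hence $\haarisometry=\sum_{i=1}^{d'}\ketbra{v_i}{i}$ has exactly the distribution of the first $d'$ columns of a $d\times d$ Haar unitary, i.e. the Haar isometry distribution.

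I do not expect a genuine obstacle here; the only point requiring a little care is this "stop early'' step — making explicit that the marginal law of the first $d'$ columns does not depend on the conditional rule used to fill in the remaining $d-d'$ columns, and that the vectors $\ket{v_1},\dots,\ket{v_{d'}}$ are, jointly, orthonormal with probability one so that $\haarisometry$ is indeed an isometry almost surely. An alternative, equally short route avoids even invoking the unitary procedure: the set of isometries $\C^{d'}\to\C^d$ is the complex Stiefel manifold, which carries a unique left-$\mathrm{U}(d)$-invariant probability measure; one checks that the procedure's output is left-$\mathrm{U}(d)$-invariant (for fixed $W\in\mathrm{U}(d)$, sending each $\ket{v_i}\mapsto W\ket{v_i}$ re-expresses the procedure as the same procedure run on the rotated subspaces $WV_{i-1}$, using rotational invariance of $\Haar(V_{i-1}^\perp)$), and that the truncation of a Haar unitary is likewise left-$\mathrm{U}(d)$-invariant; uniqueness then forces the two to coincide. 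Finally I would note, as in the remark accompanying the definition, that appending the fixed ancilla $\ket{\hat 0}$ and applying a Haar unitary yields the same distribution on $\haarisometry$ regardless of $\ket{\hat 0}$, so there is no ambiguity in the target distribution~\cite{ZS00}.
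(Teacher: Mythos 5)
Your argument is correct and matches the paper's (very brief) justification, which simply observes that the isometry procedure is the first $d'$ steps of the unitary procedure of \Cref{fact:sampling_Haar_unitary} and equivalently that one may truncate the last $d-d'$ columns of a Haar unitary; your marginalization step makes this explicit. The alternative invariant-measure-on-the-Stiefel-manifold route you sketch is also sound, though not what the paper relies on.
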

\noindent Hence, we can view PRIs as a relaxation of PRUs from the following perspective: By leveraging computational assumptions, PRIs approximate the \emph{marginal} distribution of the first few columns of a Haar unitary, whereas PRUs need to approximate the whole matrix.

\paragraph{Symmetric Subspace and Type States.}
\noindent The proof of facts and lemmas in the rest of this subsection can be found in~\cite{Harrow13church,mele2023introduction}. Let $v = (v_1,\ldots,v_t)\in [N]^t$ for some $N,t\in\N$, we define $\hamming(v) := \sum_{i=1}^t v_i$.
Define $\type(v)$ to be a vector in $[t+1]^N$ where the $i^{th}$ entry in $\type(v)$ denotes the frequency of $i$ in $v$. For each type vector $T\in [t+1]^{N}$, with $T=(t_1,\cdots,t_{N})$, we define $\freq{T}{i} := t_i$ and $\setT(T)$ to be the multiset of size $t+1$ containing $t_i$ copies of $i$ for $1\leq i\leq N$. We define the support of $T$ by $\supp(T):=\set{i\in[N]: \freq{T}{i}>0}$. We sometimes write $\vec{v}\in T$ to mean $\vec{v}\in[N]^t$ with $\type(\vec{v}) = T$. Similarly, we write $T'\subset T$ to mean $\setT(T')\subset \setT(T)$.
\begin{definition}[Type States]
\label{def:type_states}
    Let $T\in [t+1]^N$ with $\hamming(T) = t$ for some $N,t\in \N$, define the type state: $$\ket{\type_T} := \sqrt{\frac{\prod_{i\in\supp(T)} \freq{T}{i}!}{t!}}\sum_{\vec{v}\in T}\ket{\vec{v}}.$$ 
\end{definition} 

\begin{lemma}
\label{lem:type_struc}
    Let  $T\in [t+1]^N$ with $\hamming(T) = t$ for some $N,t\in \N$, then $$\ketbra{\type_T}{\type_T} = \frac{1}{t!}\sum_{\sigma\in S_t}\sum_{\substack{\vec{v}\in[N]^t:\\ \type(\vec{v}) = T}}\ketbra{\vec{v}}{\sigma(\vec{v})}.$$ 
\end{lemma}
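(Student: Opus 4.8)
The plan is to expand the right-hand side and recognize the coefficient of each diagonal-like contribution, matching it against $\ketbra{\type_T}{\type_T}$ computed directly from \Cref{def:type_states}. First I would fix $T \in [t+1]^N$ with $\hamming(T) = t$ and write $c_T := \prod_{i \in \supp(T)} \freq{T}{i}!$, so that $\ket{\type_T} = \sqrt{c_T/t!}\,\sum_{\vec{v} \in T} \ket{\vec{v}}$ and hence
\[
\ketbra{\type_T}{\type_T} = \frac{c_T}{t!} \sum_{\vec{v},\vec{w} \in T} \ketbra{\vec{v}}{\vec{w}}.
\]
Now I would turn to the right-hand side: $\frac{1}{t!}\sum_{\sigma \in S_t} \sum_{\vec{v} : \type(\vec{v}) = T} \ketbra{\vec{v}}{\sigma(\vec{v})}$. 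The key observation is that for a fixed $\vec{v} \in T$, as $\sigma$ ranges over $S_t$, the string $\sigma(\vec{v})$ ranges over exactly the strings $\vec{w}$ with $\type(\vec{w}) = T$, and each such $\vec{w}$ is hit the same number of times, namely $|\mathrm{Stab}(\vec{v})|$ times, where $\mathrm{Stab}(\vec{v}) = \{\sigma \in S_t : \sigma(\vec{v}) = \vec{v}\}$ is the stabilizer of $\vec{v}$ under the coordinate-permutation action. This is just the orbit–stabilizer theorem applied to the $S_t$-action on $[N]^t$: the orbit of $\vec{v}$ is precisely the set $\{\vec{w} : \type(\vec{w}) = T\}$, and $|\mathrm{Stab}(\vec{v})| = \prod_{i \in \supp(T)} \freq{T}{i}! = c_T$ since a permutation fixes $\vec{v}$ iff it permutes, among themselves, the coordinate-positions carrying each fixed symbol.

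With that in hand I would rewrite the inner double sum by swapping the order of summation:
\[
\sum_{\sigma \in S_t} \sum_{\vec{v} \in T} \ketbra{\vec{v}}{\sigma(\vec{v})} = \sum_{\vec{v} \in T} \sum_{\sigma \in S_t} \ketbra{\vec{v}}{\sigma(\vec{v})} = \sum_{\vec{v} \in T} c_T \sum_{\vec{w} \in T} \ketbra{\vec{v}}{\vec{w}} = c_T \sum_{\vec{v},\vec{w} \in T} \ketbra{\vec{v}}{\vec{w}},
\]
where the middle equality uses that $\{\sigma(\vec{v}) : \sigma \in S_t\}$ is the orbit $\{\vec{w} : \type(\vec{w}) = T\}$ with multiplicity $c_T$ on each element. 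Dividing by $t!$ gives $\frac{c_T}{t!}\sum_{\vec{v},\vec{w}\in T}\ketbra{\vec{v}}{\vec{w}}$, which is exactly the expression for $\ketbra{\type_T}{\type_T}$ derived above, completing the proof.

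I do not anticipate a serious obstacle here; the only point requiring a little care is the multiplicity count, i.e., verifying that the stabilizer of any $\vec{v} \in T$ has size exactly $\prod_{i} \freq{T}{i}!$ and that this is independent of the choice of representative $\vec{v}$ in the orbit (it is, since conjugate subgroups have equal order, or more concretely since the stabilizer is visibly the product of symmetric groups on the preimage blocks $\vec{v}^{-1}(i)$). One could alternatively phrase the whole argument purely combinatorially by comparing matrix entries $\bra{\vec{a}}(\cdot)\ket{\vec{b}}$ on both sides: the left side gives $c_T/t!$ when $\type(\vec{a}) = \type(\vec{b}) = T$ and $0$ otherwise, while the right side gives $\frac{1}{t!}|\{\sigma : \sigma(\vec{a}) = \vec{b}\}|$ when $\type(\vec{a}) = T$, which is $|\mathrm{Stab}(\vec{a})|/t! = c_T/t!$ when additionally $\type(\vec{b}) = T$ and $0$ otherwise — the two entry-wise descriptions agree, which is enough.
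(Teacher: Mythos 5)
Your proof is correct and follows essentially the same route as the paper: both expand $\ketbra{\type_T}{\type_T}$ directly from \Cref{def:type_states} and then match it against the right-hand side by observing that, for fixed $\vec{v}$, summing $\ketbra{\vec{v}}{\sigma(\vec{v})}$ over $\sigma \in S_t$ covers each $\vec{w}$ with $\type(\vec{w})=T$ exactly $\prod_{i\in\supp(T)}\freq{T}{i}!$ times. The only cosmetic difference is that you name the orbit--stabilizer theorem and note the stabilizer size is independent of the orbit representative, whereas the paper does the equivalent count by dividing $t!$ by the orbit size; both are sound.
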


\begin{proof}
Notice that by \Cref{def:type_states}, 
$$\ket{\type_T} = \sqrt{\frac{\prod_{i\in\supp(T)} \freq{T}{i}!}{t!}}\sum_{\substack{\vec{v}\in[N]^t:\\ \type(\vec{v}) = T}}\ket{\vec{v}}.$$
Hence, 
$$\ketbra{\type_T}{\type_T} = \frac{\prod_{i\in\supp(T)} \freq{T}{i}!}{t!}\sum_{\substack{\vec{v},\vec{v'}\in[N]^t:\\ \type(\vec{v}) = \type(\vec{v'}) = T}}\ketbra{\vec{v}}{\vec{v'}}.$$
Note that since $\type(\vec{v'}) = \type(\vec{v})$, $\vec{v'} = \sigma(\vec{v})$ for some $\sigma\in S_{t}$. Notice that the number of $\vec{v'}$ with $\type(\vec{v'}) = T$ is $\frac{t!}{\prod_{i\in \supp(T)} \freq{T}{i}!}$.  Summing over all permutations $\sigma$, each $\vec{v'}$ is repeated exactly $\prod_{i\in \supp{T}} \freq{T}{i}!$ times. 
Hence, 
$$\ketbra{\type_T}{\type_T} = \frac{1}{t!}\sum_{\sigma\in S_t}\sum_{\substack{\vec{v}\in[N]^t:\\ \type(\vec{v}) = T}}\ketbra{\vec{v}}{\sigma(\vec{v})}.$$
\end{proof}

\noindent Next, we define permutation operators and discuss a few properties of Haar states.

\begin{definition}[Permutation Operator]
\label{fact:sym}
Let $N,t\in\N$. For any permutation $\sigma \in S_t$, let $P_{N}(\sigma)$ denote the unitary that permutes the $t$ tensor factors according to $\sigma$, i.e., $P_N(\sigma) := \sum_{\vec{x}\in[N]^t}\ketbra{\sigma(\vec{x})}{\vec{x}} \in \cL((\C^N)^{\otimes t})$. When the dimension $N$ is clear from the context, we sometimes omit it and write $P_\sigma$ for brevity.
\end{definition}

\begin{definition}[Symmetric Subspace] \label{def:sym_subspace} Let $t \in \N$ and $H$ be a finite-dimensional Hilbert space. The symmetric subspace $\vee^t H\subseteq H^{\otimes t}$ is defined as \begin{align*}
    \vee^t H := \span \bracC{\ket{\psi}^{\otimes t} \; : \; \ket{\psi} \in H},
\end{align*}
and the orthogonal projector onto $\vee^t H$ is denoted by $\Pi^{H,t}_\sym$. In particular, if $H \cong \C^N$, then we write $\Pi^{N,t}_\sym$ rather than $\Pi^{\C^N,t}_\sym$ for brevity.
\end{definition}

\begin{fact}[Dimension of symmetric subspace]
\label{fact:dim-sym}
For $N,t\in\N$, $\dim(\vee^t\C^N) = \Tr(\Pi^{N,t}_\sym) = \binom{N+t-1}{t}$.
\end{fact}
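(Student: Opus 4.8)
The plan is to show that the type states form an orthonormal basis of $\vee^t\C^N$ and then count them. The equality $\dim(\vee^t\C^N)=\Tr(\Pi^{N,t}_\sym)$ is automatic, since $\Pi^{N,t}_\sym$ is by definition the orthogonal projector with range $\vee^t\C^N$ and the trace of an orthogonal projector equals the dimension of its range; so the real work is computing $\dim(\vee^t\C^N)$.

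First I would verify that $\{\ket{\type_T}\}$, as $T$ ranges over all $T\in[t+1]^N$ with $\hamming(T)=t$, is orthonormal. Orthogonality is immediate: $\ket{\type_T}$ is supported exactly on the computational basis vectors $\ket{\vec v}$ with $\type(\vec v)=T$, and distinct types give disjoint supports. For normalization, there are exactly $t!/\prod_{i\in\supp(T)}\freq{T}{i}!$ strings $\vec v\in[N]^t$ of type $T$, which cancels against the prefactor in \Cref{def:type_states} to give $\norm{\ket{\type_T}}^2=1$. Next I would show $\mathsf{span}\{\ket{\type_T}\}=\vee^t\C^N$. The inclusion ``$\supseteq$'' comes from expanding a product state: writing $\ket\psi=\sum_{i\in[N]}\alpha_i\ket i$ one has $\ket\psi^{\otimes t}=\sum_{\vec v\in[N]^t}\bigl(\prod_j\alpha_{v_j}\bigr)\ket{\vec v}$, and grouping terms by type yields $\ket\psi^{\otimes t}=\sum_T\beta_T\ket{\type_T}$ with each $\beta_T$ a scalar; hence every product state, and therefore $\vee^t\C^N$, lies in $\mathsf{span}\{\ket{\type_T}\}$. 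For ``$\subseteq$'' it suffices to put each $\ket{\type_T}$ in the span of product states; this follows from the standard symmetrizer identity $\Pi^{N,t}_\sym=\tfrac{1}{t!}\sum_{\sigma\in S_t}P_N(\sigma)$ together with \Cref{lem:type_struc}, which give $\ket{\type_T}=c_T\,\Pi^{N,t}_\sym\ket{\vec v}$ for any $\vec v$ with $\type(\vec v)=T$ and a nonzero scalar $c_T$, and this lies in the range of $\Pi^{N,t}_\sym$, which is $\vee^t\C^N$.

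Finally, the count: $\dim(\vee^t\C^N)$ is the number of vectors $(t_1,\dots,t_N)\in\N^N$ with $t_1+\dots+t_N=t$, which by a stars-and-bars argument is $\binom{N+t-1}{t}=\binom{N+t-1}{N-1}$. The only step that is not pure bookkeeping is the inclusion $\mathsf{span}\{\ket{\type_T}\}\subseteq\vee^t\C^N$, i.e.\ that symmetrized computational basis states are genuinely built from honest product states; I expect this to be the main point to justify (via the symmetrizer formula above, or via a polarization argument that the $t$-th powers of linear forms span all homogeneous degree-$t$ polynomials in $N$ variables over $\C$), or else simply to cite as the paper does from \cite{Harrow13church,mele2023introduction}. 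As an alternative route one can bypass the basis entirely and compute directly $\Tr(\Pi^{N,t}_\sym)=\tfrac{1}{t!}\sum_{\sigma\in S_t}\Tr(P_N(\sigma))=\tfrac{1}{t!}\sum_{\sigma\in S_t}N^{c(\sigma)}=\tfrac{1}{t!}N(N+1)\cdots(N+t-1)$, using $\Tr(P_N(\sigma))=N^{c(\sigma)}$ with $c(\sigma)$ the number of cycles of $\sigma$ and the cycle-index identity $\sum_{\sigma\in S_t}x^{c(\sigma)}=x(x+1)\cdots(x+t-1)$.
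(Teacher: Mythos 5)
Your proof is correct. Note, though, that the paper does not actually prove this statement---at the start of that subsection it defers to \cite{Harrow13church,mele2023introduction} for all facts therein---so there is no in-paper argument to compare against; you have simply supplied what the paper leaves to citation, and what you give is the standard argument. The bookkeeping steps are all fine: the trace of an orthogonal projector equals the dimension of its range; type states are pairwise orthogonal by disjointness of supports; the normalization checks out because there are $t!/\prod_{i\in\supp(T)}\freq{T}{i}!$ strings $\vec v$ with $\type(\vec v)=T$, cancelling the prefactor in \Cref{def:type_states}; $\ket{\psi}^{\otimes t}$ expands with type-dependent coefficients, so it lies in $\mathsf{span}\{\ket{\type_T}\}$; and the number of types with $\hamming(T)=t$ is $\binom{N+t-1}{t}$ by stars and bars. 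The one ingredient that is not bookkeeping---which you correctly single out---is the inclusion $\mathsf{span}\{\ket{\type_T}\}\subseteq\vee^t\C^N$, i.e.\ that the $S_t$-fixed subspace is spanned by $t$-fold tensor powers, which is exactly the content of the symmetrizer identity $\Pi^{N,t}_\sym=\frac{1}{t!}\sum_{\sigma\in S_t}P_N(\sigma)$. One thing to be aware of: your ``alternative route'' $\Tr(\Pi^{N,t}_\sym)=\frac{1}{t!}\sum_\sigma N^{c(\sigma)}=\frac{1}{t!}N(N+1)\cdots(N+t-1)$ starts from that same identity, so it does not actually sidestep the crux; it only repackages it. Either way, the residual work is the polarization/Waring step (that $t$-th powers of linear forms span all homogeneous degree-$t$ polynomials), which is reasonable to cite exactly as the paper does.
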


\begin{fact}[Average of $t$-copies Haar states]
\label{fact:avg-haar-random}
Let $H$ be a finite-dimensional Hilbert space and $N :=\dim(H)$. For all $t \in \N$,
\[
    \E_{\ket{\vartheta} \leftarrow \Haar(H)} \ketbra{\vartheta}{\vartheta}^{\otimes t} 
    = \frac{\Pi^{H,t}_\sym}{\Tr(\Pi^{H,t}_\sym)}
    = \E_{\substack{T\leftarrow [t+1]^N\\ \hamming(T)=t}} \ketbra{\type_T}{\type_T} = \Ex_{\sigma\gets S_t} \left[ P_{N}(\sigma) \right].
\]
\end{fact}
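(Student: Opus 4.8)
The plan is to route all four expressions through the symmetric-subspace projector $\Pi^{H,t}_\sym$, using $\Tr(\Pi^{H,t}_\sym) = \binom{N+t-1}{t}$ (\Cref{fact:dim-sym}) to reconcile the normalizations. For the Haar average, set $\rho := \E_{\ket{\vartheta} \gets \Haar(H)} \ketbra{\vartheta}{\vartheta}^{\otimes t}$. First I would note that $\rho$ is supported on $\vee^t H$ since every $\ket{\vartheta}^{\otimes t}$ is symmetric. Then I would use the left-invariance of the Haar measure: for a fixed unitary $U$ on $H$, substituting $\ket{\vartheta} \mapsto U\ket{\vartheta}$ does not change the distribution, so $U^{\otimes t}\, \rho\, (U^\dagger)^{\otimes t} = \rho$, i.e.\ $\rho$ commutes with $U^{\otimes t}$ for all $U$. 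Because $U \mapsto U^{\otimes t}$ acts irreducibly on $\vee^t H$ (the Schur--Weyl fact, which I would cite from \cite{Harrow13church} rather than reprove), Schur's lemma forces $\rho$ to be a scalar multiple of $\Pi^{H,t}_\sym$, and $\Tr(\rho) = 1$ pins the scalar to $1/\Tr(\Pi^{H,t}_\sym)$, giving the first equality.

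For the type-state average, I would invoke the standard structure of type states (\cite{Harrow13church,mele2023introduction}; cf.\ \Cref{def:type_states,lem:type_struc}): for distinct $T$ the states $\ket{\type_T}$ are uniform superpositions over disjoint sets of computational basis vectors, hence mutually orthonormal, and each lies in $\vee^t H$. The number of types $T \in [t+1]^N$ with $\hamming(T) = t$ equals the number of weak compositions of $t$ into $N$ parts, namely $\binom{N+t-1}{t} = \dim(\vee^t \C^N)$, so the $\ket{\type_T}$ form an orthonormal basis of $\vee^t H$ and $\sum_T \ketbra{\type_T}{\type_T} = \Pi^{H,t}_\sym$; averaging over the types then yields $\E_T \ketbra{\type_T}{\type_T} = \Pi^{H,t}_\sym / \Tr(\Pi^{H,t}_\sym)$, matching the first expression. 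For the permutation average, I would use the group-averaging (Reynolds) argument: $A := \frac{1}{t!}\sum_{\sigma \in S_t} P_N(\sigma)$ satisfies $P_N(\tau) A = A$ for every $\tau$ by reindexing, hence $A^2 = A$, while $A^\dagger = A$ since $P_N(\sigma)^\dagger = P_N(\sigma^{-1})$; thus $A$ is the orthogonal projector onto the common $+1$-eigenspace of all $P_N(\sigma)$, which is $\vee^t H$ by \Cref{def:sym_subspace}, so $\E_{\sigma \gets S_t}[P_N(\sigma)] = \Pi^{H,t}_\sym$ --- consistent with the other three expressions once the normalization $\Tr(\Pi^{H,t}_\sym) = \frac{1}{t!}\sum_{\sigma \in S_t} N^{c(\sigma)} = \binom{N+t-1}{t}$ (with $c(\sigma)$ the number of cycles of $\sigma$) is taken into account.

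I do not expect a genuine obstacle: each step is elementary linear algebra or a classical counting identity. The one place to be careful is the first step, where the conclusion $\rho \propto \Pi^{H,t}_\sym$ really uses irreducibility of $\vee^t H$ as a $U(H)$-module, not merely that $\rho$ commutes with every $U^{\otimes t}$, so I would make sure that fact is cited explicitly. If one prefers to avoid representation theory entirely, the first equality can also be obtained by a direct computation --- expanding $\ket{\vartheta}^{\otimes t}$ in a fixed basis and integrating the resulting monomials over the uniform spherical measure --- but the Schur-lemma argument is shorter and cleaner.
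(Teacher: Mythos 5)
Your derivation of all three relations is sound: Schur's lemma on the irreducible $U(H)$-module $\vee^t H$ pins $\E_\vartheta[\ketbra{\vartheta}{\vartheta}^{\otimes t}]$ to a scalar multiple of $\Pi^{H,t}_\sym$; the type states are an orthonormal basis of $\vee^t H$ of cardinality $\binom{N+t-1}{t}$, giving the middle expression; and the Reynolds-operator argument correctly identifies $\frac{1}{t!}\sum_{\sigma} P_N(\sigma)$ as the orthogonal projector onto $\vee^t H$. The paper itself offers no proof of \Cref{fact:avg-haar-random}, deferring to \cite{Harrow13church,mele2023introduction}, so there is no in-paper argument to compare against; your route is the standard one.

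That said, your own computation actually exposes an inconsistency in the Fact as printed, which you only mention obliquely. You correctly establish $\E_{\sigma\gets S_t}[P_N(\sigma)] = \Pi^{H,t}_\sym$, an operator of trace $\binom{N+t-1}{t}$, whereas the first three expressions in the chain all have trace $1$. The fourth equality therefore fails by a factor of $\Tr(\Pi^{H,t}_\sym)$: the final expression should be $\frac{1}{\Tr(\Pi^{H,t}_\sym)}\E_{\sigma\gets S_t}[P_N(\sigma)]$, or dually the other three should be scaled up to $\Pi^{H,t}_\sym$. Your parenthetical ``consistent \ldots\ once the normalization \ldots\ is taken into account'' amounts to flagging a normalization bug; it would be clearer to say so outright, especially because the paper later invokes the fourth equality in the unnormalized form (e.g.\ in the proof of \Cref{lem:tuni_invar}), where the discrepancy is harmless only because the ensuing invariance argument needs a linear combination of permutation operators, not a probability mixture.
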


\begin{fact}[Projection onto symmetric subspace stabilizes type states] \label{fact:sym_stab_type}
For all $N,t \in \N$ and $T\in [t+1]^N$ such that $\hamming(T) = t$,
\begin{align*}
    \Pi^{N,t}_\sym \ket{\type_T} = \ket{\type_T}.
\end{align*}
\end{fact}

\begin{fact}[Average inner product with Haar states] \label{fact:average_inner_product}
For any $N\in\N$ and fixed $\ket{\psi}\in\cS(\C^N)$, $\Ex_{\ket{\vartheta}\gets\Haar(\C^N)}[|\inner{\psi}{\vartheta}|^2] = 1/N$.
\end{fact}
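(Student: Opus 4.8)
The plan is to reduce the claim to the single identity $M := \Ex_{\ket{\vartheta}\gets\Haar(\C^{N})}\ketbra{\vartheta}{\vartheta} = I_{N}/N$, after which the Fact drops out by a one-line trace computation. In fact this is exactly the $t=1$ instance of \Cref{fact:avg-haar-random}, since $\vee^{1}\C^{N} = \C^{N}$ has dimension $N$; so strictly speaking nothing beyond citing that fact is required. For completeness I would also include the short self-contained argument for $M = I_{N}/N$.

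To see $M = I_{N}/N$ directly, recall (from the definition of $\Haar_{n}$) that a sample $\ket{\vartheta}\gets\Haar(\C^{N})$ is distributed as $U\ket{0}$ for $U$ a Haar-random unitary. For any fixed unitary $V$, left-invariance of the Haar measure on the unitary group gives
\[
V M V^{\dagger} = \Ex_{U}\big[\,(VU)\ketbra{0}{0}(VU)^{\dagger}\,\big] = \Ex_{U'}\big[\,U'\ketbra{0}{0}(U')^{\dagger}\,\big] = M ,
\]
where in the middle step we substituted $U' = VU$, which is again Haar-distributed. Hence $M$ commutes with every unitary, so by Schur's lemma $M$ is a scalar multiple of the identity; taking the trace and using $\Tr(M) = 1$ forces $M = I_{N}/N$.

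The Fact then follows by linearity of the trace and $\bra{\psi}$-$\ket{\psi}$ sandwiching:
\[
\Ex_{\ket{\vartheta}\gets\Haar(\C^{N})}\big[\,|\inner{\psi}{\vartheta}|^{2}\,\big] = \Ex_{\ket{\vartheta}\gets\Haar(\C^{N})}\big[\,\Tr\!\big(\ketbra{\psi}{\psi}\,\ketbra{\vartheta}{\vartheta}\big)\,\big] = \Tr\!\big(\ketbra{\psi}{\psi}\,M\big) = \tfrac{1}{N}\Tr\!\big(\ketbra{\psi}{\psi}\big) = \tfrac{1}{N} .
\]
I do not expect any genuine obstacle here: the entire content is the averaged-projector identity $M = I_{N}/N$, which is either imported verbatim from \Cref{fact:avg-haar-random} or proved in two lines via Haar invariance plus Schur's lemma, and everything else is routine.
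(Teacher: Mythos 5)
Your proof is correct. The paper does not actually prove this fact---it is stated in the preliminaries without a proof, with a blanket citation to \cite{Harrow13church,mele2023introduction} for the facts in that subsection---so there is no in-paper argument to compare against. Your route (reduce to the averaged-projector identity $\Ex_{\ket{\vartheta}}\ketbra{\vartheta}{\vartheta} = I_N/N$, obtain it either as the $t=1$ case of \Cref{fact:avg-haar-random} or directly via unitary invariance of the Haar measure plus Schur's lemma, then finish with a trace computation) is the standard and entirely adequate derivation, and the observation that the statement is literally \Cref{fact:avg-haar-random} specialized to $t=1$ with $\dim(\vee^1\C^N)=N$ is the cleanest way to dispose of it given what the paper has already set up.
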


\begin{fact} \label{fact:random_type_halves_collision}
Let $T$ be sampled uniformly from $[t+1]^{2^{\ell+k}}$ conditioned on $\hamming(T) = t$, where $\setT(T) = \set{x_1||y_1,x_2||y_2,\dots,x_t||y_t}$ and $x_i\in\bit^\ell$, $y_j\in\bit^k$. Then
$\Pr[\exists i\neq j\ s.t.\ x_i = x_j \lor y_i = y_j ] = O(t^2/2^\ell) + O(t^2/2^k)$.
\end{fact}

\noindent

\begin{lemma}[{\cite[Theorem~7.5]{WatrousBook}}, restated] \label{lem:multi_span_sym}
For all $N,t \in \N$, there exists a finite set $A\subseteq \sphere({\C^N})$ such that $\vee^t\C^N = \mathsf{span}\set{\ket{\psi}^{\otimes t}: \ket{\psi}\in A}$.
\end{lemma}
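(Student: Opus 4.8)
The plan is to reduce the statement to an elementary fact of linear algebra, the only nontrivial input being that the symmetric subspace is finite-dimensional.

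First I would unwind \Cref{def:sym_subspace}: by definition $\vee^t\C^N = \mathsf{span}\set{\ket{\psi}^{\otimes t}:\ket{\psi}\in\C^N}$, where $\ket{\psi}$ ranges over \emph{all} of $\C^N$, not merely the unit sphere. For any nonzero $\ket{\psi}$ one has $\ket{\psi}^{\otimes t} = \norm{\ket{\psi}}^t\,\ket{\widehat\psi}^{\otimes t}$ with $\ket{\widehat\psi}:=\ket{\psi}/\norm{\ket{\psi}}\in\sphere(\C^N)$, and the zero vector contributes nothing; hence the span is unchanged under restricting to unit vectors, i.e. $\vee^t\C^N = \mathsf{span}\set{\ket{\psi}^{\otimes t}:\ket{\psi}\in\sphere(\C^N)}$.

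Next I would invoke the general fact that if a finite-dimensional vector space $V$ equals the span of a (possibly infinite) set $S$ of vectors, then some finite subset of $S$ already spans $V$: take a subset $B\subseteq S$ that is linearly independent and maximal with respect to inclusion; since $V$ is finite-dimensional, $\lvert B\rvert\le\dim V<\infty$; and $\mathsf{span}(B)=V$, since any $s\in S\setminus\mathsf{span}(B)$ could be adjoined to $B$ while preserving linear independence, contradicting maximality, so $S\subseteq\mathsf{span}(B)$ and thus $V=\mathsf{span}(S)=\mathsf{span}(B)$. Applying this with $V=\vee^t\C^N$, which is finite-dimensional of dimension $d:=\binom{N+t-1}{t}$ by \Cref{fact:dim-sym}, and $S=\set{\ket{\psi}^{\otimes t}:\ket{\psi}\in\sphere(\C^N)}$, I obtain unit vectors $\ket{\psi_1},\dots,\ket{\psi_d}$ with $\mathsf{span}\set{\ket{\psi_1}^{\otimes t},\dots,\ket{\psi_d}^{\otimes t}}=\vee^t\C^N$; setting $A:=\set{\ket{\psi_1},\dots,\ket{\psi_d}}$ finishes the proof.

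There is no real obstacle here; the only two points that deserve a moment's care are the normalization step (so that $A$ lands inside $\sphere(\C^N)$, matching the definition of $\vee^t\C^N$ which is stated over all vectors) and recalling that the symmetric subspace is finite-dimensional, which is precisely \Cref{fact:dim-sym}. (If an \emph{explicit} $A$ were wanted rather than a mere existence statement, one could instead take $A$ to consist of vectors whose coordinates range over a sufficiently fine finite grid and argue via a Vandermonde/polynomial-interpolation argument that the corresponding product states already span $\vee^t\C^N$; but this strengthening is not required for the stated lemma.)
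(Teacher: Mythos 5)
Your proof is correct. The paper itself offers no argument for this lemma — it imports it wholesale as a restatement of Watrous's Theorem~7.5 — so there is no internal proof to compare against, only an external citation. Given the paper's own conventions, your derivation is the natural one: \Cref{def:sym_subspace} already \emph{defines} $\vee^t\C^N$ as $\mathsf{span}\set{\ket{\psi}^{\otimes t} : \ket{\psi}\in\C^N}$, and \Cref{fact:dim-sym} supplies finite-dimensionality, after which extracting a finite spanning subset is elementary linear algebra; normalizing each spanning vector (which rescales its $t$-fold tensor power by the nonzero scalar $\norm{\ket{\psi}}^t$) then lands the finite set inside $\sphere(\C^N)$ as the lemma requires. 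One contextual remark: the genuinely nontrivial content of Watrous's Theorem~7.5 is the equivalence between the permutation-invariance characterization of the symmetric subspace and the span-of-product-states characterization. Because this paper adopts the latter as the \emph{definition} and records the dimension $\binom{N+t-1}{t}$ separately as \Cref{fact:dim-sym}, your argument never needs to touch that deeper equivalence — which is precisely why it closes so quickly — but a reader who traces \Cref{fact:dim-sym} back to its usual proof (via the type-state basis) is implicitly relying on the same machinery that Watrous develops.
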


\subsection{Pseudorandom Primitives} \label{app:pseudo_defs}

\noindent We recall existing post-quantum secure pseudorandom primitives as well as quantum pseudorandom primitives. 

\paragraph{Pseudorandom Functions.}

\begin{definition}[Quantum-Query Secure Pseudorandom Functions]
\label{def:qqprfs}
We say that a deterministic polynomial-time algorithm $F:\{0,1\}^{\ell(\secparam)} \times \{0,1\}^{d(\secparam)} \rightarrow \{0,1\}^{n(\secparam)}$ is a {\em quantum-query $\varepsilon$-secure pseudorandom function (QPRF)} if for all QPT (non-uniform) distinguishers $A=(A_{\secparam},\rho_{\secparam})$ there exists a function $\eps(\cdot)$ such that the following holds:
 \[
    \left|\Pr_{k \leftarrow \{0,1\}^{\ell(\secparam)} }\left[A_\lambda^{\ket{{\cal O}_{\sf prf}(k,\cdot)}}(\rho_\lambda) = 1\right] - \Pr_{{\cal O}_{\sf Rand}}\left[A_\lambda^{\ket{{\cal O}_{\sf Rand}(\cdot)}}(\rho_\lambda) = 1\right]\right| \le \eps(\lambda),
  \]
  where:
\begin{itemize}
    \item ${\cal O}_{\sf prf}(k,\cdot)$ on input a $(d+n)$-qubit state on registers ${\bf X}$ (first $d$ qubits) and ${\bf Y}$, applies an $(n+d)$-qubit unitary $U$ described as follows: $U\ket{x}\ket{a} = \ket{x}\ket{a \oplus F(k,x)}$. It sends back the registers ${\bf X}$ and ${\bf Y}$. 
    \item ${\cal O}_{\sf Rand}(\cdot)$ on input a $(d+n)$-qubit state on registers ${\bf X}$ (first $d$ qubits) and ${\bf Y}$, applies an $(n+d)$-qubit unitary $R$ described as follows: $R \ket{x}\ket{a} = \ket{x}\ket{a \oplus y_x}$, where $y_x \leftarrow \{0,1\}^{n(\secparam)}$. It sends back the registers ${\bf X}$ and ${\bf Y}$.   
\end{itemize}
We denote the fact that $A_{\secparam}$ has quantum access to an oracle ${\cal O}$ by $A_{\secparam}^{\ket{{\cal O}}}$. 
\par We also say that $F$ is an $(\ell(\secparam),d(\lambda),n(\lambda),\eps)$-$\qprf$ to succinctly indicate that its input length is $d(\lambda)$ and its output length is $n(\lambda)$. When $\ell(\secparam)=\secparam$, we drop $\ell(\secparam)$ from the notation. Similarly, when $\eps(\secparam)$ can be any negligible function, we drop $\eps(\secparam)$ from the notation.    
\end{definition}

\noindent Zhandry~\cite{Zha12} showed how to instantiate quantum-query secure pseudorandom functions from post-quantum one-way functions.  

\paragraph{Pseudorandom Permutations.} 

\begin{definition}[Quantum-Query Secure Pseudorandom Permutation]
\label{def:qqprps}
We say that a deterministic polynomial-time algorithm $F:\{0,1\}^{\ell(\secparam)} \times \{0,1\}^{n(\secparam)} \rightarrow \{0,1\}^{n(\secparam)}$ is a {\em quantum-query $\varepsilon$-secure pseudorandom permutation (QPRP)} if for all QPT (non-uniform) distinguishers $A=(A_{\secparam},\rho_{\secparam})$ there exists a function $\eps(\cdot)$ such that the following holds:
\begin{multline*}
        \left|\Pr_{k \leftarrow \{0,1\}^{\ell(\secparam)} }\left[A_\lambda^{\ket{{\cal O}_{\sf prp}(k,\cdot)},\ket{{\cal O}_{{\sf prp}^{-1}}(k,\cdot)}}(\rho_\lambda) = 1\right]\right. - \left.\Pr_{g \xleftarrow{\$} S_{2^{n(\lambda)}}}\left[A_\lambda^{\ket{{\cal O}_{g}(\cdot)},\ket{{\cal O}_{{g}^{-1}}(\cdot)}}(\rho_\lambda) = 1\right]\right| \le \eps(\lambda),
\end{multline*}
  where:
\begin{itemize}
    \item ${\cal O}_{\sf prp}(k,\cdot)$ on input a $(2n)$-qubit state on registers ${\bf X}$ (first $n$ qubits) and ${\bf Y}$, applies an $(2n)$-qubit unitary $U$ described as follows: $U\ket{x}\ket{a} = \ket{x}\ket{a \oplus F(k,x)}$. It sends back the registers ${\bf X}$ and ${\bf Y}$. 
    \item ${\cal O}_{{\sf prp}^{-1}}(k,\cdot)$ on input a $(2n)$-qubit state on registers ${\bf X}$ (first $n$ qubits) and ${\bf Y}$, applies an $(2n)$-qubit unitary $U$ described as follows: $U\ket{x}\ket{a} = \ket{x}\ket{a \oplus F^{-1}(k,x)}$. It sends back the registers ${\bf X}$ and ${\bf Y}$. 
    \item ${\cal O}_{g}(\cdot)$ on input a $(2n)$-qubit state on registers ${\bf X}$ (first $d$ qubits) and ${\bf Y}$, applies an $(n+d)$-qubit unitary $R$ described as follows: $R \ket{x}\ket{a} = \ket{x}\ket{a \oplus g(x)}$. It sends back the registers ${\bf X}$ and ${\bf Y}$.
    \item ${\cal O}_{{g}^{-1}}(\cdot)$ on input a $(2n)$-qubit state on registers ${\bf X}$ (first $d$ qubits) and ${\bf Y}$, applies an $(n+d)$-qubit unitary $R$ described as follows: $R \ket{x}\ket{a} = \ket{x}\ket{a \oplus g^{-1}(x)}$. It sends back the registers ${\bf X}$ and ${\bf Y}$.   
\end{itemize}
\par We also say that $F$ is an $(\ell(\secparam),n(\lambda),\eps)$-$\qprp$ to succinctly indicate that its input and output length is $n(\lambda)$. When $\ell(\secparam)=\secparam$, we drop $\ell(\secparam)$ from the notation. Similarly, when $\eps(\secparam)$ can be any negligible function, we drop $\eps(\secparam)$ from the notation.    
\end{definition}

\noindent Zhandry~\cite{Zha16} showed how to instantiate quantum-query secure pseudorandom permutations from post-quantum one-way functions. Moreover, Zhandry~\cite{Zha12} showed that no algorithm making $q$ queries can distinguish between a random function and a $2q$-wise independent function.
\begin{theorem}[\cite{Zha12}]
    \label{thm:zha12}
    Let $A$ be a quantum algorithm making $q$ quantum queries to an oracle $H:X\to Y$. If we draw $H$ from uniformly random functions from $X$ to $Y$ versus if we draw $H$ uniformly from $2q$-wise independent functions, then for every $z$, the quantity $\Pr_{H}[A^{H}() = z]$ is the same for both the cases.
\end{theorem}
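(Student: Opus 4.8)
The plan is to use Zhandry's polynomial method. Fix the $q$-query algorithm $A$ and an output string $z$. Since $\Pr_H[A^H()=z]$ averages the measurement probability over the choice of $H$, i.e.\ $\Pr_H[A^H()=z]=\E_H\!\left[\Pr[A^H()=z\mid H]\right]$, it suffices to show that this expectation is the same number when $H$ is a uniformly random function and when $H$ is drawn from a $2q$-wise independent family. To set up, I would encode the oracle by $0/1$ indicator variables $P_{x,w}:=\mathbf{1}[H(x)=w]$ for $x\in X$, $w\in Y$; these satisfy $\sum_{w}P_{x,w}=1$, $P_{x,w}^2=P_{x,w}$, and $P_{x,w}P_{x,w'}=0$ for $w\ne w'$. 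The standard query unitary can then be written $O_H=\sum_{x\in X}\sum_{w\in Y}\ketbra{x}{x}\otimes P_{x,w}\,S_w$, where $S_w\colon\ket{y}\mapsto\ket{y\oplus w}$ acts on the response register; the key feature is that $O_H$ is \emph{linear} in the variables $\{P_{x,w}\}$.

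The heart of the argument — and the step I expect to be the main obstacle — is to show that every computational-basis amplitude of the final state $\ket{\psi_H}=U_q\,O_H\,U_{q-1}\,O_H\cdots U_1\,O_H\,U_0\ket{\psi_0}$ of $A^H$ is a polynomial of total degree at most $q$ in $\{P_{x,w}\}$. This I would prove by induction on the number of queries made so far: the algorithm's own unitaries $U_i$ do not involve $H$ and hence cannot raise the degree, while each application of $O_H$, being linear in the $P$-variables, raises it by at most one; the bookkeeping to verify that after $i$ queries \emph{every} coefficient of the state is a degree-$\le i$ polynomial, and that this is preserved by composing with $U_i$, is where care is needed. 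Given this, the output probability $\Pr[A^H()=z\mid H]$, which is obtained by measuring a designated register of $\ket{\psi_H}$ and hence equals a finite sum $\sum_{b}|\ip{z,b}{\psi_H}|^2$ of products of degree-$\le q$ polynomials with their coefficient-wise conjugates in the \emph{real} variables $P_{x,w}$, is a polynomial of degree at most $2q$ in $\{P_{x,w}\}$.

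It remains to take the expectation over $H$. Expanding the degree-$\le 2q$ polynomial and repeatedly using $P_{x,w}^2=P_{x,w}$ and $P_{x,w}P_{x,w'}=0$, one gets $\E_H\!\left[\Pr[A^H()=z\mid H]\right]=\sum\alpha_{\vec{x},\vec{w}}\cdot\Pr_H[H(x_1)=w_1\wedge\cdots\wedge H(x_k)=w_k]$, where the coefficients $\alpha_{\vec{x},\vec{w}}$ are fixed (they do not depend on the distribution of $H$), the sum ranges over tuples of $k\le 2q$ \emph{distinct} inputs $x_1,\dots,x_k$, and each probability depends only on the joint law of $H$ on those $\le 2q$ points. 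By definition, a $2q$-wise independent family has exactly the same joint law on any set of at most $2q$ distinct inputs as a uniformly random function — namely uniform and mutually independent — so every term of the sum takes the same value in the two cases. Hence $\E_H\!\left[\Pr[A^H()=z\mid H]\right]$, and therefore $\Pr_H[A^H()=z]$, is identical in both cases, as claimed.
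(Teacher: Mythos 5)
The paper does not supply its own proof of \Cref{thm:zha12}; it is stated as a citation to Zhandry~\cite{Zha12}. Your proposal is a faithful and correct reconstruction of Zhandry's original polynomial-method argument from that source: encoding the oracle by the indicator variables $P_{x,w}$, observing that the query unitary $O_H=\sum_{x,w}\ketbra{x}{x}\otimes P_{x,w}S_w$ is degree one in these variables, inducting over queries to bound amplitude degree by $q$, concluding that each output probability is a real polynomial of degree at most $2q$ in the $P_{x,w}$, and then multilinearizing via $P_{x,w}^2=P_{x,w}$ and $P_{x,w}P_{x,w'}=0$ so that every surviving monomial's expectation is a joint probability over at most $2q$ distinct inputs, which coincides under $2q$-wise independence and under a uniformly random function. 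The bookkeeping you flag as the main worry (the induction that each $U_i$ preserves degree while each $O_H$ adds at most one) is indeed routine, and the step where you observe the coefficients $\alpha_{\vec{x},\vec{w}}$ are determined by $A$ alone and not by the distribution of $H$ is exactly the reason the two expectations match term by term. No gaps.
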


\paragraph{Pseudorandom State Generators (PRSGs).} 
\begin{definition}[PRS Generator]
\label{def:vanilla-prs}
We say that a QPT algorithm $F$ is a \emph{pseudorandom state (PRS) generator} if the following holds. 
\begin{enumerate}
    
    \item \textbf{State Generation}. For all $\lambda$ and for all $k \in \{0,1\}^\lambda$, the algorithm $F$ behaves as
    \[
        F_\lambda(k) = \rho_k.
    \]
    for some $n(\lambda)$-qubit (possibly mixed) state $\rho_k$. 
    
    \item \textbf{Pseudorandomness}. For all polynomials $t(\cdot)$ and (non-uniform) QPT distinguisher $A$ there exists a negligible function $\eps(\cdot)$ such that for all $\lambda$,  we have
    \[
        \left | \Pr_{k \leftarrow \{0,1\}^\lambda} \left [ A_\lambda (F_\lambda(k)^{\otimes t(\lambda)}) = 1 \right] - \Pr_{\ket{\vartheta} \leftarrow \Haar_{n(\lambda)}} \left [ A_\lambda (\ket{\vartheta}^{\otimes t(\lambda)}) = 1 \right] \right | \leq \eps(\lambda)~.
    \]
\end{enumerate}
We also say that $F$ is a $n(\lambda)$-PRS generator to succinctly indicate that the output length of $F$ is $n(\lambda)$.
\end{definition}

\noindent Ji, Liu and Song~\cite{JLS18} and Brakerski and Shmueli~\cite{BS20} presented instantiatiations of PRSGs from post-quantum secure one-way functions. 
 
\paragraph{Pseudorandom Function-Like State Generators.}

\begin{definition}[Selectively Secure PRFS Generator]
\label{def:prfs}
We say that a QPT algorithm $F$ is a (selectively secure) \emph{pseudorandom function-like state (PRFS) generator} if for all polynomials $s(\cdot), t(\cdot)$, QPT (nonuniform) distinguishers $A$ and a family of indices $\left(\{x_1,\ldots, x_{s(\lambda)}\} \subseteq \{0,1\}^{d(\lambda)}\right)_\lambda$, there exists a negligible function $\eps(\cdot)$ such that for all $\lambda$,
    \begin{multline*}
        \Big | \Pr_{k \leftarrow \{0,1\}^\lambda} \left [ A_\lambda( x_1,\ldots,x_{s(\lambda)},F_\lambda(k,x_1)^{\otimes t(\lambda)},\ldots, F_\lambda(k,x_{s(\lambda)})^{\otimes t(\lambda)}) = 1 \right] \\
        - \Pr_{\ket{\vartheta_1}, \ldots,\ket{\vartheta_{s(\lambda)}} \leftarrow \Haar_{n(\lambda)}} \left [ A_\lambda( x_1,\ldots,x_{s(\lambda)}, \ket{\vartheta_1}^{\otimes t(\lambda)},\ldots, \ket{\vartheta_{s(\lambda)}}^{\otimes t(\lambda)}) = 1 \right] \Big | \leq \eps(\lambda)~.
    \end{multline*}
We say that $F$ is a $(d(\lambda),n(\lambda))$-PRFS generator to succinctly indicate that its input length is $d(\lambda)$ and its output length is $n(\lambda)$.
\end{definition}

\noindent Ananth, Qian, and Yuen~\cite{AQY21} presented instantiations of PRFSGs either assuming post-quantum secure one-way functions or PRSGs (in the setting when the input length was logarithmic).   
\section{Pseudorandom Isometry: Definition}    \label{sec:defs}
\noindent For a given class of inputs $\qclass$, we propose the following definition of $\qclass$-secure psuedorandom isometries. Throughout the rest of the paper, for a polynomial $p(\cdot)$, we denote $p$ to be $p(\secparam)$, where $\secparam$ is the security parameter.

\begin{definition}[$\qclass$-Secure Pseudorandom Isometry (PRI)]
\label{def:Q-pri}
Let $n,m,q,\ell$ be polynomials in $\secparam$. Suppose $\qclass=\{\qclass_{n,q,\ell,\secparam}\}_{\secparam \in \mathbb{N}}$, where $\qclass_{n,q,\ell,\secparam} \subseteq {\cal D}(\C^{2^{nq+\ell}})$. 
We say that $\pri=\left\{F_{\secparam} \right\}_{\secparam \in \mathbb{N}}$ is an $(n,n+m)$-$\qclass$-secure pseudorandom isometry if the following holds: 
\begin{itemize}
    \item For every $k \in \{0,1\}^{\secparam}$, $F_{\secparam}(k,\cdot)$ is a QPT algorithm implementing a quantum channel such that it is functionally equivalent to ${\cal I}_k$, where ${\cal I}_k$ is an isometry that maps $n$ qubits to $n+m$ qubits. 
    \item For sufficiently large $\secparam \in \mathbb{N}$, any QPT distinguisher $\adversary$, the following holds: for every $\rho \in \qclass_{n,q,\ell,\secparam}$,
    $$\left| \Pr\left[ \adversary\left((I_{\ell} \otimes F_k^{\otimes q}) \left( \rho \right) \right) = 1 \right] 
    - \Pr \left[ \adversary\left((I_{\ell} \otimes \haarisometry^{\otimes q}) \left(\rho\right) \right) = 1 \right]  \right| 
    \leq \negl(\secparam),$$
    where:
    \begin{itemize}
        \item $\haarisometry(\cdot)$ is the channel implementing a Haar-random isometry that takes an $n$-qubit input $\ket{\psi}$ and outputs an $(n+m)$-qubit output $\haarisometry(\ket{\psi})$,
        \item $I_{\ell}$ is an identity operator on $\ell$ qubits. 
    \end{itemize}
\end{itemize}
\end{definition}



\noindent We sometimes write $\qclass$-secure with $m,n$ being implicit. We consider the following set of queries. We color the part of the query given to $I_\ell$ with {\color{red} red} and color the part of the query given to $F_k$ or $\haarisometry$ with {\color{blue} blue}.

\paragraph{Computational basis queries.} We define $\qclass^{(\comp)}_{n,q,\ell,\secparam}$ as follows. 
$$\qclass_{n,q,\ell,\secparam}^{(\comp)} = {\color{red}{\cal D}(\C^{2^{\ell}})} \otimes \left\{ {\color{blue}\left(\ketbra{x_1}{x_1}\otimes\ldots\otimes\ketbra{x_q}{x_q}\right)}\ :\ x_1,\ldots,x_q \in \{0,1\}^n \right\}.$$

\noindent Let $n(\cdot),q(\cdot),\ell(\cdot)$ be polynomials. We also define $\qclass_{\comp}$ (implicitly parameterized by $n(\cdot),q(\cdot),\ell(\cdot)$) to be $\qclass_{\comp}=\left\{ \qclass_{n,q,\ell,\secparam}^{(\comp)} \right\}_{\secparam \in \mathbb{N}}$.

\paragraph{Multiple copies of a single pure state.} We define $\qclass_{n,q,\ell,\secparam}^{(\single)}$ as follows: $$\qclass_{n,q,\ell,\secparam}^{(\single)} = {\color{red}{\cal D}(\C^{2^{\ell}})} \otimes \left\{ {\color{blue}\left(\ketbra{\psi}{\psi}^{\otimes q}\right)}\ : \ket{\psi}\text{ is an  $n$-qubit pure state} \right\}.$$

\noindent Let $n(\cdot),q(\cdot),\ell(\cdot)$ be polynomials. We also define $\qclass_{\single}$ (implicitly parameterized by $n(\cdot),q(\cdot),\ell(\cdot)$) to be $\qclass_{\single}=\left\{ \qclass_{n,q,\ell,\secparam}^{(\single)} \right\}_{\secparam \in \mathbb{N}}$. 

\paragraph{Haar queries.} We first define $\qclass^{({\sf Haar})}_{n,s,t,\ell',\secparam}$ as follows, for some polynomials $s(\cdot),t(\cdot),\ell'(\cdot)$,
$$\qclass_{n,s,t,\ell',\secparam}^{({\sf Haar})} = {\color{red}{\cal D}(\C^{2^{\ell'(\secparam)}})} \otimes \left\{\mathbb{E}_{\ket{\psi_1},\ldots,\ket{\psi_{s(\secparam)}} \leftarrow \Haar_{n} }\left[ {\color{red}\bigotimes_{i=1}^{s(\secparam)}\ketbra{\psi_i}{\psi_i}^{\otimes t(\secparam)}} \otimes {\color{blue} \bigotimes_{i=1}^{s(\secparam)}\ketbra{\psi_i}{\psi_i}^{\otimes t(\secparam)}} \right]\right\}.$$
Next, we define $\qclass^{({\sf Haar})}_{n,q,\ell,\secparam}$ as follows
$$\qclass_{n,q,\ell,\secparam}^{({\sf Haar})} = \bigcup_{\substack{s,t,\ell'\\ \text{such that }q = st\\\text{and } \ell = \ell'+st}} \qclass_{n,s,t,\ell',\secparam}^{({\sf Haar})}.$$

\noindent Let $n(\cdot),q(\cdot),\ell(\cdot)$ be polynomials. We also define $\qclass_{{\sf Haar}}$ (implicitly parameterized by $n(\cdot),q(\cdot),\ell(\cdot)$) to be $\qclass_{{\sf Haar}}=\left\{ \qclass_{n,q,\ell,\secparam}^{({\sf Haar})} \right\}_{\secparam \in \mathbb{N}}$.

\paragraph{Selective PRI.} Above, we considered the security of PRI in the setting where the queries came from a specific query set. However, we can consider an alternate definition where the indistinguishability holds against computationally bounded adversaries making a single parallel query to an oracle that is either PRI or Haar. We term such a PRI to be a selectively secure PRI.

\begin{definition}[Selective Pseudorandom Isometry]
\label{def:pri}
$\pri=\left\{F_{\secparam} \right\}_{\secparam \in \mathbb{N}}$ is an $(n,n+m)$-selective pseudorandom isometry if the following holds:
\begin{itemize}
    \item For every $k \in \{0,1\}^{\secparam}$, $F_{\secparam}(k,\cdot)$ is a QPT algorithm such that it is functionally equivalent to ${\haarisometry}_k$, where ${\haarisometry}_k$ is an isometry that maps $n$ qubits to $n+m$ qubits. 
    \item For sufficiently large $\secparam \in \mathbb{N}$, for any $q=\poly(\secparam)$, any QPT distinguisher $\adversary$ making $1$ query to the oracle, the following holds:
    $$\left| \Pr\left[ \adversary^{(F_{\secparam}(k,\cdot))^{\otimes q}} = 1 \right] - \prob \left[ \adversary^{(\haarisometry(\cdot))^{\otimes q}} = 1 \right]  \right| \leq \negl(\secparam),$$
    where:
    \begin{itemize}
        \item $F_{\secparam}(k,\cdot)$ takes as input $\ket{\psi}$ and outputs  $F_{\secparam}(k,\ket{\psi})$
        \item $\haarisometry(\cdot)$ is a  Haar-random isometry that takes as $n$-qubit input $\ket{\psi}$ and outputs an $(n+m)$-qubit output $\haarisometry(\ket{\psi})$. 
    \end{itemize}
\end{itemize}
\end{definition}

\noindent The following claim is immediate. 

\begin{claim}
Let $n(\cdot),m(\cdot)$ be two polynomials. Suppose $\pri$ is an $(n,n+m)$-$\qclass_{n,q,\ell}$-secure pseudorandom isometry for every polynomial $q(\cdot),\ell(\cdot)$, and, $\qclass_{n,q,\ell}=\left\{ \qclass_{n,q,\ell,\secparam} \right\}_{\secp\in\N}$, where $\qclass_{n,q,\ell,\secparam} = {\cal D}(\C^{2^{nq+\ell}})$. Then, $\pri$ is a selective pseudorandom isometry. 
\end{claim}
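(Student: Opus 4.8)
The claim states that a PRI secure against the query set $\qclass_{n,q,\ell,\secparam} = {\cal D}(\C^{2^{nq+\ell}})$ (for every polynomial $q,\ell$) is also a selective PRI in the sense of \Cref{def:pri}. The plan is a standard reduction: I would take any QPT distinguisher $\adversary$ that makes a single parallel query of multiplicity $q$ to the oracle (either $F_\secparam(k,\cdot)$ or $\haarisometry(\cdot)$) and attempts to distinguish, and build from it a distinguisher $\adversary'$ against the $\qclass$-security notion of \Cref{def:Q-pri}. The only technical point to address is that in \Cref{def:pri} the adversary itself chooses and prepares the $n$-qubit input state $\ket{\psi}$ to which the oracle is applied $q$ times, whereas in \Cref{def:Q-pri} the input $\rho \in \qclass_{n,q,\ell,\secparam}$ is fixed (quantified over) and the adversary only receives the output. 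The resolution is exactly the role of the $\ell$-qubit side register: we let $\adversary'$ keep a purification of the state it wants to feed to the oracle.

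\textbf{Key steps.} First, fix a QPT $\adversary$ breaking selective PRI security; by an averaging argument there is some input behavior of $\adversary$ realizing the distinguishing advantage, and without loss of generality we may think of $\adversary$ as first preparing some (possibly entangled, possibly mixed) state, feeding the designated $nq$-qubit part through the $q$-fold oracle, and then running a final measurement on everything. More precisely, any such $\adversary$ can be purified: it prepares a pure state $\ket{\Psi}$ on registers $\reg{R}$ (side register, say $\ell = \poly(\secparam)$ qubits) and $\reg{Q}_1 \cdots \reg{Q}_q$ (each $n$ qubits), applies the $q$-fold oracle to $\reg{Q}_1 \cdots \reg{Q}_q$, and measures. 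Note that since \Cref{def:pri} only lets $\adversary$ query the oracle on \emph{one} state $\ket{\psi}$ with multiplicity $q$, the query registers $\reg{Q}_1,\dots,\reg{Q}_q$ together with a purifying part of $\reg{R}$ are in the state $\ket{\psi}^{\otimes q}$ tensored with leftover ancillas — but this is irrelevant for the reduction: we only need that $\adversary$'s state lies in ${\cal D}(\C^{2^{nq+\ell}})$, which it does trivially. Second, define $\adversary'$ (a distinguisher for \Cref{def:Q-pri}, with the fixed input $\rho := \ketbra{\Psi}{\Psi} \in \qclass_{n,q,\ell,\secparam}$) to simply receive $(I_\ell \otimes \cO^{\otimes q})(\rho)$ for the unknown oracle $\cO \in \{F_k, \haarisometry\}$ and run $\adversary$'s final measurement on it, outputting the same bit. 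Third, observe that the state $\adversary'$ feeds into its final measurement is, by construction, identically distributed to the state $\adversary$ holds at the end of its interaction in the corresponding world (PRI or Haar), so the distinguishing advantage of $\adversary'$ against $\qclass$-security equals that of $\adversary$ against selective PRI security. Since $\pri$ is $(n,n+m)$-$\qclass_{n,q,\ell}$-secure with $\qclass_{n,q,\ell,\secparam} = {\cal D}(\C^{2^{nq+\ell}})$, the former is negligible, hence so is the latter; as $q$ was an arbitrary polynomial and $\adversary$ arbitrary QPT, this proves $\pri$ is a selective PRI.

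\textbf{Main obstacle.} There is no serious obstacle here — the claim is essentially definitional unpacking, and the author indeed calls it ``immediate.'' The one thing to be careful about is the bookkeeping of registers: one must verify that the side register $I_\ell$ in \Cref{def:Q-pri} is large enough to absorb whatever workspace $\adversary$ uses beyond the $q$ query registers, which is fine since $\ell$ ranges over \emph{all} polynomials and $\adversary$ is QPT hence uses at most polynomially many qubits. A secondary subtlety is that \Cref{def:pri} phrases $\adversary$ as having oracle access (so in principle it could interleave the oracle call with other computation), but since only a single parallel query is allowed, any such $\adversary$ is equivalent to the ``prepare state, call oracle once on $q$ copies, measure'' form above, so the reduction goes through. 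One should state this normalization explicitly in the proof to keep it rigorous.
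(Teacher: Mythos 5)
Your proof is correct. The paper itself gives no argument for this claim, stating only that it is ``immediate,'' so there is no authorial proof to compare against; your reduction — letting the $\qclass$-secure game's state $\rho$ be the purified state that the selective adversary would hand to the $q$-fold oracle (workspace going into the $\ell$-qubit register), and noting that $\ell$ is allowed to be any polynomial — is precisely the routine argument the authors had in mind. One small quibble: your parenthetical that \Cref{def:pri} ``only lets $\adversary$ query the oracle on one state $\ket{\psi}$ with multiplicity $q$'' is not really what the definition says — the single parallel query to $(F_k)^{\otimes q}$ can carry any $nq$-qubit state, not necessarily of product form — but since you immediately observe this is irrelevant to the reduction (the $\qclass$-security quantifies over all of ${\cal D}(\C^{2^{nq+\ell}})$), the argument is unaffected.
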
 

\noindent Similarly, the other direction is true as well. 

\begin{claim}
Let $n(\cdot),m(\cdot)$ be two polynomials. Suppose $\pri$ is an $(n,n+m)$-secure pseudorandom isometry. Then $\pri$ is a $(n,n+m)$-$\qclass_{n,q,\ell}$-secure pseudorandom isometry for every polynomial $q(\cdot),\ell(\cdot)$, and, $\qclass_{n,q,\ell}=\left\{ \qclass_{n,q,\ell,\secparam} \right\}_{\secp\in\N}$, where $\qclass_{n,q,\ell,\secparam} = {\cal D}(\C^{2^{nq+\ell}})$.
\end{claim}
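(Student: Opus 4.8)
The plan is to establish the indistinguishability clause of \Cref{def:Q-pri} by contraposition, through a one-query reduction to selective security (\Cref{def:pri}). First I would dispose of the functional requirement: the condition that $F_\secparam(k,\cdot)$ be functionally equivalent to an isometry mapping $n$ qubits to $n+m$ qubits is literally the same in \Cref{def:Q-pri} and \Cref{def:pri} (the isometry denoted ${\cal I}_k$ there is the same object denoted $\haarisometry_k$ here), so that part of the conclusion is immediate and nothing is to be done. It remains to show that if $\pri$ is selectively secure then, for every pair of polynomials $q(\cdot),\ell(\cdot)$, no QPT distinguisher has non-negligible advantage over $\qclass_{n,q,\ell,\secparam}={\cal D}(\C^{2^{nq+\ell}})$. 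So suppose not: fix $q,\ell$, a QPT distinguisher $\adversary$, a non-negligible $\delta$, and, for infinitely many $\secparam$, a state $\rho_\secparam\in{\cal D}(\C^{2^{nq+\ell}})$ with
\[ \left|\Pr\!\left[\adversary\!\left((I_\ell\otimes F_k^{\otimes q})(\rho_\secparam)\right)=1\right]-\Pr\!\left[\adversary\!\left((I_\ell\otimes\haarisometry^{\otimes q})(\rho_\secparam)\right)=1\right]\right|\ge\delta(\secparam). \]

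Then I would build a selective-PRI distinguisher $\cB$ for the same polynomial $q$. Treating $\rho_\secparam$ as non-uniform (quantum) advice---which is consistent with the non-uniform convention of the paper's $\qprf$/$\qprp$ definitions, and is harmless since $nq+\ell$ is polynomial in $\secparam$---the distinguisher $\cB$ writes $\rho_\secparam$ on an $\ell$-qubit register $\reg{L}$ and an $nq$-qubit register $\reg{Q}$ regarded as $q$ consecutive $n$-qubit blocks. Given its oracle $\mathcal{O}\in\{(F_\secparam(k,\cdot))^{\otimes q},(\haarisometry(\cdot))^{\otimes q}\}$, $\cB$ makes its single parallel query on $\reg{Q}$, keeping $\reg{L}$ in its workspace as oracle access permits; this produces exactly the $(\ell+(n+m)q)$-qubit state $(I_\ell\otimes\mathcal{O})(\rho_\secparam)$. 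Then $\cB$ runs $\adversary$ on this state and outputs $\adversary$'s bit. Since one parallel query to $(F_\secparam(k,\cdot))^{\otimes q}$ applies $F_k$ to each of the $q$ blocks, and since the selective-PRI Haar oracle applies a single Haar isometry $\haarisometry$ to all $q$ blocks---exactly as the single $\haarisometry$ in \Cref{def:Q-pri}---the $F_k$-world and the Haar-world of $\cB$ coincide perfectly with those of $\adversary$ on $\rho_\secparam$. Hence $\cB$ inherits advantage $\ge\delta(\secparam)$ infinitely often, contradicting selective security, and the claim follows.

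The argument is essentially bookkeeping; the only point that demands care---and the closest thing to an obstacle---is that the worst-case witnesses $\rho_\secparam$ need not be efficiently preparable, so the reduction must be allowed to carry them as non-uniform quantum advice. I would flag this explicitly, or equivalently phrase both \Cref{def:Q-pri} and \Cref{def:pri} with respect to non-uniform QPT distinguishers with quantum advice, after which the reduction goes through verbatim. The mirror-image reduction---have the selective distinguisher prepare its query, then feed the post-query state to a $\qclass$-distinguisher---gives the converse direction recorded in the immediately preceding claim, confirming that the two notions are genuinely equivalent.
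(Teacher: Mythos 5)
Your reduction is correct and is the straightforward bookkeeping argument the paper has in mind; the paper simply labels the claim ``immediate'' and gives no proof, so there is nothing to contrast against. Your flag about carrying the witness states $\rho_\secparam$ as non-uniform quantum advice is the right thing to make explicit, and it is indeed absorbed by the paper's non-uniform-distinguisher convention (as used in its $\qprf$ and $\qprp$ definitions).
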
 

\paragraph{Adapive PRI.} We also define an adaptive version of the pseudorandom isometries below. In this definition, the adversary can make an arbitrary number of queries to the oracle. 

\begin{definition}[Adaptive Pseudorandom Isometry]
\label{def:pri_a}
$\pri=\left\{F_{\secparam} \right\}_{\secparam \in \mathbb{N}}$ is an $(n,n+m)$-adaptive pseudorandom isometry if the following holds:
\begin{itemize}
    \item For every $k \in \{0,1\}^{\secparam}$, $F_{\secparam}(k,\cdot)$ is a QPT algorithm such that it is functionally equivalent to ${\haarisometry}_k$, where ${\haarisometry}_k$ is an isometry that maps $n$ qubits to $n+m$ qubits. 
    \item For sufficiently large $\secparam \in \mathbb{N}$, for any $t=\poly(\secparam)$, any QPT distinguisher $\adversary$ making $t$ queries to the oracle, the following holds:
    $$\left| \Pr\left[ \adversary^{F_{\secparam}(k,\cdot)} = 1 \right] 
    - \Pr \left[ \adversary^{\haarisometry(\cdot)} = 1 \right] \right| 
    \leq \negl(\secparam),$$
    where:
    \begin{itemize}
        \item $F_{\secparam}(k,\cdot)$ takes as input $\ket{\psi}$ and outputs  $F_{\secparam}(k,\ket{\psi})$
        \item $\haarisometry(\cdot)$ is a  Haar-random isometry that takes as $n$-qubit input $\ket{\psi}$ and outputs an $(n+m)$-qubit output $\haarisometry(\ket{\psi})$. 
    \end{itemize}
\end{itemize}
\end{definition}

\paragraph{Observations.} It should be immediate that pseudorandom unitaries, introduced in~\cite{JLS18}, imply adaptive PRI, which in turn implies selectively secure PRI. Whether pseudorandom isometries are separated from pseudorandom unitaries or there is a transformation from the former to the latter is an interesting direction to explore. 
\par If we weaken our definition of pseudorandom isometries further, where we a priori fix the number of queries made by the adversary and allow the description of the pseudorandom isometry to depend on this then this notion is implied by unitary $t$-designs~\cite{AE07CCC,BHH16}.
\par In terms of implications of pseudorandom isometries to other notions of pseudorandomness in the quantum world, we note that pseudorandom isometries imply both PRSGs and PRFSGs (see~\Cref{app:pseudo_defs} for formal definitions and~\Cref{sec:PRIimplyPRG_PRFSG} for the proof.).

\subsection{Invertibility} \label{sec:invertibility}
\paragraph{Invertible Pseudorandom Isometries.} In applications, we need a stronger notion of \emph{invertible} pseudorandom isometries.
\begin{definition}[Invertible $\qclass$-Secure Pseudorandom Isometry] \label{def:invertible}
We say that $\pri=\left\{F_{\secparam} \right\}_{\secparam \in \mathbb{N}}$ is an invertible $(n,n+m)$-$\qclass$-secure pseudorandom isometry if first and foremost, it is a $\qclass$-secure pseudorandom isometry (\Cref{def:Q-pri}) and secondly, there is a QPT algorithm $\inv$ with the following guarantee: for every $\ket{\psi}\in\pqstates{n}$ and $k \in \bit^\secp$,
$$\TD( \ketbra{\psi}{\psi}, \inv\left(k,F_{\secparam}(k,\ket{\psi}) \right) ) = \negl(\secp).$$
\end{definition}

\begin{remark}
Similarly, we can define invertible versions of $\qclass$-secure PRIs and selectively secure PRIs. Also, note that for $\ket{\phi}$ which is orthogonal to the range of $F_{\secparam}(k,\cdot)$, being invertible has no guarantee on $\inv(k,\ket{\phi})$. 
\end{remark}

\paragraph{Inverse of Isometries.}
For a (fixed) isometry $\cI$ maps $n$-qubit states to $(n+m)$-qubit states, the ``inverse'' of $\cI$ is not unique. However, under the view of \emph{Stinespring dilation}, it is possible to naturally define a quantum channel $\cI^{-1}$ such that $\cI^{-1}\circ (\cI \ketbra{\psi}{\psi} \cI^\dagger) = \ketbra{\psi}{\psi}$ for every $\ket{\psi}\in\pqstates{n}$.\footnote{The readers should not confuse $\cI^\dagger$, the conjugate transpose of $\cI$, with the channel $\cI^{-1}$.} Consider an arbitrary unitary $U_\cI$ on $n+m$ qubits such that $U_\cI$ is \emph{consistent} with $\cI$, that is, $U_\cI\ket{\psi}\ket{0^m}_\aux = \cI\ket{\psi}$ for every $\ket{\psi}\in\pqstates{n}$. One can easily verify that $\Tr_\aux \left( U_\cI^\dagger \cI \ketbra{\psi}{\psi} \cI^\dagger U_\cI \right) = \ketbra{\psi}{\psi}$ for every $\ket{\psi}\in\pqstates{n}$. Furthermore, one can even provide a distribution over such unitaries. This yields the following candidate definition: let $\mu_\cI$ be some distribution over unitaries that are consistent with $\cI$, the inverse of $\cI$ can be defined as
\begin{align*}
\cI^{-1}(X) = \Ex_{U_\cI\gets\mu_\cI} \Tr_\aux \left( U_\cI^\dagger X U_\cI \right).
\end{align*}
Since we focus on Haar isometries in this work, we'll choose the distribution $\mu_\cI$ to be Haar random conditioned on being consistent with $\cI$. Formally, we have the following definition.
\begin{definition}[Inverse of Isometries] \label{def:inverse_isometry}
Let $\cI$ be an isometry from $n$ qubits to $n+m$ qubits. The inverse of $\cI$ is a quantum channel from $n+m$ qubits to $n$ qubits defined to be
\begin{align*}
    \cI^{-1}(X) := \Ex_{U\gets\overline{\Haar_{n+m}}\mid_\cI} \Tr_\aux \left( U^\dagger X U \right),
\end{align*}
for any $X\in\cL(\C^{2^{n+m}})$, where register $\aux$ refers to the last $m$ qubits and $\overline{\Haar_{n+m}}\mid_\cI$ denotes the Haar measure over $(n+m)$-qubit unitaries $U$ conditioned on $U\ket{\psi}\ket{0^m}_\aux = \cI\ket{\psi}$ for any $\ket{\psi}\in\pqstates{n}$.
\end{definition}

\noindent From~\Cref{fact:sampling_Haar_unitary}, sampling $U$ according to $\overline{\Haar_{n+m}}\mid_\cI$ is equivalent to the following: Fix $\cI$ and then keep appending columns one-by-one by sampling a uniform unit vector conditioned on being orthogonal to the existing columns until the matrix is square. Therefore, by~\Cref{fact:sampling_Haar_isometry}, the inverse of a Haar isometry satisfies the following:
\begin{fact} \label{fact:inverse_of_Haar}
Let $\haarisometry$ be a Haar isometry from $n$ qubits to $n+m$ qubits. Then the joint distribution of $(\haarisometry,\haarisometry^{-1})$ is identically distributed to the following procedures: (1) Sample $U\gets\Haar_{n+m}$. (2) Define $\haarisometry$ to be the first $2^n$ columns of $U$. That is, $\cI$ satisfies $\haarisometry\ket{\psi} = U\ket{\psi}\ket{0^m}_\aux$ for any $\ket{\psi}\in\pqstates{n}$. (3) Define $\haarisometry^{-1}(X) := \Tr_\aux(U^\dagger XU)$.
\end{fact}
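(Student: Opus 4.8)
The plan is to obtain \Cref{fact:inverse_of_Haar} directly from the column-by-column characterizations of Haar unitaries and Haar isometries (\Cref{fact:sampling_Haar_unitary} and \Cref{fact:sampling_Haar_isometry}): a single global Haar unitary $U$ simultaneously furnishes the forward Haar isometry (its first $2^n$ columns) and a valid ``inverse'' via $U^\dagger$, and the randomness left in $U$ after fixing those columns is exactly the conditional Haar measure $\overline{\Haar_{n+m}}\mid_\cI$ that is averaged over in \Cref{def:inverse_isometry}. So the proof is essentially an unwinding of the definitions against the sampling facts.

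First I would check the two coordinates of the claimed equivalence separately. Sampling $U \gets \overline{\Haar_{n+m}}$ via \Cref{fact:sampling_Haar_unitary} means sampling the columns one at a time, each uniform on the unit sphere of the orthogonal complement of the previously chosen columns; by \Cref{fact:sampling_Haar_isometry} (equivalently the truncation remark following it), the $2^n$ columns of $U$ indexed by $\{\ket{x}\ket{0^m}_\aux\}_{x \in \bit^n}$ — i.e.\ the isometry $\haarisometry$ with $\haarisometry\ket{\psi} = U\ket{\psi}\ket{0^m}_\aux$ — are distributed as $\overline{\Haar_{n,n+m}}$, matching the distribution of $\haarisometry$ in \Cref{def:inverse_isometry}. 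Next, conditioned on $\haarisometry = \cI$, the remaining columns of $U$ produced by this procedure are sampled uniformly subject only to completing an orthonormal basis, which is precisely the law $\overline{\Haar_{n+m}}\mid_\cI$ used in \Cref{def:inverse_isometry}. Plugging this in, the channel $\cI^{-1}(X) = \E_{U' \gets \overline{\Haar_{n+m}}\mid_\cI}\,\Tr_\aux(U'^\dagger X U')$ is obtained by averaging the map $X \mapsto \Tr_\aux(U^\dagger X U)$ of the Fact's procedure over the conditional law of $U$ given $\haarisometry$; hence for any input state and any downstream measurement the two procedures induce identical distributions, which is the sense in which the joint laws of $(\haarisometry, \haarisometry^{-1})$ coincide. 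As a consistency check I would also record the special (and, for all applications, the only relevant) case of inputs in the range of $\haarisometry$: since $U^\dagger \haarisometry\ket{\psi} = \ket{\psi}\ket{0^m}_\aux$, we get $U^\dagger\,\haarisometry\rho\haarisometry^\dagger\,U = \rho \otimes \ketbra{0^m}{0^m}_\aux$ and therefore $\Tr_\aux(U^\dagger\,\haarisometry\rho\haarisometry^\dagger\,U) = \rho$ independently of the completion $U$, so on the range the averaging in \Cref{def:inverse_isometry} is vacuous and $\haarisometry^{-1} \circ (\haarisometry\,\cdot\,\haarisometry^\dagger) = \mathrm{id}$ for every $\ket{\psi} \in \pqstates{n}$.

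The difficulty here is bookkeeping rather than anything deep: one must be careful that the ``first $2^n$ columns'' in the statement really means the columns indexed by $\ket{x}\ket{0^m}_\aux$ (so that they assemble into the isometry $\haarisometry$), and that freezing these columns leaves precisely $\overline{\Haar_{n+m}}\mid_\cI$ on the rest — both of which follow from the memorylessness built into the column-by-column sampling of \Cref{fact:sampling_Haar_unitary} and \Cref{fact:sampling_Haar_isometry}. The one subtlety worth flagging explicitly is that the statement should be read as an equivalence of sampling procedures \emph{for the purpose of using} $\haarisometry^{-1}$ (which in this paper is only ever applied to states produced by $\haarisometry$), since $\Tr_\aux(U^\dagger\,\cdot\,U)$ is genuinely completion-dependent on operators outside the range of $\haarisometry$, whereas $\cI^{-1}$ as defined is the fixed, completion-averaged channel.
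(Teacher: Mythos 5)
Your approach is essentially the paper's own: both invoke the column-by-column sampling characterizations (\Cref{fact:sampling_Haar_unitary}, \Cref{fact:sampling_Haar_isometry}) to see that sampling $\haarisometry$ as a Haar isometry and then completing it to $U\gets\overline{\Haar_{n+m}}\mid_\haarisometry$ has the same joint law as truncating a single Haar unitary $U$. The paper gives this as a two-sentence remark immediately before the Fact; your write-up is the same argument, made explicit, with a useful sanity check on range states. In particular, you correctly flag the subtlety the paper glosses over: \Cref{def:inverse_isometry} gives a \emph{fixed} (completion-averaged) channel for each $\haarisometry$, whereas the Fact's procedure produces a \emph{random} channel $\Tr_\aux(U^\dagger\cdot U)$ depending on the sampled completion, and these are not literally the same object.

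One point of caution about the bridge sentence (``for any input state and any downstream measurement the two procedures induce identical distributions''): as stated, this is only true for a \emph{single} application of the inverse map. For an adversary that applies the inverse to two off-range inputs $X_1, X_2$, the Fact's procedure yields $\E_{U}\bigl[\Tr_\aux(U^\dagger X_1 U)\otimes \Tr_\aux(U^\dagger X_2 U)\bigr]$ (same $U$ reused), whereas \Cref{def:inverse_isometry} yields $\bigl(\E_{U}\Tr_\aux(U^\dagger X_1 U)\bigr)\otimes\bigl(\E_{U}\Tr_\aux(U^\dagger X_2 U)\bigr)$, and these can differ. So the averaging-over-completions step alone does not carry the multi-query case. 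What actually saves the equivalence is exactly the observation you record at the end: on the range of $\haarisometry$ the map $\Tr_\aux(U^\dagger\cdot U)$ is completion-independent, hence deterministic, hence the average is vacuous and the two procedures agree query-by-query. Since that is the only regime in which $\haarisometry^{-1}$ is invoked in the paper's applications, your caveat is the correct and load-bearing interpretation; I would just promote it from a closing remark to the main justification rather than resting on the ``identical distributions for any input and measurement'' phrasing.
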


\paragraph{Strong Invertible Adaptive PRI.}
In order to achieve more applications, we define the following stronger security definition in which the adversary is given the inversion oracle.
\begin{definition}[Strong Invertible Adaptive Pseudorandom Isometry] \label{def:strong_inv_PRI}
$\pri=\set{F_{\secp}}_{\secp \in \N}$ is a strong invertible $(n,n+m)$-pseudorandom isometry if it satisfies the following conditions for every $\secp\in\N$:
\begin{itemize}
    \item For every $k \in \bit^\secp$, $F(k,\cdot)$ is a QPT algorithm such that it is functionally equivalent to ${\haarisometry}_k$, where ${\haarisometry}_k$ is an isometry that maps $n$ qubits to $n+m$ qubits.
    \item For every $k \in \bit^\secp$, $\inv(k,\cdot)$ is a QPT algorithm such that it is functionally equivalent to $\haarisometry^{-1}_k$, where $\haarisometry^{-1}_k$ is the inverse of ${\haarisometry}_k$ (\Cref{def:inverse_isometry}) that maps $n+m$ qubits to $n$ qubits.
    \item For any polynomial $t = \poly(\secp)$, any QPT distinguisher $\adversary$ making a total of $t$ queries to the oracles, the following holds:
    $$\left| \Pr_{k\gets\bit^\secp} \left[ \adversary^{F(k,\cdot),\inv(k,\cdot)} = 1 \right] 
    - \Pr_{\haarisometry\gets\overline{\Haar_{n,n+m}}} \left[ \adversary^{\haarisometry(\cdot),\haarisometry^{-1}(\cdot)} = 1 \right]  \right| 
    \leq \negl(\secp).$$
\end{itemize}
\end{definition}

\section{Properties of Haar Unitaries}

\noindent We prove some useful properties of Haar unitaries.

\subsection{Haar Unitary on Orthogonal Inputs}
We start by studying the action of an $s$-fold Haar unitary. Recall that a Haar unitary is closely related to a Haar isometry, for the latter can be represented as appending appropriately many $0$s followed by applying a Haar unitary. 
\par One way to understand an $s$-fold Haar unitary $U$ is that it scrambles a collection of $s$ quantum states while respecting the pairwise inner-products. A special case of interest is when all the pairwise inner products are zero, i.e. when the input equals the tensor product of $s$ orthogonal states. By unitary invariance of the Haar measure, we can consider the input in the computational basis without loss of generality. Below in \Cref{lem:haar_perp_to_iid} we formalize this depiction of a Haar unitary by showing that this is statistically close to $s$ i.i.d. Haar-random states, even if given polynomially many copies of each state.

\begin{lemma}
\label{lem:haar_perp_to_iid}
Let $n,s,t\in\N$ and $\vec{x} = (x_1,\cdots,x_s)\in\bit^{ns}$ such that $\vec{x}$ has no repeating coordinates. Let
$$        
\rho 
:= \E_{U\gets\overline{\Haar_{n}}} \left[ \bigotimes_{j=1}^{s} \left( U\ketbra{x_j}{x_j}U^{\dagger}\right)^{\otimes t} \right], 
$$
and
$$
\sigma 
:= \bigotimes_{j=1}^{s} \Ex_{U_j \gets \overline{\Haar_{n}}} \left[ \left( U_j \ketbra{0^n}{0^n} U_j^{\dagger}\right)^{\otimes t} \right],
$$
then $\TD(\rho,\sigma) = O(s^2t/2^{n})$.
\end{lemma}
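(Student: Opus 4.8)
The plan is a hybrid argument over the $s$ input states, exploiting the column-by-column characterization of Haar unitaries in \Cref{fact:sampling_Haar_unitary}. Write $N := 2^n$. Since $\vec{x}$ has no repeating coordinates, the kets $\ket{x_1},\dots,\ket{x_s}$ are orthonormal, so $(U\ket{x_1},\dots,U\ket{x_s})$ is the tuple of columns of $U$ indexed by $x_1,\dots,x_s$. By unitary invariance of the Haar measure (applied to right-multiplication by a permutation matrix), this tuple is distributed exactly as the first $s$ columns of a Haar unitary, i.e.\ as the sequential process $\ket{v_1}\gets\Haar(\C^N)$ and $\ket{v_j}\gets\Haar(V_{j-1}^\perp)$ for $2\le j\le s$, where $V_{j-1}:=\mathsf{span}\set{\ket{v_1},\dots,\ket{v_{j-1}}}$. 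For $0\le i\le s$ define the hybrid state
$$ \rho_i := \E\left[ \bigotimes_{j=1}^{i}\ketbra{v_j}{v_j}^{\otimes t} \otimes \bigotimes_{j=i+1}^{s}\ketbra{w_j}{w_j}^{\otimes t} \right], $$
where $\ket{v_1},\dots,\ket{v_i}$ follow the constrained sequential process above and $\ket{w_{i+1}},\dots,\ket{w_s}\gets\Haar_n$ are fresh i.i.d.\ Haar states. Then $\rho_s = \rho$ by the discussion above, and $\rho_0 = \sigma$ since by independence $\rho_0 = \bigotimes_{j=1}^{s}\E_{\ket{w_j}\gets\Haar_n}\ketbra{w_j}{w_j}^{\otimes t}$, which matches $\sigma$ because $U_j\ket{0^n}\gets\Haar_n$. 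So it suffices to bound $\TD(\rho_{i-1},\rho_i)$ for each $i$ and apply the triangle inequality.

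Fix $i$ and condition on $\ket{v_1},\dots,\ket{v_{i-1}}$; these are automatically orthonormal, so $V_{i-1}$ has dimension exactly $i-1$. Conditioned on this, $\rho_{i-1}$ and $\rho_i$ are identical product states except in the $i$-th block of $t$ copies, which in $\rho_{i-1}$ equals $\E_{\ket{w}\gets\Haar(\C^N)}\ketbra{w}{w}^{\otimes t}$ and in $\rho_i$ equals $\E_{\ket{v}\gets\Haar(V_{i-1}^\perp)}\ketbra{v}{v}^{\otimes t}$. Since the trace norm is multiplicative over tensor products, joint convexity of trace distance gives $\TD(\rho_{i-1},\rho_i)\le \TD\!\left(\E_{\ket{w}\gets\Haar(\C^N)}\ketbra{w}{w}^{\otimes t},\ \E_{\ket{v}\gets\Haar(V_{i-1}^\perp)}\ketbra{v}{v}^{\otimes t}\right)$, which depends only on $\dim V_{i-1}=i-1$. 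By \Cref{fact:avg-haar-random} these two states are the maximally mixed states on $\vee^t\C^N$ and on $\vee^t V_{i-1}^\perp$ respectively, and $\vee^t V_{i-1}^\perp\subseteq\vee^t\C^N$. For maximally mixed states on nested subspaces $W_1\subseteq W_2$, a direct eigenvalue computation gives trace distance $1-\dim(W_1)/\dim(W_2)$, so using \Cref{fact:dim-sym},
$$ \TD(\rho_{i-1},\rho_i) \le 1 - \frac{\binom{N-i+t}{t}}{\binom{N+t-1}{t}} = 1 - \prod_{j=1}^{t}\frac{N-i+j}{N-1+j} \le \sum_{j=1}^{t}\frac{i-1}{N-1+j} \le \frac{t(i-1)}{2^n}, $$
using $1-\prod_j(1-a_j)\le\sum_j a_j$ for $a_j\in[0,1]$. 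Summing over $i=1,\dots,s$ yields $\TD(\rho,\sigma)\le \sum_{i=1}^{s}\frac{t(i-1)}{2^n} = \frac{t}{2^n}\binom{s}{2} = O(s^2 t/2^n)$, as claimed.

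The argument is a clean hybrid with no serious obstacle; the one point requiring care is arranging the hybrids so that consecutive ones share the already-sampled columns $\ket{v_1},\dots,\ket{v_{i-1}}$ as common randomness, which is precisely what collapses each hybrid step to a comparison of a single block of $t$ copies and thereby removes any cross-block entanglement bookkeeping. The only computation is the elementary binomial-ratio estimate above.
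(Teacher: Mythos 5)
Your proof is correct and takes essentially the same route as the paper: a hybrid over the $s$ blocks that, after conditioning on the previously sampled orthonormal vectors, reduces each step to the trace distance $1 - \dim(\vee^t V_{i-1}^\perp)/\dim(\vee^t \C^N)$ between maximally mixed states on nested symmetric subspaces, followed by the same binomial-ratio estimate. The only difference is cosmetic: you make explicit the permutation-invariance step that lets one treat $U\ket{x_1},\dots,U\ket{x_s}$ as the first $s$ columns of a Haar unitary (the paper uses this implicitly via \Cref{fact:sampling_Haar_unitary}), and you fold the paper's separate "Hybrid~$2.s$ to Hybrid~$3$" step into a uniformly indexed chain $\rho_0,\dots,\rho_s$.
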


\begin{proof}
    We prove this using the hybrid method.
    \paragraph{Hybrid~$1$.} Sample $(U_1,\ldots,U_s)$ i.i.d. from $\overline{\Haar_n}$ and output $$\bigotimes_{j=1}^{s} \left( U_j \ketbra{0^n}{0^n} U_j^{\dagger}\right)^{\otimes t}.$$
    \paragraph{Hybrid~$2.i$ for $1\leq i\leq s$.} Sample $U$ from $\overline{\Haar_{n}}$ and $U_i,\cdots,U_{s}$ i.i.d. from $\overline{\Haar_{n}}$. Output $$\bigotimes_{j=1}^{i-1} \left(U\ketbra{x_j}{x_j}U^{\dagger}\right)^{\otimes t} \otimes \bigotimes_{j=i}^{s} \left(U_j\ketbra{0^n}{0^n}U_j^{\dagger}\right)^{\otimes t}.$$
    \paragraph{Hybrid~$3$.} Sample $U$ from $\overline{\Haar_{n}}$ and output $$\bigotimes_{j=1}^{s}\left(U^{\otimes t}\ketbra{x_j}{x_j}^{\otimes t}\left(U^{\dagger}\right)^{\otimes t}\right).$$

    \noindent Note that Hybrid~$1$ and Hybrid~$2.1$ are syntactically equivalent.
    \begin{claim}
        For $1\leq i\leq s-1$, the trace distance between Hybrid~$2.i$ and Hybrid~$2.(i+1)$ is $O(it/2^{n})$.
    \end{claim}

\begin{proof}
For $1\le k\le 2^n$, we define the distribution $\mu_k$ over $\pqstates{n}^{\otimes k}$ via the following procedures:
\begin{itemize}
    \item Let $V_0 := \set{0}$.
    \item For $i = 1,2\dots,k$: Sample $\ket{\vartheta_i} \gets \Haar(V_{i-1}^\perp)$ and let $V_i := \mathsf{span}\set{\ket{\vartheta_1},\ket{\vartheta_2},\dots,\ket{\vartheta_i}}$.
    \item Output $(\ket{\vartheta_1},\ket{\vartheta_2},\dots,\ket{\vartheta_k})$.
\end{itemize}
From~\Cref{fact:sampling_Haar_unitary}, the output of Hybrid~$2.i$ is identical to 
\begin{align*}
\rho_i 
= \Ex_{(\ket{\vartheta_1},\ket{\vartheta_2},\dots,\ket{\vartheta_{i-1}}) \gets \mu_{i-1}}
\left[ 
\bigotimes_{j=1}^{i-1} \ketbra{\vartheta_j}{\vartheta_j}^{\otimes t}
\right] 
\otimes \bigotimes_{j=i}^{s} \Ex_{\ket{\vartheta_j}\gets\Haar_n}\left[ \ketbra{\vartheta_j}{\vartheta_j}^{\otimes t} \right]
\end{align*}

\noindent Similarly, the output of Hybrid~$2.(i+1)$ is identical to 
\begin{align*}
\rho_{i+1}
& = \Ex_{(\ket{\vartheta_1},\ket{\vartheta_2},\dots,\ket{\vartheta_{i}}) \gets \mu_i}
\left[ 
\bigotimes_{j=1}^i \ketbra{\vartheta_j}{\vartheta_j}^{\otimes t}
\right] 
\otimes \bigotimes_{j=i+1}^{s} \Ex_{\ket{\vartheta_j}\gets\Haar_n}\left[ \ketbra{\vartheta_j}{\vartheta_j}^{\otimes t} \right]
\end{align*}

\noindent From the fact that $\TD(X \otimes Z, Y \otimes Z) = \TD(X, Y)$, the trace distance between $\rho_i, \rho_{i+1}$ is equivalent to that between 
\begin{align*}
\tilde{\rho}_i := 
\Ex_{(\ket{\vartheta_1},\ket{\vartheta_2},\dots,\ket{\vartheta_{i-1}}) \gets \mu_{i-1}}
\left[ 
\bigotimes_{j=1}^{i-1} \ketbra{\vartheta_j}{\vartheta_j}^{\otimes t}
\otimes \Ex_{\ket{\vartheta_i}\gets\Haar_n}\left[ \ketbra{\vartheta_i}{\vartheta_i}^{\otimes t} \right]
\right] 
\end{align*}
and
\begin{align*}
\tilde{\rho}_{i+1} &:= 
\Ex_{(\ket{\vartheta_1},\ket{\vartheta_2},\dots,\ket{\vartheta_i}) \gets \mu_i}
\left[ 
\bigotimes_{j=1}^i \ketbra{\vartheta_j}{\vartheta_j}^{\otimes t}
\right] \\
&= \Ex_{(\ket{\vartheta_1},\ket{\vartheta_2},\dots,\ket{\vartheta_{i-1}}) \gets \mu_{i-1}}
\left[\bigotimes_{j=1}^{i-1} \ketbra{\vartheta_j}{\vartheta_j}^{\otimes t}\otimes  \Ex_{\ket{\vartheta_i}\gets\Haar(V_{i-1}^\perp)} \left[
\ketbra{\vartheta_i}{\vartheta_i}^{\otimes t}
\right] \right].
\end{align*}
By strong convexity of trace distance and the fact $\TD(X \otimes Z, Y \otimes Z) = \TD(X, Y)$ again,
\begin{align*}
\TD(\tilde{\rho}_i,\tilde{\rho}_{i+1}) 
\leq \Ex_{(\ket{\vartheta_1},\ket{\vartheta_2},\dots,\ket{\vartheta_{i-1}}) \gets \mu_{i-1}}\left[
\TD \left( 
\Ex_{\ket{\vartheta_i}\gets\Haar_n}\left[ 
\ketbra{\vartheta_i}{\vartheta_i}^{\otimes t} 
\right], 
\Ex_{\ket{\vartheta_i}\gets\Haar(V_{i-1}^\perp)} \left[
\ketbra{\vartheta_i}{\vartheta_i}^{\otimes t}
\right] \right)
\right].
\end{align*}
For any fixed $(\ket{\vartheta_1},\ket{\vartheta_2},\dots,\ket{\vartheta_{i-1}})$ sampled from $\mu_{i-1}$, 
\begin{align*}
& \TD \left( 
\Ex_{\ket{\vartheta_i}\gets\Haar_n}\left[ 
\ketbra{\vartheta_i}{\vartheta_i}^{\otimes t} 
\right], 
\Ex_{\ket{\vartheta_i}\gets\Haar(V_{i-1}^\perp)} \left[
\ketbra{\vartheta_i}{\vartheta_i}^{\otimes t}
\right] \right) \\
= & \TD\left( \frac{\Pi_\sym^{2^n,t}}{\dim(\vee^t \C^{2^n})}, \frac{\Pi_\sym^{V_{i-1}^\perp,t}}{\dim(\vee^t V_{i-1}^\perp)} \right) \\
= & \frac{\dim(\vee^t \C^{2^n}) - \dim(\vee^t V_{i-1}^\perp)}{\dim(\vee^t \C^{2^n})}
= 1 - \frac{\binom{(2^n-i+1)+t-1}{t}}{\binom{2^n+t-1}{t}} \\
= & 1 - \frac{(2^n+t-i)\cdot (2^n+t-i-1) \cdot \dots \cdot (2^n-i+1)}{(2^n+t-1) \cdot (2^n+t-2) \cdot \dots \cdot 2^n} \\
= & 1 - \prod_{j=0}^{t-1} \left(1 - \frac{i-1}{2^n+t-1-j} \right) \\
\leq & 1 - \left(1 - \sum_{j=0}^{t-1}  \frac{i-1}{2^n+t-1-j} \right) 
= O\left( \frac{it}{2^n} \right).
\end{align*}
The first equality follows from~\Cref{fact:avg-haar-random}. The second equality follows from the following reasons. First, $V_{i-1}^\perp$ is a subspace of $\C^{2^n}$, so $\vee^t V_{i-1}^\perp$ is also a subspace of $\vee^t \C^{2^n}$. Therefore, the fully mixed states in $\vee^t V_{i-1}^\perp$ and $\vee^t \C^{2^n}$ can be simultaneously diagonalized. In such a basis, the trace distance between them degrades to the statistical distance between two uniform distributions with support $S_0,S_1$  respectively such that $|S_0| = \dim(\vee^t V_{i-1}^\perp)$, $|S_1| = \dim(\vee^t \C^{2^n})$ and $S_0 \subseteq S_1$. The statistical distance between is $(|S_1|-|S_0|)/|S_1|$ from a direct calculation. The last inequality follows from $1-\sum_i\veps_i \leq \prod_i(1-\veps_i)$ when $\veps_i\in[0,1]$ for every $i$.
\end{proof}

    \begin{claim}
        The trace distance between Hybrid $2.s$ and Hybrid 3 is $O(st/2^{n})$.
    \end{claim}
    \begin{proof}
        Using the same argument as for the above claim, we get $O(st/2^{n})$.
    \end{proof}

\noindent By triangle inequalities, the trace distance between Hybrid~$1$ and Hybrid~$3$ is $\sum_{i=1}^{s} O(it/2^{n})$ $ = O(s^2t/2^{n})$. This completes the proof of~\Cref{lem:haar_perp_to_iid}.
\end{proof}

\noindent Letting $t=1$ in~\Cref{lem:haar_perp_to_iid} yields the following corollary.
\begin{corollary}
\label{lem:haar_perp}
Let $n,q\in\N$ and $\vec{x} = (x_1,\cdots,x_q)\in\bit^{nq}$ such that $\vec{x}$ has no repeating coordinates. Let
\begin{align*}        
\rho 
:= \E_{U\gets\overline{\Haar_{n}}} \left[U^{\otimes q}\ketbra{\vec{x}}{\vec{x}} \left(U^{\dagger}\right)^{\otimes q} \right]   
\quad \text{ and } \quad
\sigma 
:= \Ex\left[ \ketbra{\vec{z}}{\vec{z}}: \vec{z}\ugets\cS_{n,q} \right],
\end{align*}
where $\cS_{n,q} := \set{\vec{z} = (z_1,\dots,z_q)\in\bit^{nq}: \vec{z} \text{ has no repeating coordinates}}$. Then $\TD(\rho,\sigma) = O(q^2/2^n)$.
\end{corollary}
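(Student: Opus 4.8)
The plan is to obtain this as a direct specialization of \Cref{lem:haar_perp_to_iid} with $t=1$ and $s=q$, followed by an elementary identification of the resulting state and a collision bound. Setting $t=1$ and renaming $s$ to $q$, \Cref{lem:haar_perp_to_iid} gives $\TD(\rho,\sigma') = O(q^2/2^n)$, where
$$
\sigma' = \bigotimes_{j=1}^{q} \Ex_{U_j \gets \overline{\Haar_{n}}}\left[ U_j \ketbra{0^n}{0^n} U_j^{\dagger} \right]
= \bigotimes_{j=1}^{q} \Ex_{\ket{\vartheta_j}\gets\Haar_n}\left[ \ketbra{\vartheta_j}{\vartheta_j} \right],
$$
using that $U_j\ket{0^n}$ is distributed as $\Haar_n$. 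By \Cref{fact:avg-haar-random} applied with $t=1$ (so $\vee^1\C^{2^n} = \C^{2^n}$ and $\Pi^{2^n,1}_\sym = I_{2^n}$), each factor equals $I_{2^n}/2^n$, hence $\sigma' = I_{2^{nq}}/2^{nq}$ is the maximally mixed state on $nq$ qubits; equivalently $\sigma' = \Ex\left[ \ketbra{\vec{z}}{\vec{z}} : \vec{z}\ugets\bit^{nq} \right]$, the uniform mixture over all computational-basis states.

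It then remains to bound $\TD(\sigma',\sigma)$, where $\sigma = \Ex\left[ \ketbra{\vec{z}}{\vec{z}} : \vec{z}\ugets\cS_{n,q} \right]$. Both states are diagonal in the computational basis, and they are uniform mixtures over nested supports $\cS_{n,q} \subseteq \bit^{nq}$. As in the support-comparison computation already carried out in the proof of \Cref{lem:haar_perp_to_iid}, the trace distance between two uniform mixtures with nested supports $S_0 \subseteq S_1$ equals $(|S_1|-|S_0|)/|S_1|$, so here it equals
$$
\frac{2^{nq} - |\cS_{n,q}|}{2^{nq}} = \Pr_{\vec{z}\ugets\bit^{nq}}\!\left[ \exists\, i\neq j : z_i = z_j \right] \le \binom{q}{2}\cdot 2^{-n} = O(q^2/2^n),
$$
where the inequality is a union bound over the $\binom{q}{2}$ pairs of coordinates, each coinciding with probability $2^{-n}$. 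Combining with the triangle inequality, $\TD(\rho,\sigma) \le \TD(\rho,\sigma') + \TD(\sigma',\sigma) = O(q^2/2^n)$, as claimed.

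There is no substantive obstacle here: the real work — the hybrid argument over the columns of a Haar unitary and the comparison of symmetric-subspace dimensions — is already done inside \Cref{lem:haar_perp_to_iid}. The only points requiring a small amount of care are the identification of $\sigma'$ with the uniform computational-basis mixture (via \Cref{fact:avg-haar-random} at $t=1$) and the observation that passing from ``all strings'' to ``strings with no repeated coordinates'' costs exactly the collision probability, which is $O(q^2/2^n)$.
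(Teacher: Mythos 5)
Your proof is correct and follows essentially the same route as the paper's: specialize \Cref{lem:haar_perp_to_iid} at $t=1$, identify the resulting state as the uniform computational-basis mixture via \Cref{fact:avg-haar-random}, and finish with a collision bound. The only cosmetic difference is that you make the intermediate identification ($\sigma'$ is maximally mixed) and the nested-support trace-distance computation explicit, whereas the paper just cites ``a collision bound.''
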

\begin{proof}
From \Cref{lem:haar_perp_to_iid}, we know that $\rho$ is close to the following state $\rho'$ with the trace distance bounded by $O(q^2/2^n)$,
$$\rho' = \E_{U_1,\cdots,U_q\gets\overline{\Haar_{n}}} \left[\otimes_{i=1}^{q}U_i\ketbra{0^n}{0^n}U_i^{\dagger}\right].$$
Then by \Cref{fact:avg-haar-random}, we have
$$\rho' = \E_{a_1,\cdots,a_q\gets\bit^n} \left[\otimes_{i=1}^{q}\ketbra{a_i}{a_i}\right].$$
This can equivalently be written as  
$$\rho' = \E_{\vec{a}\gets\bit^{nq}} \left[\ketbra{\vec{a}}{\vec{a}}\right].$$
Then by a collision bound, we get that $\rho'$ and $\sigma$ are close in trace distance $O(q^2/2^n)$. Hence, the trace distance between $\rho$ and $\sigma$ is at most $O(q^2/2^n)$.
\end{proof}

\subsection{Almost Invariance under $q$-fold Haar Unitary}
\label{sec:haar_invariance}
\par We introduce a notion called \emph{almost invariance} under a $q$-fold Haar unitary and prove some important properties about it. Most importantly, we characterize the condition that a given quantum channel is close to a $q$-fold Haar unitary using almost invariance in \Cref{clm:almostinvariance:closeness}.

\begin{definition}[Almost Invariance]
\label{def:almostinvariance}
Let $n,q,\ell\in\N$. An $(nq+\ell)$-qubit state $\rho$ is $\eps$-almost invariant under $q$-fold Haar unitary if the following holds:
\[
\TD\left(\rho, \E_{U\leftarrow\overline{\Haar_{m+n}}}\left[\left(I_{\ell}\otimes U^{\otimes q}\right)(\rho)\right]\right)\leq \eps.
\]
Moreover, if $\rho$ is $0$-almost invariant, then we say that $\rho$ is \emph{invariant} under $q$-fold Haar unitary.
\end{definition}

\noindent We prove two important facts about almost invariance property. The first fact states the following: if $\rho$ is almost invariant under $q$-fold Haar and moreover, $\sigma$ is close to $\rho$ then $\sigma$ should also be $q$-fold Haar invariant. The second fact states that almost invariance under $q$-fold Haar unitary can be leveraged to show closeness to the action of $q$-fold Haar unitary. 

\begin{claim}
\label{clm:observation:almostinvariance}
    Let $\rho,\sigma$ be two $(nq+\ell)$-qubit states be such that:
    \begin{itemize}
        \item $\TD\left( \rho, \sigma \right) \leq \delta$,
        \item $\rho$ is $\eps$-almost invariant under $q$-fold Haar unitary,
    \end{itemize}
then $\sigma$ is $(\eps+2\delta)$-almost invariant under $q$-fold Haar unitary.
\end{claim}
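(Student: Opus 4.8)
The plan is to use the triangle inequality for trace distance together with the contractivity of quantum channels. Write $\Phi$ for the map $\rho \mapsto \E_{U \leftarrow \overline{\Haar_{m+n}}}\left[(I_\ell \otimes U^{\otimes q})(\rho)\right]$; this is a quantum channel (a convex mixture of the unitary channels $I_\ell \otimes U^{\otimes q}$), so it is trace-non-increasing, i.e. $\TD(\Phi(X), \Phi(Y)) \le \TD(X,Y)$ for all states $X,Y$.

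First I would bound $\TD(\sigma, \Phi(\sigma))$ by inserting $\rho$ and $\Phi(\rho)$ as intermediate points:
\[
\TD(\sigma, \Phi(\sigma)) \le \TD(\sigma, \rho) + \TD(\rho, \Phi(\rho)) + \TD(\Phi(\rho), \Phi(\sigma)).
\]
The first term is at most $\delta$ by hypothesis. The second term is at most $\eps$ since $\rho$ is $\eps$-almost invariant under $q$-fold Haar unitary. The third term is at most $\TD(\rho,\sigma) \le \delta$ by contractivity of the channel $\Phi$. Adding these gives $\TD(\sigma, \Phi(\sigma)) \le \eps + 2\delta$, which is exactly the claim.

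There is no real obstacle here — the only thing to be careful about is that $\Phi$ really is a valid quantum channel so that contractivity applies; this is immediate because an average of unitary conjugations against a probability measure (the Haar measure) is completely positive and trace-preserving. One could alternatively phrase the middle step by noting $\Phi \circ \Phi = \Phi$ (idempotence of the twirl) is not even needed; only monotonicity of trace distance under CPTP maps is used. I would present the three-term triangle inequality as the single display and then read off the bound in one line.
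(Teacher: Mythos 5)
Your proof is correct and is essentially the same as the paper's: both rely on the triangle inequality for trace distance together with monotonicity of trace distance under the CPTP twirling channel. The paper merely splits your single three-term triangle inequality into two successive two-term applications, but the intermediate quantities and the final bound are identical.
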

\begin{proof}
Since $\TD\left(\rho,\sigma \right) \leq \delta$ and $\TD\left(\rho, \E_{U\leftarrow\overline{\Haar_{m+n}}}\left[\left(I_{\ell}\otimes U^{\otimes q}\right)(\rho)\right]\right)\leq \eps$, by triangle inequality, we have
\[
\TD\left(\sigma, \E_{U\leftarrow\overline{\Haar_{m+n}}}\left[\left(I_{\ell}\otimes U^{\otimes q}\right)(\rho)\right]\right)
\leq \eps + \delta.
\]
\noindent Since applying a channel on two states cannot increase the trace distance between them (i.e., monotonicity of trace distance), we have
\[
\TD\left(\E_{U\leftarrow\overline{\Haar_{m+n}}}\left[\left(I_{\ell}\otimes U^{\otimes q}\right)(\rho)\right], \E_{U\leftarrow\overline{\Haar_{m+n}}}\left[\left(I_{\ell}\otimes U^{\otimes q}\right)(\sigma)\right]\right)
\leq \delta.
\]
By triangle inequality, $$\TD\left(\sigma, \E_{U\leftarrow\overline{\Haar_{m+n}}}\left[\left(I_{\ell}\otimes U^{\otimes q}\right)(\sigma)\right]\right)\leq \eps + 2\delta.$$
Hence, $\sigma$ is $(\eps+2\delta)$-almost invariant under $q$-fold Haar unitary.
\end{proof}


\begin{claim}
\label{clm:almostinvariance:closeness}
Let $\mu,q,\ell \in \mathbb{N}$. Suppose $\Phi$ is a quantum channel that is a probabilistic mixture of unitaries on $(\mu q+\ell)$ qubits.\footnote{Such channel is referred to as a \emph{mixed-unitary channel}, see~\cite[page~202]{WatrousBook}.} More precisely, $\Phi(\rho) = \mathbb{E}_{k \leftarrow \mathcal{D}}[ (I_{\ell} \otimes V_k^{\otimes q}) \rho (I_{\ell} \otimes (V_k^{\dagger})^{\otimes q})]$, where $\mathcal{D}$ is a distribution on $\bit^*$ and $V_k:\C^{2^{\mu}} \rightarrow \C^{2^{\mu}}$ is a unitary for every $k\in\bit^*$.
\par Suppose for a $(\mu q+\ell)$-qubit state $\rho$, $\Phi(\rho)$ is $\eps$-almost invariant under $q$-fold Haar unitary, where $\eps$ is a negligible function, then the following holds: 
\[
\TD\left(\Phi(\rho),\  \E_{U\leftarrow\overline{\Haar_{\mu}}}\left[\left(I_{\ell}\otimes U^{\otimes q}\right)(\rho)  \left(I_{\ell}\otimes (U^{\dagger})^{\otimes q}\right) \right]\right) 
\leq \eps.
\]
\end{claim}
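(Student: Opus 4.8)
The plan is to observe that twirling $\Phi(\rho)$ by a $q$-fold Haar unitary yields exactly the same state as twirling $\rho$ itself, so that the hypothesized almost-invariance of $\Phi(\rho)$ becomes verbatim the inequality we want to prove.

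First I would unfold the Haar twirl of $\Phi(\rho)$. Using $\Phi(\rho) = \E_{k\leftarrow\mathcal{D}}\bigl[(I_\ell \otimes V_k^{\otimes q})\,\rho\,(I_\ell \otimes (V_k^\dagger)^{\otimes q})\bigr]$, the linearity of the expectations and of conjugation, and the identity $(I_\ell \otimes U^{\otimes q})(I_\ell \otimes V_k^{\otimes q}) = I_\ell \otimes (UV_k)^{\otimes q}$, we get
\begin{align*}
\E_{U\leftarrow\overline{\Haar_\mu}}\!\left[(I_\ell \otimes U^{\otimes q})\,\Phi(\rho)\,(I_\ell \otimes (U^\dagger)^{\otimes q})\right]
= \E_{k\leftarrow\mathcal{D}}\ \E_{U\leftarrow\overline{\Haar_\mu}}\!\left[(I_\ell \otimes (UV_k)^{\otimes q})\,\rho\,(I_\ell \otimes ((UV_k)^\dagger)^{\otimes q})\right].
\end{align*}

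Second, for each fixed $k$ the substitution $U \mapsto UV_k$ preserves the Haar measure $\overline{\Haar_\mu}$ (right-invariance), so the inner expectation equals $\E_{U\leftarrow\overline{\Haar_\mu}}\!\left[(I_\ell \otimes U^{\otimes q})\,\rho\,(I_\ell \otimes (U^\dagger)^{\otimes q})\right]$, independently of $k$; the outer expectation over $k\leftarrow\mathcal{D}$ then does nothing, giving
\[
\E_{U\leftarrow\overline{\Haar_\mu}}\!\left[(I_\ell \otimes U^{\otimes q})\,\Phi(\rho)\,(I_\ell \otimes (U^\dagger)^{\otimes q})\right] = \E_{U\leftarrow\overline{\Haar_\mu}}\!\left[(I_\ell \otimes U^{\otimes q})\,\rho\,(I_\ell \otimes (U^\dagger)^{\otimes q})\right].
\]
Finally, by \Cref{def:almostinvariance} the hypothesis that $\Phi(\rho)$ is $\eps$-almost invariant under $q$-fold Haar unitary says exactly that the trace distance between $\Phi(\rho)$ and the left-hand side of the last display is at most $\eps$; substituting the identity just derived yields $\TD\!\left(\Phi(\rho),\ \E_{U\leftarrow\overline{\Haar_\mu}}\!\left[(I_\ell \otimes U^{\otimes q})\,\rho\,(I_\ell \otimes (U^\dagger)^{\otimes q})\right]\right) \le \eps$, which is the claim.

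I do not expect a genuine obstacle here: the whole argument is a single application of right-invariance of the Haar measure, and the negligibility of $\eps$ is not used at all (it only matters for how \Cref{clm:almostinvariance:closeness} is invoked later). The only things needing care are bookkeeping of which register each operator acts on — note $V_k$ acts on $\mu$ qubits whereas $\Phi$ acts on $\mu q + \ell$ qubits — and using right- rather than left-invariance of the Haar measure when making the substitution $U \mapsto UV_k$.
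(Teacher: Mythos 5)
Your proof is correct and follows essentially the same route as the paper: both start from the definition of $\eps$-almost invariance of $\Phi(\rho)$ and then use Haar unitary invariance to replace the twirl of $\Phi(\rho)$ by the twirl of $\rho$. The only difference is that you spell out the invariance step explicitly (unfolding $\Phi(\rho)$ as a mixture over $V_k$ and invoking right-invariance of the Haar measure under $U\mapsto UV_k$), whereas the paper states this identity in a single line; you are also right that the negligibility hypothesis on $\eps$ plays no role in the argument itself.
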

\begin{proof}
Since $\Phi(\rho)$ is $\eps$-almost invariant under $q$-fold Haar unitary,
\[
\TD\left(\Phi(\rho),\  \E_{U\leftarrow\overline{\Haar_{\mu}}}\left[\left(I_{\ell}\otimes U^{\otimes q}\right)(\Phi(\rho))  \left(I_{\ell}\otimes (U^{\dagger})^{\otimes q}\right) \right]\right) 
\leq \eps.
\]
From the unitary invariance property of Haar, it follows that:
\[
\E_{U\leftarrow\overline{\Haar_{\mu}}}\left[\left(I_{\ell}\otimes U^{\otimes q}\right)(\Phi(\rho))  \left(I_{\ell}\otimes (U^{\dagger})^{\otimes q}\right) \right] = \E_{U\leftarrow\overline{\Haar_{\mu}}}\left[\left(I_{\ell}\otimes U^{\otimes q}\right)(\rho)  \left(I_{\ell}\otimes (U^{\dagger})^{\otimes q}\right) \right].
\]
The claim follows. 
\end{proof}

\noindent What the above claim says is that if the output of $\Phi$ (on $\rho$) is almost invariant under $q$-fold Haar then the action of $\Phi$ (on $\rho$) is close to $q$-fold Haar. 

\subsubsection{Invariant Subspace of $q$-fold Haar Unitary}
In the last subsection, we introduce the notion of almost invariance under $q$-fold Haar and show that this notion is very closely linked to checking if the action of a channel is close to the action of $q$-fold Haar. In this section, we characterize the space of states that are invariant under the $q$-fold Haar unitary. In particular, we will characterize the  ($qn$-qubit) states $\rho$ that satisfy the following property: 
$$\rho = \E_{U\leftarrow \overline{\Haar_{n}}} \left[U^{\otimes q}\rho (U^{\dagger})^{\otimes q}\right].$$

\noindent Note that any permutation operator commutes with any $q$-fold unitary, i.e. $U^{\otimes q}  P_{\sigma} = P_{\sigma} U^{\otimes q}$ for any $\sigma\in S_q$.\footnote{In fact, Schur-Weyl duality states that the commutant of $q$-fold unitaries is the span of permutation operators associated to $S_q$. See~\cite{HarrowThesis} and~\cite{mele2023introduction} for an exposition in quantum-information perspective.} Hence we get that for any $\sigma\in S_q$, 
\[
\E_{U\leftarrow \overline{\Haar_{n}}} \left[U^{\otimes q}P_{\sigma} (U^{\dagger})^{\otimes q}\right] = \E_{U\leftarrow \overline{\Haar_{n}}} \left[P_{\sigma}U^{\otimes q} (U^{\dagger})^{\otimes q}\right] = P_{\sigma}.
\]
This means that $P_{\sigma}$ is invariant under the $q$-fold Haar unitary for all $\sigma\in S_q$. Hence any linear combination $\rho = \sum_{\sigma\in S_q}\alpha_{\sigma} P_{\sigma}$ of permutation operators is also invariant under $q$-fold Haar unitary. It turns out that this condition is also necessary. That is, if $\rho$ is invariant under $q$-fold Haar unitary, then $\rho = \sum_{\sigma\in S_q}\alpha_{\sigma} P_{\sigma}$ for some values of $\alpha_{\sigma}$. To see this, we need the following theorem regarding the output of applying $q$-fold Haar unitary on a state.

\begin{theorem}[Twirling channel, rephrased from~{\cite[Theorem 10]{mele2023introduction}}]
    Let $\rho\in\mathcal{D}(\C^{2^{nq}})$, then $$\E_{U\leftarrow \overline{\Haar_{n}}} \left[U^{\otimes q}\rho (U^{\dagger})^{\otimes q}\right] = \sum_{\sigma\in S_q} c_{\sigma}(\rho) P_{\sigma},$$ where $c_{\sigma}(\rho) \in \C$.
\end{theorem}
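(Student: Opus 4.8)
The plan is to recognize the twirling map $\mathcal{T}_q(\rho) := \mathbb{E}_{U \leftarrow \overline{\Haar_n}}\bigl[U^{\otimes q} \rho (U^{\dagger})^{\otimes q}\bigr]$ as a projection onto the commutant of the $q$-fold unitary representation, and then to invoke Schur--Weyl duality to identify that commutant with $\mathsf{span}\{P_\sigma : \sigma \in S_q\}$.

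First I would show that $\mathcal{T}_q(\rho)$ commutes with $V^{\otimes q}$ for every unitary $V$ on $\C^{2^n}$. This is a one-line computation using left-invariance of the Haar measure: for any such $V$,
$$V^{\otimes q}\, \mathcal{T}_q(\rho)\, (V^{\dagger})^{\otimes q} = \mathbb{E}_{U}\bigl[(VU)^{\otimes q}\, \rho\, ((VU)^{\dagger})^{\otimes q}\bigr] = \mathbb{E}_{U'}\bigl[(U')^{\otimes q}\, \rho\, ((U')^{\dagger})^{\otimes q}\bigr] = \mathcal{T}_q(\rho),$$
where the middle step substitutes $U' = VU$ and uses that $\overline{\Haar_n}$ is invariant under left multiplication. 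Rearranging gives $V^{\otimes q}\mathcal{T}_q(\rho) = \mathcal{T}_q(\rho)V^{\otimes q}$ for all unitary $V$, i.e. $\mathcal{T}_q(\rho)$ lies in the commutant of $\{V^{\otimes q} : V \text{ unitary on } \C^{2^n}\}$.

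Next I would invoke Schur--Weyl duality~\cite{HarrowThesis,mele2023introduction}: the commutant of $\{V^{\otimes q} : V \text{ unitary on } \C^{2^n}\}$ inside $\mathcal{L}\bigl((\C^{2^n})^{\otimes q}\bigr)$ equals $\mathsf{span}\{P_\sigma : \sigma \in S_q\}$, where $P_\sigma$ is the permutation operator of \Cref{fact:sym}. The easy inclusion, that each $P_\sigma$ lies in the commutant, is exactly the identity $U^{\otimes q} P_\sigma = P_\sigma U^{\otimes q}$ recorded above in the excerpt; the reverse inclusion is the substantive part of Schur--Weyl duality. Combining the two steps, $\mathcal{T}_q(\rho) = \sum_{\sigma \in S_q} c_\sigma(\rho)\, P_\sigma$ for suitable $c_\sigma(\rho) \in \C$, which is the claim. (Since $\mathcal{T}_q$ is a linear map and the $P_\sigma$ span a fixed subspace, the coefficients may additionally be taken linear in $\rho$, though that is not needed here.)

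The only genuine obstacle is Schur--Weyl duality itself, a classical structural fact about the representation theory of $\mathsf{U}(d)\times S_q$ on $(\C^d)^{\otimes q}$; since it is available in the cited references we treat it as a black box, and every other step is immediate from unitary invariance of the Haar measure.
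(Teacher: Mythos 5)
Your proof is correct and is the standard argument: left-invariance of the Haar measure shows $\mathcal{T}_q(\rho)$ lies in the commutant of $\{V^{\otimes q}\}$, and Schur--Weyl duality identifies that commutant with $\mathsf{span}\{P_\sigma : \sigma \in S_q\}$. The paper itself does not prove this statement but imports it by citation from~\cite{mele2023introduction}; your argument is exactly the one such a reference would give, so there is nothing to reconcile.
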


\noindent Thus, if $\rho$ is invariant under $q$-fold Haar unitary, then $\rho = \E_{U\leftarrow \overline{\Haar_{n}}} \left[U^{\otimes q}\rho (U^{\dagger})^{\otimes q}\right] = \sum_{\sigma\in S_q} c_{\sigma}(\rho) P_{\sigma}$. Formally, we have the following corollary.

\begin{corollary}
\label{lem:perm-invariance}
    Let $\rho\in\mathcal{D}(\C^{2^{nq}})$. Then $\rho$ is invariant under $q$-fold Haar unitary if and only if there exists $c_{\sigma}(\rho)\in\C$ for all $\sigma\in S_q$ such that $$\rho = \sum_{\sigma\in S_q} c_{\sigma}(\rho) P_{\sigma}.$$
\end{corollary}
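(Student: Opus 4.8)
The plan is to prove the two directions of the equivalence separately, with each direction reducing to a fact already available in the surrounding discussion.

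For the ``if'' direction I would start from an arbitrary expression $\rho = \sum_{\sigma\in S_q} c_{\sigma}(\rho) P_{\sigma}$ and use the commutation relation $U^{\otimes q} P_{\sigma} = P_{\sigma} U^{\otimes q}$, valid for every unitary $U$ on $\C^{2^n}$ and every $\sigma\in S_q$ (this is the Schur--Weyl consequence recalled just before the twirling theorem). Conjugating $\rho$ by $U^{\otimes q}$ term by term then gives $U^{\otimes q}\rho(U^{\dagger})^{\otimes q} = \sum_{\sigma\in S_q} c_{\sigma}(\rho)\, P_{\sigma} U^{\otimes q}(U^{\dagger})^{\otimes q} = \rho$ for every fixed $U$, and taking the expectation over $U\leftarrow\overline{\Haar_n}$ preserves this, so $\rho$ is invariant under the $q$-fold Haar unitary. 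No estimate is needed.

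For the ``only if'' direction I would simply invoke the twirling channel theorem stated just above with input state $\rho$: if $\rho$ is invariant then $\rho = \E_{U\leftarrow\overline{\Haar_n}}[U^{\otimes q}\rho(U^{\dagger})^{\otimes q}]$, and the theorem rewrites the right-hand side as $\sum_{\sigma\in S_q} c_{\sigma}(\rho) P_{\sigma}$ for suitable coefficients $c_{\sigma}(\rho)\in\C$. Chaining the two equalities yields the claimed decomposition.

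I do not expect a genuine obstacle here: the corollary is essentially a repackaging of the twirling theorem together with the commutation relation $U^{\otimes q}P_{\sigma} = P_{\sigma} U^{\otimes q}$, and all the real content lives in the cited twirling theorem. The only point I would be careful to note is that the coefficients $c_{\sigma}(\rho)$ are in general not unique --- once $q > 2^n$ the operators $\{P_{\sigma}\}_{\sigma\in S_q}$ become linearly dependent --- so the statement should, and does, only assert the \emph{existence} of such a decomposition.
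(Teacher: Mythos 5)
Your proof is correct and follows exactly the route the paper takes: the ``if'' direction via the commutation relation $U^{\otimes q}P_\sigma = P_\sigma U^{\otimes q}$ showing each $P_\sigma$ (hence any linear combination) is invariant, and the ``only if'' direction by applying the twirling theorem to the identity $\rho = \E_U[U^{\otimes q}\rho(U^\dagger)^{\otimes q}]$. Your aside on the non-uniqueness of the coefficients when $q > 2^n$ is a correct and worthwhile observation, though not needed for the statement.
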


\subsubsection{Instantiations of Almost Invariant States}
\label{sec:invariance_rho_uni}
In this subsection, we find a state that is almost invariant under the $q$-fold Haar unitary. This state would mimic the properties of output construction on various classes of inputs (as we will see in~\Cref{sec:closeness_to_rho_uni}).


\noindent We start by defining two special classes of tuples of types. To define these, we consider the symmetric subspace of $(\C^N)^{\otimes t}$ denoted by $\vee^t \mathbb{C}^N$, where the dimension is $N := 2^{n+m}$. We use the notation $\cH_{\sym} := \left( \vee^t \mathbb{C}^N \right)^s$ (the $s$-fold tensor of the symmetric subspace). It holds that 
$$\set{\ket{\type_{T_1}}\otimes \dots \otimes \ket{\type_{T_s}}: \hamming(T_i)=t,\ \forall i\in\set{1,\dots,s}}$$
forms an orthonormal basis of $\cH_{\sym}$. We say that an $s$-tuple of types $(T_1,\dots,T_s)$ is \emph{distinct} if for all $i,j\in [s]$ with $i\neq j$, $\setT(T_i)\cap\setT(T_j)=\emptyset$. Moreover, we say that an $s$-tuple of types $(T_1,\dots,T_s)$ is \emph{unique} if $(T_1,\dots,T_s)$ is distinct and for all $i\in [s]$, $\setT(T_i)$ contains $t$ distinct elements.

\noindent We define $\tdis{n+m}{s}{t}$ to be the set of all distinct $(T_1,\dots,T_s)$ where for all $i\in [s]$, $\hamming(T_i) = t$ and $\tuni{n+m}{s}{t}$ to be the set of all unique $(T_1,\dots,T_s)$ where for all $i\in [s]$, $\hamming(T_i) = t$. Note that $\tuni{n+m}{s}{t}\subseteq \tdis{n+m}{s}{t}$.

\noindent Let 
$$\rho_{\unique_{s,t}} := \E_{(T_1,\cdots,T_s)\leftarrow \tuni{m+n}{s}{t}} \bigotimes_{i=1}^{s} \ketbra{\type_{T_i}}{\type_{T_i}}.$$
We will show that $\rho_{\unique_{s,t}}$ is almost invariant under $q(=st)$-fold Haar unitary.

\begin{lemma}[Almost Invariance of $\rho_{\unique}$]
\label{lem:tuni_invar}
    Let $n,m,s,t\in\poly(\secparam)$, $q=st$, and let $\tuni{m+n}{s}{t}$ be defined as the set containing all $s$ tuples of types $(T_1,\cdots,T_s)$ which are unique. Let 
    $$\rho_{\unique_{s,t}} := \E_{(T_1,\cdots,T_s)\leftarrow \tuni{m+n}{s}{t}} \bigotimes_{i=1}^{s} \ketbra{\type_{T_i}}{\type_{T_i}} $$ 
    then $\rho_{\unique_{s,t}}$ is $O(s^2t^2/2^{m+n})$-almost invariant under $q$-fold Haar unitary.
\end{lemma}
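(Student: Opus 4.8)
The plan is to show that $\rho_{\unique_{s,t}}$ is negligibly close, in trace distance, to the maximally mixed state on the $s$-fold tensor product of symmetric subspaces, and that this latter state is \emph{exactly} invariant under the $q$-fold Haar unitary; \Cref{clm:observation:almostinvariance} then immediately upgrades this to almost-invariance of $\rho_{\unique_{s,t}}$. Set $N := 2^{m+n}$, $\cH_\sym := (\vee^t\C^N)^{\otimes s}$, and $\rho_\sym := \bigl(\Pi_\sym^{N,t}/\dim(\vee^t\C^N)\bigr)^{\otimes s}$, the maximally mixed state on $\cH_\sym$. I would carry this out in three steps.

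\emph{Step 1: $\rho_\sym$ is invariant under the $q$-fold Haar unitary.} Since $\vee^t\C^N$ is preserved by $U^{\otimes t}$ for every unitary $U$ on $\C^N$, the projector $\Pi_\sym^{N,t}$ commutes with $U^{\otimes t}$ (equivalently, $\Pi_\sym^{N,t} = \tfrac{1}{t!}\sum_{\sigma\in S_t}P_N(\sigma)$ is a combination of permutation operators). Grouping the $q=st$ tensor factors into $s$ blocks of size $t$, it follows that $\rho_\sym$ commutes with $U^{\otimes q}$, hence $U^{\otimes q}\rho_\sym(U^\dagger)^{\otimes q}=\rho_\sym$ for every $U$, and therefore $\E_{U\gets\overline{\Haar_{m+n}}}\bigl[U^{\otimes q}\rho_\sym(U^\dagger)^{\otimes q}\bigr]=\rho_\sym$. (Alternatively, one can invoke \Cref{lem:perm-invariance}.) Thus $\rho_\sym$ is $0$-almost invariant under the $q$-fold Haar unitary.

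\emph{Step 2: $\TD(\rho_{\unique_{s,t}},\rho_\sym) = O(s^2t^2/2^{m+n})$.} By \Cref{fact:avg-haar-random} applied to each of the $s$ tensor factors, $\rho_\sym = \E\bigl[\bigotimes_{i=1}^{s}\ketbra{\type_{T_i}}{\type_{T_i}}\bigr]$ with $T_1,\dots,T_s$ i.i.d.\ uniform over types $T$ with $\hamming(T)=t$; hence both $\rho_\sym$ and $\rho_{\unique_{s,t}}$ are diagonal in the orthonormal basis $\{\ket{\type_{T_1}}\otimes\cdots\otimes\ket{\type_{T_s}}\}$ of $\cH_\sym$ --- the former uniform over all such index tuples, the latter uniform over the subset $\tuni{m+n}{s}{t}$. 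Since the trace distance between two uniform distributions on nested supports equals one minus the ratio of their sizes, and a unique tuple is exactly an ordered $s$-tuple of pairwise disjoint size-$t$ subsets of $[N]$ (so $\bigl|\tuni{m+n}{s}{t}\bigr| = \prod_{i=0}^{s-1}\binom{N-it}{t}$), we get
\begin{align*}
\TD(\rho_{\unique_{s,t}},\rho_\sym)
&= 1 - \frac{\bigl|\tuni{m+n}{s}{t}\bigr|}{\binom{N+t-1}{t}^s}
= 1 - \prod_{i=0}^{s-1}\prod_{j=0}^{t-1}\frac{N-it-j}{N+t-1-j} \\
&\le 1 - \prod_{i=0}^{s-1}\Bigl(1 - \tfrac{(i+1)t}{N}\Bigr)^{t}
\le \frac{s(s+1)t^2}{2N},
\end{align*}
using $N+t-1-j\ge N$ for $0\le j\le t-1$ and Bernoulli's inequality (valid since $s,t=\poly(\secparam)$ while $N=2^{m+n}$). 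This is $O(s^2t^2/2^{m+n})$.

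\emph{Step 3: conclude.} Apply \Cref{clm:observation:almostinvariance} with $\rho := \rho_\sym$ (which is $0$-almost invariant by Step 1), $\sigma := \rho_{\unique_{s,t}}$, and $\delta := O(s^2t^2/2^{m+n})$ from Step 2, to obtain that $\rho_{\unique_{s,t}}$ is $(0+2\delta)$-almost, i.e.\ $O(s^2t^2/2^{m+n})$-almost, invariant under the $q$-fold Haar unitary. The only computational content is the ratio estimate in Step 2, an elementary collision bound, so I do not foresee a genuine obstacle; the conceptual point is simply that a uniformly random $s$-tuple of weight-$t$ types is unique with overwhelming probability, so that $\rho_{\unique_{s,t}}$ sits negligibly far from the automatically Haar-invariant maximally mixed state on $\cH_\sym$.
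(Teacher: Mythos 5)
Your proposal is correct and follows essentially the same strategy as the paper's proof: the paper's intermediate state $\hat\rho = \E_{U_1,\dots,U_s}\bigotimes_{i=1}^s (U_i\ketbra{0^n}{0^n}U_i^\dagger)^{\otimes t}$ equals your $\rho_\sym = \bigl(\Pi_\sym^{N,t}/\dim(\vee^t\C^N)\bigr)^{\otimes s}$ by \Cref{fact:avg-haar-random}, and both arguments then show this state is exactly invariant (either via commutation with $U^{\otimes q}$ or by writing it as a mixture of permutation operators) and bound its trace distance to $\rho_{\unique_{s,t}}$ before invoking \Cref{clm:observation:almostinvariance}. The one place you deviate is in the trace-distance bound: the paper proceeds via a hybrid argument that decouples the $s$ types one at a time and pays a collision bound per step, whereas you observe that both states are diagonal in the type basis with nested supports and compute the distance in one shot as $1$ minus a ratio of support sizes, arriving at the same $O(s^2t^2/2^{m+n})$. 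Your version is slightly more direct, and the ratio estimate is correct (including the requirement $st \le N$, which holds since $s,t=\poly(\secparam)$ and $N=2^{m+n}$).
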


\begin{proof}
We prove this by showing that $\rho_{\unique_{s,t}}$ is close to $t$ copies of $s$ i.i.d. sampled Haar states. Next we show that $t$ copies of $s$ i.i.d. sampled Haar states can be written as a mixture of permutation operators and hence is invariant under $q (=st)$-fold Haar unitary. Then by~\Cref{clm:observation:almostinvariance}, we would get that $\rho_{\unique_{s,t}}$ is almost invariant under $q$-fold Haar unitary. We start by showing the following lemma: 
    \begin{lemma}
        \label{lem:tuni_haar_dis}
        Let $n,m,s,t\in\poly(\secparam)$, and let $\tuni{n}{s}{t}$ be defined as the set containing all $s$ tuples of types $(T_1,\cdots,T_s)$ which are unique. Let 
        $$\rho_{\unique_{s,t}} := \E_{(T_1,\cdots,T_s)\leftarrow \tuni{m+n}{s}{t}} \bigotimes_{i=1}^{s} \ketbra{\type_{T_i}}{\type_{T_i}} $$ and let $$\hat{\rho} := \E_{U_1,\cdots,U_s\leftarrow \overline{\Haar_{m+n}}} \bigotimes_{i=1}^{s}\left(U_i\ketbra{0^n}{0^n}U_i^{\dagger}\right)^{\otimes t},$$ then $$\TD(\rho_{\unique_{s,t}},\hat{\rho}) = O(s^2t^2/2^{m+n}).$$
    \end{lemma}
    \begin{proof}
        We prove this using the hybrid method.
        \paragraph{Hybrid $1$.} Sample $(T_1,\cdots,T_s)$ from $\tuni{n+m}{s}{t}$ and output $$\bigotimes_{i=1}^{s} \ketbra{\type_{T_i}}{\type_{T_i}}.$$
        \paragraph{Hybrid $2.i$., for $1\leq i\leq s$} Sample $(T_i,\cdots,T_{s})$ from $\tuni{n+m}{s-i+1}{t}$, for $1\leq j<i$, sample $T_j$ from $\tuni{n+m}{1}{t}$ and output $$\bigotimes_{j=1}^{s} \ketbra{\type_{T_j}}{\type_{T_j}}.$$
        \paragraph{Hybrid $3$.} Sample $U_1,\cdots,U_s$ i.i.d. from $\overline{\Haar_{m+n}}$, and output 
        $$\bigotimes_{i=1}^{s}\left(U_i\ketbra{0^n}{0^n}U_i^{\dagger}\right)^{\otimes t}.$$
    
        \begin{claim}
            Hybrid $1$. and Hybrid $2.1$ are identical. 
        \end{claim}
        \begin{proof}
            This is true since the sampling procedures used in Hybrid $1$ and Hybrid $2.1$ are the same.
        \end{proof}
        \begin{lemma}
            For $1\leq i\leq s-1$, the trace distance between Hybrid $2.i$ and Hybrid $2.(i+1)$ is $O((s-i+1)t^2/2^{n+m})$. 
        \end{lemma}
        \begin{proof}
            Notice that for $j<i$, $T_j$ is identically distributed. Hence, we need to find the distance between $\bigotimes_{j=i}^{s} \ketbra{\type_{T_j}}{\type_{T_j}}$ for $(T_i,\cdots,T_s)$ sampled from $\tuni{n+m}{s-i}{t}\times\tuni{n+m}{1}{t}$ versus $\tuni{n+m}{s-i+1}{t}$.
            
            \noindent Notice that, sampling from $\tuni{n+m}{s-i+1}{t}$ is equivalent to choosing $(s-i+1)t$ distinct elements from $[2^{n+m}]$. Similarly, sampling from $\tuni{n+m}{s-i}{t}\times\tuni{n+m}{1}{t}$ is equivalent to choosing $(s-i)t$ distinct elements from $[2^{n+m}]$ and then choosing $t$ distinct elements from $[2^{n+m}]$. In this case, the probability of having a collision between these two sets is $O((s-i+1)t^2/2^{n+m})$. Thus, the statistical distance between the uniform distribution on $\tuni{n+m}{s-i+1}{t}\times\tuni{n+m}{1}{t}$ and the uniform distribution on $\tuni{n+m}{s-i}{t}$ is $O((s-i+1)t^2/2^{n+m})$. This in turn implies that the trace distance between Hybrid $2.i$ and Hybrid $2.(i+1)$ is $O((s-i+1)t^2/2^{n+m})$.
        \end{proof}
        \begin{lemma}
            The trace distance between Hybrid $2.s$ and Hybrid $3$ is $O(st^2/2^{n+m})$.
        \end{lemma}
        \begin{proof}
            Since, in Hybrid $3$, all the $U_j$'s are sampled independently, the output of Hybrid $3$ can be equivalently written as $$ \bigotimes_{i=1}^{s}\E_{\ket{\vartheta_i}\leftarrow \Haar_{n+m}}\left(\ketbra{\vartheta_i}{\vartheta_i}\right)^{\otimes t}.$$ 
            Next by \Cref{fact:avg-haar-random}, we know that that this is equivalent to $$\bigotimes_{i=1}^{s}\E_{\substack{T_i\leftarrow [t+1]^{2^{n+m}}\\ \hamming(T_i)=t}} \ketbra{\type_{T_i}}{\type_{T_i}}.$$
            Note that if instead of sampling $T_i$ uniformly from the set of vectors from $[t+1]^{2^{n+m}}$ with $\hamming(T_i) = t$, we sample $T_i$ from $\tuni{n+m}{1}{t}$, we get the output of Hybrid $2.s$. In particular, we know that the output of Hybrid $2.s$ can be written as $$\bigotimes_{i=1}^{s}\E_{T_i\leftarrow \tuni{n+m}{1}{t}} \ketbra{\type_{T_i}}{\type_{T_i}}.$$ 
            Since the probability of having a collision when choosing $t$ elements from $[2^{n+m}]$ is $O(t^2/2^{n+m})$, the statistical distance between the distributions $T_i$ chosen uniformly from the vectors in $[t+1]^{2^{n+m}}$ with $\hamming(T_i)=t$ versus $T_i$ sampled from $\tuni{n+m}{1}{t}$ is $O(t^2/2^{n+m})$ for each $i$. Hence, the trace distance between Hybrid $2.s$ and Hybrid $3$ is $O(st^2/2^{n+m})$.
        \end{proof}
        \noindent Combining the above, we get the trace distance between Hybrid $1$ and Hybrid $3$ is $O(s^2t^2/2^{m+n})$. This completes the proof of~\Cref{lem:tuni_haar_dis}.
    \end{proof}
    \noindent Next we show that $$\hat{\rho} = \E_{U_1,\cdots,U_s\leftarrow \overline{\Haar_{m+n}}} \bigotimes_{i=1}^{s}\left(U_i\ketbra{0^n}{0^n}U_i^{\dagger}\right)^{\otimes t},$$ is invariant under $q$-fold Haar unitary. To do this we show that $\hat{\rho}$ can be written as a mixture of permutation operators. Notice that by \Cref{fact:avg-haar-random}, $$\E_{U_1,\cdots,U_s\leftarrow \overline{\Haar_{m+n}}} \bigotimes_{i=1}^{s}\left(U_i\ketbra{0^n}{0^n}U_i^{\dagger}\right)^{\otimes t} = \E_{\sigma_1,\cdots,\sigma_s\leftarrow S_t} \bigotimes_{i=1}^{s} P_{\sigma_i}.$$ Here, note that for any $\sigma_1,\cdots,\sigma_s\in S_t$, $\bigotimes_{i=1}^{s} P_{\sigma_i}$ can be written as $P_{\sigma_{1,\ldots,s}}$ for some $\sigma_{1,\ldots,s}\in S_{st}$. Hence, 
$$\E_{U_1,\cdots,U_s\leftarrow \overline{\Haar_{m+n}}} \bigotimes_{i=1}^{s}\left(U_i\ketbra{0^n}{0^n}U_i^{\dagger}\right)^{\otimes t} = \E_{\substack{\sigma\leftarrow S_{st}\\ \sigma_{1,\ldots,s}\text{ is $t$-internal}}} P_{\sigma},$$
where we say $\sigma$ is $t$-internal if $P_{\sigma}$ can be written as $\bigotimes_{i=1}^{s} P_{\sigma_i}$ for some $\sigma_1,\cdots,\sigma_s\in S_t$. Hence, from~\Cref{lem:perm-invariance}, we have that $\hat{\rho}$ is invariant under q-fold Haar unitary. Hence, by~\Cref{lem:tuni_haar_dis}, we get that $\rho_{\unique_{s,t}}$ is negligibly close to some mixture of permutation operators and by~\Cref{clm:observation:almostinvariance} $\rho_{\unique_{s,t}}$ is almost invariant under $q$-fold Haar unitary.

\end{proof}
\begin{lemma}
\label{lem:tuni_invar_aux}
    Let $\rho_{\unique_{s,t}}$ be as defined above. Define for any $\ell$-qubit state $\sigma$, $\rho^{\sigma}_{\unique_{s,t}}:= \sigma\otimes\rho_{\unique_{s,t}}$. Then $\rho^{\sigma}_{\unique_{s,t}}$ is also $O(s^2t^2/2^{m+n})$-almost invariant under $q$-fold Haar unitary with $I_{\ell}$ being applied on $\sigma$ (or is $O(s^2t^2/2^{m+n})$-almost invariant under $I_{\ell}\otimes U^{\otimes q}$ where $U$ is sampled from the Haar measure).
\end{lemma}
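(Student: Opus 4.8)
The plan is to observe that almost-invariance under the $q$-fold Haar channel is preserved under tensoring with an arbitrary fixed state, and that \Cref{lem:tuni_invar} already supplies the required bound for $\rho_{\unique_{s,t}}$ itself. Concretely, unwinding \Cref{def:almostinvariance}, it suffices to show $\TD\bigl(\sigma\otimes\rho_{\unique_{s,t}},\ \E_{U\gets\overline{\Haar_{m+n}}}[(I_\ell\otimes U^{\otimes q})(\sigma\otimes\rho_{\unique_{s,t}})(I_\ell\otimes (U^\dagger)^{\otimes q})]\bigr)\le O(s^2t^2/2^{m+n})$, where $I_\ell$ acts on the $\ell$-qubit register carrying $\sigma$ and $U^{\otimes q}$ acts on the $q(n+m)$ qubits carrying $\rho_{\unique_{s,t}}$.

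First I would note that the map $X\mapsto (I_\ell\otimes U^{\otimes q})X(I_\ell\otimes(U^\dagger)^{\otimes q})$ applied to a product state $\sigma\otimes\tau$ yields $\sigma\otimes(U^{\otimes q}\tau(U^\dagger)^{\otimes q})$, so by linearity of the expectation over $U$ (which acts only on the second tensor factor), $\E_U[(I_\ell\otimes U^{\otimes q})(\sigma\otimes\rho_{\unique_{s,t}})(I_\ell\otimes(U^\dagger)^{\otimes q})] = \sigma\otimes\E_U[U^{\otimes q}\rho_{\unique_{s,t}}(U^\dagger)^{\otimes q}]$. Then I would invoke the elementary identity $\TD(\sigma\otimes X,\sigma\otimes Y)=\TD(X,Y)$ (trace distance is unchanged by tensoring both arguments with a common fixed state, already used repeatedly in this section), which reduces the quantity of interest to $\TD(\rho_{\unique_{s,t}},\E_U[U^{\otimes q}\rho_{\unique_{s,t}}(U^\dagger)^{\otimes q}])$. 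By \Cref{lem:tuni_invar} this is at most $O(s^2t^2/2^{m+n})$, which finishes the argument.

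There is essentially no obstacle here: the statement is the routine ``a bystander register does not affect the property'' observation, and the only care needed is the bookkeeping that the register $\sigma$ lives on coincides with the identity part $I_\ell$ of \Cref{def:almostinvariance}, i.e. that the $q$-fold Haar unitary is applied exactly to the $\rho_{\unique_{s,t}}$ registers and touches nothing on $\sigma$. If one wishes to streamline, the first two paragraphs can be abstracted into a standalone lemma --- for any fixed $\ell$-qubit state $\sigma$ and any state $\tau$ that is $\eps$-almost invariant under the $q$-fold Haar unitary, $\sigma\otimes\tau$ is $\eps$-almost invariant under $I_\ell\otimes U^{\otimes q}$ --- so that \Cref{lem:tuni_invar_aux} becomes the instance $\tau=\rho_{\unique_{s,t}}$, $\eps=O(s^2t^2/2^{m+n})$.
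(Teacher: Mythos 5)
The paper states Lemma~\ref{lem:tuni_invar_aux} without supplying a proof, treating it as an immediate consequence of Lemma~\ref{lem:tuni_invar}. Your argument is exactly the routine bystander-register reduction the authors are implicitly relying on: the conjugation by $I_\ell\otimes U^{\otimes q}$ factors across the tensor product, the expectation over $U$ acts only on the second factor, and $\TD(\sigma\otimes X,\sigma\otimes Y)=\TD(X,Y)$ then reduces the claim to the $\ell=0$ statement already proved. The proof is correct and matches the intended reasoning; if anything, spelling it out as a standalone ``tensoring a bystander state preserves $\eps$-almost invariance'' lemma, as you suggest in your last paragraph, would be a small clarity improvement over the paper, which leaves the step silent.
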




\section{Construction}
\label{sec:construction}
\noindent Let $m(\cdot),n(\cdot)$ be polynomials. Let $p=p(\secparam)$ be a $\secparam$-bit integer. Let $\secparam = 2\secparam_1$. We use the following tools in the construction of PRI. 
\begin{itemize}
    \item $f:\{0,1\}^{\secparam_1}\times\{0,1\}^{n(\secp_1) + m(\secp_1)} \to \mathbb{Z}_{p}$ is a quantum-query secure pseudorandom function ($\qprf$,~\Cref{def:qqprfs}). For a key $k\in\bit^{\secp_1}$, we denote $O_{f_k}$ to be a unitary which maps the state $\ket{x}$ to $\omega_{p}^{f(k,x)}\ket{x}$ for every $x\in\bit^{n(\secp_1)+m(\secp_1)}$ where $\omega_p$ is the $p$-th root of unity.
    \item $g:\{0,1\}^{\secparam_1}\times\{0,1\}^{n(\secp_1) + m(\secp_1)} \to\{0,1\}^{n(\secp_1) + m(\secp_1)}$ is a quantum-query secure pseudorandom permutation ($\qprp$,~\Cref{def:qqprps}). For a key $k\in\bit^{\secp_1}$, we denote $O_{g_k}$ to be a unitary which maps the state $\ket{x}$ to $\ket{g(k,x)}$ for every $x\in\bit^{n(\secp_1)+m(\secp_1)}$.\footnote{The instantiation of the unitary $O_{g_k}$ requires one query to $g(k,\cdot)$ and one query to $g^{-1}(k,\cdot)$~\cite{JLS18}.}
\end{itemize}
\noindent We present the construction of psuedorandom isometry $\{\prfsi_{\secparam}\}_{\secparam \in \mathbb{N}}$ in~\Cref{fig:prp_prs}. Note that the construction presented is functionally equivalent to an isometry even though it performs a partial trace.
Note that after appending $0$'s in the second step, our construction is a mixture of unitaries parametrized by the key $k$, so that it satisfies the condition of~\Cref{clm:almostinvariance:closeness}. Moreover, our construction is invertible (\Cref{def:invertible}). The inversion is done by reversing all the unitary operations in $F_\secparam$ and discarding (tracing out) the $m$-qubit register.\footnote{The application of $f(k_1, \cdot)$ can be inverted by manipulating the phase oracle to apply a negative phase, whereas $g(k_2, \cdot)$ can be inverted using the oracle access to $g^{-1}(k_2, \cdot)$.}

\begin{figure}[H]
   \begin{tabular}{|p{\textwidth}|}
   \hline 
   \ \\
\noindent On input a key $k \in \{0,1\}^{\secparam}$ and an $n$-qubit register $\bfX$. We define the operation of $\prfsi_{\secparam}(k,\cdot)$ as follows.
\begin{itemize}
    \item Parse the key $k$ as $k_1 || k_2$, where $k_1\in\{0,1\}^{\secparam_1}$ is a $\qprf$ key and $k_2\in\{0,1\}^{\secparam_1}$ is a $\qprp$ key.
    \item Append an $m$-qubit register $\bfZ$ initalized with $\ket{0^m}_\bfZ$ to register $\bfX$.
    \item Apply $H^{\otimes m}$ to register $\bfZ$.
    \item Apply $O_{f_{k_1}}$ to registers $\bfX$ and $\bfZ$.
    \item Apply $O_{g_{k_2}}$ to registers $\bfX$ and $\bfZ$.
\end{itemize}
Explicitly, $\prfsi_{\secparam}(k,\cdot)$ maps the basis vector $\ket{x}_\bfX$ to 
\[
\frac{1}{\sqrt{2^m}} \sum_{z\in \bit^m} \omega_{p}^{f(k_1,x||z)} \ket{g(k_2,x||z)}_{\bfX\bfZ}.
\]
\ \\
\hline
\end{tabular}
\caption{Description of $\prfsi_{\secparam}$.}
\label{fig:prp_prs}
\end{figure}

\subsection{Invoking Cryptographic Assumptions}
\label{sec:comp_security}

We start by defining the information-theoretic version of~\Cref{fig:prp_prs}, i.e., the same construction but with $\qprp$ replaced by a random permutation $\pi\in S_{2^{n+m}}$ and $\qprf$ replaced by a random function $f\in \mathcal{F}_{2^{n+m},p}$. This construction, denoted by $G_{(f,\pi)}$, is given in~\Cref{fig:info_prp_prs}. 

We show that the construction in~\Cref{fig:info_prp_prs} is computationally indistinguishable from the one in~\Cref{fig:prp_prs}. 

\begin{figure}[H]
   \begin{tabular}{|p{\textwidth}|}
   \hline 
   \ \\
\noindent Let $f\in\mathcal{F}_{2^{n+m},p}$ and $\pi\in S_{2^{n+m}}$. On input an $n$-qubit register $\bfX$. We define the operation of $G_{(f,\pi)}(\cdot)$ as follows.
\begin{itemize}
    \item Append an $m$-qubit register $\bfZ$ initalized with $\ket{0^m}_\bfZ$ to register $\bfX$.
    \item Apply $H^{\otimes m}$ to register $\bfZ$.
    \item Apply $O_f$ to registers $\bfX$ and $\bfZ$.
    \item Apply $O_\pi$ to registers $\bfX$ and $\bfZ$.
\end{itemize}
Explicitly, $G_{(f,\pi)}(\cdot)$ maps the basis vector $\ket{x}_\bfX$ to 
\[
\frac{1}{\sqrt{2^m}} \sum_{z\in \bit^m} \omega_{p}^{f(x||z)} \ket{\pi(x||z)}_{\bfX\bfZ}.
\]
\ \\
\hline
\end{tabular}
\caption{Description of $G_{(f,\pi)}$.}
\label{fig:info_prp_prs}
\end{figure}

\begin{theorem} \label{lem:comp_to_info}
Let $n,m=\poly(\secparam)$. Let $C_{\prfsi_{k}}$ be the quantum channel defined in~\Cref{fig:prp_prs} and let $G_{(f,\pi)}$ be as given in~\Cref{fig:info_prp_prs}. Then, assuming the security of $\qprf$ and $\qprp$, for any QPT adversary $\adversary$, the following holds: 
$$\left| \prob\left[1 = \adversary^{C_{\prfsi_k}}(1^{\secparam})\ : k \xleftarrow{\$} \{0,1\}^{\secparam} \right] - \prob\left[1 = \adversary^{G_{(f,\pi)}}\left( 1^{\secparam} \right)\ :\ \substack{f \xleftarrow{\$} \mathcal{F}_{2^{n+m},p}\\ \ \\ \pi \xleftarrow{\$}  S_{2^{n+m}}}  \right] \right| \leq \negl(\secparam),$$
for some negligible function $\negl(\cdot)$.
\end{theorem}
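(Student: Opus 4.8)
The plan is a hybrid argument over the number $T=\poly(\secparam)$ of oracle calls the QPT distinguisher $\adversary$ makes. First I note that a single application of $C_{\prfsi_k}$ (or $G_{(f,\pi)}$) invokes the keyed function exactly once (as a $\mathbb{Z}_p$-phase oracle on $n+m$ qubits) and the keyed permutation exactly once, so over $\adversary$'s entire execution the function oracle is used $O(T)$ times after accounting for the $O(1)$-query realization of a $\mathbb{Z}_p$-phase oracle from the XOR-style oracle of \Cref{def:qqprfs}, and likewise for the permutation. I would then pass through: $\hybrid_0 = C_{\prfsi_k}$ with $k=k_1\|k_2\ugets\bit^{\secparam}$; $\hybrid_1$, which replaces the $\qprf$ $f(k_1,\cdot)$ by a uniformly sampled $O(T)$-wise independent hash $h:\bit^{n+m}\to\mathbb{Z}_p$; $\hybrid_2$, which further replaces the $\qprp$ $g(k_2,\cdot)$ by a uniform permutation $\pi\ugets S_{2^{n+m}}$; and $\hybrid_3 = G_{(f,\pi)}$, which replaces $h$ back by a uniform $f\ugets\mathcal{F}_{2^{n+m},p}$. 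It then suffices to bound each consecutive pair.

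For $\hybrid_0$ versus $\hybrid_1$: I would first introduce an intermediate that uses a truly random function in place of $f(k_1,\cdot)$ and argue indistinguishability by a reduction to $\qprf$ security (\Cref{def:qqprfs}) — the reduction samples $k_2$ on its own and simulates each of $\adversary$'s queries by appending $\ket{0^m}$, applying $H^{\otimes m}$, realizing the phase step from its own function oracle, and applying $O_{g_{k_2}}$ computed from $k_2$ — and then pass from the random function to $h$ using \Cref{thm:zha12}, which gives perfect indistinguishability against the $O(T)$ function queries involved. For $\hybrid_1$ versus $\hybrid_2$: I would reduce to $\qprp$ security (\Cref{def:qqprps}); here the reduction samples $h$ itself — this is the step where it is essential that $h$, unlike a truly random function, can be efficiently sampled and evaluated — simulates the phase step with $O_h$, and uses its own (forward and inverse) permutation oracle, which by the standard fact of \cite{JLS18} realizes the in-place unitary $O_{g_{k_2}}$ with one query to $g(k_2,\cdot)$ and one to $g^{-1}(k_2,\cdot)$, so the reduction is QPT and makes $O(T)$ queries. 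Finally, $\hybrid_2$ versus $\hybrid_3$ is perfect by \Cref{thm:zha12} again, applied with the permutation part held fixed. Combining the three negligible bounds by the triangle inequality yields the theorem.

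The argument is largely bookkeeping; the one genuine subtlety — the reason the $O(T)$-wise independent hash $h$ is introduced rather than reducing directly to a random function — is that the reduction to $\qprp$ security must itself produce the function part of the construction, and a truly random function on the super-polynomial domain $\bit^{n+m}$ is not efficiently sampleable by a QPT reduction, whereas \Cref{thm:zha12} lets us substitute $h$ for it at zero cost in distinguishing advantage. The remaining things to verify are routine: that the $\mathbb{Z}_p$-phase oracle $O_{f_{k_1}}$ and the in-place permutation oracle $O_{g_{k_2}}$ are implemented from the oracles of \Cref{def:qqprfs,def:qqprps} with only $O(1)$ query overhead each (so all query counts stay polynomial and the $O(T)$-wise independence parameter is chosen accordingly), and that the codomain $\mathbb{Z}_p$ is treated consistently with \Cref{def:qqprfs} — e.g.\ by taking $p$ to be prime so that $O(T)$-wise independent families into $\mathbb{Z}_p$ are available via low-degree polynomials.
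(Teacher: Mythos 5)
Your proposal is correct and follows essentially the same route as the paper's proof. The paper uses three hybrids ($\qprf\!\to$ random function, then $\qprp\!\to$ random permutation, with the $2q$-wise-independent hash invoked internally inside the $\qprp$ reduction via \Cref{thm:zha12}); you simply promote the hash substitution to its own explicit hybrid and apply \Cref{thm:zha12} twice, but the reductions, the reliance on Zhandry's theorem to make the $\qprp$ reduction efficient, and the accompanying bookkeeping on query counts and oracle implementations are the same argument.
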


\begin{proof}[Proof of~\Cref{lem:comp_to_info}] 
We prove this by a standard hybrid argument. Consider the following hybrids: 

\begin{itemize}
\item Hybrid $\hybrid_0$: The oracle is $\prfsi_{\secparam}$ defined in~\Cref{fig:prp_prs}.
\item Hybrid $\hybrid_1$: The oracle is the same as $\prfsi_{\secparam}$ except that $\qprf$ is replaced by a random function.
\item Hybrid $\hybrid_2$: The oracle is $G_{(f,\pi)}$ defined in~\Cref{fig:info_prp_prs}.
\end{itemize}

\newcommand{\oracle}{{\cal O}}
\begin{claim}
Assuming the quantum-query security of $\qprf$, the output distributions of the hybrids $\hybrid_0$ and $\hybrid_1$ are computationally indistinguishable. 
\end{claim}
\begin{proof}
Suppose there exists some QPT algorithm $\adversary$ that distinguishes Hybrid~0 from Hybrid~1 with a non-negligible advantage $\nu$. We'll construct a reduction $\distinguisher$ that given oracle access to $\oracle$ distinguishes whether $\oracle$ is either the $\qprf$ oracle or a random function with the same advantage $\nu$ by using $\adversary$. Upon receiving a query $\ket{\psi}$ from $\adversary$, the reduction $\distinguisher$ responds by first applying  $H^{\otimes m}\otimes I_{n}$ on $\ket{0^m}\ket{\psi}$, querying $\oracle$ and finally, computing $g(k_2,\cdot)$, where $k_2$ is sampled uniformly at random from $\{0,1\}^{\secparam_1}$. Since $\distinguisher$ perfectly simulates the distributions of oracles in hybrids $\hybrid_0$ and $\hybrid_1$, it has the same distinguishing advantage as that of $\adversary$. However, this contradicts the post-quantum security of the underlying $\qprf$.
\end{proof}

\begin{claim}
Assuming the quantum-query security of $\qprp$, the output distributions of the hybrids $\hybrid_1$ and $\hybrid_2$ are computationally indistinguishable. 
\end{claim}
\begin{proof}
Suppose there exists some QPT algorithm $\adversary$ that distinguishes hybrids $\hybrid_1$ from $\hybrid_2$ with a non-negligible advantage $\nu$. We'll construct a reduction $\distinguisher$ that given access to an oracle $\oracle$ distinguishes where $\oracle$ implements $\qprp$ or a random permutation with the same advantage $\nu'$ by using $\adversary$. Suppose the number of queries made by $\adversary$ is $q = \poly(\secp)$. Since each query to the oracle needs to invoke the random function once, the number of queries to the random function is also $q$. Upon receiving a query $\ket{\psi}$ from $\adversary$, the reduction $\distinguisher$ responds by first applying  $H^{\otimes m}\otimes I_{n}$ on $\ket{0^m}\ket{\psi}$, applying a $2q$-wise independent hash function and finally, querying $\oracle$. From~\Cref{thm:zha12}, it follows that 
a $2q$-wise independent hash function perfectly simulates a random function. Thus, $\distinguisher$ perfectly simulates the distributions of the oracles in the hybrids $\hybrid_1$ and $\hybrid_2$. So $\distinguisher$ has the same distinguishing advantage as that of $\adversary$. However, this contradicts the post-quantum security of the underlying $\qprp$.
\end{proof}
\noindent Combining the above claims completes the proof of~\Cref{lem:comp_to_info}.
\end{proof}

\subsection{A Pathway to Security via Almost Invariance} 
\noindent The next step would be to show that $q$-fold $G_{(f,\pi)}$ on a state from the query set is close (in trace distance) to $q$-fold Haar unitary, where $q$ is the number of adversarial queries, on the same state. To prove this, we rely on the notion of almost invariance defined in~\Cref{sec:haar_invariance}.\\

\noindent In particular, we identify interesting classes of $\qclass$ and show the closeness of $q$-fold $G_{(f,\pi)}$ on a state from one of the interesting classes is close to an almost invariant state (specifically the one given in~\Cref{sec:invariance_rho_uni}). Combining this with~\Cref{clm:observation:almostinvariance}, we would be showing that the state obtained after applying $q$-fold $G_{(f,\pi)}$ on $\rho$, where $\rho$ comes from one of these query classes, is almost invariant under $q$-fold Haar unitary. This when combined with~\Cref{clm:almostinvariance:closeness} would then show that $q$-fold $G_{(f,\pi)}$ on $\rho$ is close to $q$-fold Haar unitary on $\rho$. We formally show this in~\Cref{sec:summary}. 





\subsection{Closeness to Almost Invariant States}
\label{sec:closeness_to_rho_uni}

\noindent We identify different classes of $\qclass$ and show that applying $q$-fold $G_{(f,\pi)}$ on a state $\rho$ from one of these classes will be close to $\rho_{\unique_{s,t}}$.\\

\noindent Note that if the input state $\rho$ (which is $nq+\ell$ qubits) can be written as a product state where $\rho = \rho_1\otimes \rho_2$ where $\rho_1$ is an $\ell$ qubit state. Then we only need to show that $\rho_2$ is close to $\rho_{\unique_{s,t}}$ for some $s,t$ such that $st = q$. Hence, in the proofs, we ignore $\rho_1$.


\subsubsection{Distinct Type Queries} \label{sec:distinct_type}
We define a class of states $$\qclass^{(\distinct)}_{{n,t,s,\ell,\secparam}}:= {\color{red}{\cal D}(\C^{2^{\ell'(\secparam)}})} \otimes\set{{\color{blue}\bigotimes_{i=1}^{s}\ketbra{\type_{T_i}}{\type_{T_i}}}:(T_1,\cdots,T_s)\in\tdis{n}{s}{t}}.$$
Next, we define the following class: 
$$\qclass^{(\distinct)}_{{n,q,\ell,\secparam}}:= \bigcup_{\substack{s,t\\\text{ such that } q=st}}\qclass^{(\distinct)}_{{n,t,s,\ell,\secparam}}.$$
In this section, we prove the security of the construction on $\qclass_{(\distinct)} := \set{\qclass^{(\distinct)}_{{n,q,\ell,\secparam}}}_{\secparam\in\N}$. In particular, we prove the following:

\begin{theorem}
\label{lem:tdis_proof}
    Let $n,m,q,\ell = \poly(\secparam)$ and $\qclass_{\distinct} $ be as defined above, then, assuming the existence of post-quantum one-way functions, the construction of $\pri$ given in~\Cref{fig:prp_prs} is $\qclass_{\distinct}$-secure.
\end{theorem}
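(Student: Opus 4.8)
The plan is to follow the two-step blueprint laid out in the technical overview: first invoke the computational assumptions to replace $\pri = \prfsi_\secparam$ with the information-theoretic construction $G_{(f,\pi)}$ via \Cref{lem:comp_to_info}, and then prove \emph{statistical} $\qclass_{\distinct}$-security of $G_{(f,\pi)}$. For the latter, by \Cref{clm:almostinvariance:closeness} (noting that after appending zeroes $G_{(f,\pi)}^{\otimes q}$ is a mixture of unitaries, hence a mixed-unitary channel) together with \Cref{clm:observation:almostinvariance}, it suffices to show that for every $\rho \in \qclass^{(\distinct)}_{n,q,\ell,\secparam}$, the state $\rho_G := \E_{(f,\pi)}[(I_\ell \otimes G_{(f,\pi)}^{\otimes q})(\rho)]$ is negligibly close in trace distance to $\rho_{\unique_{s,t}}$ (tensored with the untouched $\ell$-qubit part), which is $O(s^2 t^2/2^{m+n})$-almost invariant by \Cref{lem:tuni_invar} and \Cref{lem:tuni_invar_aux}. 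Since $\rho = \rho_1 \otimes \rho_2$ splits as a product with $\rho_1$ the $\ell$-qubit register, I only need to handle a basis query $\rho_2 = \bigotimes_{i=1}^s \ketbra{\type_{T_i}}{\type_{T_i}}$ with $(T_1,\dots,T_s) \in \tdis{n}{s}{t}$, and then extend to arbitrary mixtures in $\qclass^{(\distinct)}_{n,t,s,\ell,\secparam}$ by convexity of trace distance, and finally take the union over all factorizations $q = st$.

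The core computation is to understand $\E_{(f,\pi)}[G_{(f,\pi)}^{\otimes q} \bigotimes_i \ketbra{\type_{T_i}}{\type_{T_i}} (G_{(f,\pi)}^\dagger)^{\otimes q}]$ for a distinct (but not necessarily unique) tuple of types. I would proceed as follows. First, write each $\ket{\type_{T_i}}$ using \Cref{lem:type_struc} as $\frac{1}{t!}\sum_{\sigma}\sum_{\vec v \in T_i} \ketbra{\vec v}{\sigma(\vec v)}$ and expand the $q$-fold tensor; the isometry first appends a uniform superposition $\frac{1}{\sqrt{2^{m}}}\sum_{\vec a}$ over each block, turning each computational basis vector $\ket{\vec v}$ (entries in $\bit^n$) into $\frac{1}{\sqrt{2^{mt}}}\sum_{\vec a}\ket{\vec v \| \vec a}$ with entries in $\bit^{n+m}$. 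Second, I apply the random phase $O_f^{\otimes q}$ and take the expectation over $f \in \mathcal{F}_{2^{n+m},p}$: because $p$ is super-polynomial and $q = \poly(\secparam)$, the algebraic structure of $\mathbb{Z}_p$ forces all cross terms $\ketbra{\vec w}{\vec w'}$ to vanish except when $\vec w$ and $\vec w'$ agree as multisets (same \emph{type} over $\bit^{n+m}$) — this is exactly the phase-kickback argument from the overview, and the error incurred is $O(q^2/p) = \negl$. Third, I apply $O_\pi^{\otimes q}$ and take the expectation over $\pi \in S_{2^{n+m}}$: conditioned on the surviving configurations having \emph{distinct} entries across all $st$ slots (which, after the appended random string $\vec a$, holds except with weight $O((st)^2/2^m)$ by a collision bound — here is where the "distinct $\Rightarrow$ unique after appending" phenomenon is used, since internal repetitions within a single $T_i$ get broken by the random $\vec a$), a random permutation sends the whole configuration to a uniformly random unique type tuple. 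Putting these together yields $\rho_G = \rho_{\unique_{s,t}} + (\text{error of trace norm } \negl)$, where the error absorbs the $O(q^2/p)$ phase error and the $O(q^2/2^m)$ collision error, and $n, m > \secparam$ ensures these are negligible.

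Once $\TD(\rho_G, \rho_{\unique_{s,t}}) = \negl(\secparam)$ is established, I chain the claims: $\rho_{\unique_{s,t}}$ (with the $\ell$-qubit tensor factor) is $O(s^2t^2/2^{m+n})$-almost invariant under $q$-fold Haar unitary by \Cref{lem:tuni_invar_aux}; hence by \Cref{clm:observation:almostinvariance} $\rho_G$ is $(\negl + 2\negl)$-almost invariant; hence by \Cref{clm:almostinvariance:closeness} applied with $\Phi(\cdot) = (I_\ell \otimes G_{(f,\pi)}^{\otimes q})(\cdot)$ averaged over $(f,\pi)$, the channel output $\rho_G$ is negligibly close to $\E_U[(I_\ell \otimes U^{\otimes q})(\rho)(I_\ell \otimes (U^\dagger)^{\otimes q})]$, which is precisely the output of a $q$-fold Haar isometry on $\rho$ (after the trivial observation that a Haar isometry equals "append zeroes, then Haar unitary", and appending zeroes commutes with the structure since $G_{(f,\pi)}$ also begins by appending zeroes — this step is handled uniformly for all three query sets and I would cite the corresponding lemma in \Cref{sec:summary}). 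Combining with \Cref{lem:comp_to_info} gives computational $\qclass_{\distinct}$-security of $\prfsi_\secparam$, and since $\qprf$ and $\qprp$ both follow from post-quantum one-way functions by \cite{Zha12,Zha16}, the theorem follows. The main obstacle is the second-step calculation: carefully tracking the cross terms through both the $\mathbb{Z}_p$-phase expectation and the random permutation, and in particular verifying that appending the uniform random string $\vec a$ genuinely kills the internal collisions that distinguish $\tdis{n}{s}{t}$ from $\tuni{n}{s}{t}$, so that the output lands on $\rho_{\unique_{s,t}}$ up to negligible error rather than on some messier mixture.
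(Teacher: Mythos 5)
Your proposal follows the same two-step blueprint as the paper: invoke \Cref{lem:comp_to_info} to replace the PRF/PRP with a random function/permutation, expand distinct-type states via \Cref{lem:type_struc}, track the action of the appended superposition, $O_f$, and $O_\pi$ to land on $\rho_{\unique_{s,t}}$, and then chain \Cref{lem:tuni_invar}, \Cref{clm:observation:almostinvariance}, and \Cref{clm:almostinvariance:closeness}. This is essentially the paper's proof (\Cref{lem:tdis_info} plus the wrap-up in \Cref{sec:summary}).

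One small but worth-correcting point: you claim the phase-averaging step incurs an ``error of $O(q^2/p)$.'' It does not. Because the exponents $\sum_i\bigl(f(\vec{x_i}\|\vec{a_i})-f(\vec{x_i}\|\vec{a'_i})\bigr)$ are integer combinations of the values of $f$ with coefficients bounded in magnitude by $q < p$, the expectation $\E_f\bigl[\omega_p^{(\cdot)}\bigr]$ is \emph{exactly} $1$ when the two type vectors agree and \emph{exactly} $0$ otherwise --- there is no wrap-around modulo $p$, hence no approximation loss from the phase step. The only trace-distance loss in the information-theoretic part comes from the collision bound on the appended random strings (the paper bounds it by $O(st^2/2^m)$; your coarser $O((st)^2/2^m)$ is also fine). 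Your argument still goes through since you never actually need the $O(q^2/p)$ term, but stating it as an error would be misleading.
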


A straightforward corollary of the above theorem is that our construction is secure against computational basis states. Recall the definition of $\qclass_{\comp}$-security in~\Cref{sec:defs}. Suppose there are $t$ elements in the query set $\set{x_1,\dots,x_q}$ equal to some $x\in\set{0,1}^n$. Observe that $\ket{x}^{\otimes t}$ is a valid type state (\Cref{def:type_states}). In this manner, we can represent $\bigotimes_{i=1}^q\ketbra{x_i}{x_i}$ (up to re-ordering registers) as a tensor product of distinct type vectors. This results in the following corollary:

\begin{corollary}
\label{lem:comp_proof}
    Let $n,m,q,\ell = \poly(\secparam)$ and $\qclass_{\comp}$ be defined as in~\Cref{sec:defs}. Assuming the existence of post-quantum one-way functions, the construction of $\pri$ given in~\Cref{fig:prp_prs} is $\qclass_{\comp}$-secure.
\end{corollary}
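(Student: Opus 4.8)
The plan is to obtain this statement as an immediate corollary of \Cref{lem:tdis_proof}, by observing that every computational-basis query is, after a relabelling of the output registers, a tensor product of distinct type states, so that $\qclass_{\comp}$-security reduces to $\qclass_{\distinct}$-security. First I would fix an arbitrary query $\rho = \sigma \otimes \bigl(\ketbra{x_1}{x_1}\otimes\dots\otimes\ketbra{x_q}{x_q}\bigr) \in \qclass^{(\comp)}_{n,q,\ell,\secparam}$, with $\sigma \in \cD(\C^{2^\ell})$ arbitrary. Let $y_1,\dots,y_s$ be the distinct values occurring among $x_1,\dots,x_q$, and let $t_j$ be the multiplicity of $y_j$, so $t_1+\dots+t_s=q$. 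For each $j$ let $T_j$ be the type over $\{0,1\}^n$ supported on the single string $y_j$, i.e.\ $\freq{T_j}{y_j}=t_j$; then $\ket{\type_{T_j}}=\ket{y_j}^{\otimes t_j}$ by \Cref{def:type_states}. Since the $y_j$ are pairwise distinct, $\setT(T_i)\cap\setT(T_j)=\emptyset$ for $i\ne j$, so $(T_1,\dots,T_s)$ is a \emph{distinct} tuple of types in the sense of \Cref{sec:distinct_type}. Hence, permuting the $q$ output registers of $\rho$ so that the copies attached to each $y_j$ become contiguous turns $\rho$ into $\sigma\otimes\bigotimes_{j=1}^s\ketbra{\type_{T_j}}{\type_{T_j}}$, which is of the form appearing in $\qclass_{\distinct}$.

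Second I would argue that this register permutation is harmless. Let $P$ be the permutation unitary on the $q$ input registers implementing the relabelling, and let $P'$ be the corresponding permutation on the $q$ output registers. Since $F_k$ (resp.\ the Haar isometry $\haarisometry$) is applied identically and independently on each of the $q$ registers, $(I_\ell\otimes F_k^{\otimes q})\circ(I_\ell\otimes P)=(I_\ell\otimes P')\circ(I_\ell\otimes F_k^{\otimes q})$ as channels, and likewise for $\haarisometry$. Consequently, a QPT distinguisher $\adversary$ violating $\qclass_{\comp}$-security on $\rho$ with advantage $\nu$ yields a QPT distinguisher $\adversary'$ — which applies the inverse permutation on the $q$ output registers and then runs $\adversary$ — violating $\qclass_{\distinct}$-security on the permuted query with the same advantage $\nu$. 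Invoking \Cref{lem:tdis_proof}, which holds under post-quantum one-way functions, forces $\nu=\negl(\secparam)$; since $\rho$ was arbitrary, the construction of $\pri$ in \Cref{fig:prp_prs} is $\qclass_{\comp}$-secure.

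The only point requiring care — and the closest thing to an obstacle, since all the quantitative work already lives in \Cref{lem:tdis_proof} — is that the blocks $T_j$ produced above may have unequal sizes $t_j$, whereas $\tdis{n}{s}{t}$ as written in \Cref{sec:distinct_type} fixes a common block size $t$. I would dispatch this by noting it is cosmetic: the proof of \Cref{lem:tdis_proof} treats each distinct-type block separately — the random padding appended by $G_{(f,\pi)}$ eliminates internal collisions within each block except with negligible weight, after which the block is mapped (up to negligible trace distance) to the uniform mixture over unique types of that block's size — so the argument never uses equality of the $t_j$; alternatively one may simply restate $\tdis{}{}{}$ to allow a vector of block sizes. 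No other difficulty is expected.
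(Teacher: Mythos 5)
Your proposal matches the paper's own (very terse) argument: group the $x_i$ by value, note each group $\ket{x}^{\otimes t}$ is a type state, and the tensor product is --- up to a register permutation, which commutes with the $q$-fold application of $F_k$ and of the Haar isometry --- a distinct-type query, so \Cref{lem:tdis_proof} applies. You are also right to flag that $\qclass^{(\distinct)}_{n,q,\ell,\secparam}$ as literally written fixes a common block size $t$ with $st=q$, which a computational-basis query with unequal multiplicities does not satisfy; the paper elides this, and your observation that the proof of \Cref{lem:tdis_info} handles each block independently and never uses $t_1=\cdots=t_s$ is correct and dispatches the mismatch.
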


To prove this, we show that the output of $G_{(f,\pi)}$ on any $\bigotimes_{i=1}^{s}\ketbra{\type_{T_i}}{\type_{T_i}}$ for $(T_1,\cdots,T_s)\in\tdis{n}{s}{t}$ is negligibly close to $\rho_{\unique_{s,t}}$.
\begin{lemma}
    \label{lem:tdis_info}
    Let $n,m,t,s = \poly(\secparam)$. Let $(T_1,\cdots,T_s)\in\tdis{n}{s}{t}$. 
    Let 
    $$\rho := \E_{(f,\pi)\leftarrow(\mathcal{F}_{2^{n+m},p},S_{2^{n+m}})}\left[\bigotimes_{i=1}^{s}G_{(f,\pi)}^{\otimes t}\ketbra{\type_{T_i}}{\type_{T_i}}(G_{(f,\pi)}^{\dagger})^{\otimes t}\right],$$
    and 
    $$\rho_{\unique_{s,t}} := \E_{(\overline{T}_1,\cdots,\overline{T}_s)\leftarrow \tuni{m+n}{s}{t}} \left[\bigotimes_{i=1}^{s} \ketbra{\type_{\overline{T}_i}}{\type_{\overline{T}_i}}\right].$$
    Then $\TD(\rho,\rho_{\unique_{s,t}}) = O(st^2/2^m)$.
\end{lemma}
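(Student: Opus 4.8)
The plan is to evaluate the average over $f$ first and over $\pi$ second, reducing $\rho$ to a mixture of type states on $n+m$ qubits supported on collision-free types, which becomes $\rho_{\unique_{s,t}}$ once $\pi$ randomizes the labels. Write $N := 2^n$ and $M := 2^m$, and recall that one application of $G_{(f,\pi)}$ is $O_\pi \circ O_f \circ W$, where $W$ appends $\ket{0^m}$ and applies $H^{\otimes m}$ to it, $O_f$ applies the phase $\omega_p^{f(\cdot)}$ in the computational basis of $\C^{NM}$, and $O_\pi$ permutes that basis. Thus $G_{(f,\pi)}^{\otimes t}\ket{\type_{T_i}}$ is, by \Cref{def:type_states}, a superposition over vectors $\vec{w}_i = \vec{v}_i\|\vec{z}_i \in [NM]^t$ with $\vec{v}_i\in T_i$ (i.e. $\type(\vec{v}_i) = T_i$) and $\vec{z}_i \in \bit^{mt}$, the amplitude carrying the phase $\omega_p^{f(\vec{w}_i)} = \omega_p^{\sum_j f(w_{i,j})}$ and with $\pi$ applied coordinatewise. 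First I would expand $\bigotimes_{i=1}^s G_{(f,\pi)}^{\otimes t}\ketbra{\type_{T_i}}{\type_{T_i}}(G_{(f,\pi)}^{\dagger})^{\otimes t}$ in this basis and take $\E_f$. Because $f$ is uniform into $\mathbb{Z}_p$ and every net multiplicity occurring in an exponent has absolute value at most $t < p$, a cross term $\bigotimes_i \ketbra{\pi(\vec{w}_i)}{\pi(\vec{w}'_i)}$ survives the average iff the multiset of all coordinates of $(\vec{w}_1,\dots,\vec{w}_s)$ equals that of $(\vec{w}'_1,\dots,\vec{w}'_s)$. Since $(T_1,\dots,T_s)\in\tdis{n}{s}{t}$, the sets $\setT(T_i)\subseteq[N]$ are pairwise disjoint, so a coordinate of $\vec{w}_i$ can only be matched against a coordinate of $\vec{w}'_i$; hence the global multiset condition decomposes blockwise and $\E_f\big[\bigotimes_i G^{\otimes t}\ketbra{\type_{T_i}}{\type_{T_i}}(G^{\dagger})^{\otimes t}\big] = \bigotimes_i \E_f\big[G^{\otimes t}\ketbra{\type_{T_i}}{\type_{T_i}}(G^{\dagger})^{\otimes t}\big]$.

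Next I would identify the per-block average. Since $O_\pi$ does not depend on $f$, we have $\E_f[G^{\otimes t}\ketbra{\type_{T_i}}{\type_{T_i}}(G^{\dagger})^{\otimes t}] = O_\pi^{\otimes t}\,\E_f[(O_fW)^{\otimes t}\ketbra{\type_{T_i}}{\type_{T_i}}((O_fW)^{\dagger})^{\otimes t}]\,(O_\pi^{\dagger})^{\otimes t}$, and a bookkeeping computation --- grouping the surviving terms by the type $S := \type(\vec{w}_i)$ over $[NM]$ and using \Cref{lem:type_struc} to recognize $\sum_{\type(\vec{w})=\type(\vec{w}')=S}\ketbra{\vec{w}}{\vec{w}'}$ as a multiple of $\ketbra{\type_S}{\type_S}$, with the multinomial normalizations $\prod_a\freq{S}{a}!$ cancelling correctly --- shows this equals $\E_{\vec{w}_i}[\ketbra{\type_{\type(\vec{w}_i)}}{\type_{\type(\vec{w}_i)}}]$, where $\vec{w}_i = \vec{v}_i\|\vec{z}_i$ with $\vec{v}_i$ uniform over $\{\vec{v} : \vec{v}\in T_i\}$ and $\vec{z}_i \leftarrow \bit^{mt}$. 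Using $O_\pi^{\otimes t}\ket{\type_S} = \ket{\type_{\pi(S)}}$ (with $\pi(S)$ the type obtained by relabelling $\setT(S)$ by $\pi$), the whole average becomes $\rho = \E_{\pi}\,\E_{\vec{w}_1,\dots,\vec{w}_s}\big[\bigotimes_i \ketbra{\type_{\pi(\type(\vec{w}_i))}}{\type_{\pi(\type(\vec{w}_i))}}\big]$, the $\vec{w}_i$ independent.

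Now I would condition on the ``good'' event $E$ that every $\vec{w}_i$ has pairwise distinct coordinates. A collision in $\vec{w}_i = \vec{v}_i\|\vec{z}_i$ requires positions $j\ne k$ with $v_{i,j}=v_{i,k}$ and then also $z_{i,j}=z_{i,k}$, which happens with probability $1/M$; there are at most $\binom t2$ such pairs, so $\Pr[\neg E] \le s\binom t2/M = O(st^2/2^m)$. On $E$, each $\type(\vec{w}_i)$ is a $t$-element type, and because the $\setT(T_i)$ are disjoint the supports $\setT(\type(\vec{w}_i))\subseteq[NM]$ are pairwise disjoint, so $(\type(\vec{w}_1),\dots,\type(\vec{w}_s))$ is a distinct-disjoint tuple of $t$-element types over $[NM]$. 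For any fixed such tuple $(S_1,\dots,S_s)$, a uniformly random permutation $\pi$ of $[NM]$ sends the $st$-element set $\setT(S_1)\cup\dots\cup\setT(S_s)$ together with its ordered partition into the $\setT(S_i)$ to a uniformly random distinct-disjoint tuple of $t$-element subsets --- exactly the distribution defining $\tuni{n+m}{s}{t}$ --- so $\E_\pi\big[\bigotimes_i \ketbra{\type_{\pi(S_i)}}{\type_{\pi(S_i)}}\big] = \rho_{\unique_{s,t}}$ on the nose. Pushing $\E_\pi$ inside and splitting on $E$ gives $\rho = \Pr[E]\,\rho_{\unique_{s,t}} + \Pr[\neg E]\,\xi$ for some density matrix $\xi$, whence $\TD(\rho,\rho_{\unique_{s,t}}) = \Pr[\neg E]\,\TD(\xi,\rho_{\unique_{s,t}}) \le \Pr[\neg E] = O(st^2/2^m)$. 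The arbitrary $\ell$-qubit register allowed in $\qclass^{(\distinct)}$ is untouched and tensors through every step, which is why one may work with $\rho_2$ only.

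I expect the main obstacle to be the $\E_f$ step: one must check that the $\mathbb{Z}_p$ phase annihilates every cross term that is not a coordinatewise permutation, that disjointness of the input types makes this purely blockwise, and then track the type-state normalization constants when re-expressing the surviving part as the mixture $\E_{\vec{w}_i}[\ketbra{\type_{\type(\vec{w}_i)}}{\type_{\type(\vec{w}_i)}}]$. After that, the collision bound for $E$ and the observation that a random permutation carries a fixed partitioned $st$-set to a uniform one --- yielding $\rho_{\unique_{s,t}}$ exactly --- are routine.
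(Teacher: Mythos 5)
Your proof is correct and follows the same high-level route as the paper's: take $\E_f$ first so that the $\mathbb{Z}_p$ phase annihilates every cross term whose types disagree (using disjointness of the $T_i$ to reduce the global multiset-matching condition to a per-block one), let the random permutation $\pi$ uniformize the surviving types, and bound the contribution of the colliding vectors; both arguments give the same $O(st^2/2^m)$. Where you differ is organizational, and your version is a bit cleaner. The paper applies \Cref{lem:type_struc} in the form $\ketbra{\type_T}{\type_T} = \frac{1}{t!}\sum_\sigma\sum_{\vec{v}\in T}\ketbra{\vec{v}}{\sigma(\vec{v})}$, commutes the permutation operators $P_{\sigma_i}$ through the $t$-fold map, carries the multinomial normalization $\gamma = \prod_i\prod_{w\in\supp(T_i)}\freq{T_i}{w}!$ through the whole computation, and decomposes the post-$\E_f$ operator into an orthogonal pair $\xi + \eta$ with $\xi$ an explicit scalar multiple of $\rho_{\unique_{s,t}}$. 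You instead isolate a cleaner intermediate object --- the per-block average $\E_f\bigl[(O_fW)^{\otimes t}\ketbra{\type_{T_i}}{\type_{T_i}}((O_fW)^\dagger)^{\otimes t}\bigr] = \E_{\vec{w}_i}\bigl[\ketbra{\type_{\type(\vec{w}_i)}}{\type_{\type(\vec{w}_i)}}\bigr]$ with $\vec{w}_i = \vec{v}_i || \vec{z}_i$, $\vec{v}_i$ uniform in $T_i$ and $\vec{z}_i$ uniform --- a normalized density matrix in which the multinomial factors cancel once and for all, then use $O_\pi^{\otimes t}\ket{\type_S} = \ket{\type_{\pi(S)}}$ to make the permutation a pure relabelling, and defer every approximation to a single conditioning on the collision-free event $E$, under which $\E_\pi$ produces $\rho_{\unique_{s,t}}$ \emph{exactly} because a Haar-random $\pi$ carries a fixed partitioned $st$-set to a uniform one. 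The decomposition $\rho = \Pr[E]\,\rho_{\unique_{s,t}} + \Pr[\neg E]\,\xi$ then yields $\TD(\rho,\rho_{\unique_{s,t}}) \le \Pr[\neg E] \le s\binom{t}{2}/2^m$ directly, with no separate orthogonality argument for the bad part needed --- a genuine simplification over the paper's bookkeeping.
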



\begin{proof}
    Observe that $\ketbra{\type_{T_1}}{\type_{T_1}}\otimes\cdots\otimes\ketbra{\type_{T_s}}{\type_{T_s}}$ can be seen as a convex sum over $\ketbra{\vec{x_1},\cdots,\vec{x_s}}{\vec{x'_1},\cdots,\vec{x'_s}}$ where $\vec{x_i},\vec{x'_i}\in T_i$. By~\Cref{lem:type_struc}, this can equivalently be written as a sum over $\vec{x_i}\in T_i$, $\sigma_i\in S_t$, $\ketbra{\vec{x_1},\cdots,\vec{x_s}}{\sigma_1(\vec{x_1}),\cdots,\sigma_s(\vec{x'_s})}$. Applying $G_{(f,\pi)}$ can be seen as three operators, $C_{\ket{+^m}}$ which appends $\ket{+^m}$ to each entry, $O_f$ which maps $\ket{\vec{x_i}}$ to $\omega_p^{f(\vec{x_i})}\ket{\vec{x_i}}$ and $O_{\pi}$ which maps $\ket{\vec{x_i}}$ to $\ket{\vec{x_i}_{\pi}}$. Let us look at these operations one at a time. 
    \begin{enumerate}
        \item $C_{\ket{+^m}}$: We first apply to every register of the convex sum over $\ketbra{\vec{x_1},\cdots,\vec{x_s}}{\vec{x'_1},\cdots,\vec{x'_s}}$, this is equivalent to mapping each $\ketbra{\vec{x_1},\cdots,\vec{x_s}}{\sigma_1(\vec{x_1}),\cdots,\sigma_s(\vec{x_s})}$ to a sum over
        \[
        \ketbra{\vec{x_1}||\vec{a_1},\cdots,\vec{x_s}||\vec{a_s}}{\sigma_1(\vec{x_1})||\vec{a'_1},\cdots,\sigma_s(\vec{x_s})||\vec{a'_s}}
        \]
        for all $\vec{a_i},\vec{a'_i}\in\bit^{mt}$.
        \item $O_f$: When we apply $O_f$, we get a leading coefficient of $\omega_p^{f(\vec{x_i}||\vec{a_i})-f(\sigma_i(\vec{x_i})||\vec{a'_i})}$ on each term. Since $p>t$, taking expectation over $f$ would map this to zero unless $\vec{a'_i} = \sigma_i(\vec{a_i})$. Hence, the only terms left after this step are $\ketbra{\vec{x_1}||\vec{a_1},\cdots,\vec{x_s}||\vec{a_s}}{\sigma_1(\vec{x_1}||\vec{a_1}),\cdots,\sigma_s(\vec{x_s}||\vec{a_s})}$. 
        \item $O_{\pi}$: Notice that the state is a sum over all $(\vec{a_1},\cdots,\vec{a_s})$. With very high probability all elements of $(\vec{a_1},\cdots,\vec{a_s})$ are distinct. In this case, $\ketbra{\vec{x_1}||\vec{a_1},\cdots,\vec{x_s}||\vec{a_s}}{\sigma_1(\vec{x_1}||\vec{a_1}),\cdots,\sigma_s(\vec{x_s}||\vec{a_s})}$ has only distinct elements and applying $O_{\pi}$ maps it to random vectors with distinct elements. Taking sum over all $\sigma_i$, we get that this is equivalent to sampling from $\mathcal{T}_{\unique}$.
    \end{enumerate}

\noindent We now provide the formal details. We know that 
    $$\rho = \E_{f}\E_{\pi}\left[\bigotimes_{i=1}^{s}G_{(f,\pi)}^{\otimes t}\ketbra{\type_{T_i}}{\type_{T_i}}(G_{(f,\pi)}^{\dagger})^{\otimes t}\right].$$
Using \Cref{lem:type_struc}, we get
$$\rho = \frac{1}{(t!)^s}\sum_{\substack{\sigma_1,\cdots,\sigma_s\in S_t\\ (\vec{x_1},\cdots,\vec{x_t})\in(T_1,\cdots,T_s)}}\E_{f}\E_{\pi}\left[\bigotimes_{i=1}^{s}G_{(f,\pi)}^{\otimes t}\ketbra{\vec{x_i}}{\sigma_i(\vec{x_i})}(G_{(f,\pi)}^{\dagger})^{\otimes t}\right].$$
Using the fact that every $t$-fold tensor operator commutes with the permutation operator $P_{\sigma_i}$, we can simplify this to:
$$\rho = \frac{1}{(t!)^s}\sum_{\substack{\sigma_1,\cdots,\sigma_s\in S_t\\ (\vec{x_1},\cdots,\vec{x_t})\in(T_1,\cdots,T_s)}}\E_{f}\E_{\pi}\left[\bigotimes_{i=1}^{s}\left(G_{(f,\pi)}^{\otimes t}\ketbra{\vec{x_i}}{\vec{x_i}}(G_{(f,\pi)}^{\dagger})^{\otimes t}P_{\sigma_{i}}\right)\right].$$
Writing $G_{(f,\pi)} = O_{\pi} O_{f} (I\otimes H^{\otimes m}) C_{\ket{0^m}}$ (where $O_{\pi}$ refers to the unitary applying the permutation $\pi$, $O_{f}$ refers to the unitary applying the function $f$ and $C_{\ket{0^m}}$ refers to appending $\ket{0^m}$), 
\begin{multline*}
    \rho = \frac{1}{(t!)^s}\sum_{\substack{\sigma_1,\cdots,\sigma_s\in S_t\\ (\vec{x_1},\cdots,\vec{x_t})\in(T_1,\cdots,T_s)}}\E_{f}\E_{\pi}\left[\bigotimes_{i=1}^{s}\left(O_{\pi}^{\otimes t}O_{f}^{\otimes t}\left(\frac{1}{2^{mt}} \sum_{\substack{\vec{a_i}\in\set{0,1}^{mt}\\ \vec{a'_i}\in\set{0,1}^{mt}}}\ketbra{\vec{x_i}||\vec{a_i}}{\vec{x_i}||\vec{a'_i}}\right)(O_{f}^{\dagger})^{\otimes t}(O_{\pi}^{\dagger})^{\otimes t}P_{\sigma_i}\right)\right].
\end{multline*}
Applying $O_f$, 
\begin{multline*}
    \rho = \frac{1}{(t!)^s}\sum_{\substack{\sigma_1,\cdots,\sigma_s\in S_t\\ (\vec{x_1},\cdots,\vec{x_t})\in(T_1,\cdots,T_s)}}\E_{f}\E_{\pi}\left[\bigotimes_{i=1}^{s}\left(O_{\pi}^{\otimes t}\left(\frac{1}{2^{mt}}\omega_p^{f(\vec{x_i}||\vec{a_i})-f(\vec{x_i}||\vec{a'_i})}\times\right.\right.\right.\\
    \left.\left.\left.\sum_{\substack{\vec{a_i}\in\set{0,1}^{mt}\\ \vec{a'_i}\in\set{0,1}^{mt}}}\ketbra{\vec{x_i}||\vec{a_i}}{\vec{x_i}||\vec{a'_i}}\right)(O_{\pi}^{\dagger})^{\otimes t}P_{\sigma_i}\right)\right].
\end{multline*}
By linearity, we get
\begin{multline*}
    \rho = \frac{1}{2^{mst}(t!)^s}\sum_{\substack{\sigma_1,\cdots,\sigma_s\in S_t\\ (\vec{x_1},\cdots,\vec{x_t})\in(T_1,\cdots,T_s)\\ (\vec{a_1},\cdots,\vec{a_s})\in\set{0,1}^{mts}\\ (\vec{a'_1},\cdots,\vec{a'_s})\in\set{0,1}^{mts}}}\E_{f}\left[\omega_p^{\sum_{i=1}^s(f(\vec{x_i}||\vec{a_i})-f(\vec{x_i}||\vec{a'_i}))}\right]\times\\
    \E_{\pi}\left[\bigotimes_{i=1}^{s}\left(O_{\pi}^{\otimes t}\left(\ketbra{\vec{x_i}||\vec{a_i}}{\vec{x_i}||\vec{a'_i}}\right)(O_{\pi}^{\dagger})^{\otimes t}P_{\sigma_i}\right)\right].
\end{multline*}
Since sum of powers of a root of unity is zero, we have $$\Ex_{f}\left[\omega_p^{\sum_{i=1}^s(f(\vec{x_i}||\vec{a_i})-f(\vec{x_i}||\vec{a'_i}))}\right] = 0$$ except when $$\type((\vec{x_1}||\vec{a_1},\cdots,\vec{x_s}||\vec{a_s}))=\type((\vec{x_1}||\vec{a'_1},\cdots,\vec{x_s}||\vec{a'_s})) \bmod{p}.$$
Note that since $\type((\vec{x_1}||\vec{a_1},\cdots,\vec{x_s}||\vec{a_s})),\type((\vec{x_1}||\vec{a'_1},\cdots,\vec{x_s}||\vec{a'_s}))\in\mathbb{Z}^{2^{n+m}}_{st}$ and $st<p$, $$\type((\vec{x_1}||\vec{a_1},\cdots,\vec{x_s}||\vec{a_s}))=\type((\vec{x_1}||\vec{a'_1},\cdots,\vec{x_s}||\vec{a'_s})) \bmod{p}$$ iff $$\type((\vec{x_1}||\vec{a_1},\cdots,\vec{x_s}||\vec{a_s}))=\type((\vec{x_1}||\vec{a'_1},\cdots,\vec{x_s}||\vec{a'_s})) .$$

\noindent Also, note that since, $(T_1,\cdots,T_s)$ are distinct, $\vec{x_i}$ and $\vec{x_j}$ has distinct elements for $i\neq j$. Hence, no element of $\vec{x_i}||\vec{a_i}$ can be equal to $\vec{x_j}||\vec{a'_j}$ for any $\vec{a_i},\vec{a'_j}$ and $i\neq j$. Hence, we need $\type(\vec{x_i}||\vec{a_i}) = \type(\vec{x_i}||\vec{a'_i})$ for all $1\leq i\leq s$. Also, note that whenever this condition is true, we get $\Ex_{f}\left[\omega_p^{\sum_{i=1}^s(f(\vec{x_i}||\vec{a_i})-f(\vec{x_i}||\vec{a'_i}))}\right] = 1$. 
Hence, we get 
\begin{multline*}
    \rho = \frac{1}{2^{mst}(t!)^s}\sum_{\substack{\sigma_1,\cdots,\sigma_s\in S_t\\ (\vec{x_1},\cdots,\vec{x_t})\in(T_1,\cdots,T_s)\\ (\vec{a_1},\cdots,\vec{a_s})\in\set{0,1}^{mts}}}\sum_{\substack{(\vec{a'_1},\cdots,\vec{a'_s})\in\set{0,1}^{mts}\\ \forall i\in [s], \type(\vec{x_i}||\vec{a_i}) = \type(\vec{x_i}||\vec{a'_i})}}
    \E_{\pi}\left[\bigotimes_{i=1}^{s}\left(O_{\pi}^{\otimes t}\left(\ketbra{\vec{x_i}||\vec{a_i}}{\vec{x_i}||\vec{a'_i}}\right)(O_{\pi}^{\dagger})^{\otimes t}P_{\sigma_i}\right)\right].
\end{multline*}
Next we for each fixed $(\vec{x_1},\cdots,\vec{x_t})\in(T_1,\cdots,T_s)$, we define 
\begin{multline*}
    \rho_{(\vec{x_1},\cdots,\vec{x_s})} = \sum_{\substack{\sigma_1,\cdots,\sigma_s\in S_t\\ (\vec{a_1},\cdots,\vec{a_s})\in\set{0,1}^{mts}}}\sum_{\substack{(\vec{a'_1},\cdots,\vec{a'_s})\in\set{0,1}^{mts}\\ \forall i\in [s], \type(\vec{x_i}||\vec{a_i}) = \type(\vec{x_i}||\vec{a'_i})}}
    \E_{\pi}\left[\bigotimes_{i=1}^{s}\left(O_{\pi}^{\otimes t}\left(\ketbra{\vec{x_i}||\vec{a_i}}{\vec{x_i}||\vec{a'_i}}\right)(O_{\pi}^{\dagger})^{\otimes t}P_{\sigma_i}\right)\right].
\end{multline*}
Then, we get 
$$\rho = \frac{1}{2^{mst}(t!)^s}\sum_{(\vec{x_1},\cdots,\vec{x_t})\in(T_1,\cdots,T_s)}\rho_{(\vec{x_1},\cdots,\vec{x_s})}.$$
We will show that for each of these $\rho_{(\vec{x_1},\cdots,\vec{x_s})}$ can be shown to be close to some constant times $\rho_{\unique_{s,t}}$. 
We start by defining $\xi_{(\vec{x_1},\cdots,\vec{x_s})}$ as 
\begin{multline*}
    \xi_{(\vec{x_1},\cdots,\vec{x_s})} = \sum_{\substack{\sigma_1,\cdots,\sigma_s\in S_t\\ (\vec{a_1},\cdots,\vec{a_s})\in\set{0,1}^{mts}\\ (\vec{x_1}||\vec{a_1},\cdots,\vec{x_q}||\vec{a_s})\text{ has distinct elements}}}\sum_{\substack{(\vec{a'_1},\cdots,\vec{a'_s})\in\set{0,1}^{mts}\\ \forall i\in [s], \type(\vec{x_i}||\vec{a_i}) = \type(\vec{x_i}||\vec{a'_i})}}\\
    \E_{\pi}\left[\bigotimes_{i=1}^{s}\left(O_{\pi}^{\otimes t}\left(\ketbra{\vec{x_i}||\vec{a_i}}{\vec{x_i}||\vec{a'_i}}\right)(O_{\pi}^{\dagger})^{\otimes t}P_{\sigma_i}\right)\right].
\end{multline*}
Notice that whenever $(\vec{x_1}||\vec{a_1},\cdots,\vec{x_s}||\vec{a_s})$ has distinct elements, $(\vec{x_1}||\vec{a'_1},\cdots,\vec{x_s}||\vec{a'_s})$ also has distinct elements, because $\forall i\in [s], \type(\vec{x_i}||\vec{a_i}) = \type(\vec{x_i}||\vec{a'_i})$. Also notice that whenever $(\vec{x_1}||\vec{a_1},\cdots,\vec{x_s}||\vec{a_s})$ has a collision, $(\vec{x_1}||\vec{a'_1},\cdots,\vec{x_s}||\vec{a'_s})$ also has a collision. Note that applying $O_{\pi}^{\otimes st}$ and permuting $(\vec{x_1}||\vec{a'_1},\cdots,\vec{x_s}||\vec{a'_s})$ still perserves this property. Hence, $\xi_{(\vec{x_1},\cdots,\vec{x_s})}$ and $\eta_{(\vec{x_1},\cdots,\vec{x_s})} = \rho_{(\vec{x_1},\cdots,\vec{x_s})}-\xi_{(\vec{x_1},\cdots,\vec{x_s})}$ belong to orthogonal subspaces. 
We now simplify $\xi_{(\vec{x_1},\cdots,\vec{x_s})}$. We know that 
\begin{multline*}
    \xi_{(\vec{x_1},\cdots,\vec{x_s})} = \sum_{\substack{\sigma_1,\cdots,\sigma_s\in S_t\\ (\vec{a_1},\cdots,\vec{a_q})\in\set{0,1}^{mts}\\ (\vec{x_1}||\vec{a_1},\cdots,\vec{x_s}||\vec{a_s})\text{ has distinct elements}}}\sum_{\substack{(\vec{a'_1},\cdots,\vec{a'_s})\in\set{0,1}^{mts}\\ \forall i\in [s], \type(\vec{x_i}||\vec{a_i}) = \type(\vec{x_i}||\vec{a'_i})}}\\
    \E_{\pi}\left[\bigotimes_{i=1}^{s}\left(O_{\pi}^{\otimes t}\left(\ketbra{\vec{x_i}||\vec{a_i}}{\vec{x_i}||\vec{a'_i}}\right)(O_{\pi}^{\dagger})^{\otimes t}P_{\sigma_i}\right)\right].
\end{multline*}

Note that, whenever $\type(\vec{x_i}||\vec{a_i}) = \type(\vec{x_i}||\vec{a'_i})$, we can write $\vec{x_i}||\vec{a'_i} = \tau_i(\vec{x_i}||\vec{a_i})$ for some $\tau_i\in S_t$. Let the set of values of $\tau_i\in S_t$, such that $\vec{x_i}||\vec{a'_i} = \tau_i(\vec{x_i}||\vec{a_i})$ for some $\vec{a_i},\vec{a'_i}$ be denoted by $A_i$. Then each of the elements in $\tau_i\in A_i$ just need to map $\vec{x_i}$ to $\vec{x_i}$, hence, the size of $A_i$ is $(\prod_{w_i\in T_i} w_i!)$. Also, notice that $A_i$ doesn't depend on $\vec{a_i}$ or $\vec{a'_i}$. Also, notice that for each $\vec{a_i}$ with distinct elements and $\tau_i\in A_i$, there's a distinct $\vec{a'_i}$. Hence, 
\begin{multline*}
    \xi_{(\vec{x_1},\cdots,\vec{x_s})} = \sum_{\substack{\sigma_1,\cdots,\sigma_s\in S_t\\ (\vec{a_1},\cdots,\vec{a_q})\in\set{0,1}^{mts}\\ (\vec{x_1}||\vec{a_1},\cdots,\vec{x_q}||\vec{a_s})\text{ has distinct elements}}}\sum_{(\tau_1,\cdots,\tau_s)\in (A_1,\cdots,A_s)}\\
    \E_{\pi}\left[\bigotimes_{i=1}^{s}\left(O_{\pi}^{\otimes t}\left(\ketbra{\vec{x_i}||\vec{a_i}}{\vec{x_i}||\vec{a_i}}P_{\tau_i}\right)(O_{\pi}^{\dagger})^{\otimes t}P_{\sigma_i}\right)\right].
\end{multline*}
Again, since $t$-fold unitaries commute with $P_{\tau_i}$, and $P_{\tau_i}P_{\sigma_i} = P_{\sigma_i\tau_i}$, we get 
\begin{multline*}
    \xi_{(\vec{x_1},\cdots,\vec{x_s})} = \sum_{\substack{\sigma_1,\cdots,\sigma_s\in S_t\\ (\vec{a_1},\cdots,\vec{a_s})\in\set{0,1}^{mts}\\ (\vec{x_1}||\vec{a_1},\cdots,\vec{x_s}||\vec{a_s})\text{ has distinct elements}}}\sum_{(\tau_1,\cdots,\tau_s)\in (A_1,\cdots,A_s)}\\
    \E_{\pi}\left[\bigotimes_{i=1}^{s}\left(O_{\pi}^{\otimes t}\left(\ketbra{\vec{x_i}||\vec{a_i}}{\vec{x_i}||\vec{a_i}}\right)(O_{\pi}^{\dagger})^{\otimes t}P_{\sigma_i\tau_i}\right)\right].
\end{multline*}
Notice that since $\sigma_i$ is summing over all of $S_t$, $P_{\sigma_i\tau_i}$ is distributed the same as $P_{\sigma_i}$. Define $\gamma = \left(\prod_{i=1}^{s}\left(\prod_{w_i\in\supp(T_i)} \freq{T_i}{w_i}!\right)\right)$. Hence, using the size of $A_i$ is $(\prod_{w_i\in\supp(T_i)} \freq{T_i}{w_i}!)$, we get, 
\begin{multline*}
    \xi_{(\vec{x_1},\cdots,\vec{x_s})} = \gamma \sum_{\substack{\sigma_1,\cdots,\sigma_s\in S_t\\ (\vec{a_1},\cdots,\vec{a_s})\in\set{0,1}^{mts}\\ (\vec{x_1}||\vec{a_1},\cdots,\vec{x_s}||\vec{a_s})\text{ has distinct elements}}}\E_{\pi}\left[\bigotimes_{i=1}^{s}\left(O_{\pi}^{\otimes t}\left(\ketbra{\vec{x_i}||\vec{a_i}}{\vec{x_i}||\vec{a_i}}\right)(O_{\pi}^{\dagger})^{\otimes t}P_{\sigma_i}\right)\right].
\end{multline*}
Applying $O_{\pi}$ and taking expectation, we get 
\begin{multline*}
    \xi_{(\vec{x_1},\cdots,\vec{x_s})} = \gamma \sum_{\substack{\sigma_1,\cdots,\sigma_s\in S_t\\ (\vec{a_1},\cdots,\vec{a_s})\in\set{0,1}^{mts}\\ (\vec{x_1}||\vec{a_1},\cdots,\vec{x_s}||\vec{a_s})\text{ has distinct elements}}}
    \E_{\substack{(\vec{z_1},\cdots,\vec{z_s})\in\set{0,1}^{(n+m)st}\\ (\vec{z_1},\cdots,\vec{z_s})\text{ has distinct elements}}}\left[\bigotimes_{i=1}^{s}\left(\ketbra{\vec{z_i}}{\vec{z_i}}P_{\sigma_i}\right)\right].
\end{multline*}
Using~\Cref{lem:type_struc}, we get, 
$$\xi_{(\vec{x_1},\cdots,\vec{x_s})} = \gamma\sum_{\substack{(\vec{a_1},\cdots,\vec{a_s})\in\set{0,1}^{mts}\\ (\vec{x_1}||\vec{a_1},\cdots,\vec{x_s}||\vec{a_s})\text{ has distinct elements}}}\left(\rho_{\unique_{s,t}}\right).$$
Let $\vec{x_i}$ have $v$ distinct elements with $t_1,\cdots,t_v$ copies. Then $t_1+\cdots+t_v = t$. Then the number of values of $\vec{a_i}$ such that $\vec{x_i}||\vec{a_i}$ has distinct elements is $\prod_{i=1}^v(2^m\cdots (2^{m}-i+1)) = 2^{mt}(1-O(t^2/2^m))$. Hence, we have 
$$\xi_{(\vec{x_1},\cdots,\vec{x_s})} = \gamma 2^{mts} (1-O(t^2/2^m))^s\left(\rho_{\unique_{s,t}}\right).$$
Substituting, we get 
\begin{multline*}
    \rho = \frac{1}{2^{mst}(t!)^s}\sum_{(\vec{x_1},\cdots,\vec{x_t})\in(T_1,\cdots,T_s)}\gamma 2^{mts} (1-O(t^2/2^m))^s\left(\rho_{\unique_{s,t}}\right)\\
    +\frac{1}{2^{mst}(t!)^s}\sum_{(\vec{x_1},\cdots,\vec{x_t})\in(T_1,\cdots,T_s)}\eta_{(\vec{x_1},\cdots,\vec{x_s})}.
\end{multline*}
Simplifying, we get 
\begin{multline*}
    \rho = \frac{\gamma}{(t!)^s}\sum_{(\vec{x_1},\cdots,\vec{x_t})\in(T_1,\cdots,T_s)} (1-O(st^2/2^m))\left(\rho_{\unique_{s,t}}\right)\\
    +\frac{1}{2^{mst}(t!)^s}\sum_{(\vec{x_1},\cdots,\vec{x_t})\in(T_1,\cdots,T_s)}\eta_{(\vec{x_1},\cdots,\vec{x_s})}.
\end{multline*}
Notice that the number of values for each $\vec{x_i}$ is $\frac{t!}{\prod_{w_i\in \supp(T_i)} \freq{T_i}{w_i}!}$.\\ Hence, using $\gamma = \left(\prod_{i=1}^{s}\left(\prod_{w_i\in\supp(T_i)} \freq{T_i}{w_i}!\right)\right)$, 
$$\rho = (1-O(st^2/2^m))\rho_{\unique_{s,t}}+\frac{1}{2^{mst}(t!)^s}\sum_{(\vec{x_1},\cdots,\vec{x_t})\in(T_1,\cdots,T_s)}\eta_{(\vec{x_1},\cdots,\vec{x_s})}.$$
Notice that $\sum_{(\vec{x_1},\cdots,\vec{x_t})\in(T_1,\cdots,T_s)}\eta_{(\vec{x_1},\cdots,\vec{x_s})}$ is orthogonal to $\rho_{\unique_{s,t}}$. 
Hence, we get that the trace distance between $\rho$ and $\rho_{\unique_{s,t}}$ is $O(st^2/2^{m})$.
\end{proof}

Combining \Cref{lem:tdis_info}, \Cref{lem:tuni_invar} and \Cref{clm:observation:almostinvariance}, we get the desired result.




\subsubsection{Multiple Copies of the Same Input}
\label{sec:security_multi_same}
In this section, we prove security against multiple copies of the same input. Formally, we prove the following,
 
\begin{theorem}
\label{lem:single_proof}
    Let $n,m,q = \poly(\secparam)$ and $\qclass^{(\single)}_{n,q,\ell,\secparam}$ be as defined in~\Cref{sec:defs} then, assuming the existence of post-quantum one-way functions, the construction of $\pri$ given in \Cref{fig:prp_prs} is $\qclass_{\mathsf{single}_{q}}$-secure.
\end{theorem}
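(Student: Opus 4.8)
The plan is to instantiate the three-step pipeline developed in~\Cref{sec:comp_security,sec:closeness_to_rho_uni}. First I would invoke~\Cref{lem:comp_to_info}: assuming post-quantum one-way functions, the oracle $\prfsi_\secparam(k,\cdot)$ is computationally indistinguishable from $G_{(f,\pi)}$ for a random function $f\gets\mathcal{F}_{2^{n+m},p}$ and a random permutation $\pi\gets S_{2^{n+m}}$. A query in $\qclass^{(\single)}_{n,q,\ell,\secparam}$ has the product form $\ketbra{\psi}{\psi}^{\otimes q}\otimes\ketbra{\psi}{\psi}^{\otimes q}$, so by the observation at the beginning of~\Cref{sec:closeness_to_rho_uni} the $\ell=nq$ qubits kept by the adversary can be discarded, and it suffices to prove that $\rho:=\E_{(f,\pi)}\big[G_{(f,\pi)}^{\otimes q}\ketbra{\psi}{\psi}^{\otimes q}(G_{(f,\pi)}^{\dagger})^{\otimes q}\big]$ is negligibly close in trace distance to $\E_{U}\big[U^{\otimes q}\ketbra{\psi}{\psi}^{\otimes q}(U^{\dagger})^{\otimes q}\big]$, the $q$-fold Haar unitary applied to $\ketbra{\psi}{\psi}^{\otimes q}$ (equivalently the $q$-fold Haar isometry, up to tracing out ancillas).

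Next I would expand the input in the type basis of the symmetric subspace: since $\ket{\psi}^{\otimes q}\in\vee^q\C^{2^n}$ we may write $\ket{\psi}^{\otimes q}=\sum_T c_T\ket{\type_T}$, the sum over types $T$ over $[2^n]$ with $\hamming(T)=q$ and $\sum_T|c_T|^2=1$, so that $\ketbra{\psi}{\psi}^{\otimes q}=\sum_{T,T'}c_T\overline{c_{T'}}\ketbra{\type_T}{\type_{T'}}$. Plugging this into $\rho$ and repeating the computation in the proof of~\Cref{lem:tdis_info} (expand each $\ketbra{\type_T}{\type_{T'}}$ via~\Cref{lem:type_struc}, append $\ket{+^m}$, apply the phase oracle $O_f$, and take the expectation over $f$), every cross term acquires a factor $\E_f[\omega_p^{f(\vec{x}||\vec{a})-f(\vec{x'}||\vec{a'})}]$ that vanishes unless $\type(\vec{x}||\vec{a})=\type(\vec{x'}||\vec{a'})\bmod p$; since $q<p$ this congruence is an equality, and in particular it forces $\type(\vec{x})=\type(\vec{x'})$, i.e.\ $T=T'$. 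Hence the off-diagonal terms die and $\rho=\sum_T|c_T|^2\,\E_{(f,\pi)}\big[G_{(f,\pi)}^{\otimes q}\ketbra{\type_T}{\type_T}(G_{(f,\pi)}^{\dagger})^{\otimes q}\big]$.

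Now each remaining term is precisely the $s=1,\ t=q$ instance of the distinct-type analysis: a single type $T$ lies trivially in $\tdis{n}{1}{q}$, so~\Cref{lem:tdis_info} gives $\TD\big(\E_{(f,\pi)}[G_{(f,\pi)}^{\otimes q}\ketbra{\type_T}{\type_T}(G_{(f,\pi)}^{\dagger})^{\otimes q}],\,\rho_{\unique_{1,q}}\big)=O(q^2/2^m)$ for every such $T$. Taking the convex combination over $T$ with weights $|c_T|^2$ and using the triangle inequality yields $\TD(\rho,\rho_{\unique_{1,q}})=O(q^2/2^m)=\negl(\secparam)$. By~\Cref{lem:tuni_invar} (with $s=1,\ t=q$) the state $\rho_{\unique_{1,q}}$ is $O(q^2/2^{m+n})$-almost invariant under the $q$-fold Haar unitary, so~\Cref{clm:observation:almostinvariance} makes $\rho$ negligibly almost invariant as well; and since, after appending zeroes, $G_{(f,\pi)}^{\otimes q}$ is a mixed-unitary channel of the form required by~\Cref{clm:almostinvariance:closeness}, that claim shows $\rho$ is negligibly close to the $q$-fold Haar unitary acting on $\ketbra{\psi}{\psi}^{\otimes q}$. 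Chaining this with~\Cref{lem:comp_to_info} through a constant number of hybrids completes the argument.

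The only genuinely new ingredient beyond the distinct-type proof is the cross-term cancellation in the second step, which is where the additive group structure of $\mathbb{Z}_p$ together with the gap $q<p$ is used; everything else is the specialization of~\Cref{sec:distinct_type} to $s=1$ plus the general almost-invariance machinery. For this reason I expect this theorem to be the least technically demanding of the three security results, and the main point requiring care is simply tracking the type of the appended ancilla register $\vec{a}$ when arguing that the off-diagonal terms vanish.
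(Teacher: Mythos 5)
Your proposal is correct and matches the paper's proof essentially step for step: decompose $\ketbra{\psi}{\psi}^{\otimes q}$ in the type basis of the symmetric subspace, observe that the cross terms $T\neq T'$ vanish under $\E_f$ by the $\mathbb{Z}_p$ root-of-unity cancellation (the paper isolates this as \Cref{lem:outer_comp_query} and \Cref{lem:type_outer}), apply the $s=1$, $t=q$ case of \Cref{lem:tdis_info} to the surviving diagonal terms, and finish via \Cref{lem:tuni_invar} and \Cref{clm:observation:almostinvariance}. One small citation slip: \Cref{lem:type_struc} expands only the diagonal $\ketbra{\type_T}{\type_T}$, not the cross term $\ketbra{\type_T}{\type_{T'}}$ with $T\neq T'$, so for the cancellation step you should expand both bra and ket directly in the computational basis, which is exactly what the paper does in \Cref{lem:outer_comp_query}.
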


Note that since $\ketbra{\phi}{\phi}^{\otimes q}$ is in the symmetric subspace, and type states form an orthogonal basis of the symmetric subspace, we can write $\ketbra{\phi}{\phi}^{\otimes q} = \sum_{T,T'}\alpha_{T,T'}\ketbra{\type_T}{\type_{T'}}$. Notice that, $\ketbra{\type_T}{\type_{T}}\in\qclass_{\distinct}$ for any type $T$. So, to find the action of $q$-fold $G_{(f,\pi)}$ on $\ketbra{\phi}{\phi}^{\otimes q}$, we just need to look at its action on $\ketbra{\type_T}{\type_{T'}}$ for $T\neq T'$. To do this, we analyze the output of the construction on $\ketbra{\vec{x}}{\vec{x'}}$ for $\type(\vec{x})\neq\type(\vec{x'})$.

\begin{lemma}
    \label{lem:outer_comp_query}
    Let $n,m\in\N$, $q\in\poly(\secparam)$, $\vec{x} = (x_1,\cdots,x_q)\in\{0,1\}^{nq}$, and $\vec{x'} = (x'_1,\cdots,x'_q)\in\{0,1\}^{nq}$. Let $\type(\vec{x})\neq\type(\vec{x'})$. Let $G_{(f,\pi)}$ be as defined in \Cref{fig:info_prp_prs}, then $$\E_{(f,\pi)\leftarrow(\mathcal{F}_{2^{n+m},p},S_{2^{n+m}})} \left[\left(G_{(f,\pi)}^{\otimes q}\right)\left(\bigotimes_{i=1}^{q}\ketbra{x_i}{x'_i}\right)\left(G_{(f,\pi)}^{\dagger}\right)^{\otimes q}\right]=0.$$
\end{lemma}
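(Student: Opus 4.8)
The plan is to unfold the definition of $G_{(f,\pi)}$ as the composition $O_\pi O_f (I\otimes H^{\otimes m}) C_{\ket{0^m}}$ applied tensor-wise, and track what happens term by term, exactly as in the proof of \Cref{lem:tdis_info} but now focusing on the phase-cancellation step only. First I would append the $m$-qubit $\ket{+^m}$ registers: each $\ket{x_i}\bra{x'_i}$ becomes $\frac{1}{2^m}\sum_{a_i,a'_i\in\bit^m}\ket{x_i\|a_i}\bra{x'_i\|a'_i}$. Then applying $O_f$ introduces the phase $\omega_p^{\sum_i f(x_i\|a_i)-\sum_i f(x'_i\|a'_i)}$, and applying $O_\pi$ relabels the basis kets by $\pi$. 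The key point is that $O_\pi$ acts as a fixed unitary independent of $f$, so I can pull the expectation over $f$ inside and apply it first to the coefficients.

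The heart of the argument is the expectation $\E_f[\omega_p^{\sum_i f(x_i\|a_i)-\sum_i f(x'_i\|a'_i)}]$. Since $f$ is a uniformly random function into $\mathbb{Z}_p$, this expectation is $1$ if the multiset $\{x_1\|a_1,\dots,x_q\|a_q\}$ equals the multiset $\{x'_1\|a'_1,\dots,x'_q\|a'_q\}$ taken with multiplicities mod $p$, and $0$ otherwise; and because $q<p$, the ``mod $p$'' qualifier is vacuous, so the condition is exactly $\type(\vec x\|\vec a)=\type(\vec x'\|\vec a')$. Now I claim this forces $\type(\vec x)=\type(\vec x')$: projecting a multiset of strings in $\bit^{n+m}$ onto their first $n$ bits is multiset-equality-preserving, so $\type(\vec x\|\vec a)=\type(\vec x'\|\vec a')$ implies $\type(\vec x)=\type(\vec x')$, contradicting the hypothesis. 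Hence every coefficient vanishes, $\E_f$ of the whole phased sum is $0$, and since $O_\pi^{\otimes q}(\cdot)(O_\pi^\dagger)^{\otimes q}$ is linear it maps $0$ to $0$; taking the expectation over $\pi$ then gives the claimed identity.

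I do not expect any genuine obstacle here — this is a self-contained special case of the phase-cancellation computation already carried out in the proof of \Cref{lem:tdis_info}, extracted and stated cleanly so it can be reused for the single-copy case. The only point requiring a word of care is the observation that marginalizing a type vector (summing out the last $m$ coordinates of each index, i.e.\ forgetting the $a_i$ part) is well-defined on multisets and preserves equality; I would state this as a one-line remark rather than a formal sublemma. Everything else is bookkeeping: expand $G_{(f,\pi)}$ into its four stages, commute $\E_f$ past the $f$-independent operations $C_{\ket{0^m}}$, $H^{\otimes m}$, $O_\pi$, evaluate $\E_f$ on the phases, and conclude.
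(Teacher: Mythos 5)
Your proposal matches the paper's proof step for step: expand $G_{(f,\pi)} = O_\pi O_f (I\otimes H^{\otimes m})C_{\ket{0^m}}$, pull $\E_f$ past the $f$-independent parts, evaluate $\E_f$ of the root-of-unity phase, and observe that $\type(\vec{x}\|\vec{a})=\type(\vec{x'}\|\vec{a'})$ would force $\type(\vec{x})=\type(\vec{x'})$ by marginalizing the first $n$ bits. Your explanation of why $\E_f\bigl[\omega_p^{\sum_i f(x_i\|a_i)-\sum_i f(x'_i\|a'_i)}\bigr]$ vanishes (multiset equality mod $p$ reduces to exact multiset equality since $q<p$) is slightly more careful than the paper's terse ``sum of powers of a root of unity is $0$'' justification, but the content is the same.
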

\ifllncs
\noindent The proof of~\Cref{lem:outer_comp_query} can be found in~\Cref{app:closeness_to_rho_uni}.
\else

\begin{proof}
    We again see $G_{(f,\pi)}$ as $O_{\pi}O_{f}C_{\ket{+^m}}$. Acting on $\bigotimes_{i=1}^{q}\ketbra{x_i}{x'_i}$, applying $C_{\ket{+^m}}$ results in $\bigotimes_{i=1}^{q}\ketbra{x_i||a_i}{x'_i||a'_i}$ for each $a_i,a'_i\in\bit^m$. Applying $O_f$ gives us a leading coefficient of $\omega_p^{\sum_i (f(x_i||a_i)-f(x'_i||a'_i))}$ which always goes to zero when we take expectation over $f$ because $\sum_i (f(x_i||a_i)-f(x'_i||a'_i))\neq 0$ for all values of $a_i,a'_i$'s. Hence, we get that the resulting matrix is also zero.

\noindent We now provide the formal details. Let $$\rho = \E_{(f,\pi)\leftarrow(\mathcal{F}_{2^{n+m},p},S_{2^{n+m}})} \left[\left(G_{(f,\pi)}^{\otimes q}\right)\left(\bigotimes_{i=1}^{q}\ketbra{x_i}{x'_i}\right)\left(G_{(f,\pi)}^{\dagger}\right)^{\otimes q}\right]$$
    Writing $G_{(f,\pi)}$ as $O_{\pi}O_{f}(I\otimes H^{m})C_{\ket{0^m}}$.
    $$\rho = \E_{\pi} \E_{f} \left[O^{\otimes q}_{\pi}O^{\otimes q}_{f}\left(\frac{1}{2^{mq}}\sum_{\substack{\vec{a}\in\set{0,1}^{mq}\\ \vec{a'}\in\set{0,1}^{mq}}} \ketbra{\vec{x}||\vec{a}}{\vec{x'}||\vec{a'}}\right)(O^{\otimes q}_{f})^{\dagger}(O^{\otimes q}_{\pi})^{\dagger}\right].$$
    Applying $O_f$,
    $$\rho = \E_{\pi} \E_{f} \left[O^{\otimes q}_{\pi}\left(\frac{1}{2^{mq}}\sum_{\substack{\vec{a}\in\set{0,1}^{mq}\\ \vec{a'}\in\set{0,1}^{mq}}} \omega_p^{f(\vec{x}||\vec{a})-f(\vec{x'}||\vec{a'})}\ketbra{\vec{x}||\vec{a}}{\vec{x'}||\vec{a'}}\right)(O^{\otimes q}_{\pi})^{\dagger}\right].$$
    Then by linearity, we get, 
    $$\rho = \sum_{\substack{\vec{a}\in\set{0,1}^{mq}\\ \vec{a'}\in\set{0,1}^{mq}}}\E_{\pi} \E_{f}\left[\omega_p^{f(\vec{x}||\vec{a})-f(\vec{x'}||\vec{a'})}\right] \left[O^{\otimes q}_{\pi}\left(\frac{1}{2^{mq}} \ketbra{\vec{x}||\vec{a}}{\vec{x'}||\vec{a'}}\right)(O^{\otimes q}_{f})^{\dagger}\right].$$
    Note that since $\type(\vec{x})\neq\type(\vec{x'})$, for any $\vec{a},\vec{a'}\in\set{0,1}^{mq}$, $\type(\vec{x}||\vec{a})\neq\type(\vec{x'}||\vec{a'})$. Also, since the sum of powers of a root of unity is $0$, then for any $\vec{a},\vec{a'}\in\set{0,1}^{mq}$, $\E_{f}\left[\omega_p^{f(\vec{x}||\vec{a})-f(\vec{x'}||\vec{a'})}\right] = 0$. Hence, we get $\rho = 0$, as required.
\end{proof}

\fi

\noindent A corollary of the above lemma is as follows,
\begin{corollary}
    \label{lem:type_outer}
    Let $n,m\in\N$, $q\in\poly(\secparam)$, $T,T'\in[q+1]^{2^{n}}$ with $T\neq T'$ and $\hamming(T) = \hamming(T') = q$. Let $G_{(f,\pi)}$ be as defined in \Cref{fig:info_prp_prs}, then $$\E_{(f,\pi)\leftarrow(\mathcal{F}_{2^{n+m},p},S_{2^{n+m}})} \left[\left(G_{(f,\pi)}^{\otimes q}\right)\ketbra{\type_{T}}{\type_{T'}}\left(G_{(f,\pi)}^{\dagger}\right)^{\otimes q}\right]=0.$$
\end{corollary}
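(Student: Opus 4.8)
This is a direct corollary of \Cref{lem:outer_comp_query} together with the explicit form of type states, so the plan is short. First I would expand the cross term in the computational basis using \Cref{def:type_states}: writing $\ket{\type_T} = c_T \sum_{\vec{v}\in T}\ket{\vec{v}}$ and $\ket{\type_{T'}} = c_{T'}\sum_{\vec{w}\in T'}\ket{\vec{w}}$ for the appropriate positive normalization constants $c_T, c_{T'}$ (where the sums range over $\vec{v},\vec{w}\in[2^n]^q$), we get $\ketbra{\type_T}{\type_{T'}} = c_T c_{T'}\sum_{\vec{v}\in T}\sum_{\vec{w}\in T'}\ketbra{\vec{v}}{\vec{w}}$.

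Next, I would push the operation through the sum. The map $X \mapsto \E_{(f,\pi)}\bigl[G_{(f,\pi)}^{\otimes q}\,X\,(G_{(f,\pi)}^{\dagger})^{\otimes q}\bigr]$ is linear, so it suffices to show that each term $\E_{(f,\pi)}\bigl[G_{(f,\pi)}^{\otimes q}\,\ketbra{\vec{v}}{\vec{w}}\,(G_{(f,\pi)}^{\dagger})^{\otimes q}\bigr]$ vanishes. For every $\vec{v}\in T$ and $\vec{w}\in T'$ occurring in the expansion we have $\type(\vec{v}) = T$ and $\type(\vec{w}) = T'$, and by hypothesis $T\neq T'$; hence the hypothesis $\type(\vec{x})\neq\type(\vec{x}')$ of \Cref{lem:outer_comp_query} is met with $\vec{x} = \vec{v}$, $\vec{x}' = \vec{w}$, and the term is zero. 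Summing over all $(\vec{v},\vec{w})$ gives the claim.

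\textbf{Main obstacle.} There is essentially no obstacle here: the only point that requires a line of care is checking that the hypothesis of \Cref{lem:outer_comp_query} --- distinctness of the two types --- holds uniformly for every pair $(\vec{v},\vec{w})$ appearing in the expansion of $\ketbra{\type_T}{\type_{T'}}$, which is immediate since each $\vec{v}$ that appears has type exactly $T$ and each $\vec{w}$ has type exactly $T'$. (One could alternatively run a \Cref{lem:type_struc}-style argument through the permutation sums, but for a cross term the plain expansion above is cleaner.)
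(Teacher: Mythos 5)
Your proof is correct and takes exactly the approach the paper has in mind; the paper simply states this as an immediate corollary of \Cref{lem:outer_comp_query} without writing out the expansion, and your argument — expand $\ketbra{\type_T}{\type_{T'}}$ in the computational basis via \Cref{def:type_states}, push the expectation through by linearity, and apply \Cref{lem:outer_comp_query} term by term since every $\vec{v}\in T$ has type $T$ and every $\vec{w}\in T'$ has type $T'\neq T$ — is precisely the argument being gestured at.
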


\noindent Using the above, we get that the output of $q$-fold $G_{(f,\pi)}$ on  $\ketbra{\phi}{\phi}^{\otimes q}$, is close to $\rho_{\unique_{s,t}}$. Formally, we prove the following,
\begin{lemma}
    \label{lem:multicopy_info}
    Let $n,m,q=\poly(\secparam)$, $\ket{\phi}$ be any $n$ qubit pure state. Let $G_{(f,\pi)}$ be as defined in \Cref{fig:info_prp_prs}, let $$\rho = \E_{(f,\pi)\leftarrow(\mathcal{F}_{2^{n+m},p},S_{2^{n+m}})} \left[\left(G_{(f,\pi)}^{\otimes q}\right)\ketbra{\phi}{\phi}^{\otimes q}\left(G_{(f,\pi)}^{\dagger}\right)^{\otimes q}\right],$$
    and 
    $$\rho_{\unique_{1,q}} = \E_{(\overline{T}_1,\cdots,\overline{T}_q)\leftarrow \tuni{m+n}{1}{q}} \left[\ketbra{\type_{\overline{T}_i}}{\type_{\overline{T}_i}}\right].$$
    Then $\TD(\rho,\rho_{\unique_{1,q}}) = O(q^2/2^m)$.
\end{lemma}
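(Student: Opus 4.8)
The plan is to expand $\ketbra{\phi}{\phi}^{\otimes q}$ in the type-state basis of the symmetric subspace and then use the vanishing of cross-type coherences to reduce to the distinct-type estimate already established. First, since $\ket{\phi}^{\otimes q}\in\vee^q\C^{2^n}$ and $\set{\ket{\type_T}:\hamming(T)=q}$ is an orthonormal basis of this subspace, I would write $\ket{\phi}^{\otimes q} = \sum_T c_T\ket{\type_T}$ with $\sum_T|c_T|^2=1$, so that $\ketbra{\phi}{\phi}^{\otimes q} = \sum_{T,T'}c_T\overline{c_{T'}}\,\ketbra{\type_T}{\type_{T'}}$. Applying the (linear) channel $X\mapsto \E_{(f,\pi)}[G_{(f,\pi)}^{\otimes q}X(G_{(f,\pi)}^\dagger)^{\otimes q}]$ and invoking \Cref{lem:type_outer} to kill every term with $T\neq T'$ yields
\begin{align*}
\rho \;=\; \sum_{T}|c_T|^2\,\sigma_T, \qquad \sigma_T := \E_{(f,\pi)\leftarrow(\mathcal{F}_{2^{n+m},p},S_{2^{n+m}})}\left[G_{(f,\pi)}^{\otimes q}\ketbra{\type_T}{\type_T}(G_{(f,\pi)}^\dagger)^{\otimes q}\right].
\end{align*}

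Next I would bound each $\sigma_T$. For a single type the one-tuple $(T)$ is vacuously distinct, i.e.\ $(T)\in\tdis{n}{1}{q}$, so \Cref{lem:tdis_info} applied with $s=1$ and $t=q$ gives $\TD(\sigma_T,\rho_{\unique_{1,q}}) = O(q^2/2^m)$ for every such $T$. Since $|c_T|^2\ge 0$ and $\sum_T|c_T|^2=1$, the claim then follows from joint convexity of the trace distance:
\begin{align*}
\TD(\rho,\rho_{\unique_{1,q}}) \;=\; \TD\Bigl(\textstyle\sum_T|c_T|^2\sigma_T,\ \sum_T|c_T|^2\rho_{\unique_{1,q}}\Bigr) \;\le\; \sum_T |c_T|^2\,\TD(\sigma_T,\rho_{\unique_{1,q}}) \;=\; O(q^2/2^m).
\end{align*}

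Essentially all the substance is outsourced to the two preceding results, so this lemma is a short gluing argument. \Cref{lem:type_outer} rests on the algebraic structure of $\mathbb{Z}_p$ — nontrivial $p$-th roots of unity sum to zero, so the random phase oracle $O_f$ annihilates any term whose two sides have different types, provided $q<p$ — and \Cref{lem:tdis_info} is the distinct-type closeness estimate. The only thing to verify here is that the type-basis decomposition survives the channel intact and that \Cref{lem:type_outer} applies to \emph{every} pair $T\neq T'$ with $\hamming(T)=\hamming(T')=q$, which is exactly its hypothesis. The main (and rather mild) obstacle is therefore just the bookkeeping around the off-diagonal terms; no new collision bound beyond the $O(q^2/2^m)$ already appearing in \Cref{lem:tdis_info} is needed.
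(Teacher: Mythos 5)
Your proposal is correct and follows essentially the same route as the paper: decompose $\ketbra{\phi}{\phi}^{\otimes q}$ in the type basis, kill the off-diagonal $T\neq T'$ terms via \Cref{lem:type_outer}, and then apply \Cref{lem:tdis_info} with $s=1$, $t=q$ to each diagonal term before averaging. If anything, your version is slightly more careful than the paper's write-up, which leaves the coefficients as $\alpha_{T,T}$ and tacitly uses $\alpha_{T,T}\ge 0$ and $\sum_T\alpha_{T,T}=1$; you make this explicit by identifying $\alpha_{T,T}=|c_T|^2$ and invoking joint convexity of trace distance.
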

\ifllncs
\noindent The proof of~\Cref{lem:multicopy_info} can be found in~\Cref{app:closeness_to_rho_uni}.
\else

\begin{proof}
    Since $\ketbra{\phi}{\phi}^{\otimes q}$ is in the symmetric subspace, we can write $\ketbra{\phi}{\phi}^{\otimes q} = \sum_{T,T'}\alpha_{T,T'}\ketbra{\type_T}{\type_{T'}}$. By \Cref{lem:type_outer}, the only terms remaining in this sum are when $T=T'$. The security on this is just implied by \Cref{lem:tdis_info}.

    \noindent Formally, since $\ketbra{\phi}{\phi}^{\otimes q}$ is in the symmetric subspace, we can write $\ketbra{\phi}{\phi}^{\otimes q} = \sum_{T,T'}\alpha_{T,T'}\ketbra{\type_T}{\type_{T'}}$. Hence, we get 
    $$\rho = \E_{(f,\pi)\leftarrow(\mathcal{F}_{2^{n+m},p},S_{2^{n+m}})} \left[\left(G_{(f,\pi)}^{\otimes q}\right)\sum_{T,T'}\alpha_{T,T'}\ketbra{\type_T}{\type_{T'}}\left(G_{(f,\pi)}^{\dagger}\right)^{\otimes q}\right].$$
    By linearity and \Cref{lem:type_outer}, we get
    $$\rho = \sum_{T}\alpha_{T,T}\E_{(f,\pi)\leftarrow(\mathcal{F}_{2^{n+m},p},S_{2^{n+m}})} \left[\left(G_{(f,\pi)}^{\otimes q}\right)\ketbra{\type_T}{\type_{T}}\left(G_{(f,\pi)}^{\dagger}\right)^{\otimes q}\right].$$
    Using \Cref{lem:tdis_info}, the following state is $O((\sum_{T}\alpha_{T,T})q^2/2^m)$ away from $\rho$, 
    $$\rho' = \sum_{T}\alpha_{T,T}\rho_{\unique_{1,q}}.$$
    Note that, since $\sum_{T} \alpha_{T,T} = 1$, we get that the distance between $\rho$ and $\rho_{\unique_{1,q}}$ is $O(q^2/2^m)$.
\end{proof}
\fi
\noindent Combining \Cref{lem:multicopy_info}, \Cref{lem:tuni_invar} and \Cref{clm:observation:almostinvariance}, we get the desired result. 



\subsubsection{Security against Haar Inputs}
In this section, we prove security against queries which are sampled i.i.d. from the Haar measure. In particular, we show the following theorem:
\begin{theorem}
\label{lem:haar_proof}
    Let $n,m,s,t = \poly(\secparam)$ and $\qclass_{n,q,\ell,\secparam}^{({\sf Haar})}$ be as defined in~\Cref{sec:defs} then, assuming the existence of post-quantum one-way functions, the construction of $\pri$ given in \Cref{fig:prp_prs} is $\qclass_{\mathsf{Haar}_{s,t}}$-secure. 
\end{theorem}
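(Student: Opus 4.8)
The plan is to follow the top-down strategy of the technical overview, specialized to $\qclass = \qclass_{\mathsf{Haar}_{s,t}}$, in three stages: invoke the computational assumptions, reduce statistical security to closeness to the almost-invariant state $\rho_{\unique_{s,t}}$, and then carry out the Haar-specific closeness estimate. For the first stage, since $f$ and $g$ come from post-quantum one-way functions, \Cref{lem:comp_to_info} replaces $\prfsi$ by the information-theoretic channel $G_{(f,\pi)}$ of \Cref{fig:info_prp_prs}: a QPT distinguisher violating $\qclass_{\mathsf{Haar}_{s,t}}$-security of $\pri$ becomes an oracle distinguisher that prepares a Haar query $\rho_{\mathsf{Haar}}\in\qclass^{({\sf Haar})}_{n,s,t,\ell',\secparam}$, makes $q=st$ parallel queries on the blue registers, and forwards the outcome. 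So it suffices to bound, for every such $\rho_{\mathsf{Haar}}$,
\[
\TD\!\left(\E_{(f,\pi)}\!\left[\left(I_\ell\otimes G_{(f,\pi)}^{\otimes q}\right)(\rho_{\mathsf{Haar}})\right],\ \E_{\haarisometry}\!\left[\left(I_\ell\otimes\haarisometry^{\otimes q}\right)(\rho_{\mathsf{Haar}})\right]\right) = \negl(\secparam).
\]

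\textbf{Reduction to closeness to $\rho_{\unique_{s,t}}$.} By \Cref{lem:tuni_invar_aux} the state $\sigma\otimes\rho_{\unique_{s,t}}$ is $O(s^2t^2/2^{m+n})$-almost invariant under $q$-fold Haar unitary for any $\ell$-qubit $\sigma$; since our construction (appended with zeros) is a mixed-unitary channel, \Cref{clm:observation:almostinvariance} and \Cref{clm:almostinvariance:closeness} then reduce the task to exhibiting an $\ell$-qubit state $\sigma=\sigma(\rho_{\mathsf{Haar}})$ with
\[
\TD\!\left(\E_{(f,\pi)}\!\left[\left(I_\ell\otimes G_{(f,\pi)}^{\otimes q}\right)(\rho_{\mathsf{Haar}})\right],\ \sigma\otimes\rho_{\unique_{s,t}}\right) = \negl(\secparam).
\]
This is the substantive part of the proof.

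\textbf{The Haar-specific estimate.} Fix $\rho_{\mathsf{Haar}}$ and, after re-indexing registers, merge the $t$ red copies of $\ket{\psi_i}$ kept by the adversary with the $t$ blue copies fed to $G_{(f,\pi)}$, so that each block is $\ketbra{\psi_i}{\psi_i}^{\otimes 2t}$. By \Cref{fact:avg-haar-random} averaging over the Haar measure turns this into a uniform mixture of $2t$-copy type states over $\C^{2^n}$, and by a collision bound of the flavor of \Cref{fact:random_type_halves_collision} this mixture is $O(s^2t^2/2^n)$-close to a uniform mixture over $(T_1,\dots,T_s)\in\tuni{n}{s}{2t}$. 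Expanding each $\ketbra{\type_{T_i}}{\type_{T_i}}$ via \Cref{lem:type_struc} yields a sum over $\sigma_i\in S_{2t}$ and $\vec{v_i}\in T_i$ of terms $\ketbra{\vec{v_i}}{\sigma_i(\vec{v_i})}$; writing $\vec{v_i}=(\vec{v_i^{r}},\vec{v_i^{b}})$ with $\vec{v_i^{r}},\vec{v_i^{b}}\in[2^n]^t$, the operation applied is $I_{nt}$ on the red half $\vec{v_i^{r}}$ and $G_{(f,\pi)}^{\otimes t}$ on the blue half $\vec{v_i^{b}}$. The key point --- the same mechanism behind the vanishing of cross terms in \Cref{lem:tdis_info} --- is that the $\mathbb{Z}_p$-valued phase introduced by $f$, together with $st\ll p$ (which holds since $p\sim 2^{\secparam_1}$ and $st=\poly(\secparam)$), forces $\E_f$ to annihilate every term except those in which $\sigma_i$ acts separately on the two halves, i.e.\ $\sigma_i=\sigma_i^{1}\times\sigma_i^{2}$ with $\sigma_i^{1},\sigma_i^{2}\in S_t$; because $T_i$ is a unique type this is a clean dichotomy. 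Restricting to such product permutations and using \Cref{lem:type_struc} in reverse, the output factorizes as $\bigotimes_i\ketbra{\type_{T_i^{r}}}{\type_{T_i^{r}}}$ on the red registers tensored with $\bigotimes_i G_{(f,\pi)}^{\otimes t}\ketbra{\type_{T_i^{b}}}{\type_{T_i^{b}}}\bigl(G_{(f,\pi)}^{\dagger}\bigr)^{\otimes t}$ on the blue registers: the $q$-fold isometry has \emph{unentangled} the query, which was the only obstruction to security.

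\textbf{Assembling the bounds and the main obstacle.} Since $(T_1,\dots,T_s)\in\tuni{n}{s}{2t}$, the induced blue types satisfy $(T_1^{b},\dots,T_s^{b})\in\tuni{n}{s}{t}\subseteq\tdis{n}{s}{t}$, so \Cref{lem:tdis_info} gives that (after averaging over $(f,\pi)$) the blue part is $O(st^2/2^m)$-close to $\rho_{\unique_{s,t}}$, while the red part together with the untouched $\ell'$-qubit auxiliary register furnishes $\sigma$; one more collision bound decouples the mildly-correlated distributions of the red and blue types at cost $O(s^2t^2/2^n)$. Collecting these estimates yields $\TD\bigl(\E_{(f,\pi)}[(I_\ell\otimes G_{(f,\pi)}^{\otimes q})(\rho_{\mathsf{Haar}})],\ \sigma\otimes\rho_{\unique_{s,t}}\bigr)=O\!\bigl(s^2t^2(2^{-n}+2^{-m})\bigr)=\negl(\secparam)$, which closes the proof via the almost-invariance lemmas. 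I expect the main difficulty to be the unentangling step: one must justify carefully that, on a unique type, only the product permutations $\sigma_i=\sigma_i^{1}\times\sigma_i^{2}$ survive $\E_f$, and then track the several collision bounds (both when passing from $2t$-copy Haar states to unique types, and when eliminating internal collisions after the random $m$-bit strings are appended inside $G_{(f,\pi)}$) together with the bookkeeping for the $\ell'$-qubit side register and the register re-indexing. Everything else is a routine invocation of \Cref{lem:tdis_info}, \Cref{lem:tuni_invar_aux}, and \Cref{clm:almostinvariance:closeness}.
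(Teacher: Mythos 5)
Your proposal is correct and follows essentially the same route as the paper: invoke \Cref{lem:comp_to_info} to pass to the information-theoretic construction $G_{(f,\pi)}$, reduce to showing closeness to $\sigma\otimes\rho_{\unique_{s,t}}$ via \Cref{lem:tuni_invar_aux}, \Cref{clm:observation:almostinvariance}, and \Cref{clm:almostinvariance:closeness}, then merge the red and blue copies into $2t$-fold type states, pass to unique types by a collision bound, and use the $\mathbb{Z}_p$-phase/\Cref{lem:outer_comp_query}-style argument to kill permutations that mix the red and blue halves so that only product permutations $\sigma_i=\sigma_i^1\times\sigma_i^2$ survive, after which \Cref{lem:tdis_info} scrambles the blue half to $\rho_{\unique_{s,t}}$ and the state factorizes — this is exactly the content of \Cref{lem:tuni_info} and \Cref{lem:haar_info}. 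The only minor imprecision is the final ``one more collision bound'' remark: once \Cref{lem:tdis_info} has been applied, the blue register's output is (close to) $\rho_{\unique_{s,t}}$ regardless of the input blue type, so the decoupling of red and blue is automatic and no additional collision bound is needed there; the two collision bounds that do appear are the Haar-to-types step (contributing $O(s^2t^2/2^n)$) and the internal-collision loss inside \Cref{lem:tdis_info} (contributing $O(st^2/2^m)$).
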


\noindent To prove this we first note that $$\E_{\ket{\vartheta_1},\cdots,\ket{\vartheta_s}\leftarrow\Haar_{n}}\left[\bigotimes_{i=1}^{s} \ketbra{\vartheta_i}{\vartheta_i}^{\otimes 2t}\right] = \E_{T_1,\cdots,T_s}\left[\bigotimes_{i=1}^{s} \ketbra{\type_{T_i}}{\type_{T_i}}\right],$$
where each $T_i$ is an i.i.d. sampled type containing $2t$ elements. We know that with overwhelming probability this lies in $\tuni{n}{s}{2t}$. Note that our construction is only acting on half of each of the type states and not the complete states. Hence, if we prove security against queries from $\tuni{n}{s}{2t}$ with the construction being applied to only one-half of each of the type states, we would get security for i.i.d. sampled Haar queries as required. In particular, we prove the following:
\begin{lemma}
\label{lem:tuni_info}
    Let $n,m,s,t = \poly(\secparam)$ and let $(T_1,\cdots,T_s)\in\tuni{n}{s}{2t}$. Let $G_{(f,\pi)}$ is as defined in \Cref{fig:info_prp_prs}. Let 
    $$\rho:= \E_{(f,\pi)\leftarrow(\mathcal{F}_{n+m},S_{n+m})} \bigotimes_{i=1}^{s} \left(\left(I_{nt}\otimes G_{(f,\pi)}^{\otimes t} \right)\ketbra{\type_{T_i}}{\type_{T_i}}\left(I_{nt}\otimes G_{(f,\pi)}^{\otimes t}\right)^{\dagger}\right),$$
    $$\sigma:= \E_{\substack{(\overline{T}_1,\cdots,\overline{T}_s)\subset (T_1,\cdots,T_s)\\ \forall i\in [s], \hamming(\overline{T}_i)=t}}\bigotimes_{i=1}^{s}\left(\ketbra{\type_{\overline{T}_i}}{\type_{\overline{T}_i}}\right),$$
    and $$\rho^{\sigma}_{\unique_{s,t}}:= \E_{\substack{(\overline{T}_1,\cdots,\overline{T}_s)\subset (T_1,\cdots,T_s)\\ \forall i\in [s], \hamming(\overline{T}_i)=t\\
    (\hat{T_1},\cdots,\hat{T_s})\leftarrow\tuni{n+m}{s}{t}}}\bigotimes_{i=1}^{s}\left(\ketbra{\type_{\overline{T}_i}}{\type_{\overline{T}_i}}\otimes\ketbra{\type_{\hat{T}_i}}{\type_{\hat{T}_i}}\right).$$
    Then $\TD(\rho,\rho^{\sigma}_{\unique_{s,t}})=O(st^2/2^{m})$.
\end{lemma}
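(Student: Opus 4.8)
The plan is to reduce the statement to the \emph{distinct}-type analysis already carried out in \Cref{lem:tdis_info}, with the extra twist that the construction is applied to only one of the two halves of each type state. Recall that $(T_1,\dots,T_s)\in\tuni{n}{s}{2t}$ means that collectively the $T_i$ span $2st$ pairwise distinct computational basis states. First I would use \Cref{lem:type_struc} to expand each $\ket{\type_{T_i}}\bra{\type_{T_i}}$ as $\frac{1}{(2t)!}\sum_{\sigma_i\in S_{2t}}\sum_{\vec{v_i}\in T_i}\ketbra{\vec{v_i}}{\sigma_i(\vec{v_i})}$, and split each index vector $\vec{v_i}=\vec{x_i}\|\vec{x_i}'$ into the first $t$ coordinates (the register held by the adversary, on which $I_{nt}$ acts) and the last $t$ coordinates (the register fed into $G_{(f,\pi)}^{\otimes t}$). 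Writing $G_{(f,\pi)}=O_\pi O_f (I\otimes H^{\otimes m})C_{\ket{0^m}}$ exactly as in the proof of \Cref{lem:tdis_info}, appending $\vec{a_i}\in\bit^{mt}$ on the second half only, and applying $O_f$ as a phase oracle, we get a leading factor $\omega_p^{f(\vec{x_i}'\|\vec{a_i}) - f(\text{second half of }\sigma_i(\vec{v_i})) }$ summed over all $i$. Taking $\E_f$ over the random function (using $st<p$, which kills any mod-$p$ coincidence as in \Cref{lem:tdis_info}) forces the surviving terms to be exactly those where the multiset of second-half labels $\{\vec{x_i}'\|\vec{a_i}\}$ equals the multiset appearing on the bra side; since the $2st$ labels in $\bigcup_i\setT(T_i)$ are pairwise distinct, this is possible only if $\sigma_i$ decomposes as $\sigma_i=\sigma_i^1\oplus\sigma_i^2$ with $\sigma_i^1,\sigma_i^2\in S_t$ acting separately on the first and second halves, and $\vec{a_i}' = \sigma_i^2(\vec{a_i})$. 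This is the key structural observation: the $t$-fold application of $G_{(f,\pi)}$ to the second half \emph{unentangles} the two halves, up to the permutation symmetry.

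Next I would assemble the surviving terms. After the $O_f$ step, each block contributes $\ketbra{\vec{x_i}\|\vec{x_i}'\|\vec{a_i}}{\sigma_i^1(\vec{x_i})\|\sigma_i^2(\vec{x_i}'\|\vec{a_i})}$; using that $t$-fold operators commute with permutation operators (as in \Cref{lem:tdis_info}) and re-summing $\sigma_i^1,\sigma_i^2$ over $S_t$ via \Cref{lem:type_struc} in reverse, the first half re-forms $\ket{\type_{\overline{T_i}}}\bra{\type_{\overline{T_i}}}$ where $\overline{T_i}\subset T_i$ with $\hamming(\overline{T_i})=t$, and the second half becomes a $t$-element type state over $\bit^{n+m}$ built from $\vec{x_i}'\|\vec{a_i}$. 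Then I apply $O_\pi$ and take $\E_\pi$: whenever the $st$ strings $\{\vec{x_i}'\|\vec{a_i}\}_i$ are pairwise distinct — which, by a collision bound (\Cref{fact:random_type_halves_collision} / the calculation in \Cref{lem:tdis_info}), fails with weight only $O(st^2/2^m)$ since the $\vec{a_i}$ are nearly uniform $mt$-bit strings — the random permutation maps them to a uniformly random distinct $st$-tuple over $\bit^{n+m}$, which by \Cref{lem:type_struc} is exactly $\rho_{\unique_{s,t}}$ on the blue registers. Combining with the re-formed $\overline{T_i}$ on the red registers and noting $\overline{T_i}$ is uniform over subtypes of $T_i$ of size $t$, the dominant term is precisely $\rho^\sigma_{\unique_{s,t}}$ as defined, and the off-diagonal "collision" remainder lives in an orthogonal subspace carrying total weight $O(st^2/2^m)$, giving the claimed trace-distance bound.

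The main obstacle I anticipate is the bookkeeping around the interaction between the adversary's purification register and the permutation symmetry: because $\sigma_i$ a priori mixes the two halves, one must argue carefully that the $\E_f$ step projects onto the block-diagonal permutations $\sigma_i^1\oplus\sigma_i^2$ \emph{without} losing cross-terms that matter, and that after this projection the red half genuinely recombines into a type state $\ket{\type_{\overline{T_i}}}$ rather than some entangled residue. Concretely, the delicate point is that for a fixed choice of $\vec{x_i},\vec{a_i}$ the set of admissible $(\sigma_i^1,\sigma_i^2)$ and $\vec{a_i}'$ must be counted with the right multiplicity (the $\prod_{w\in\supp(\overline{T_i})}\freq{\overline{T_i}}{w}!$ type of factor that appears in \Cref{lem:tdis_info}), and these multiplicities have to cancel correctly against the normalization $\frac{1}{((2t)!)^s}$ and the number of choices of $\vec{v_i}\in T_i$ so that the final state is exactly a probability distribution over $\rho^\sigma_{\unique_{s,t}}$. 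I would handle this by mimicking the $\xi$/$\eta$ decomposition of \Cref{lem:tdis_info}: define $\xi$ to be the restriction to the "all distinct after appending $\vec{a}$" terms, show $\xi$ equals $\rho^\sigma_{\unique_{s,t}}$ up to the explicit constant and the $(1-O(t^2/2^m))^s$ factor, observe $\eta=\rho-\xi$ is supported on an orthogonal (collision) subspace, and conclude $\TD(\rho,\rho^\sigma_{\unique_{s,t}})=O(st^2/2^m)$. Once this lemma is in hand, the theorem follows by combining it with \Cref{lem:tuni_invar_aux} (almost-invariance of $\rho^\sigma_{\unique_{s,t}}$), \Cref{clm:observation:almostinvariance}, \Cref{clm:almostinvariance:closeness}, the collision bound identifying the Haar query with a uniform unique $2t$-type, and \Cref{lem:comp_to_info} to invoke the PRF/PRP assumptions.
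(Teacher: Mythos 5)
Your proposal follows essentially the same route as the paper's proof: expand each $\ketbra{\type_{T_i}}{\type_{T_i}}$ via \Cref{lem:type_struc}, observe that the $\E_f$ phase-kill annihilates any term whose $\sigma_i$ mixes the purification half with the half fed into $G_{(f,\pi)}$ (the paper packages this as \Cref{lem:outer_comp_query}), deduce that surviving $\sigma_i$ factor as $\sigma_i^1\oplus\sigma_i^2$, re-sum via \Cref{lem:type_struc} in reverse into a mixture of $\ketbra{\type_{\overline{T}_i}}{\type_{\overline{T}_i}}\otimes G_{(f,\pi)}^{\otimes t}\ketbra{\type_{\hat{T}_i}}{\type_{\hat{T}_i}}G_{(f,\pi)}^{\dagger\otimes t}$ with $(\hat{T}_1,\dots,\hat{T}_s)\in\tdis{n}{s}{t}$, and finish with the distinct-type analysis. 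The only cosmetic difference is that you inline the argument of \Cref{lem:tdis_info} (the $\xi/\eta$ split and the $O(st^2/2^m)$ collision bound) rather than citing it as a black box once the $\sigma_i$ factorization is in hand, which the paper does and which sidesteps the normalization bookkeeping you flag as a concern.
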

\ifllncs
\noindent The proof of~\Cref{lem:tuni_info} can be found in~\Cref{app:closeness_to_rho_uni}. 
\else

\begin{proof}
    We know from \Cref{lem:type_struc}, $\bigotimes_{i=1}^{s} \ketbra{\type_{T_i}}{\type_{T_i}}$ can be written as a sum over $\vec{x_i}\in T_i$, $\sigma_i\in S_{2t}$, $\bigotimes_{i=1}^{s} \ketbra{\vec{x_i}}{\sigma_i(\vec{x_i})}$. 
    Let each $\vec{x_i}$ (containing $2t$ elements) is a concatenation of $\vec{c_i}$ and $\vec{d_i}$ (where each contains $t$ elements). Note that, all elements of $\vec{x_i}$ are distinct, hence $\vec{c_i}$ and $\vec{d_i}$ also contain distinct elements. Notice that if any of the $\sigma_i$'s maps any of the first $t$ elements to the last $t$ elements, then by \Cref{lem:outer_comp_query}, we get $0$. Hence, we get that each $\sigma_i$ can be written as a combination $\sigma^1_i\in S_t$ and $\sigma^2_i\in S_t$ where $\sigma^1_i$ is applied to the first $t$ elements of $\vec{x_i}$ and $\sigma^2_i$ is applied to the last $t$ elements of $\vec{x_i}$. Hence, we get that the input is just of the form $\bigotimes_{i=1}^{s} \ketbra{\vec{c_i}}{\sigma^1_i(\vec{c_i})}\otimes\ketbra{\vec{d_i}}{\sigma^2_i(\vec{d_i})}$, with the construction being applied to the second half of each state. Notice that summing over all $\sigma^1_i,\sigma^2_i$, there are type states too. We know that the effect of the construction on the second half is very close to $\rho_{\unique}$ by \Cref{lem:tdis_info}. Hence, it gets unentangled from the first half and we get the desired result.

    \noindent Formally, we start by analysing $\rho$, then 
    $$\rho= \E_{(f,\pi)\leftarrow(\mathcal{F}_{n+m},S_{n+m})} \bigotimes_{i=1}^{s} \left(\left(I_{nt}\otimes G_{(f,\pi)^{\otimes t}}\right)\ketbra{\type_{T_i}}{\type_{T_i}}\left(I_{nt}\otimes G_{(f,\pi)^{\otimes t}}\right)^{\dagger}\right).$$
    Using \Cref{lem:type_struc}, 
    \begin{multline*}
        \rho = \E_{\substack{(f,\pi)\leftarrow(\mathcal{F}_{n+m},S_{n+m})\\ (\vec{x_1},\cdots,\vec{x_s})\in (T_1,\cdots,T_s)}}\sum_{\sigma_1,\cdots,\sigma_s\in S_{2t}} \left[\bigotimes_{i=1}^{s} \left(\left(I_{nt} \otimes \left(G_{(f,\pi)}\right)^{\otimes t}\right)\right.\right.\\
        \left.\left.\ketbra{\vec{x_i}}{\sigma_i(\vec{x_i})} \left(I_{nt}\otimes \left(G_{(f,\pi)}^{\dagger}\right)^{\otimes t}\right)\right)\right].
    \end{multline*}
    Writing each $\vec{x_i}$ (containing $2t$ elements) as a concatenation of $\vec{c_i}$ and $\vec{d_i}$ (where each contains $t$ elements). Note that, all elements of $\vec{x_i}$ are distinct, hence $\vec{c_i}$ and $\vec{d_i}$ also contain distinct elements. Notice that if any of the $\sigma_i$'s maps any of the first $t$ elements to the last $t$ elements, then by \Cref{lem:outer_comp_query}, we get $0$. Hence, we get that each $\sigma_i$ can be written as a combination $\sigma^1_i\in S_t$ and $\sigma^2_i\in S_t$ where $\sigma^1_i$ is applied to the first $t$ elements of $\vec{x_i}$ and $\sigma^2_i$ is applied to the last $t$ elements of $\vec{x_i}$. Hence, we get 
    \begin{multline*}
        \rho = \E_{\substack{(f,\pi)\leftarrow(\mathcal{F}_{n+m},S_{n+m})\\ ((\vec{c_1},\vec{d_1}),\cdots,(\vec{c_s},\vec{d_s}))\in (T_1,\cdots,T_s)}}
        \sum_{\substack{\sigma^1_1,\cdots,\sigma^1_s\in S_{t}\\ \sigma^2_1,\cdots,\sigma^2_s\in S_{t}}}
        \left[\bigotimes_{i=1}^{s} \left(\ketbra{\vec{c_i}}{\sigma^1_i(\vec{c_i})}\otimes \left(G_{(f,\pi)}\right)^{\otimes t}\ketbra{\vec{d_i}}{\sigma^2_i(\vec{d_i})} \left(G_{(f,\pi)}^{\dagger}\right)^{\otimes t}\right)\right].
    \end{multline*}
    Note that using \Cref{lem:type_struc}, we get 
    \begin{multline*}
        \rho = \E_{\substack{(f,\pi)\leftarrow(\mathcal{F}_{n+m},S_{n+m})\\ (\overline{T_1},\cdots,\overline{T_s})\subset (T_1,\cdots,T_s)\\ \forall i\in [s], \hamming(\overline{T}_i) = t\\ \forall i\in [s], \hat{T_i} = T_i\setminus \overline{T_i} }}\left[\bigotimes_{i=1}^{s} \left(\ketbra{\type_{\overline{T_i}}}{\type_{\overline{T_i}}}\bigotimes\left(G_{(f,\pi)}\right)^{\otimes t}\ketbra{\type_{\hat{T_i}}}{\type_{\hat{T_i}}} \left(G_{(f,\pi)}^{\dagger}\right)^{\otimes t}\right)\right],
    \end{multline*}
    where $\hat{T_i} = T_i\setminus \overline{T_i}$ denotes the type vector $\hat{T_i}$ such that $\setT(\hat{T_i})\cup\setT(\overline{T_i}) = \setT(T_i)$ and $\hamming(\hat{T_i})+\hamming(\overline{T_i})=\hamming(T_i)$. Notice that $(\hat{T_1},\cdots,\hat{T_s})\in\tdis{n}{s}{t}$, hence by \Cref{lem:tdis_info}, we get that $\rho$ is at a distance of $O(st^2/2^{m})$ from the following state, 
    $$\rho^{\sigma}_{\unique_{s,t}}= \E_{\substack{(\overline{T}_1,\cdots,\overline{T}_s)\subset (T_1,\cdots,T_s)\\ \forall i\in [s], \hamming(\overline{T}_i)=t\\
    (\hat{T_1},\cdots,\hat{T_1})\leftarrow\tuni{n+m}{s}{t}}}\left(\ketbra{\type_{\overline{T}_i}}{\type_{\overline{T}_i}}\otimes\ketbra{\type_{\hat{T}_i}}{\type_{\hat{T}_i}}\right).$$
    Hence, we get the desired result.
\end{proof}

\fi

\begin{remark}
\Cref{lem:tuni_info} implies that our construction is secure against specific adversaries with \emph{side-information}. That is, the adversary's registers and the registers acted on by the PRI are entangled in $\ket{\type_{T_i}}$.
\end{remark}

\begin{remark}
Re-ordering the registers of $\rho^{\sigma}_{\unique_{s,t}}$ in~\Cref{lem:tuni_info}, it can be written as $\sigma \otimes \rho_{\unique_{s,t}}$. That is, the state is \emph{unentangled}. Furthermore, the second register is fully scrambled to a uniform mixture of unique types and $\sigma$ is exactly the partial trace of $\rho$ (that you would get after tracing out the second register). We note that it is an analog of \emph{quantum one-time pad}~\cite{MTdW00}. In particular, for any $\rho_{AB}\in \cD(\C^2\otimes\C^2)$, it holds that $\Ex_{a,b}\left[ (I_{A} \otimes X^aZ^b)\rho_{AB}(I_{A} \otimes X^aZ^b)^\dagger\right] = \Tr_B(\rho_{AB})\otimes I_B/2$.
\end{remark}

\noindent Using~\Cref{lem:tuni_info}, we prove the following for Haar states.
\begin{lemma}
    \label{lem:haar_info}
    Let $n,m,s,t = \poly(\secparam)$. Let $G_{(f,\pi)}$ be as defined in \Cref{fig:info_prp_prs}. Let 
    $$\rho:= \E_{(f,\pi)\leftarrow(\mathcal{F}_{n+m},S_{n+m})} \bigotimes_{i=1}^{s} \left(\E_{\ket{\vartheta_i}\leftarrow \Haar_{n}} \ketbra{\vartheta_i}{\vartheta_i}^{\otimes t} \otimes \left(G_{(f,\pi)}\ketbra{\vartheta_i}{\vartheta_i} G_{(f,\pi)}^{\dagger}\right)^{\otimes t}\right),$$
    let 
    $$\sigma := \E_{\substack{(\overline{T}_1,\cdots,\overline{T}_s)\leftarrow \tuni{n}{s}{t}}} \bigotimes_{i=1}^{s} \left(\ketbra{\type_{\overline{T}_i}}{\type_{\overline{T}_i}}\right),$$ and let 
    $$\rho^\sigma_{\unique_{s,t}} := \E_{\substack{(T_1,\cdots,T_s)\leftarrow \tuni{n+m}{s}{t}\\ (\overline{T}_1,\cdots,\overline{T}_s)\leftarrow \tuni{n}{s}{t}}} \bigotimes_{i=1}^{s} \left( \left(\ketbra{\type_{\overline{T}_i}}{\type_{\overline{T}_i}}\right)\otimes \left(\ketbra{\type_{T_i}}{\type_{T_i}}\right) \right). $$ 
    Then $$\TD(\rho,\rho^\sigma_{\unique_{s,t}}) = O(s^2t^2/2^n+st^2/2^m).$$
\end{lemma}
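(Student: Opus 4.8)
\textbf{Proof plan for \Cref{lem:haar_info}.}

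The plan is to reduce the statement about i.i.d.\ Haar inputs to the statement about unique-type inputs already established in \Cref{lem:tuni_info}, and then bound the extra error incurred by this reduction using a collision bound. First I would rewrite the input side. By \Cref{fact:avg-haar-random}, for each $i$ we have $\E_{\ket{\vartheta_i}\gets\Haar_n}\ketbra{\vartheta_i}{\vartheta_i}^{\otimes 2t} = \E_{T_i}\ketbra{\type_{T_i}}{\type_{T_i}}$, where $T_i$ ranges uniformly over types in $[2t+1]^{2^n}$ with $\hamming(T_i)=2t$. Since the channel applied to $\rho$ is $\bigotimes_i(I_{nt}\otimes G_{(f,\pi)}^{\otimes t})$ acting on $\bigotimes_i\ketbra{\vartheta_i}{\vartheta_i}^{\otimes t}\otimes\ketbra{\vartheta_i}{\vartheta_i}^{\otimes t}$ (that is, it sees exactly $2t$ copies of each $\ket{\vartheta_i}$, keeping $t$ untouched and feeding $t$ into the PRI), I can substitute this identity to express $\rho$ as an expectation over i.i.d.\ sampled types $(T_1,\dots,T_s)$ of exactly the quantity $\rho$ appearing in \Cref{lem:tuni_info}, \emph{provided} $(T_1,\dots,T_s)\in\tuni{n}{s}{2t}$.

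Next I would apply a collision bound. By \Cref{fact:random_type_halves_collision} (with the roles of $\ell$ and $k$ both set to $n$, or more simply a birthday bound on choosing $2st$ elements from $[2^n]$), a uniformly random $s$-tuple of types with $2t$ elements each fails to lie in $\tuni{n}{s}{2t}$ only with probability $O(s^2t^2/2^n)$. Using joint convexity of trace distance, I can therefore replace the expectation over all i.i.d.\ types by the expectation over $\tuni{n}{s}{2t}$ at the cost of $O(s^2t^2/2^n)$ in trace distance. Now \Cref{lem:tuni_info} applies term-by-term: for every fixed $(T_1,\dots,T_s)\in\tuni{n}{s}{2t}$, the state $\rho$ (with these types) is $O(st^2/2^m)$-close to $\rho^{\sigma}_{\unique_{s,t}}$, whose first registers are distributed as $(\overline T_1,\dots,\overline T_s)$ obtained by choosing a size-$t$ sub-multiset of each $T_i$, and whose second registers are an independent draw from $\tuni{n+m}{s}{t}$. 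Averaging this bound over the (now restricted) distribution of $(T_1,\dots,T_s)$ and using the triangle inequality gives closeness $O(s^2t^2/2^n + st^2/2^m)$ to the averaged state.

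Finally I would identify the averaged state with $\rho^{\sigma}_{\unique_{s,t}}$ as defined in the lemma statement. The second registers are already exactly a fresh draw from $\tuni{n+m}{s}{t}$, independent of everything else, matching the definition. For the first registers I need: when $(T_1,\dots,T_s)$ is drawn uniformly from $\tuni{n}{s}{2t}$ and then each $\overline T_i$ is a uniformly random size-$t$ sub-multiset of $\setT(T_i)$, the resulting $(\overline T_1,\dots,\overline T_s)$ is distributed uniformly on $\tuni{n}{s}{t}$. This holds because $T_i\in\tuni{n}{s}{2t}$ means $\setT(T_i)$ consists of $2t$ distinct elements, mutually disjoint across $i$; a uniform size-$t$ subset of each is then a uniform choice of $t$ distinct elements per coordinate with all $st$ elements distinct, i.e.\ exactly the distribution $\tuni{n}{s}{t}$ — up to a negligible mismatch in the normalizing counts which is again absorbed into the $O(s^2t^2/2^n)$ collision error if one is careful, or can be checked to be exact by a counting argument. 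Putting the three error contributions together yields $\TD(\rho,\rho^{\sigma}_{\unique_{s,t}}) = O(s^2t^2/2^n + st^2/2^m)$.

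\textbf{Main obstacle.} The genuinely delicate point is the structural input to \Cref{lem:tuni_info}, namely that cross terms $\sigma_i$ mixing the ``kept'' $t$-block and the ``PRI'' $t$-block of each $\ket{\vartheta_i}^{\otimes 2t}$ vanish (via \Cref{lem:outer_comp_query}), so that the kept register cleanly factors out and the PRI register is independently scrambled to $\rho_{\unique_{s,t}}$; but that work is already done inside \Cref{lem:tuni_info}, so for the present lemma the only real care needed is the bookkeeping that the sub-multiset-sampling distribution on $(\overline T_1,\dots,\overline T_s)$ coincides with uniform on $\tuni{n}{s}{t}$, and that the two collision errors (one on $[2^n]$ for the input types, one on $[2^m]$ from \Cref{lem:tdis_info}) combine as claimed rather than compounding.
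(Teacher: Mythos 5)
Your proposal matches the paper's proof in both structure and substance: the paper runs the identical three-step hybrid — (i) expand the Haar expectation via \Cref{fact:avg-haar-random} and replace the uniform type distribution with a uniform draw from $\tuni{n}{s}{2t}$ (cost $O(s^2t^2/2^n)$, the paper invokes its \Cref{lem:tuni_haar_dis}, which is the same collision bound you use), (ii) apply \Cref{lem:tuni_info} term-by-term for $O(st^2/2^m)$, and (iii) observe that a uniformly random size-$t$ sub-type of each $T_i$, for $(T_1,\dots,T_s)\gets\tuni{n}{s}{2t}$, is itself uniformly distributed on $\tuni{n}{s}{t}$. You actually spell out step (iii) more carefully than the paper, which simply states the equivalence without argument; that added care is an improvement, and the combinatorial check you sketch is exact, not merely up to another collision error.
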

\ifllncs
\noindent The proof of~\Cref{lem:haar_info} can be found in~\Cref{app:closeness_to_rho_uni}.
\else

\begin{proof}
    Note that we have $2t$ copies of $s$ i.i.d. sampled Haar states. Then by \Cref{fact:avg-haar-random}, this can be seen as i.i.d. sampling $T_1,...,T_s$ each over $2t$, elements. With very high probability, these $2st$ elements do not have any collisions, hence with very high probability $(T_1,...,T_s)$ is in $\mathcal{T}_{\unique}$. The security on each type in $\mathcal{T}_{\unique}$ is shown by \Cref{lem:tuni_info}.
    
    \noindent Formally, we prove this using the hybrid method. 
    \paragraph{Hybrid $1$.} Sample a random function $f$ from $\mathcal{F}_{n+m}$ and a random permutation $\pi$ from $S_{n+m}$. Sample $2t$ copies of $s$ Haar random $n$-qubit states, $\ket{\vartheta_1},\cdots,\ket{\vartheta_s}$. Output $$\bigotimes_{i=1}^{s} \left(\ketbra{\vartheta_i}{\vartheta_i}^{\otimes t} \otimes \left(G_{(f,\pi)}\ketbra{\vartheta_i}{\vartheta_i} G_{(f,\pi)}^{\dagger}\right)^{\otimes t}\right).$$
    \paragraph{Hybrid $2$.} Sample a random function $f$ from $\mathcal{F}_{n+m}$ and a random permutation $\pi$ from $S_{n+m}$. Sample $(T_1,\ldots,T_s)$ uniformly at random from $\tuni{n}{s}{2t}$. Output $$\bigotimes_{i=1}^{q} \left(\left(I_{nt} \otimes \left(G_{(f,\pi)}\right)^{\otimes t}\right)\ketbra{\type_{T_i}}{\type_{T_i}} \left(I_{nt}\otimes \left(G_{(f,\pi)}^{\dagger}\right)^{\otimes t}\right)\right).$$
    \paragraph{Hybrid $3$.} Sample $(T_1,\ldots,T_s)$ uniformly at random from $\tuni{n}{s}{t}$ and sample $(\overline{T}_1,\ldots,\overline{T}_s)$ uniformly at random from $\tuni{n+m}{s}{t}$. Output 
    $$\bigotimes_{i=1}^{s} \left(\ketbra{\type_{T_i}}{\type_{T_i}}\otimes\ketbra{\type_{\overline{T}_i}}{\type_{\overline{T}_i}}\right).$$
    \begin{lemma}
    \label{lem:haar_input_1}
        The trace distance between Hybrid $1$ and Hybrid $2$ is $O(s^2t^2/2^n)$.
    \end{lemma}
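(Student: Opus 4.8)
\textbf{Proof proposal for \Cref{lem:haar_input_1}.}

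The plan is to reduce the statement to \Cref{lem:haar_perp_to_iid}, which already controls how an $s$-fold Haar unitary (equivalently, an $s$-fold Haar isometry) acts on tensor products of orthogonal computational-basis inputs, with polynomially many copies of each. First I would rewrite the output of Hybrid~1 using \Cref{fact:avg-haar-random}: for each $i$, sampling $\ket{\vartheta_i}\gets\Haar_n$ and taking $2t$ copies is the same as sampling a type vector $T_i$ uniformly over $[2t+1]^{2^n}$ with $\hamming(T_i)=2t$ and outputting $\ketbra{\type_{T_i}}{\type_{T_i}}$. Since the $\ket{\vartheta_i}$ are i.i.d., this gives $s$ independently sampled type vectors over $2t$ elements each. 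By \Cref{fact:random_type_halves_collision} (applied with the trivial splitting, or just a union bound over the $\binom{2st}{2}$ pairs of elements drawn among the $s$ types), the probability that the collection $(T_1,\dots,T_s)$ fails to lie in $\tuni{n}{s}{2t}$ — i.e. that there is any repeated element within or across the $T_i$ — is $O(s^2t^2/2^n)$. Hence, up to trace distance $O(s^2t^2/2^n)$, Hybrid~1 is equivalent to first sampling $(T_1,\dots,T_s)\gets\tuni{n}{s}{2t}$ uniformly, and then outputting
$$ \bigotimes_{i=1}^s\Bigl( I_{nt}\otimes G_{(f,\pi)}^{\otimes t}\Bigr)\ketbra{\type_{T_i}}{\type_{T_i}}\Bigl( I_{nt}\otimes \bigl(G_{(f,\pi)}^\dagger\bigr)^{\otimes t}\Bigr), $$
where the first $t$ copies of each state are left untouched and the last $t$ copies are fed into $G_{(f,\pi)}$.

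But this is exactly the distribution defining $\rho$ in \Cref{lem:tuni_info} (after averaging over the uniform choice of $(T_1,\dots,T_s)\in\tuni{n}{s}{2t}$), and that is precisely Hybrid~2. So the remaining step is purely a collision bound, and the triangle inequality gives that the trace distance between Hybrid~1 and Hybrid~2 is $O(s^2t^2/2^n)$. Concretely I would (i) invoke \Cref{fact:avg-haar-random} to pass from Haar states to i.i.d.\ type vectors; (ii) bound by $O(s^2t^2/2^n)$ the statistical distance between the i.i.d.\ type distribution and the uniform distribution over $\tuni{n}{s}{2t}$, using that a uniformly random element of $[2st]$-many draws from $[2^n]$ has a repeat with probability $O(s^2t^2/2^n)$; (iii) observe that conditioned on landing in $\tuni{n}{s}{2t}$ the two experiments are identical as mixed states (this uses only that trace distance is non-increasing under the channel $\rho\mapsto(I_{nt}\otimes G_{(f,\pi)}^{\otimes t})^{\otimes s}\rho(\cdots)^\dagger$ averaged over $(f,\pi)$, so we may move the channel outside the mixture); and (iv) apply the triangle inequality.

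The only mild subtlety — and the place to be careful rather than the place that is genuinely hard — is bookkeeping the register ordering: in Hybrid~1 the state is written as $s$ blocks, each block being $t$ untouched copies tensored with $t$ processed copies, whereas \Cref{lem:tuni_info} groups all $2t$ copies of each $\ket{\vartheta_i}$ into the single type state $\ket{\type_{T_i}}$ and applies $I_{nt}\otimes G_{(f,\pi)}^{\otimes t}$ to it. These agree because $\ketbra{\vartheta_i}{\vartheta_i}^{\otimes 2t}$ averaged over $\ket{\vartheta_i}$ equals $\E_{T_i}\ketbra{\type_{T_i}}{\type_{T_i}}$ with $\hamming(T_i)=2t$ — this is again \Cref{fact:avg-haar-random} — so the ``first $t$, last $t$'' split of the copies is merely a choice of basis-labelling inside the symmetric subspace and does not change the state. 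I expect no real obstacle here; the lemma is essentially a restatement-plus-collision-bound, and the actual content was already done in \Cref{lem:haar_perp_to_iid} and \Cref{lem:tuni_info}.
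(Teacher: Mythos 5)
Your proposal is correct, and it is essentially the route the paper takes: the paper's proof is a one-line citation of \Cref{lem:tuni_haar_dis} (instantiated with dimension $2^n$ in place of $2^{n+m}$ and $2t$ in place of $t$), followed implicitly by monotonicity of trace distance under the common channel $\rho\mapsto \E_{(f,\pi)}[(I_{nt}\otimes G_{(f,\pi)}^{\otimes t})^{\otimes s}\rho(\cdots)^\dagger]$. What you do is re-derive that lemma inline rather than cite it: after passing to type mixtures via \Cref{fact:avg-haar-random}, you use the observation that the uniform distribution on $\tuni{n}{s}{2t}$ is exactly the i.i.d.\ type distribution conditioned on uniqueness, so the total variation distance equals the probability of a collision, which you bound directly. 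This gives the same bound in one step where \Cref{lem:tuni_haar_dis}'s proof goes through a hybrid over $i\in\{1,\dots,s\}$; it is a perfectly valid shortcut and arguably cleaner. One small correction: \Cref{fact:random_type_halves_collision} is not really the right tool to invoke — it controls collisions between the prefix and suffix halves of elements of a \emph{single} random type, not repeats within or across $s$ independently sampled types — but the direct counting/union argument you give as an alternative does yield the claimed $O(s^2t^2/2^n)$ bound, so the proof goes through. Your register-ordering worry is also moot here: both hybrids group the $2t$ copies of each $\ket{\vartheta_i}$ into one block and apply $I_{nt}\otimes G_{(f,\pi)}^{\otimes t}$ to that block, so no reshuffling is needed.
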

    \begin{proof}
        This just follows from \Cref{lem:tuni_haar_dis}.
    \end{proof}
    
    \begin{lemma}
        \label{lem:haar_input_2}
        The trace distance between the outputs of Hybrid $2$ and Hybrid $3$ is $O(st^2/2^{m})$.
    \end{lemma}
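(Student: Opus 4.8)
The plan is to reduce the claim directly to~\Cref{lem:tuni_info}, combined with an elementary combinatorial observation about subsampling types. Observe first that the output of Hybrid~$2$ is exactly $\E_{(T_1,\dots,T_s)\leftarrow\tuni{n}{s}{2t}}[\rho(T_1,\dots,T_s)]$, where $\rho(T_1,\dots,T_s)$ denotes the state $\rho$ appearing in~\Cref{lem:tuni_info} for that fixed tuple of types. By~\Cref{lem:tuni_info}, for every $(T_1,\dots,T_s)\in\tuni{n}{s}{2t}$ we have $\TD(\rho(T_1,\dots,T_s),\rho^{\sigma}_{\unique_{s,t}}(T_1,\dots,T_s))=O(st^2/2^m)$, where $\rho^{\sigma}_{\unique_{s,t}}(T_1,\dots,T_s)$ is the corresponding state from that lemma. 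Hence, by convexity of the trace distance, Hybrid~$2$ is within $O(st^2/2^m)$ of $\E_{(T_1,\dots,T_s)\leftarrow\tuni{n}{s}{2t}}[\rho^{\sigma}_{\unique_{s,t}}(T_1,\dots,T_s)]$, and it only remains to identify this averaged state with the output of Hybrid~$3$.

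For the identification, I would argue that the marginal distribution over $(\overline{T}_1,\dots,\overline{T}_s)$ induced by first drawing $(T_1,\dots,T_s)\leftarrow\tuni{n}{s}{2t}$ and then, independently for each $i$, drawing $\overline{T}_i$ as a uniformly random sub-type of $T_i$ with $\hamming(\overline{T}_i)=t$ (equivalently, since $T_i$ is multiplicity-free, a uniform $t$-element subset of the $2t$-element set $\setT(T_i)$) is exactly the uniform distribution on $\tuni{n}{s}{t}$. This follows by the symmetry of $[2^n]$: for any fixed target $(\overline{T}_1^\ast,\dots,\overline{T}_s^\ast)\in\tuni{n}{s}{t}$, the number of tuples $(T_1,\dots,T_s)\in\tuni{n}{s}{2t}$ with $\setT(T_i)\supseteq\setT(\overline{T}_i^\ast)$ for all $i$ — obtained by adjoining $t$ further elements to each block, all pairwise distinct and disjoint from the target — depends only on $2^n,s,t$ and not on the particular target, and each such tuple maps to the target with the same probability $\binom{2t}{t}^{-s}$. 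Substituting this marginal into the definition of $\rho^{\sigma}_{\unique_{s,t}}$ from~\Cref{lem:tuni_info} (whose register for the $(\hat{T}_1,\dots,\hat{T}_s)\leftarrow\tuni{n+m}{s}{t}$ part is independent and unaffected) and relabelling dummy variables, $\E_{(T_1,\dots,T_s)\leftarrow\tuni{n}{s}{2t}}[\rho^{\sigma}_{\unique_{s,t}}(T_1,\dots,T_s)]$ becomes $\E_{(T_1,\dots,T_s)\leftarrow\tuni{n}{s}{t},\,(\overline{T}_1,\dots,\overline{T}_s)\leftarrow\tuni{n+m}{s}{t}}\bigotimes_{i=1}^{s}\big(\ketbra{\type_{T_i}}{\type_{T_i}}\otimes\ketbra{\type_{\overline{T}_i}}{\type_{\overline{T}_i}}\big)$, which is precisely the output of Hybrid~$3$. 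The triangle inequality then gives $\TD(\text{Hybrid }2,\text{Hybrid }3)=O(st^2/2^m)$.

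The only substantive ingredient is~\Cref{lem:tuni_info}, which is already proved; the subsampling-marginal computation and the convexity bound are routine, so I do not anticipate a genuine obstacle. The single point that warrants care is matching conventions: one must verify that a ``$\hamming(\overline{T}_i)=t$ sub-type of a unique (multiplicity-free) $2t$-type'' is the same object as a uniform $t$-subset, and that the register ordering in $\rho^{\sigma}_{\unique_{s,t}}$ (sub-type register first, scrambled register second) agrees with that in Hybrid~$3$ after the relabelling.
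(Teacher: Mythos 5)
Your proposal is correct and follows essentially the same route as the paper's proof: express Hybrid~2 as an expectation over $\tuni{n}{s}{2t}$, apply \Cref{lem:tuni_info} termwise with convexity of trace distance, and identify the resulting averaged state with Hybrid~3 via the observation that the subsampling marginal on $(\overline{T}_1,\dots,\overline{T}_s)$ is uniform on $\tuni{n}{s}{t}$. The paper states this last marginalization step in one line ("we could equivalently pick $(\overline{T}_1,\dots,\overline{T}_s)$ directly from $\tuni{n}{s}{t}$"), whereas you spell out the symmetry argument; that is a harmless elaboration, not a different approach.
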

    \begin{proof}
        Let $\rho$ be the output of Hybrid $2$. Hence, 
        $$\rho = \E_{\substack{(f,\pi)\leftarrow(\mathcal{F}_{n+m},S_{n+m})\\ (T_1,\ldots,T_s)\leftarrow\tuni{n}{s}{2t}}}\bigotimes_{i=1}^{s} \left(\left(I_{nt} \otimes \left(G_{(f,\pi)}\right)^{\otimes t}\right)\ketbra{\type_{T_i}}{\type_{T_i}} \left(I_{nt}\otimes \left(G_{(f,\pi)}^{\dagger}\right)^{\otimes t}\right)\right).$$
        Using \Cref{lem:tuni_info}, we get that the following state is at trace distance $O(st^2/2^m)$, 
        $$\sigma = \E_{(T_1,\cdots,T_s)\leftarrow\tuni{n}{s}{2t}}\E_{\substack{(\overline{T}_1,\cdots,\overline{T}_s)\subset (T_1,\cdots,T_s)\\ \forall i\in [s], \hamming(\overline{T}_i)=t\\
        (\hat{T_1},\cdots,\hat{T_1})\leftarrow\tuni{n+m}{s}{t}}}\left(\ketbra{\type_{\overline{T}_i}}{\type_{\overline{T}_i}}\otimes\ketbra{\type_{\hat{T}_i}}{\type_{\hat{T}_i}}\right).$$
        Notice that in the state, we could equivalently pick $(\overline{T}_1,\cdots,\overline{T}_s)$ directly from $\tuni{n}{s}{t}$. Hence, 
        $$\sigma = \E_{\substack{(\overline{T}_1,\cdots,\overline{T}_s)\leftarrow\tuni{n}{s}{t}\\
        (\hat{T_1},\cdots,\hat{T_1})\leftarrow\tuni{n+m}{s}{t}}}\left(\ketbra{\type_{\overline{T}_i}}{\type_{\overline{T}_i}}\otimes\ketbra{\type_{\hat{T}_i}}{\type_{\hat{T}_i}}\right).$$
        The above is exactly the output of Hybrid 3. Hence, the trace distance between the outputs of Hybrid $2$ and Hybrid $3$ is $O(s^2t^2/2^{m})$.
    \end{proof}
    \noindent Hence, combining, we get $\TD(\rho,\rho^\sigma_{\unique_{s,t}}) = O(s^2t^2/2^n+st^2/2^m).$
\end{proof}
\fi

\noindent Hence, combining \Cref{lem:haar_info}, \Cref{lem:tuni_invar_aux} and \Cref{clm:observation:almostinvariance}, we get the desired result. 

\subsection{Main Results}
\label{sec:summary}

Combining \Cref{lem:tdis_proof,lem:single_proof,lem:haar_proof}, our construction is secure against inputs of the form: (1) distinct type state, (2) multiple copies of the same pure state, (3) i.i.d. Haar states. We state the formal theorem below:




\begin{theorem}[Main Theorem] \label{thm:main}
    Let $n,m,s,t,\ell,q = \poly(\secparam)$. Let $\qclass^{(\distinct)}_{{n,t,s,\ell,\secparam}}$, $\qclass_{n,q,\ell,\secparam}^{({\sf Haar})}$ and $\qclass^{(\single)}_{n,q,\ell,\secparam}$ be as defined in~\Cref{sec:distinct_type,sec:defs}. then, assuming the existence of post-quantum one-way functions, the construction of $\pri$ given in \Cref{fig:prp_prs} is $\qclass$-secure for $\qclass \in \biggl\{\qclass^{(\distinct)}_{{n,t,s,\ell,\secparam}}, \allowbreak \qclass_{n,q,\ell,\secparam}^{({\sf Haar})}, \allowbreak \qclass^{(\single)}_{n,q,\ell,\secparam}\biggl\}$.
\end{theorem}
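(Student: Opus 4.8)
The plan is to assemble the main theorem from the three already-established instantiation lemmas together with the general almost-invariance machinery, following the top-down strategy outlined in the technical overview. Concretely, the proof of \Cref{thm:main} is essentially bookkeeping on top of four ingredients proven earlier: (1) \Cref{lem:comp_to_info}, which lets us replace the efficient construction $\prfsi_\secparam$ with the information-theoretic $G_{(f,\pi)}$ up to a negligible computational loss, invoking the security of the post-quantum $\qprf$ and $\qprp$; (2) \Cref{clm:almostinvariance:closeness}, which says that if $\Phi(\rho)$ is negligibly almost-invariant under $q$-fold Haar unitary (for $\Phi$ a mixed-unitary channel — and $G_{(f,\pi)}$, after appending zeros, is exactly such a channel), then $\Phi(\rho)$ is negligibly close to the $q$-fold Haar-isometry output on $\rho$; (3) \Cref{lem:tuni_invar} and \Cref{lem:tuni_invar_aux}, which state that $\rho_{\unique_{s,t}}$ (and $\sigma \otimes \rho_{\unique_{s,t}}$) is $O(s^2t^2/2^{m+n})$-almost-invariant under $q$-fold Haar unitary; and (4) the three closeness lemmas \Cref{lem:tdis_info}, \Cref{lem:multicopy_info}, \Cref{lem:tuni_info}/\Cref{lem:haar_info}, which show that the output of $q$-fold $G_{(f,\pi)}$ on a query from each respective class is negligibly close (in trace distance) to $\rho_{\unique_{s,t}}$ (up to tensoring with the untouched side register $\sigma$).

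First I would fix a query class $\qclass \in \{\qclass^{(\distinct)}_{n,t,s,\ell,\secparam}, \qclass^{({\sf Haar})}_{n,q,\ell,\secparam}, \qclass^{(\single)}_{n,q,\ell,\secparam}\}$ and an arbitrary $\rho \in \qclass$, noting (as remarked just before \Cref{sec:distinct_type}) that it suffices to treat $\rho = \rho_1 \otimes \rho_2$ with $\rho_1$ the untouched $\ell'$-qubit part and $\rho_2$ lying in the relevant structured subspace. Define $\rho_G := \E_{(f,\pi)}[(I_\ell \otimes G_{(f,\pi)}^{\otimes q})(\rho)]$. The chain of reasoning is then: by the appropriate closeness lemma, $\TD(\rho_G, \rho_1 \otimes \rho_{\unique_{s,t}}) \le \negl(\secparam)$ (with $s=1, q=t$ in the single-copy case, and $s,t$ as in the Haar case); by \Cref{lem:tuni_invar}/\Cref{lem:tuni_invar_aux}, $\rho_1 \otimes \rho_{\unique_{s,t}}$ is $\negl(\secparam)$-almost-invariant under $q$-fold Haar unitary; hence by \Cref{clm:observation:almostinvariance}, $\rho_G$ is $\negl(\secparam)$-almost-invariant as well; hence by \Cref{clm:almostinvariance:closeness} applied with $\Phi(\cdot) = \E_{(f,\pi)}[(I_\ell \otimes G_{(f,\pi)}^{\otimes q})(\cdot)(I_\ell \otimes (G_{(f,\pi)}^\dagger)^{\otimes q})]$ and $\mu = n+m$, we get $\TD(\rho_G, \E_{U}[(I_\ell \otimes U^{\otimes q})(\rho)(I_\ell \otimes (U^\dagger)^{\otimes q})]) \le \negl(\secparam)$, where the right-hand side is precisely the $q$-fold Haar-isometry channel on $\rho$ since a Haar isometry is zero-padding followed by a Haar unitary and the zero-padding commutes out of the definition of $G$. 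Finally, \Cref{lem:comp_to_info} gives that the true construction $\prfsi_\secparam$ is computationally indistinguishable from $G_{(f,\pi)}$; composing with the statistical bound and a triangle inequality over the adversary's distinguishing advantage yields that no QPT $\adversary$ can distinguish $(I_\ell \otimes F_k^{\otimes q})(\rho)$ from $(I_\ell \otimes \haarisometry^{\otimes q})(\rho)(\cdot)^\dagger$ with non-negligible advantage, which is exactly \Cref{def:Q-pri}.

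One subtlety to handle carefully in the write-up is that \Cref{def:Q-pri} quantifies over polynomials $q, \ell$ while the Haar class is indexed by $s, t, \ell'$ with $q = st$ and $\ell = \ell' + st$; I would simply unfold the union defining $\qclass^{({\sf Haar})}_{n,q,\ell,\secparam}$ and observe that each $\rho$ in it lies in some $\qclass^{({\sf Haar})}_{n,s,t,\ell',\secparam}$, so the per-$\rho$ argument above applies verbatim with the corresponding $s,t$. Similarly in the distinct-type case the class is a union over all factorizations $q = st$, and in each summand $\qclass^{(\distinct)}_{n,t,s,\ell,\secparam}$ the relevant closeness lemma is \Cref{lem:tdis_info} (which already allows internal repetitions in each $T_i$, as needed for the corollary on computational-basis states). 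I do not anticipate a genuine mathematical obstacle here — all the real work lives in the lemmas — so the main care is purely organizational: making sure the order of the three reductions (computational $\to$ information-theoretic, closeness to $\rho_{\unique}$, almost-invariance $\to$ Haar) is stated cleanly, that the negligible error terms ($O(s^2t^2/2^{m+n})$ from invariance, $O(st^2/2^m)$ or $O(s^2t^2/2^n + st^2/2^m)$ from closeness, plus the computational term) all collapse to $\negl(\secparam)$ using $n, m > \secparam$, and that the side-register $I_\ell$ is threaded consistently through every step.
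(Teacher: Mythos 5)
Your proposal is correct and follows essentially the same route as the paper: the paper's proof of \Cref{thm:main} is itself just the combination of \Cref{lem:tdis_proof}, \Cref{lem:single_proof}, and \Cref{lem:haar_proof}, each of which is established exactly by the chain you describe — \Cref{lem:comp_to_info} to pass to $G_{(f,\pi)}$, the relevant closeness lemma (\Cref{lem:tdis_info}/\Cref{lem:multicopy_info}/\Cref{lem:haar_info}) to get within negligible trace distance of $\rho_{\unique_{s,t}}$ (or $\sigma\otimes\rho_{\unique_{s,t}}$), then \Cref{lem:tuni_invar}/\Cref{lem:tuni_invar_aux}, \Cref{clm:observation:almostinvariance}, and \Cref{clm:almostinvariance:closeness}. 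Your write-up is in fact more explicit than the paper's one-line statement, and you correctly handle the union over factorizations $q=st$ in the Haar and distinct classes and the $s=1,t=q$ specialization in the single-copy case.
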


\noindent Although we are not able to prove stronger security of our construction, we observe that our construction naturally mimics the steps of sampling a Haar isometry by truncating columns of a Haar unitary. We have the following conjecture.
\begin{conjecture}
Assuming the existence of post-quantum one-way functions, the construction of $\pri$ given in~\Cref{fig:prp_prs} is a strong invertible adaptive PRI (\Cref{def:strong_inv_PRI})
\end{conjecture}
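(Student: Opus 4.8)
The plan is to reduce the adaptive, inversion-oracle-enabled security of $\pri$ to the information-theoretic properties of the idealized construction $G_{(f,\pi)}$, exploiting the explicit column-by-column sampling picture of Haar unitaries recorded in \Cref{fact:sampling_Haar_unitary,fact:sampling_Haar_isometry,fact:inverse_of_Haar}. First I would use \Cref{lem:comp_to_info} (suitably extended to also cover the inverse oracle, which is straightforward since inverting $F_{k}$ only manipulates the phase oracle of $f_{k_1}$ and uses $g^{-1}(k_2,\cdot)$) to replace $f_{k_1},g_{k_2}$ by a truly random function $f$ and a truly random permutation $\pi$; here one must be careful that the adversary makes $t=\poly(\secparam)$ queries across both the forward and inverse oracles, so a $2t$-wise independent hash suffices via \Cref{thm:zha12}. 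This leaves the purely information-theoretic claim: the pair of channels $(G_{(f,\pi)},G_{(f,\pi)}^{-1})$, averaged over $(f,\pi)$, is statistically indistinguishable from $(\haarisometry,\haarisometry^{-1})$ against any $t$-query adversary.

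The core of the argument is a \emph{lazy-sampling / hybrid-over-queries} strategy mirroring the ``behavior on orthogonal inputs'' observation in the technical overview, now adapted to adaptive queries rather than a single parallel query. The key structural fact about $G_{(f,\pi)}$ is that its first step appends a uniform superposition over an $m$-qubit register; after a random permutation this makes the images of any $\poly(\secparam)$ distinct computational-basis inputs land on $(n+m)$-bit strings that are pairwise distinct except with negligible probability (a collision bound, as in \Cref{fact:random_type_halves_collision} and the analyses in \Cref{sec:distinct_type}). I would define hybrid oracles $\widetilde O_j$ that answer the first $j$ ``fresh'' queries (in the expansion into the type/computational basis) using a genuine random isometry and the remaining ones consistently-but-ad-hoc, and likewise handle inverse queries by a consistent-table lookup, showing $\widetilde O_j \approx \widetilde O_{j+1}$ by a dimension-counting argument on the symmetric subspaces the outputs are drawn from — exactly the technique used for \Cref{lem:haar_perp_to_iid} and in the PRS-length-extension proof sketch. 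The algebraic structure of $\mathbb{Z}_p$ (with $p$ super-polynomial) guarantees that all ``cross terms'' relating inputs of different type vanish under $\E_f$, so the only surviving contributions are the ones that the isometry genuinely scrambles; the inverse channel $G_{(f,\pi)}^{-1}$, which traces out the $m$-qubit register after undoing $\pi$ and the phase, then behaves on these surviving terms exactly like $\haarisometry^{-1}$ by \Cref{fact:inverse_of_Haar}, because conditioned on the already-sampled columns both are uniform on the orthogonal complement.

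The main obstacle I anticipate is \textbf{adaptivity combined with the inverse oracle}: an adversary can feed the output of a forward query (or a maliciously chosen state in the range/orthogonal-complement of the isometry) into the inverse oracle, so one cannot assume queries are classical or even that they lie in ``fresh'' subspaces. In particular, the inverse of a \emph{general} isometry is only a channel (Stinespring dilation, \Cref{def:inverse_isometry}), and its action on states orthogonal to the range is unconstrained; one must argue that an adaptive adversary cannot steer a non-negligible amount of amplitude into that ``dangerous'' subspace without essentially guessing a fresh Haar-random direction, which has negligible overlap by \Cref{fact:average_inner_product}. Making this rigorous likely requires a compressed-oracle–style bookkeeping for the random function $f$ and the random permutation $\pi$ together with the lazily-sampled isometry columns, and controlling the accumulated error over $t$ adaptive rounds — this is precisely the gap that prevents us from proving the conjecture here, and why we state it as a conjecture rather than a theorem.
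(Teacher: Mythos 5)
You correctly recognized that this statement is offered in the paper only as a conjecture (\Cref{thm:main} covers only the three structured query classes under a single selective batch of queries), and the paper contains no proof for you to compare against. Your sketch is a sensible roadmap and, more importantly, your honest identification of the obstruction is the right one: the paper's machinery (almost-invariance under $q$-fold Haar via closeness to $\rho_\unique$, followed by \Cref{clm:almostinvariance:closeness}) crucially treats the entire query as a single state $\rho$ presented once, whereas \Cref{def:strong_inv_PRI} allows the adversary to interleave adaptive forward and inverse queries, and the inverse of an isometry is only a channel (\Cref{def:inverse_isometry}) whose behavior off the range is averaged over a conditional Haar completion rather than fixed. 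Your observation that a compressed-oracle-style bookkeeping for $f$, $\pi$, and the lazily-sampled isometry columns would be needed to control the error across $t$ adaptive rounds is a reasonable diagnosis of why the conjecture remains open; it is also consistent with the paper's remark that the construction ``naturally mimics the steps of sampling a Haar isometry by truncating columns of a Haar unitary,'' which is precisely the lazy column-sampling picture of \Cref{fact:sampling_Haar_unitary,fact:inverse_of_Haar} that you invoke. In short: no gap on your part, because you did not claim to close the gap — you correctly located it.
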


\section{Applications}
\label{sec:applications}
We explore the cryptographic applications of pseudorandom isometries. Notably, some applications in this section only require invertible $\qclass$-secure (\Cref{def:invertible}), for classes of ${\cal Q}$ which can be initiated by post-quantum one-way functions, as we showed in~\Cref{sec:construction}.

\ifllncs
    In~\Cref{sec:qmacs}, we present quantum message authentication codes. In~\Cref{sec:prs_extension}, we present length extension theorems. Multi-copy secure encryption schemes and succinct quantum commitments are deferred to~\Cref{sec:multicopy_enc,sec:commitments}.
\else
    In~\Cref{sec:PRIimplyPRG_PRFSG}, we show that PRIs imply other quantum pseudorandom primitives. In~\Cref{sec:multicopy_enc}, we present multi-copy secure encryption schemes. In~\Cref{sec:commitments}, we present succinct quantum commitments. In~\Cref{sec:qmacs}, we present message authentication codes for quantum data. In~\Cref{sec:prs_extension}, we present length extension transformations for pseudorandom state generators.
\fi

\subsection{PRI implies PRSG and PRFSG}
\label{sec:PRIimplyPRG_PRFSG}

\begin{theorem}[PRI implies PRSG and PRFSG] \label{thm:PRIimplyPRFS}
Assuming $(n,n+m)$-$\qclass_{\cmp}$-pseudorandom isometries exist, there exist an $(n+m)$-$\mathrm{PRSG}$ and a selectively-secure $(n,n+m)$-$\mathrm{PRFSG}$.
\end{theorem}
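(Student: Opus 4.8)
\textbf{Proof plan for~\Cref{thm:PRIimplyPRFS}.}
The plan is to construct the PRSG and the PRFSG directly from the $\qclass_{\cmp}$-secure PRI by feeding it fixed computational basis states. For the PRSG, set $\mathrm{PRSG}_\secparam(k) := F_\secparam(k, \ket{0^n})$, where $F$ is the pseudorandom isometry mapping $n$ qubits to $n+m$ qubits. For the PRFSG with input length $n$, set $\mathrm{PRFSG}_\secparam(k, x) := F_\secparam(k, \ket{x})$ for each $x \in \bit^n$; note the output length is $n+m$ and the input length is $n$, so it is an $(n, n+m)$-PRFSG as claimed. Both are QPT since $F$ is QPT.

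For the pseudorandomness of the PRSG, fix a polynomial $t(\cdot)$ and a QPT distinguisher $A$. The input $\ket{0^n}^{\otimes t}$ to the PRI lies in $\qclass^{(\comp)}_{n,t,0,\secparam}$ (taking $\ell = 0$, $q = t$, and all query strings equal to $0^n$). By $\qclass_{\cmp}$-security of $F$, the state $F_k^{\otimes t}(\ket{0^n}^{\otimes t})$ is computationally indistinguishable from $\haarisometry^{\otimes t}(\ket{0^n}^{\otimes t})$ for a Haar isometry $\haarisometry$. The latter equals $(U \ket{0^n}\ket{\hat 0})^{\otimes t}$ for a Haar unitary $U$ on $n+m$ qubits, which is exactly $\ket{\vartheta}^{\otimes t}$ for $\ket{\vartheta} \leftarrow \Haar_{n+m}$ by definition of the Haar isometry and of $\Haar_{n+m}$. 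Hence the output of $\mathrm{PRSG}_\secparam(k)^{\otimes t}$ is computationally indistinguishable from $\ket{\vartheta}^{\otimes t}$, which is precisely the PRSG pseudorandomness condition.

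For the PRFSG, fix polynomials $s(\cdot), t(\cdot)$, a QPT distinguisher $A$, and a family of distinct indices $x_1, \dots, x_s \in \bit^n$. The relevant input to the PRI is $\bigotimes_{i=1}^s \ketbra{x_i}{x_i}^{\otimes t}$, which (after re-ordering the $q = st$ registers) is a computational basis query, hence lies in $\qclass^{(\comp)}_{n,st,0,\secparam}$. By $\qclass_{\cmp}$-security, applying $F_k^{\otimes st}$ yields a state indistinguishable from applying $\haarisometry^{\otimes st}$. Since the $x_i$ are pairwise distinct, the states $\haarisometry\ket{x_i} = U\ket{x_i}\ket{\hat 0}$ are pairwise orthogonal, and by the unitary invariance of the Haar measure one can pass from the joint Haar-isometry image on orthogonal inputs to $s$ i.i.d.\ Haar states $\ket{\vartheta_1}, \dots, \ket{\vartheta_s} \leftarrow \Haar_{n+m}$ with $t$ copies each; this is exactly the content of~\Cref{lem:haar_perp_to_iid} (with a negligible $O(s^2 t / 2^{n+m})$ loss, since $n+m = \poly(\secparam)$ is superpolynomially small only if $n+m \geq \secparam$, which holds by our parameter choice $n, m > \secparam$). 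Composing these two indistinguishability steps shows that $A$ cannot distinguish $(x_1, \dots, x_s, F_k(x_1)^{\otimes t}, \dots, F_k(x_s)^{\otimes t})$ from $(x_1, \dots, x_s, \ket{\vartheta_1}^{\otimes t}, \dots, \ket{\vartheta_s}^{\otimes t})$, which is the selective PRFSG security condition.

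The main subtlety — and the only place where care is needed — is the passage from the Haar-isometry experiment to the i.i.d.\ Haar-state experiment in the PRFSG case: the PRI definition compares against $\haarisometry^{\otimes q}$ applied to the computational basis query, and on distinct inputs a Haar isometry produces \emph{orthogonal} outputs rather than independent Haar states. This is precisely the gap discussed in the technical overview and resolved by~\Cref{lem:haar_perp_to_iid}; invoking it with the parameters $n \mapsto n+m$, $s \mapsto s$, $t \mapsto t$ closes the gap with negligible loss. The PRSG case does not encounter this issue since there is only a single distinct query string. Finally, one checks that the distinguishing advantages accumulated across the (constantly many) hybrids remain negligible, which is immediate.
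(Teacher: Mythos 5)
Your proposal is correct and follows essentially the same route as the paper: construct the PRSG by applying $\pri_k$ to $\ket{0^n}$ and the PRFSG by applying $\pri_k$ to $\ket{x}$, invoke $\qclass_{\cmp}$-security to replace $\pri_k$ with a Haar isometry, and then (for the PRFSG case only) use \Cref{lem:haar_perp_to_iid} to pass from the $q$-fold Haar isometry on pairwise-orthogonal inputs to $s$ i.i.d.\ Haar states with $t$ copies each, with the same $O(s^2t/2^{n+m})$ loss. The only cosmetic differences are that you reason directly about state indistinguishability rather than writing the reduction algorithm explicitly, and you set $\ell=0$ and hard-code the classical indices into the reduction instead of passing $\bigotimes_i\ket{x_i}$ through the identity register, both of which are equivalent presentations of the same argument.
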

\begin{proof}
Let $\pri$ be an $(n,n+m)$-$\qclass_{\cmp}$-PRI. The state generation algorithm of $\mathrm{PRSG}$ on input $k\in\bit^\secp$ is defined as $\pri_k\ket{0^n}$. The pseudorandomness of $\mathrm{PRSG}$ follows from invoking the security of $\pri$. The construction of $\mathrm{PRFSG}$ $F$ is the following: on input $k\in\bit^\secp$ and $x\in\bit^n$, append $\ket{0^n}$ and apply CNOT on $\ket{x}\ket{0^n}$ to get $\ket{x}\ket{x}$, and then output $\ket{x}\otimes\pri_k\ket{x}$. We prove the selective security of $F$ via a reduction. Suppose there exists a QPT adversary $\adversary$, polynomials $q(\cdot),t(\cdot)$ and a set of indices $\set{x_1,x_2,\dots,x_q}$ where $x_i\in\bit^n$ and $q(\secp) = \poly(\secp)$ such that
\begin{multline*}
\Bigg|
\Pr_{k\gets\bit^\secp}[\adversary_\secp(x_1,\dots,x_q,F(k,x_1)^{\otimes t},\dots,F(k,x_q)^{\otimes t}) = 1] - \\
\Pr_{\ket{\vartheta_1},\dots,\ket{\vartheta_q}\gets\Haar_{n+m}}[\adversary_\secp(x_1,\dots,x_q,\ket{\vartheta_1}^{\otimes t},\dots,\ket{\vartheta_q}^{\otimes t}) = 1]
\Bigg| 
\ge \nu(\secp),
\end{multline*}
where $\nu(\secp)$ is non-negligible. We construct a distinguisher $\distinguisher$ that uses $\adversary$ to break the security of the underlying PRI. Upon receiving queries $\set{x_1,x_2,\dots,x_q}$ from $\adversary$, the distinguisher $\distinguisher$ first uses CNOTs to generate $\bigotimes_{i=1}^q \ket{x_i}^{t+1}$. Then $\distinguisher$ uses its oracle access to $\mathcal{O}$, which is either $\pri_k$ or a Haar isometry, to reply $\bigotimes_{i=1}^q \ket{x_i} \otimes (\mathcal{O}\ket{x_i})^{\otimes t}$. Then $\distinguisher$ outputs whatever $\adversary$ outputs. Hence, the distinguishing advantage of $\distinguisher$ is
\begin{multline*}
\Bigg| 
\Pr_{k\gets\bit^\secp}[\adversary_\secp(x_1,\dots,x_q,F(k,x_1)^{\otimes t},\dots,F(k,x_q)^{\otimes t}) = 1] - \\
\Pr_{\haarisometry\gets\overline{\Haar_{n,n+m}}}[\adversary_\secp(x_1,\dots,x_q,(\haarisometry\ket{x_1})^{\otimes t},\dots,(\haarisometry\ket{x_q})^{\otimes t}) = 1]
\Bigg|.
\end{multline*}
By~\Cref{lem:haar_perp_to_iid} and viewing $\haarisometry\ket{x_i}$ as applying an $(n+m)$-qubit Haar unitary $U$ on $\ket{x_i}\ket{0^m}$,
$$\TD\left(
\Ex_{\ket{\vartheta_1},\dots,\ket{\vartheta_q}\gets\Haar_{n+m}} \left[\bigotimes_{i=1}^q \ketbra{\vartheta_i}{\vartheta_i}^{\otimes t} \right],
\Ex_{\haarisometry\gets\overline{\Haar_{n,n+m}}} \left[\bigotimes_{i=1}^q (\haarisometry\ketbra{x_i}{x_i}\haarisometry^\dagger)^{\otimes t} \right] \right)
= O(q^2t/2^{n+m}).$$
So the advantage of $\distinguisher$ is at least $\nu(\secp) - O(q^2t/2^{n+m})$, which is non-negligible. But it contradicts the security of the underlying PRI.
\end{proof}

\ifllncs
\else
\ifllncs
    \section{Multi-Copy Security of Private-Key and Public-Key Encryption Schemes}
\else
    \subsection{Multi-Copy Security of Encryption Schemes}
\fi
\label{sec:multicopy_enc}

\noindent It is well known that quantum states can be generically encrypted using the hybrid encryption technique. However, there is a stronger property referred to as {\em multi-copy} security that states the following: the indistinguishability should still hold even when given multiple copies of the ciphertext. In~\cite{LQSYZ23}, the authors considered multi-copy security only for one-time encryption schemes. We further consider private-key and public-key settings and formalize them below. 

\begin{definition}[Multi-Copy Security of Public-Key Encryption]
We say that $(\setup,\enc,\dec)$ is a public-key encryption scheme for quantum states if it satisfies the following security property: for any two states $\ket{\psi_0},\ket{\psi_1} \in \pqstates{n}$, where $n=n(\secparam)$ is a polynomial, for any non-uniform QPT distinguisher $\distinguisher$, for any polynomial $t=t(\secparam)$, 
\begin{multline*}
\Bigg| \Pr\left[ \distinguisher(\pk, \rho^{\otimes t}) = 1: \substack{(\pk,\sk) \gets \setup(1^{\secparam})\\ \ \\ \rho \gets \enc(\pk,\ket{\psi_0})}  \right] - \Pr\left[ \distinguisher(\pk, \rho^{\otimes t}) = 1: \substack{(\pk,\sk) \gets \setup(1^{\secparam})\\ \ \\ \rho \gets \enc(\pk,\ket{\psi_1})}  \right] \Bigg| 
\leq \veps(\secparam),
\end{multline*}
for some negligible function $\veps(\cdot)$. 
\end{definition}

\begin{definition}[Multi-Copy Security of Private-Key Encryption]
We say that $(\setup,\enc,\dec)$ is a private-key encryption scheme for quantum states if it satisfies the following security property: for any $q=\poly(\secparam)$, for any tuples of states  $\ket{\psi_1^{(0)}},\ldots,\ket{\psi_{q(\secparam)}^{(0)}} \in \pqstates{n}$ and $\ket{\psi_1^{(1)}},\ldots\ket{\psi_{q(\secparam)}^{(1)}}  \in \pqstates{n}$, where $n=n(\secparam)$ is a polynomial, for any non-uniform QPT distinguisher $\distinguisher$, for any polynomial $t=t(\secparam)$,
\begin{multline*}
\Bigg| \Pr\left[ \distinguisher\left(1^{\secparam}, \bigotimes_{i=1}^q \rho_i^{\otimes t} \right) = 1: \substack{\sk \leftarrow \setup(1^{\secparam})\\ \ \\ \forall i \in [q], \\ \rho_i \gets \enc(\sk,\ket{\psi_i^{(0)}})}  \right] - \Pr\left[ \distinguisher\left(1^{\secparam}, \bigotimes_{i=1}^q \rho_i^{\otimes t} \right) = 1: \substack{\sk \gets \setup(1^{\secparam})\\ \ \\ \forall i \in [q], \\ \rho_i \leftarrow \enc(\sk,\ket{\psi_i^{(1)}})}  \right] \Bigg| 
\leq \veps(\secparam),
\end{multline*}
for some negligible function $\veps(\cdot)$.
\end{definition}

\begin{remark}
We can similarly define multi-copy security in the adaptive setting where the adversary can request for $(i+1)^{th}$ encryption after obtaining encryptions on $i$ messages. We can further generalize the above definition to consider encryption for mixed states instead of just pure states. We leave the exploration of both these generalizations to future works. 
\end{remark}

\paragraph{Construction.} We discuss the construction of the multi-copy secure public-key encryption scheme $(\setup,\enc,\dec)$; the construction and security of multi-copy private-key encryption can be similarly derived. We will start with a post-quantum public-key encryption scheme $(\smsetup,\smenc,\smdec)$. We will also use an invertible $\qclass_{{\sf Single}}$-secure pseudorandom isometry $\pri=\left\{F_{\secparam}\right\}_{\secparam \in \mathbb{N}}$ (\Cref{sec:security_multi_same}), where $\inv$ is the inversion function. 
\begin{itemize}
    \item $\setup(1^{\secparam})$: on input the security parameter $\secparam$, compute $(pk,sk) \leftarrow \smsetup(1^{\secparam})$. Output $pk$ as the public key $\pk$ and output $sk$ as the secret key $\sk$. 
    
    \item $\enc_\secp(\pk,\sigma)$: on input a public key $\pk=pk$, state $\sigma$, first sample a PRI key $k \xleftarrow{\$} \{0,1\}^{\secparam}$ and then compute $ct \leftarrow \smenc(pk,k)$. Also, compute $\rho \leftarrow F_{\secparam}(k,\sigma)$. Output the ciphertext state $\ct=\left(ct,\rho\right)$.   
    
    \item $\dec_\secp(\sk,\ct)$: on input the decryption key $\sk=sk$, ciphertext state $\ct=\left(ct,\rho\right)$, first compute $k \leftarrow \smdec(sk,ct)$. Compute $\inv(k,\rho)$ to obtain $\sigma$. Output $\sigma$.  
    
\end{itemize}

\paragraph{Correctness.} Follows from the correctness of the post-quantum encryption scheme $(\smsetup,\smenc,\smdec)$ and from the guarantees of the inversion algorithm. 

\paragraph{Multi-Copy Security.} The multi-copy security follows from the following hybrid argument. Let the challenge messages be $(\ket{\psi_0},\ket{\psi_1}) \in \pqstates{n} \otimes \pqstates{n}$. Let $t(\secparam)$ be a polynomial in $\secparam$. \\

\noindent \textbf{Hybrid~1.} Output $\left(\enc(\pk,\ket{\psi_0})\right)^{\otimes t(\secparam)}$. \\

\noindent \textbf{Hybrid~2.} Output $\left(\hybrid.\enc(\pk,\ket{\psi_0})\right)^{\otimes t(\secparam)}$, where $\hybrid.\enc$ performs just like $\enc$ except that it computes $\smenc(pk,0)$ instead of $\smenc(pk,k)$.
\par The computational indistinguishability of Hybrid~1 and Hybrid~2 follows from the security of the post-quantum public-key encryption scheme. \\

\noindent \textbf{Hybrid~3.} Output $\left(\hybrid.\enc(\pk,\ket{\psi_1})\right)^{\otimes t(\secparam)}$. 
\par The computational indistinguishability of Hybrid~2 and Hybrid~3 follows from the $\qclass_{\mathsf{pure}_{t}}$-security of PRI. More specifically, we can consider an intermediate hybrid, where we switch the output of PRI on $\ket{\psi_0}$ to the output of a Haar isometry on $\ket{\psi_0}$. Note that this is identical to the output of a Haar isometry on $\ket{\psi_1}$. Finally, invoking the security of PRI, we can switch this to the output of PRI on $\ket{\psi_1}$. \\

\noindent \textbf{Hybrid~4.} Output $\left(\enc(\pk,\ket{\psi_1})\right)^{\otimes t(\secparam)}$. 
\par The computational indistinguishability of Hybrid~3 and Hybrid~4 follows from the security of the post-quantum public-key encryption scheme. 

\begin{remark}
In the above scheme, if we instantiate $(\smsetup,\smenc,\smdec)$ using a post-quantum secure private-key encryption scheme then we obtain a multi-copy secure private-key encryption for quantum states scheme.   
\end{remark}
\ifllncs
    \section{Succinct Quantum Commitments}
\else
    \subsection{Succinct Quantum Commitments}
\fi
\label{sec:commitments}
This subsection closely follows~\cite[Appendix~C]{GJMZ23} in which they showed a generic transformation from $t$-time secure $d$-dimensional PRUs to one-time secure symmetric encryption schemes for $\binom{d+t-1}{t}$-dimensional quantum messages. The main approach of~\cite{GJMZ23} relies on the \emph{Schur transform}~\cite{HarrowThesis}. In short, the Schur transform is a basis transform between the computational basis and the Schur basis. We observe that PRIs are already sufficient for such a transformation. Recall that a Haar random isometry is distributed identically to first appending $\ket{0^m}$ followed by applying a Haar random unitary. Hence, our construction needs to perform a Schur transform and an inverse Schur transform with \emph{different} dimensions. An immediate corollary is that PRIs imply succinct quantum state commitment (QSC) schemes. This follows from Theorem~5.3 in~\cite{GJMZ23} which states that one-time secure quantum encryption schemes imply succinct QSC schemes. 

\begin{construction}[One-time quantum encryption scheme from PRIs]
\label{construction:OT-PRI}
Let $\pri = \set{F_\secp}_{\secp\in\N}$ be a secure $(n,n+m)$-PRI family and $t(\secp) = \poly(\secp)$. We construct a one-time quantum encryption scheme $\set{\expand(F_\secp,t)}_{\secp\in\N}$ as follows. On input a $d$-dimensional quantum message $\ket{\psi}$, where $d = d(n,m,t) := \binom{2^n+t-1}{t}/2^{mt}$, do the following:
\begin{itemize}
    \item Initialize the state $\ket{\Psi} := \ket{\Lambda = 0} \ket{p_{\Lambda} = 0} \ket{\psi}$.\footnote{We follow the notation in~\cite{GJMZ23}.}
    \item Apply $U_{\mathsf{Sch},d'} (F_\secp(k,\cdot))^{\otimes t} U_{\mathsf{Sch},d}^\dagger$ on $\ket{\Psi}$, where $d' := \binom{2^n+t-1}{t}$ is the dimension of the (quantum) ciphertext.
    \item Trace out the first two registers and output the last register as the ciphertext.
\end{itemize}
\end{construction}

\begin{theorem}[PRI Expansion] \label{thm:PRI_expansion}
If $\set{F_\secp}_{\secp\in\N}$ is an $(n(\secp),m(\secp))$-PRI family, then \Cref{construction:OT-PRI} is a secure quantum one-time encryption scheme with message of dimension $\binom{2^n+t-1}{t}/2^{mt}$.
\end{theorem}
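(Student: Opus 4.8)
\noindent\emph{Proof strategy.} The plan is to follow the template of~\cite[Appendix~C]{GJMZ23} closely, with~\Cref{construction:OT-PRI} playing the role of their $t$-time construction and the security of $\pri$ replacing $t$-design / PRU security. Fix any two pure-state messages $\ket{\psi_0},\ket{\psi_1}$ of dimension $d=\binom{2^n+t-1}{t}/2^{mt}$, possibly entangled with a reference register held by the distinguisher. It suffices to show that the ciphertexts $\expand(F_\secp,t)(\ket{\psi_0})$ and $\expand(F_\secp,t)(\ket{\psi_1})$ are computationally indistinguishable given the reference. First I would set up four hybrids: (i) the real ciphertext of $\ket{\psi_0}$; (ii) the ciphertext of $\ket{\psi_0}$ computed with a Haar-random isometry $\haarisometry$ in place of $F_\secp(k,\cdot)$; (iii) the analogous Haar-ciphertext of $\ket{\psi_1}$; (iv) the real ciphertext of $\ket{\psi_1}$. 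The basic observation that makes this work is that the $t$-fold isometry in the construction is applied to $U_{\mathsf{Sch},d}^\dagger\ket{\Lambda=0}\ket{p_\Lambda=0}\ket{\psi}$, a state supported on (the image of) the symmetric subspace $\vee^t\C^{2^n}$, i.e.\ on the $S_t$-trivial isotypic component; this is the only part of the input that the PRI touches.

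For the transitions (i)$\to$(ii) and (iii)$\to$(iv) I would build a reduction to the security of $\pri$: a one-time-encryption distinguisher is turned into a PRI distinguisher that prepares $U_{\mathsf{Sch},d}^\dagger\ket{\Psi}$, makes a single $t$-fold parallel query to its oracle (either $F_\secp(k,\cdot)$ or $\haarisometry$), applies $U_{\mathsf{Sch},d'}$, traces out the first two registers, and runs the original distinguisher. The query state — together with the reference register — has its ``query part'' lying in $\vee^t\C^{2^n}$, so I need security of our construction on such inputs. This is not literally one of the named $\qclass$-security statements, but it follows from the analysis already in place: \Cref{lem:type_outer} annihilates the cross terms $\ketbra{\type_T}{\type_{T'}}$ with $T\ne T'$, \Cref{lem:tdis_info} drives each diagonal term to $\rho_{\unique}$, so by linearity the output on any symmetric-subspace input is negligibly close to $\rho_{\unique}$, which is $t$-fold Haar almost-invariant (\Cref{lem:tuni_invar}); combining this with~\Cref{clm:observation:almostinvariance}, \Cref{clm:almostinvariance:closeness}, and the computational reduction~\Cref{lem:comp_to_info} gives the indistinguishability, and the case with an entangled reference register is handled as in the $\qclass_{\sf Haar}$ analysis of~\Cref{lem:tuni_info,lem:haar_info}. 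Hence (i)$\to$(ii) and (iii)$\to$(iv) each cost only $\negl(\secp)$.

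It remains to show that hybrids (ii) and (iii) coincide, i.e.\ that a ciphertext produced with a genuine Haar isometry is message-independent; this is where Schur's Lemma enters, exactly as in~\cite{GJMZ23}. Writing $\haarisometry(\cdot)=V\,(\cdot\otimes\ket{0^m})$ for $V\gets\overline{\Haar_{n+m}}$, the map $V^{\otimes t}$ restricted to $\vee^t\C^{2^{n+m}}$ is an irreducible representation of $U(2^{n+m})$, and appending $\ket{0^m}$ to each of the $t$ tensor factors isometrically embeds $\vee^t\C^{2^n}$ into $\vee^t\C^{2^{n+m}}$; hence Haar-averaging conjugation by $V^{\otimes t}$ is the completely depolarizing channel on $\vee^t\C^{2^{n+m}}$ (Schur's Lemma applied to the commutant, as in~\Cref{lem:perm-invariance}). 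Therefore $U_{\mathsf{Sch},d'}\,\haarisometry^{\otimes t}\,U_{\mathsf{Sch},d}^\dagger$, averaged over $\haarisometry$ and followed by discarding the first two registers, sends \emph{every} message to the same maximally mixed ciphertext of dimension $d'=\binom{2^n+t-1}{t}$. The only deviation from~\cite{GJMZ23} is that the input and output Schur transforms have different sizes, which is precisely absorbed by the embedding $\ket{\phi}^{\otimes t}\mapsto(\ket{\phi}\otimes\ket{0^m})^{\otimes t}$. Chaining the four hybrids then yields computational indistinguishability with total loss $\negl(\secp)$. The hardest part will not be the representation theory — Schur's Lemma is routine here — but rather making the coverage claim of the second paragraph fully rigorous (that $\pri$ is secure on arbitrary purified symmetric-subspace inputs), together with the dimension bookkeeping that identifies the irreducible block singled out by $U_{\mathsf{Sch},d}$ and $U_{\mathsf{Sch},d'}$ with the claimed message and ciphertext dimensions.
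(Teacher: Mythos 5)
Your strategy is essentially the one in the paper's proof sketch: replace $F_\secp$ by a Haar isometry, then argue via Schur's lemma that conjugation by $\haarisometry^{\otimes t}$ sends the (symmetric-subspace-supported) Schur-transformed state to the maximally mixed state on $\vee^t\C^{2^{n+m}}$, hence the Haar-ciphertext is message-independent. The one genuine addition you make is in the first step: the paper dispatches it with a bare ``by security of PRI,'' whereas you correctly notice that the query the reduction produces is an arbitrary purified state whose $M$-register lies in $\vee^t\C^{2^n}$, which is not literally any of the paper's three named $\qclass$-classes, and you then close this gap by writing $\xi_{RM}=\sum_{T,T'}A_{T,T'}\otimes\ketbra{\type_T}{\type_{T'}}$, killing the $T\ne T'$ terms with \Cref{lem:type_outer}, collapsing each $A_{T,T}\otimes\E[G^{\otimes t}\ketbra{\type_T}{\type_T}G^{\dagger\otimes t}]$ toward $A_{T,T}\otimes\rho_{\unique}$ via \Cref{lem:tdis_info} (with $s=1$), and summing using $A_{T,T}\succeq 0$, $\sum_T\Tr(A_{T,T})=1$. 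That calculation is sound and gives $\E[(I_R\otimes G^{\otimes t})\xi(I_R\otimes G^{\dagger\otimes t})]\approx \Tr_M(\xi)\otimes\rho_{\unique}$ with loss $O(t^2/2^m)$ for every such $\xi$, which — routed through \Cref{lem:tuni_invar,clm:observation:almostinvariance,clm:almostinvariance:closeness,lem:comp_to_info} exactly as you say — makes the ``replace PRI by Haar'' step hold for the concrete construction from one-way functions, rather than requiring the unproven full selective-PRI notion; this is worth writing down. One small inconsistency you have inherited from the paper: your Schur's-lemma step depolarizes on $\vee^t\C^{2^{n+m}}$, whose dimension is $\binom{2^{n+m}+t-1}{t}$, yet you and the paper both write $d'=\binom{2^n+t-1}{t}$ for the ciphertext dimension (and the construction's $d=\binom{2^n+t-1}{t}/2^{mt}$ is similarly suspect); the $d,d'$ bookkeeping needs to be straightened out, as you anticipate at the end.
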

\begin{proof}[Proof sketch.]
By security of PRI, we can replace $F_\secp$ with a Haar random isometry for the rest of the proof. Since the subspace labeled by $\Lambda = 0$ corresponds to the symmetric subspace $\vee^t\C^d$, it holds that $U_{\mathsf{Sch},d}^\dagger\ket{\Psi} \in \vee^t\C^d$. By~\Cref{lem:multi_span_sym}, there exists some finite set $\cS$ of vectors in $\C^d$ such that $U_{\mathsf{Sch},d}^\dagger\ket{\Psi}$ can be written as a linear combination of $\ket{v}^{\otimes t}$ with $\ket{v}\in\cS$. After appending $\ket{0^m}^{\otimes t}$ to $U_{\mathsf{Sch},d}^\dagger\ket{\Psi}$, the state is now a linear combination of $(\ket{v}\otimes\ket{0^m})^{\otimes t}$, which implies that $U_{\mathsf{Sch},d}^\dagger\ket{\Psi}\ket{0^m}^{\otimes t} \in \vee^t\C^{d'}$. Let $\rho := U_{\mathsf{Sch},d}^\dagger\ket{\Psi}\ket{0^m}^{\otimes t} \bra{\Psi}U_{\mathsf{Sch},d} \bra{0^m}^{\otimes t}$. Then applying $t$-fold Haar unitary on $\rho$ results in the fully mixed state of $\vee^t\C^{d'}$ by Schur's lemma. Finally, applying the second Schur transform followed by tracing out the first two registers generates a $d'$-dimensional fully mixed state.
\end{proof}
\fi

\subsection{Quantum Message Authentication Codes}
\label{sec:qmacs}
The scheme of authenticating \emph{quantum messages} was first studied by Barnum et~al. \cite{BCG+02FOCS} in which they considered \emph{one-time private-key} authentication schemes. The definition in \cite{BCG+02FOCS} is generalized in the following works~\cite{DNS12,GYZ17}. In particular, Garg, Yuen, and Zhandry \cite{GYZ17} defined the notion of \emph{total authentication}, which is tailored for one-time (information-theoretic) security. They showed that total authentication implies unforgeability (in certain settings\footnote{In more detail, they show total authentication implies unforgeability for MACs for classical messages with security against a single superposition message query.}) and \emph{key reusability} --- conditioned on successful verification of an authentication scheme that satisfies total authentication, the key can be reused by the honest parties. Moreover, they constructed a total-authenticating scheme from unitary $8$-designs. Later, the works of \cite{Portmann17,AM17} independently improved the construction by using only unitary $2$-designs to achieve total authentication. 

In the fully classical setting, many-time security of an authentication scheme is defined via \emph{unforgeability} --- no efficient adversary can forge an un-queried message-tag pair. A message authentication code (MAC) is a common primitive that satisfies the desired properties. However, consider MACs for classical messages: when the adversary is allowed to query the signing oracle in superposition~\cite{BZ13,AMRS20}, defining the \emph{freshness} of the forgery is already nontrivial. For quantum message authentication schemes, it is well-known that authentication implies encryption~\cite{BCG+02FOCS}. 
Furthermore, due to the quantum nature of no-cloning and entanglement, it is challenging to define a general many-time security notion~\cite{AGM18,AGM21quantum}. Nevertheless, we consider a strict version of MACs for quantum messages in this subsection. We'll focus on several weak yet nontrivial notions of unforgeability and show how to achieve them using PRIs.

\paragraph{\bf{Syntax.}}
A message authentication codes (MAC) scheme for quantum messages of length $n(\secp)$ is a triple of algorithms $(\setup,\sign,\ver)$. 
\begin{itemize}
    \item $\setup(1^{\secparam})$: on input the security parameter $\secp$, output a key $k\gets\bit^\secp$.
    \item $\sign(k,\ket{\psi})$: on input $k\in\bit^\secp$ and a quantum message $\ket{\psi}\in \pqstates{n}$, output a quantum tag\footnote{We emphasize that here we explicitly require the tag to be a pure state. We can relax this condition to allow for the signature algorithm to output a state that is close to a pure state without changing the notion much. } $\ket{\phi}\in\pqstates{s}$ where $s(\secp) = \poly(\secp)$ is the tag length.
    \item $\ver(k,\ket{\phi})$: on input $k\in\bit^\secp$ and a quantum tag $\ket{\phi}\in\pqstates{s}$, output a mixed quantum state $\rho\in\cD(\C^{2^n})$.
\end{itemize}

\begin{definition}[Correctness]
There exists a negligible function $\veps(\cdot)$ such that for every $\secp\in\N$, $k\in\bit^\secp$, and quantum message $\ket{\psi}\in \pqstates{n}$,
\begin{align*}
    \TD(\ver(k,\sign(k,\ket{\psi})), \ketbra{\psi}{\psi} ) \le \veps(\secp).
\end{align*}

\end{definition}

\paragraph{Security Definitions.} Defining security for MACs for quantum states is quite challenging, as discussed in prior works, notably in~\cite{AGM18}. Nonetheless, our goal is to present some reasonable, although restrictive, definitions of MACs for quantum states whose feasibility can be established based on the existence of pseudorandom isometries. We believe that our results shed light on the interesting connection between pseudorandom isometries and MACs for quantum states and we leave the exploration of presenting the most general definition of MACs for quantum states (which in our eyes is an interesting research direction by itself!) for future works. 

\par When the adversary is only asked to output a single copy of the (quantum) forgery, it is unclear how to achieve negligible security error. For example, if the verification is done by simply applying a SWAP test\footnote{The SWAP test is an efficient quantum circuit that takes as input two density matrices $\rho,\sigma$ of the same dimension and output $1$ with probability $\frac{1+\Tr(\rho\sigma)}{2}$.}, then the success probability of the forger is at least $1/2$. In the following, we introduce several notions capturing unforgeability. First, in order to boost security, a straightforward way is to simply ask the adversary to send $t = \poly(\secp)$ copies of the forgery message and tag. 

\begin{definition}[Many-Copies-Unforgeability] 
Let $t = \poly(\secp)$. For every polynomial $q(\cdot)$ and every non-uniform QPT adversary, there exists a function $\veps(\cdot)$ such that for sufficiently large $\secp\in\N$, the adversary wins with probability at most $\veps(\secp)$ in the following security game:
\begin{enumerate}
    \item Challenger samples $k\gets \bit^\secp$.
    \item The adversary sends $\ket{\psi_1},\dots,\ket{\psi_q}\in \pqstates{n}$ and receives $\sign(k,\ket{\psi_i})$ for $i = 1,\dots,q$.
    \item The adversary outputs $(\ket{\psi^*}\otimes\ket{\phi^*})^{\otimes t}$ where $\ket{\psi^*}\in \pqstates{n}$ is orthogonal to $\ket{\psi_i}$ for $i = 1,\dots,q$.
    \item Challenger runs $\swaptest(\ketbra{\psi^*}{\psi^*},\ver(k,\ket{\phi^*}))$ $t$ times in parallel. The adversary wins if and only if every SWAP test outputs $1$.
\end{enumerate}
\end{definition}

\begin{remark}
We note that, in general, the forgery message and the tag could be entangled. Here, we focus on a restricted case in which the message and tag are required to be a product state. We leave the exploration of stronger security notions for future works.
\end{remark}

In some cases, it is unsatisfactory to ask the adversary to output multiple copies of the forgery tag due to the no-cloning theorem and in this case, we can consider the following definition in which the adversary needs to output multiple copies of the forgery message but only a single copy of the forgery tag. The winning condition of the adversary is defined by passing the generalized SWAP test --- called the \emph{permutation test}~\cite{BBA+97,KNY08,GGH+15,BS20}. 

\begin{lemma}[Permutation Test]
The permutation test is an efficient quantum circuit $\permtest$ that takes as input $\rho\in\cD((\C^d)^{\otimes t})$, outputs $1$ with probability $p:=\Tr(\Pi_\sym^{d,t}\rho)$, and outputs $0$ with probability $1-p$.
\end{lemma}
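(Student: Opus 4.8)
The plan is to realize the two-outcome measurement $\{\Pi^{d,t}_\sym,\ I - \Pi^{d,t}_\sym\}$ by an efficient circuit that uses a control register coherently summing over the symmetric group $S_t$, as in the standard generalized-$\swaptest$ constructions~\cite{BBA+97,KNY08,GGH+15,BS20}. The starting point is the well-known identity $\Pi^{d,t}_\sym = \frac{1}{t!}\sum_{\sigma\in S_t}P_\sigma$ (see~\cite{Harrow13church,mele2023introduction}), where $P_\sigma\in\cL((\C^d)^{\otimes t})$ permutes the $t$ tensor factors as in~\Cref{fact:sym}. Given this, the permutation test would: (i) append an ancilla register $\reg{C}$ of $O(t\log t)$ qubits initialized in $\ket{\bar 0}$ and apply an efficient unitary $V$ with $V\ket{\bar 0}_\reg{C} = \ket{+_{S_t}} := \tfrac{1}{\sqrt{t!}}\sum_{\sigma\in S_t}\ket{\sigma}_\reg{C}$; (ii) apply the controlled-permutation unitary $C := \sum_{\sigma\in S_t}\ketbra{\sigma}{\sigma}_\reg{C}\otimes P_\sigma$ to $\reg{C}$ and the input registers; (iii) apply $V^\dagger$ to $\reg{C}$; (iv) measure $\reg{C}$ in the computational basis and output $1$ if the outcome is $\bar 0$, and $0$ otherwise.

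First I would verify correctness; by linearity it suffices to take $\rho = \ketbra{\psi}{\psi}$ pure. After steps (i)--(ii) the joint state is $\tfrac{1}{\sqrt{t!}}\sum_{\sigma}\ket{\sigma}_\reg{C}\otimes P_\sigma\ket{\psi}$, so the amplitude associated with the outcome $\bar 0$ in step (iv) is $\bra{\bar 0}V^\dagger = \bra{+_{S_t}}$ applied to the $\reg{C}$ register, giving $\tfrac{1}{t!}\sum_{\sigma}P_\sigma\ket{\psi} = \Pi^{d,t}_\sym\ket{\psi}$. Hence $\Pr[\text{output }1] = \norm{\Pi^{d,t}_\sym\ket{\psi}}^2 = \bra{\psi}\Pi^{d,t}_\sym\ket{\psi}$, using that $\Pi^{d,t}_\sym$ is a projector, and extending to general $\rho$ by convexity yields $\Pr[\permtest(\rho)=1] = \Tr(\Pi^{d,t}_\sym\rho)$. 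Note that the rest of the distribution on $\reg{C}$ in step (iv) is irrelevant, which is precisely why the only constraint needed on $V$ is $V\ket{\bar 0}_\reg{C}=\ket{+_{S_t}}$.

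The bulk of the work — and the only non-routine part — is showing that every component is implementable in size $\poly(t,\log d)$, so that the circuit is efficient. I would encode elements of $S_t$ via the factorial number system (Lehmer code): $\sigma$ is described by digits $(a_1,\dots,a_t)$ with $0\le a_i\le t-i$. Then $V$ is obtained block-by-block by preparing a uniform superposition over $\{0,\dots,t-i\}$ in the $i$-th digit block — a standard state-preparation task, exact or up to negligible error — and $C$ is implemented by a reversible circuit that, reading the Lehmer digits from $\reg{C}$, executes a fixed bubble-sort network of adjacent transpositions on the data registers, each transposition being a controlled-$\mathsf{SWAP}$ of two $\lceil\log d\rceil$-qubit blocks, for a total of $\poly(t,\log d)$ gates. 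I expect the main obstacle to be purely bookkeeping: making the Lehmer-code encoding and the conditional transposition network fully explicit, and checking that the uniform superpositions over non-power-of-two ranges contribute only negligible error, so that the overall circuit is efficient and accepts with probability $\Tr(\Pi^{d,t}_\sym\rho)$ up to negligible error, which is all the applications require. For $t=2$ this construction specializes to the ordinary $\swaptest$, which serves as a sanity check.
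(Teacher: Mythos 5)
Your proposal is correct. The paper does not prove this lemma internally; it is stated as a known fact with citations to~\cite{BBA+97,KNY08,GGH+15,BS20}, which give precisely the construction you describe (a Hadamard-style test over $S_t$, equivalently the phase-estimation form of symmetrization). Your correctness argument is sound: using $\Pi^{d,t}_\sym = \tfrac{1}{t!}\sum_{\sigma\in S_t}P_\sigma$ (which, incidentally, is the correct form of the identity — the paper's Fact~\ref{fact:avg-haar-random} has a spurious normalization factor in its last equality) and the fact that $\Pi^{d,t}_\sym$ is a projector, the amplitude on $\ket{\bar 0}_{\reg{C}}$ after conjugating the controlled permutation by $V$ is exactly $\Pi^{d,t}_\sym\ket{\psi}$, yielding acceptance probability $\Tr(\Pi^{d,t}_\sym\rho)$ by linearity. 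The efficiency discussion via Lehmer codes, uniform superposition preparation over non-power-of-two ranges, and a network of controlled transpositions in $\poly(t,\log d)$ gates is also the standard route and is adequate for the applications in this paper (where only negligible additive error is ever needed). The only thing I would tighten is the claim that the superposition preparation can be done ``exact or up to negligible error'': with a fixed finite gate set it will generally be approximate, so the cleanest statement is that $\permtest$ accepts with probability $\Tr(\Pi^{d,t}_\sym\rho)\pm\negl$, which suffices here.
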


\begin{definition}[$(\permtest,t,\veps)$-unforgeability] 
For every polynomial $q(\cdot)$ and every non-uniform QPT adversary, there exists a function $\veps(\cdot)$ such that for sufficiently large $\secp\in\N$, the adversary wins with probability at most $\veps(\secp)$ in the following security game:
\begin{enumerate}
    \item Challenger samples $k\gets \bit^\secp$.
    \item The adversary sends $\ket{\psi_1},\dots,\ket{\psi_q}\in\pqstates{n}$ and receives $\sign(k,\ket{\psi_i})$ for $i = 1,\dots,q$.
    \item The adversary outputs $\ket{\psi^*}^{\otimes t}\otimes\ket{\phi^*}$ where $\ket{\psi^*}\in \pqstates{n}$ and is orthogonal to $\ket{\psi_i}$ for $i = 1,\dots,q$.
    \item The adversary wins if $\permtest(\ketbra{\psi^*}{\psi^*}^{\otimes t} \otimes \ver(k,\ket{\phi^*})) = 1$.
\end{enumerate}
\end{definition}

\noindent Finally, suppose $\sign(k,\cdot)$ is an isometry for every $k\in\bit^\secp$. We consider another definition in which we ask the adversary to send the classical description of the quantum circuit that generates the forgery message and only one copy of the corresponding tag.
\begin{definition}[Uncompute-Unforgeability]
For every polynomial $q(\cdot)$ and every non-uniform QPT adversary, there exists a negligible function $\veps(\cdot)$ such that for every $\secp\in\N$, the adversary wins with probability at most $\veps(\secp)$ in the following security game:
\begin{enumerate}
    \item Challenger samples $k\gets \bit^\secp$.
    \item The adversary sends $\ket{\psi_1},\dots,\ket{\psi_q}\in\pqstates{n}$ and receives $\sign(k,\ket{\psi_i})$ for $i = 1,\dots,q$.
    \item The adversary outputs a pair $(C,\ket{\phi^*})$ where $C$ is the classical description of a quantum circuit containing no measurements such that $C\ket{0^n}$ is orthogonal to $\ket{\psi_i}$ for $i = 1,\dots,q$.
    \item Challenger applies $C^{\dagger}\ver(k,\cdot)$ on $\ket{\phi^*}$ and performs a measurement on all qubits in the computational basis. The adversary wins if and only if the measurement outcome is $0^n$.
\end{enumerate}
\end{definition}

\noindent Let $\pri = \set{F_\secp}_{\secp\in\N}$ be a strong invertible adaptive $(n,n+m)$-PRI (\Cref{def:strong_inv_PRI}) where $n(\cdot),m(\cdot)$ are polynomials. We construct a MAC for quantum messages from $\pri$.
\begin{construction}[MAC for quantum messages] \hfill
\label{construction:tt-QMAC}
\begin{enumerate}
    \item $\sign(k,\ket{\psi}):$ on input $k\in\bit^\secp$ and a message $\ket{\psi}\in \pqstates{n}$, output $F_\secp(k,\ket{\psi}) \in \pqstates{m+n}$.
    \item $\ver(k,\ket{\phi}):$ on input $k\in\bit^\secp$ and a tag $\ket{\phi} \in \pqstates{m+n}$, output $\inv(k,\ket{\phi})$.
\end{enumerate}
\end{construction}

\noindent The correctness of~\Cref{construction:tt-QMAC} follows from the invertibility of $\pri$.

\begin{lemma}[Operator Norm after Partial Trace, Eq.(23) in~\cite{Ras12}]
\label{lem:operator_norm}
Let $H_A,H_B$ be finite-dimensional Hilbert spaces and $Q \in \cL(H_A\otimes H_B)$. Then $\norm{\Tr_B(Q)}_\infty \leq \dim(H_B) \cdot\norm{Q}_\infty$.
\end{lemma}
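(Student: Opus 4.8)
The statement to prove is \Cref{lem:operator_norm}: for finite-dimensional Hilbert spaces $H_A,H_B$ and $Q\in\cL(H_A\otimes H_B)$, we have $\norm{\Tr_B(Q)}_\infty \leq \dim(H_B)\cdot\norm{Q}_\infty$. Since this is cited as ``Eq.(23) in~\cite{Ras12}'', I expect a short, self-contained operator-theoretic argument rather than anything deep. The plan is to reduce the operator-norm bound to a variational characterization and then expand the partial trace in a product basis.

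First I would write $\norm{\Tr_B(Q)}_\infty = \sup_{\ket{u},\ket{v}\in\sphere(H_A)} |\bra{u}\Tr_B(Q)\ket{v}|$ (for a non-self-adjoint operator the operator norm is the largest singular value, which equals this sup over unit vectors on both sides). Next, fix an orthonormal basis $\set{\ket{b_j}}_{j=1}^{\dim(H_B)}$ of $H_B$ and use the definition $\Tr_B(Q) = \sum_j (I_A\otimes\bra{b_j}) Q (I_A\otimes\ket{b_j})$, so that
\[
\bra{u}\Tr_B(Q)\ket{v} = \sum_{j=1}^{\dim(H_B)} \bra{u}\!\otimes\!\bra{b_j}\, Q\, \ket{v}\!\otimes\!\ket{b_j}.
\]
Each summand is of the form $\bra{w_j} Q \ket{w_j'}$ with $\ket{w_j} = \ket{u}\otimes\ket{b_j}$ and $\ket{w_j'} = \ket{v}\otimes\ket{b_j}$ unit vectors in $H_A\otimes H_B$, hence bounded in absolute value by $\norm{Q}_\infty$. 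Applying the triangle inequality over the $\dim(H_B)$ terms gives $|\bra{u}\Tr_B(Q)\ket{v}| \leq \dim(H_B)\cdot\norm{Q}_\infty$, and taking the supremum over $\ket{u},\ket{v}$ yields the claim.

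I do not anticipate a real obstacle here; the only mild subtlety is being careful that for a general (not necessarily positive or Hermitian) operator $Q$ the relevant norm is the operator/spectral norm and the bound $|\bra{w}Q\ket{w'}|\le\norm{Q}_\infty$ for unit vectors holds by Cauchy--Schwarz ($|\bra{w}Q\ket{w'}| \le \norm{w}\,\norm{Q\ket{w'}} \le \norm{Q}_\infty$). An alternative, equally clean route would be to note $\Tr_B(Q) = \sum_j K_j Q K_j^\dagger$ where $K_j := I_A\otimes\bra{b_j} : H_A\otimes H_B \to H_A$ satisfies $\norm{K_j}_\infty = 1$, so that by submultiplicativity and the triangle inequality $\norm{\Tr_B(Q)}_\infty \le \sum_j \norm{K_j}_\infty^2 \norm{Q}_\infty = \dim(H_B)\norm{Q}_\infty$. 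I would present whichever of these is shorter; both avoid any routine calculation and the bound is tight (take $Q = I_{H_A}\otimes I_{H_B}$, so $\Tr_B(Q) = \dim(H_B)\, I_{H_A}$).
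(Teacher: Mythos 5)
The paper does not prove this lemma; it simply cites it as Eq.~(23) of~\cite{Ras12} and uses it as a black box in the proof of many-copies-unforgeability. So there is no ``paper's proof'' to compare against. Your argument is correct and self-contained: the identity $\norm{M}_\infty = \sup_{\ket{u},\ket{v}} |\expbraket{u}{M}{v}|$ over unit vectors is the right variational characterization of the spectral norm for general (non-Hermitian) $M$, the expansion $\Tr_B(Q) = \sum_j (I_A\otimes\bra{b_j}) Q (I_A\otimes\ket{b_j})$ is exactly the coordinate-free definition of partial trace, and each summand is bounded by $\norm{Q}_\infty$ via Cauchy--Schwarz since $\ket{u}\otimes\ket{b_j}$ and $\ket{v}\otimes\ket{b_j}$ are unit vectors. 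The alternative route via $K_j := I_A\otimes\bra{b_j}$ with $\norm{K_j}_\infty = \norm{K_j^\dagger}_\infty = 1$ and submultiplicativity is equally valid and slightly shorter to state. Your tightness observation ($Q = I$) is also correct. Either version could be dropped into the paper as a proof of the stated lemma.
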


\begin{lemma} \label{lem:tt-QMAC:security}
\Cref{construction:tt-QMAC} satisfies many-copies-unforgeability.
\end{lemma}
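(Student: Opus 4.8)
The plan is to follow the outline sketched in the technical overview. First I would invoke the security of $\pri$ (more precisely, the strong invertible adaptive PRI security, \Cref{def:strong_inv_PRI}) to replace both $F_\secp(k,\cdot)$ used by $\sign$ and $\inv(k,\cdot)$ used by $\ver$ with a genuine Haar isometry $\haarisometry$ and its inverse channel $\haarisometry^{-1}$ (\Cref{def:inverse_isometry,fact:inverse_of_Haar}). Since the whole security game — including the $t$ parallel SWAP tests at the end — is implementable by a QPT procedure that makes polynomially many oracle calls to $\sign$ and $\ver$, any non-negligible change in the adversary's winning probability under this substitution would contradict PRI security. So it suffices to bound the winning probability in the idealized game where the tag of $\ket{\psi_i}$ is $\haarisometry\ket{\psi_i}$ and verification applies $\haarisometry^{-1}$.

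Next I would condition on the view of the adversary $\alice$ after its $q$ signing queries. Let $V := \mathsf{span}(\ket{\psi_1},\dots,\ket{\psi_q})$, so $\dim V \le q$. By \Cref{fact:sampling_Haar_isometry}, conditioned on the images $\haarisometry\ket{\psi_1},\dots,\haarisometry\ket{\psi_q}$ that $\alice$ has seen, the action of $\haarisometry$ on $V^\perp$ — in particular on the forgery message $\ket{\psi^*} \in V^\perp$ (orthogonality is forced by the game) — is still Haar-random onto the orthogonal complement of $\mathsf{span}(\haarisometry\ket{\psi_1},\dots,\haarisometry\ket{\psi_q})$ inside $\C^{2^{n+m}}$. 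Now I analyze the verifier's output: on input a tag $\ket{\phi^*}$, the channel $\haarisometry^{-1}$ (by \Cref{fact:inverse_of_Haar}) first writes $\ket{\phi^*}$ in a random basis extending the columns of $\haarisometry$, then discards the ancilla. The key observation is that any component of $\ket{\phi^*}$ lying in the range of $\haarisometry$ restricted to $V$ gets mapped into $V$, hence is orthogonal to $\ket{\psi^*}$; and any component in the range of $\haarisometry|_{V^\perp}$, being Haar-random, has expected squared overlap with $\ket{\psi^*}$ of order $1/(2^{n+m}-q)$. A short calculation using \Cref{fact:average_inner_product} (and \Cref{lem:operator_norm} to control the ancilla trace-out, as flagged by the placement of that lemma right before the statement) then shows that $\E\big[\Tr(\ketbra{\psi^*}{\psi^*}\,\ver(k,\ket{\phi^*}))\big] \le 1/2 + O(q/2^{\secparam})$ over the remaining Haar randomness, uniformly over the adversary's choice of $(\ket{\psi^*},\ket{\phi^*})$ — so each single SWAP test accepts with probability at most $3/4 + \negl$.

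Finally I would leverage independence: the $t$ copies of $\ketbra{\psi^*}{\psi^*}$ and $\ket{\phi^*}$ are fixed once and for all by $\alice$, and the $t$ SWAP tests are run in parallel on fresh independent ancillas, so conditioned on the forgery and on $\haarisometry$ the acceptance events are independent Bernoulli trials, each with success probability at most $3/4 + \negl$. Hence the probability all $t$ succeed is at most $(3/4+\negl)^t = \negl(\secparam)$ for $t = \poly(\secparam)$ growing (or indeed any super-constant $t$), and after averaging over $\haarisometry$ and adding back the PRI-replacement error we conclude the winning probability is negligible. The main obstacle I anticipate is the second step: carefully justifying the "$\haarisometry$ is still Haar-random on $V^\perp$ conditioned on the transcript" claim while the adversary holds an arbitrary quantum side-state entangled with the signing-query registers, and correctly accounting for the non-unitary inverse channel $\haarisometry^{-1}$ (which is a Haar-averaged partial trace, not $\haarisometry^\dagger$) when bounding $\Tr(\ketbra{\psi^*}{\psi^*}\,\ver(k,\ket{\phi^*}))$; the rest is a routine SWAP-test and union-bound calculation.
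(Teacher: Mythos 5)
Your high-level plan is the same as the paper's: replace the PRI by a Haar isometry and its Stinespring inverse, condition on the signing transcript via \Cref{fact:sampling_Haar_isometry}, decompose the submitted tag into the component lying in the observed output span and its orthogonal complement, and show the expected fidelity between $\ket{\psi^*}$ and $\ver(k,\ket{\phi^*})$ is small. Up to that point your reasoning matches the paper. However, there are two concrete gaps in your final two steps.

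First, you derive that the expected squared overlap is of order $1/(2^{n+m}-q)$ (which, after the $2^m$ factor from tracing out the ancilla via \Cref{lem:operator_norm}, is the paper's bound $\tfrac{2^m}{2^{n+m}-q}=\negl(\secparam)$), but then you \emph{state} the bound as $\tfrac{1}{2}+O(q/2^\secparam)$ and infer a per-test acceptance of $3/4+\negl$. The additive $1/2$ has no source in your calculation: the expected fidelity is negligible, not $1/2$-ish, and the corresponding single SWAP-test acceptance probability is $\tfrac{1}{2}+\negl$, not $\tfrac{3}{4}+\negl$.

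Second, and more importantly, the closing argument is unsound. You write that conditioned on $\haarisometry$ the $t$ SWAP tests are independent trials ``each with success probability at most $3/4+\negl$,'' and conclude the joint success probability is at most $(3/4+\negl)^t$. Independence conditioned on $\haarisometry$ is true, but the conditional acceptance probability is $\tfrac12+\tfrac12\,\mathrm{fid}(\haarisometry)$, and this is \emph{not} bounded away from $1$ uniformly in $\haarisometry$ --- it is bounded only in expectation. The actual winning probability is $\E_{\haarisometry}\big[\big(\tfrac12+\tfrac12\,\mathrm{fid}\big)^t\big]$, and by Jensen $\E[X^t]\ge(\E[X])^t$, so bounding the expectation does not directly bound the $t$-fold quantity; if the fidelity were, say, $0$ with probability $1-\eps$ and $1$ with probability $\eps$, the expectation is $\eps$ yet the joint success probability is at least $\eps$, not $\eps^t$. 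The paper closes this gap by a Markov step: from $\E[\mathrm{fid}]=\negl$ it deduces $\Pr[\mathrm{fid}\ge 0.1]=\negl$, and then splits the expectation into a small ``bad'' event and a $(0.6)^t$ bound on the ``good'' event, giving $2^{-\Omega(t)}+\negl$. You need this (or an equivalent tail/moment argument) before you can conclude.
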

\ifllncs
    The proof relies on an explicit geometric construction of Haar isometries (unitaries)~\cite[page. 19]{Meckes19}. The details can be found in~\Cref{app:proof_qmac}.
\else
    \begin{proof} 
By the security of $\pri$, we replace it with a Haar isometry $\haarisometry$ in the construction. Fix $\lambda$ and queries $\ket{\psi_1},\dots,\ket{\psi_q}$. Let $V_{in} := \mathsf{span}\set{\ket{\psi_1}\otimes\ket{0^m}_\aux,\dots,\ket{\psi_q}\otimes\ket{0^m}_\aux} \subseteq \C^{2^{n+m}}$ and $d:=\dim(V_{in})\le q$. Choose an arbitrary orthonormal basis of $V_{in}$ denoted by $\set{\ket{e_1},\dots,\ket{e_d}}$. 

Given $\ket{e_1},\dots,\ket{e_d}$, the Haar random unitary can be viewed as being partially defined by sampling $\ket{v_1},\dots,\ket{v_d}$ according to the procedures in~\Cref{fact:sampling_Haar_unitary}. Let $V_{out} := \mathsf{span}\set{ \ket{v_1},\dots,\ket{v_d} } \subseteq \C^{2^{n+m}}$. Note that all quantum tags $\ket{\phi_i}$ are defined since each of them is in $V_{out}$. Now, fix the forgery $(\ket{\psi^*},\ket{\phi^*})$. Suppose $\ket{\phi^*} = \ket{v_{out}} + \ket{v^\perp_{out}}$ where $\ket{v_{out}}\in V_{out}$ and $\ket{v^\perp_{out}}\in V_{out}^\perp$ are sub-normalized states. Using~\Cref{fact:inverse_of_Haar}, we consider the average fidelity between $\haarisometry^{-1}(\ket{\phi^*})$ and $\ket{\psi^*}$:
\begin{align} \label{eq:average_fidelity}
\Ex_{\haarisometry\mid_V} \left[ \expbraket{\psi^*}{\haarisometry^{-1}(\ket{\phi^*})}{\psi^*}  \right]
= \expbraket{\psi^*}{\Ex_{U\gets\overline{\Haar_{n+m}}\mid_V}\left[\Tr_\aux(U^\dagger\ketbra{\phi^*}{\phi^*}U)\right]}{\psi^*},
\end{align}
where $\haarisometry\mid_V$ means $\haarisometry$ is a Haar isometry conditioned on $\haarisometry\ket{e_i} = \ket{v_i}$ for $i=1,2,\dots,d$; $\overline{\Haar_{n+m}}\mid_V$ is defined similarly. Expanding $\ket{\phi^*}$, this yields
\begin{align*}
& \Cref{eq:average_fidelity} 
= \expbraket{\psi^*}{\Tr_\aux \left( \Ex_{U\gets\overline{\Haar_{n+m}}\mid_V}\left[U^\dagger\ketbra{\phi^*}{\phi^*}U\right] \right)}{\psi^*} \\
& = \expbraket{\psi^*}{\Tr_\aux \left( \Ex_{U\gets\overline{\Haar_{n+m}}\mid_V}\left[U^\dagger (\ketbra{v^\perp_{out}}{v^\perp_{out}} + \ketbra{v^\perp_{out}}{v_{out}} + \ketbra{v_{out}}{v^\perp_{out}} + \ketbra{v_{out}}{v_{out}}) U \right] \right)}{\psi^*}.
\end{align*}
First note that 
\[
\Ex_{U\gets\overline{\Haar_{n+m}}\mid_V}\left[U^\dagger\ketbra{v^\perp_{out}}{v^\perp_{out}}U\right] 
= \norm{\ket{v^\perp_{out}}}^2 \cdot \Ex_{\ket{\vartheta}\gets\Haar(V_{out}^\perp)}\left[ \ketbra{\vartheta}{\vartheta} \right] 
= \frac{\norm{\ket{v^\perp_{out}}}^2 \cdot I_{V^\perp_{out}}}{\dim(V^\perp_{out})}
\]
from~\Cref{fact:avg-haar-random}. Next, 
\[
\Ex_{U\gets\overline{\Haar_{n+m}}\mid_V}\left[U^\dagger\ketbra{v_{out}}{v^\perp_{out}}U\right] = \ket{v_{in}}\Ex_{U\gets\overline{\Haar_{n+m}}\mid_V}\left[\bra{v^\perp_{out}}U\right] 
= \ket{v_{in}}\Ex_{\ket{\vartheta}\gets\Haar(V_{out}^\perp)}\left[\bra{\vartheta} \right] 
= 0,
\]
since the average of a uniformly random vector on a sphere is $0$, where $\ket{v_{in}}\in V_{in}$ is the state such that $U\ket{v_{in}} = \ket{v_{out}}$ for every $U$ sampled from $\overline{\Haar}_{n+m}\mid_V$. Similarly, we have
\[
\Ex_{U\gets\overline{\Haar_{n+m}}\mid_V}\left[U^\dagger\ketbra{v^\perp_{out}}{v_{out}}U\right] = 0.
\]
Moreover, $\Ex_{U\gets\overline{\Haar_{n+m}}\mid_V}\left[U^\dagger\ketbra{v_{out}}{v_{out}}U\right] = \ketbra{v_{in}}{v_{in}}$ is supported by $V_{in}$. We then obtain
\begin{align*}
\Cref{eq:average_fidelity} 
= \expbraket{\psi^*}{\Tr_\aux \left( \frac{\norm{\ket{v^\perp_{out}}}^2 \cdot I_{V^\perp_{out}}}{\dim(V^\perp_{out})} + \ketbra{v_{in}}{v_{in}} \right)}{\psi^*}.
\end{align*}
However, the last $m$ qubits of $\ket{v_{in}}$ on register $\aux$ must be $\ket{0^m}_\aux$ by the definition of $V_{in}$. So after partially tracing out $\aux$, the reduced (sub-normalized) density matrix $\Tr_\aux \left(\ketbra{v_{in}}{v_{in}} \right)$ is supported by $\mathsf{span}\set{\ket{\psi_1},\dots,\ket{\psi_q}}$. But recall that the forgery message $\ket{\psi^*}$ must be orthogonal to the previous queries $\ket{\psi_1},\dots,\ket{\psi_q}$, thus $\expbraket{\psi^*}{\Tr_\aux \left( \ketbra{v_{in}}{v_{in}} \right)}{\psi^*} = 0$. Finally, the average fidelity can be simplified and bounded as follows:
\begin{align*}
\Cref{eq:average_fidelity}
& = \frac{\norm{\ket{v^\perp_{out}}}^2}{\dim(V^\perp_{out})} \cdot \expbraket{\psi^*}{\Tr_\aux \left( I_{V^\perp_{out}} \right) }{\psi^*} \\
& \leq \frac{\norm{\ket{v^\perp_{out}}}^2}{\dim(V^\perp_{out})} \cdot \norm{ \Tr_\aux \left( I_{V^\perp_{out}} \right)  }_\infty  \\
& \leq \frac{\norm{\ket{v^\perp_{out}}}^2}{\dim(V^\perp_{out})} \cdot  \dim(H_\aux) \cdot \norm{ I_{V^\perp_{out}} }_\infty \\
& \leq \frac{1}{2^{m+n}-q} \cdot 2^m
= \negl(\secp),
\end{align*}
where the first inequality follows from the definition of operator norm and the second inequality follows from~\Cref{lem:operator_norm}. By Markov's inequality, we have
\begin{align*}
\Pr_{\haarisometry\mid_V} \left[ \expbraket{\psi^*}{\haarisometry^{-1}(\ket{\phi^*})}{\psi^*} \ge 0.1 \right] = \negl(\secp).
\end{align*}
Hence, the probability of all the $t$ swap tests outputting $1$ satisfies
\begin{align*}
& \Ex_{\haarisometry\mid_V} \left[ \Pr \left[ \swaptest_i( \ket{\psi^*}, \haarisometry^{-1} (\ket{\phi^*}) ) = 1,\ i = 1,\dots,t \right] \right] \\
= & \Ex_{\haarisometry\mid_V} \left[ \left( \frac{1}{2} + \frac{1}{2} \expbraket{\psi^*}{\haarisometry^{-1}(\ket{\phi^*})}{\psi^*} \right)^t \right] \\
\leq & \left( \frac{1}{2} + 0.1 \right)^t + \negl(\secp) 
= 2^{-\Omega(t)} + \negl(\secp)
\end{align*}
from Hoeffding bounds.
This finishes the proof of~\Cref{lem:tt-QMAC:security}.
\end{proof}
\fi

\begin{theorem} \label{thm:t_1_1/t}
For every $t\in\N$, \Cref{construction:tt-QMAC} satisfies $(\permtest,t,O(1/t))$-unforgeability. 
\end{theorem}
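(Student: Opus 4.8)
\textbf{Proof plan for Theorem~\ref{thm:t_1_1/t}.} The plan is to follow the same template as the proof of~\Cref{lem:tt-QMAC:security}: first replace $\pri$ with a Haar isometry $\haarisometry$ via the security of $\pri$, then analyze the permutation test directly. After fixing $\secp$ and the signing queries $\ket{\psi_1},\dots,\ket{\psi_q}$, set $V := \mathsf{span}\set{\ket{\psi_1}\otimes\ket{0^m}_\aux,\dots,\ket{\psi_q}\otimes\ket{0^m}_\aux}$ with $d := \dim V \le q$, and view the Haar unitary as being partially sampled on $V$ using~\Cref{fact:sampling_Haar_unitary}, with $V_{out}$ denoting the span of the sampled output columns. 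As in the previous proof, once the adversary commits to a forgery $(\ket{\psi^*}^{\otimes t},\ket{\phi^*})$ with $\ket{\psi^*}\perp V_{in}$, from the adversary's point of view $\haarisometry^{-1}(\ket{\phi^*})$ has reduced density matrix supported on $V_{out}^\perp$'s image plus a component in $\mathsf{span}\set{\ket{\psi_1},\dots,\ket{\psi_q}}$, and exactly the same computation shows the expected fidelity $\Ex_{\haarisometry\mid_V}\left[\expbraket{\psi^*}{\haarisometry^{-1}(\ket{\phi^*})}{\psi^*}\right] \le 2^m/(2^{m+n}-q) = \negl(\secp)$.

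The new ingredient is to convert this fidelity bound into a bound on the permutation-test success probability. I would first condition on the ``good'' event that $f^* := \expbraket{\psi^*}{\haarisometry^{-1}(\ket{\phi^*})}{\psi^*}$ is small — but for this test, rather than thresholding at a constant, I would keep $f^*$ as a parameter. The key identity is that $\permtest$ on $\ketbra{\psi^*}{\psi^*}^{\otimes t}\otimes\sigma$, where $\sigma := \ver(k,\ket{\phi^*}) = \haarisometry^{-1}(\ket{\phi^*})$, accepts with probability $\Tr\!\left(\Pi_\sym^{d',t+1}\,(\ketbra{\psi^*}{\psi^*}^{\otimes t}\otimes\sigma)\right)$ where $d' = 2^{n}$. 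Using the standard formula $\Pi_\sym^{d',t+1} = \frac{1}{(t+1)!}\sum_{\pi\in S_{t+1}}P_\pi$ and the fact that $\ketbra{\psi^*}{\psi^*}^{\otimes t}$ is already symmetric on its $t$ registers, one can compute $\Tr\!\left(\Pi_\sym^{d',t+1}(\ketbra{\psi^*}{\psi^*}^{\otimes t}\otimes\sigma)\right) = \frac{1}{t+1}\Tr(\sigma) + \frac{t}{t+1}\expbraket{\psi^*}{\sigma}{\psi^*} = \frac{1}{t+1} + \frac{t}{t+1}f^*$, because among the $(t+1)!$ permutations, those fixing the ``$\sigma$-register'' contribute $\Tr(\sigma)\cdot\Tr(\ketbra{\psi^*}{\psi^*}^{\otimes t}) = 1$ and those moving it contribute $\expbraket{\psi^*}{\sigma}{\psi^*}$ after contracting the symmetric block. (I would double-check this combinatorial reduction carefully, since it is the crux of the $1/t$ bound.) Taking expectations over $\haarisometry\mid_V$ gives success probability $\frac{1}{t+1} + \frac{t}{t+1}\,\Ex[f^*] \le \frac{1}{t+1} + \negl(\secp) = O(1/t)$, which is exactly the claimed $(\permtest,t,O(1/t))$-unforgeability.

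Finally I would assemble the hybrid argument: Hybrid~0 is the real game with $\pri$; Hybrid~1 replaces $\pri$ by $\haarisometry$ (and also its inverse oracle, invoking strong invertible adaptive PRI security or just the relevant $\qclass$-security plus~\Cref{fact:inverse_of_Haar}); then the above calculation bounds the winning probability in Hybrid~1 by $O(1/t)$. The main obstacle I anticipate is the exact combinatorial evaluation of $\Tr\!\left(\Pi_\sym^{d',t+1}(\ketbra{\psi^*}{\psi^*}^{\otimes t}\otimes\sigma)\right)$ — making sure the coefficients $\frac{1}{t+1}$ and $\frac{t}{t+1}$ come out right, and that cross terms where $\pi$ neither fixes the $\sigma$-register nor simply swaps it into the symmetric block genuinely collapse (they do, because the $t$ copies of $\ket{\psi^*}$ are identical, so any permutation is equivalent to one of the two cases). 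A secondary subtlety is handling the $\ell$-qubit side-information / the fact that $\haarisometry^{-1}$ is a channel not a map: since $\ket{\phi^*}$ is a pure state handed to the verifier, $\sigma = \haarisometry^{-1}(\ketbra{\phi^*}{\phi^*})$ is a well-defined mixed state and the argument above goes through with $\Tr(\sigma) = 1$, so no extra care is needed there beyond citing~\Cref{def:inverse_isometry} and~\Cref{fact:inverse_of_Haar}.
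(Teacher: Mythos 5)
Your proposal is correct and follows essentially the same approach as the paper: both expand $\Pi_\sym^{2^n,t+1}$ as the uniform average of permutation operators, split on whether the permutation fixes the verifier-output register (the $t!$ fixed-point cases contribute $\Tr(\sigma)=1$, the $t\cdot t!$ remaining cases each contribute $\expbraket{\psi^*}{\sigma}{\psi^*}$ by idempotence of $\ketbra{\psi^*}{\psi^*}$ and cyclicity of trace), and then invoke the $\negl(\secp)$ fidelity bound from the calculation in \Cref{lem:tt-QMAC:security}. Your exact identity $\Tr\bigl(\Pi_\sym^{2^n,t+1}(\ketbra{\psi^*}{\psi^*}^{\otimes t}\otimes\sigma)\bigr)=\tfrac{1}{t+1}\Tr(\sigma)+\tfrac{t}{t+1}\expbraket{\psi^*}{\sigma}{\psi^*}$ is precisely what the paper derives (via its spectral-decomposition step), just stated up front as a lemma, so this is an organizational difference rather than a different route.
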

\ifllncs
    The proof of~\Cref{thm:t_1_1/t} can be found in~\Cref{app:proof_qmac}.
\else
\begin{proof}\ifllncs[Proof of~\Cref{thm:t_1_1/t}]\else\fi 
By the security of $\pri$, we replace it with a Haar isometry $\haarisometry$ in the construction. Fix $\secp$ and queries $\ket{\psi_1},\dots,\ket{\psi_q}$. Similar to~\Cref{lem:tt-QMAC:security}, we can view the Haar unitary to be partially sampled. Let $V_{in},V_{out}$ be defined as in~\Cref{lem:tt-QMAC:security}. The winning probability of the forger is 
\begin{align} \label{eq:winning_prob_perm}
& \Ex_{\haarisometry\mid_V} \left[ \Pr\left[ \permtest\left( \ketbra{\psi^*}{\psi^*}^{\otimes t} \otimes \haarisometry^{-1}(\ket{\phi^*}) \right) = 1 \right] \right] \nn \\
= & \Ex_{U\gets\overline{\Haar}_{n+m}\mid_V}\left[ \Tr\left( \Pi_\sym^{2^n,t+1} \left( \ketbra{\psi^*}{\psi^*}^{\otimes t} \otimes \Tr_\aux(U^\dagger\ketbra{\phi^*}{\phi^*}U) \right) \right) \right] \nn \\
= & \Tr\left( \frac{\sum_{\sigma \in S_{t+1}} P_\sigma}{(t+1)!} \left( \ketbra{\psi^*}{\psi^*}^{\otimes t} \otimes \Ex_{U\gets\overline{\Haar}_{n+m}\mid_V}\left[ \Tr_\aux \left(U^\dagger\ketbra{\phi^*}{\phi^*}U \right) \right] \right) \right).
\end{align}
Consider the two cases classified by whether $t+1$ is a fixed point of $\sigma$: first, if $\sigma(t+1) = t+1$, then
\begin{align*}
& \Tr\left( P_\sigma \left( \ketbra{\psi^*}{\psi^*}^{\otimes t} \otimes \Ex_{U\gets\overline{\Haar}_{n+m}\mid_V}\left[ \Tr_\aux \left(U^\dagger\ketbra{\phi^*}{\phi^*}U \right) \right] \right) \right) \\
= & \Tr\left( \Ex_{U\gets\overline{\Haar}_{n+m}\mid_V}\left[ \Tr_\aux \left(U^\dagger\ketbra{\phi^*}{\phi^*}U \right) \right] \right) = 1.
\end{align*}
Otherwise, we can decompose $\ket{\phi^*} = \ket{v_{out}} + \ket{v_{out}^\perp}$ as in~\Cref{lem:tt-QMAC:security} and use the same argument to get
\begin{align*}
& \Tr\left( P_\sigma \left( \ketbra{\psi^*}{\psi^*}^{\otimes t} \otimes \Ex_{U\gets\overline{\Haar}_{n+m}\mid_V}\left[ \Tr_\aux \left(U^\dagger\ketbra{\phi^*}{\phi^*}U \right) \right] \right) \right) \\
& \leq \Tr\left( P_\sigma \left( \ketbra{\psi^*}{\psi^*}^{\otimes t} \otimes \frac{\Tr_\aux \left( I_{V^\perp_{out}} \right)}{\dim(V^\perp_{out})} \right) \right).
\end{align*}

\noindent As there is a $\frac{1}{t+1}$ fraction of $\sigma$'s that belong to the first case, we have
\begin{align*}
\Cref{eq:winning_prob_perm} 
\leq \frac{1}{t+1} + \frac{1}{(t+1)!} \cdot \sum_{\substack{\sigma\in S_{t+1}:\\ \sigma(t+1)\neq t+1}} \Tr\left( P_\sigma \left( \ketbra{\psi^*}{\psi^*}^{\otimes t} \otimes \frac{\Tr_\aux \left( I_{V^\perp_{out}} \right)}{\dim(V^\perp_{out})} \right) \right).
\end{align*}
Now, let $\sum_i \lambda_i \ketbra{\lambda_i}{\lambda_i}$ be the spectral decomposition of $\frac{\Tr_\aux \left( I_{V^\perp_{out}} \right)}{\dim(V^\perp_{out})}$. We finally have
\begin{align*}
\Cref{eq:winning_prob_perm} 
\leq & \frac{1}{t+1} + \frac{1}{(t+1)!} \cdot \sum_i \lambda_i \cdot \sum_{\substack{\sigma\in S_{t+1}:\\ \sigma(t+1)\neq t+1}} \Tr\left( P_\sigma \left( \ketbra{\psi^*}{\psi^*}^{\otimes t} \otimes \ketbra{\lambda_i}{\lambda_i} \right) \right) \\
= & \frac{1}{t+1} + \frac{t}{t+1} \cdot \sum_i \lambda_i |\inner{\psi^*}{\lambda_i}|^2 \\
= & \frac{1}{t+1} + \frac{t}{t+1} \cdot \expbraket{\psi^*}{\frac{\Tr_\aux \left( I_{V^\perp_{out}} \right)}{\dim(V^\perp_{out})}}{\psi^*} \\
\leq & \frac{1}{t+1} + \frac{t}{t+1} \cdot \norm{\frac{\Tr_\aux \left( I_{V^\perp_{out}} \right)}{\dim(V^\perp_{out})}}_\infty \\
= & \frac{1}{t+1} + \negl(\secp),
\end{align*}

\noindent where the last inequality follows from the calculation of~\Cref{eq:average_fidelity}. This finishes the proof of~\Cref{thm:t_1_1/t}.
\end{proof}
\fi

\begin{theorem} \label{thm:uncom_unforge}
\Cref{construction:tt-QMAC} satisfies uncompute-unforgeability. 
\end{theorem}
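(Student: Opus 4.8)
The plan is to reduce the security game directly to the pseudorandomness of $\pri$. First I would invoke the (strong invertible adaptive) PRI security to replace $F_\secp(k,\cdot)$ and $\inv(k,\cdot)$ with a Haar isometry $\haarisometry$ and its inverse $\haarisometry^{-1}$ (\Cref{def:inverse_isometry}), incurring only negligible loss; this is legitimate because the entire security game --- including the final step in which the challenger applies $C^{\dagger}\ver(k,\cdot)$ and measures --- can be carried out by a QPT algorithm with oracle access to $\sign(k,\cdot)=F_\secp(k,\cdot)$ and $\ver(k,\cdot)=\inv(k,\cdot)$. So it suffices to bound the winning probability in the Haar world.

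Next, fix $\secp$ and condition on the adversary's queries $\ket{\psi_1},\dots,\ket{\psi_q}$ and on the classical circuit description $C$ it outputs, writing $\ket{\chi^*}:=C\ket{0^n}$, which by the rules of the game is orthogonal to all $\ket{\psi_i}$. As in the proof of~\Cref{lem:tt-QMAC:security}, I would use the explicit column-by-column sampling of the Haar unitary (\Cref{fact:sampling_Haar_unitary}, \Cref{fact:inverse_of_Haar}): letting $V_{in}:=\mathsf{span}\set{\ket{\psi_1}\otimes\ket{0^m}_\aux,\dots,\ket{\psi_q}\otimes\ket{0^m}_\aux}$ and $V_{out}$ the corresponding image span, all the tags $\sign(k,\ket{\psi_i})$ are already determined, but $\haarisometry\ket{\chi^*}$ is a uniformly random unit vector in $V_{out}^\perp$ and in particular is independent of everything the adversary has seen, hence independent of the single-copy forgery tag $\ket{\phi^*}$. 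The challenger's acceptance probability is
$$
\Ex_{\haarisometry\mid_V}\left[ \expbraket{\chi^*}{\haarisometry^{-1}(\ket{\phi^*})}{\chi^*} \right],
$$
since applying $C^\dagger$ and measuring $\ket{0^n}$ succeeds exactly with the fidelity between $\haarisometry^{-1}(\ket{\phi^*})$ and $\ket{\chi^*}$. But this is precisely the quantity~\eqref{eq:average_fidelity} analyzed in~\Cref{lem:tt-QMAC:security}, with $\ket{\psi^*}$ there replaced by $\ket{\chi^*}$: decomposing $\ket{\phi^*}=\ket{v_{out}}+\ket{v_{out}^\perp}$ and averaging, the $V_{out}^\perp$ part contributes $\norm{\ket{v_{out}^\perp}}^2 \cdot \Tr_\aux(I_{V_{out}^\perp})/\dim(V_{out}^\perp)$, the cross terms vanish, and the $V_{out}$ part contributes $\Tr_\aux(\ketbra{v_{in}}{v_{in}})$ which is supported on $\mathsf{span}\set{\ket{\psi_1},\dots,\ket{\psi_q}}$ and so is killed by $\bra{\chi^*}\cdot\ket{\chi^*}$. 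Using~\Cref{lem:operator_norm} this bounds the acceptance probability by $\frac{2^m}{2^{n+m}-q}=\negl(\secp)$.

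Finally, I would note that this bound holds conditioned on every choice of $C$ and of the queries, so it holds unconditionally; adding back the negligible PRI-replacement loss gives that the adversary wins with negligible probability, which is the claim. The main point to get right --- though it is essentially routine given~\Cref{lem:tt-QMAC:security} --- is the observation that a deterministic-circuit-generated forgery message $\ket{\chi^*}=C\ket{0^n}$ still plays exactly the role of the orthogonal forgery message $\ket{\psi^*}$ in the fidelity calculation, so no new Haar-measure estimate is needed; the only genuinely new ingredient over the earlier proofs is recognizing that "apply $C^\dagger$ then measure $0^n$" computes the fidelity with $\ket{\chi^*}$, which is immediate. I do not expect any serious obstacle here; the theorem is a direct corollary of PRI security and the calculation already performed in~\Cref{lem:tt-QMAC:security}.
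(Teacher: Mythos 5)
Your proposal is correct and follows essentially the same route as the paper's own proof: replace $\pri$ with a Haar isometry, observe that the challenger's acceptance probability equals $\Ex_{\haarisometry\mid_V}[\expbraket{\chi^*}{\haarisometry^{-1}(\ket{\phi^*})}{\chi^*}]$ with $\ket{\chi^*}=C\ket{0^n}$, and reuse the bound already established in~\Cref{eq:average_fidelity} within the proof of~\Cref{lem:tt-QMAC:security}. Your write-up is more detailed than the paper's (which simply cites~\Cref{eq:average_fidelity}), but the key observation --- that measuring $0^n$ after $C^\dagger$ computes the fidelity with $\ket{\chi^*}$, so no new Haar estimate is needed --- is exactly the paper's argument.
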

\ifllncs
    The proof of~\Cref{thm:uncom_unforge} can be found in~\Cref{app:proof_qmac}.
\else
    \begin{proof}\ifllncs[Proof of~\Cref{thm:uncom_unforge}]\else\fi
By the security of $\pri$, we replace it with a Haar isometry $\haarisometry$ in the construction. Fix $\lambda$ and queries $\ket{\psi_1},\dots,\ket{\psi_q}$. Suppose $\ket{\psi^*} := C\ket{0^n}$ is orthogonal to all previous queries. Similar to~\Cref{lem:tt-QMAC:security}, we consider the Haar unitary to be partially sampled. From~\Cref{fact:average_inner_product}, the success probability of the forger is
\begin{align*}
\Ex_{\haarisometry\mid_V}[\expbraket{\psi^*}{\haarisometry^{-1}(\ket{\phi^*})}{\psi^*}]
= \negl(\secp)
\end{align*}
from the calculation of~\Cref{eq:average_fidelity}.
\end{proof}
\fi

\subsection{Length Extension of Pseudorandom States}

\label{sec:prs_extension}
We introduce methods to increase the \emph{length} of pseudorandom quantum states while preserving the \emph{number of copies}. In the classical setting, the length extension of pseudorandom strings can be accomplished by repeatedly applying PRGs. On the other hand, since pseudorandom random states are necessarily (highly) pure and entangled~\cite{JLS18,AQY21}, no such method was known that would not decrease the number of copies.

\begin{theorem}[Length Extension Theorem]
Assuming $\qclass_{{\sf Haar}}$-secure pseudorandom isometry, mapping $n$ qubits to $n+m$ qubits, and an $n$-qubit $\mathrm{PRSG}$, there exists an $(n+m)$-$\mathrm{PRSG}$. Similarly, assuming $\qclass_{{\sf Haar}}$-secure pseudorandom isometry, mapping $n$ qubits to $n+m$ qubits, and classical-accessible selectively-secure $(\ell,n)$-$\mathrm{PRFSG}$, there exists an classical-accessible selectively-secure $(\ell,n+m)$-$\mathrm{PRFSG}$.
\end{theorem}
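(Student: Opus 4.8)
The plan is to prove the two length-extension statements separately but by the same template, reducing each to the $\qclass_{{\sf Haar}}$-security of $\pri$ together with the pseudorandomness of the shorter primitive. For the PRSG case: given an $n$-qubit PRSG $G$ and a $\qclass_{{\sf Haar}}$-secure $(n,n+m)$-PRI $\pri = \{F_\secp\}$, define the $(n+m)$-qubit generator $G'$ with key $(k_G, k)$, where $k_G$ is a $G$-key and $k$ is a PRI-key, by $G'_\secp(k_G,k) := F_\secp\bigl(k, G_\secp(k_G)\bigr)$. Fix a polynomial $t = t(\secp)$ bounding the number of copies requested by the distinguisher. I would run a three-step hybrid. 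In $\hybrid_0$ the distinguisher sees $G'_\secp(k_G,k)^{\otimes t} = \bigl(F_k(G(k_G))\bigr)^{\otimes t}$. In $\hybrid_1$ I replace the inner $G(k_G)$ with a fresh Haar state $\ket{\vartheta} \leftarrow \Haar_n$, so the distinguisher sees $\bigl(F_k(\ket{\vartheta})\bigr)^{\otimes t}$ (averaged over $\ket{\vartheta}$ and $k$); indistinguishability here follows from PRSG security of $G$, but with a subtlety I address below. In $\hybrid_2$ I replace $F_k$ by a Haar isometry $\haarisometry$, so the distinguisher sees $\bigl(\haarisometry(\ket{\vartheta})\bigr)^{\otimes t}$; indistinguishability from $\hybrid_1$ is exactly $\qclass_{{\sf Haar}}$-security of $\pri$ with the parameter choice $s = 1$, $t$ copies, no auxiliary register ($\ell' = 0$, so $q = \ell = t$) — note the query state $\Ex_{\ket{\vartheta}}[\ketbra{\vartheta}{\vartheta}^{\otimes t}]$ is precisely an element of $\qclass^{({\sf Haar})}_{n,1,t,0,\secp}$. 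Finally, in $\hybrid_3$ the distinguisher sees $\ket{\gamma}^{\otimes t}$ for $\ket{\gamma} \leftarrow \Haar_{n+m}$; indistinguishability from $\hybrid_2$ is statistical, using \Cref{lem:haar_perp_to_iid} (or rather its consequence that $\haarisometry$ applied to $t$ copies of a random $\ket{\vartheta}$ yields something negligibly close to $t$ copies of a Haar state on the larger space — this follows since $\haarisometry \ket{\vartheta}$ is $U\ket{\vartheta}\ket{0^m}$ for Haar $U$, and averaging over $\ket{\vartheta}$ first gives the maximally mixed symmetric subspace state, which $U^{\otimes t}$ maps to the maximally mixed symmetric subspace state on $n+m$ qubits, negligibly close to $\Ex_{\ket{\gamma}}[\ketbra{\gamma}{\gamma}^{\otimes t}]$).

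The subtlety in the $\hybrid_0 \to \hybrid_1$ step is that the distinguisher does not see $G(k_G)^{\otimes t}$ directly but rather $t$ copies of $F_k$ applied to it; so to invoke PRSG security I would build a reduction $\cB$ that receives $\sigma^{\otimes t}$ (where $\sigma$ is either $G(k_G)$ or $\ket{\vartheta}$), samples its own PRI-key $k$, computes $F_k(\cdot)$ on each of the $t$ copies, and feeds $\bigl(F_k(\sigma)\bigr)^{\otimes t}$ to the PRSG-distinguisher. Since $F_k$ is a QPT-computable channel, $\cB$ is QPT, and its advantage equals the $\hybrid_0$-vs-$\hybrid_1$ advantage. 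This is clean; the only care needed is that $F_k$ is applied \emph{independently to each copy} (which is fine since $F_k$ is just a fixed efficient circuit), so that $\bigl(F_k(\sigma)\bigr)^{\otimes t}$ is indeed the $t$-fold tensor. For the PRFSG case the template is identical: given a selectively-secure classical-accessible $(\ell,n)$-PRFSG $G$ with key $k_G$, define the $(\ell,n+m)$-PRFSG $G'((k_G,k), x) := F_k\bigl(G(k_G,x)\bigr)$. Fix a polynomial number $s$ of (distinct) inputs $x_1,\dots,x_s$ and $t$ copies each. The hybrids now move from $\bigl(F_k(G(k_G,x_i))\bigr)^{\otimes t}$ across $i$, to $\bigl(F_k(\ket{\vartheta_i})\bigr)^{\otimes t}$ with $\ket{\vartheta_i} \leftarrow \Haar_n$ i.i.d. (by PRFSG security, via the same reduction applying $F_k$ to each copy of each of the $s$ states), to $\bigl(\haarisometry(\ket{\vartheta_i})\bigr)^{\otimes t}$ (by $\qclass_{{\sf Haar}}$-security with $s$ states, $t$ copies, $\ell' = 0$), to $\ket{\gamma_i}^{\otimes t}$ with $\ket{\gamma_i} \leftarrow \Haar_{n+m}$ i.i.d. (statistical, by the $s$-fold version of the argument above — here I would use \Cref{lem:haar_perp_to_iid} directly, or note that the $s$-fold tensor of independent averages is handled copy-by-copy).

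The main obstacle I anticipate is nailing down the precise parameter matching in the $\qclass_{{\sf Haar}}$-security invocation: the definition of $\qclass^{({\sf Haar})}_{n,q,\ell,\secp}$ bundles a red (auxiliary, untouched) register of $st$ copies together with a blue register of $st$ copies on which the isometry acts, whereas in the length-extension application the adversary keeps \emph{no} extra copies of the inner Haar states — all $t$ copies of each $\ket{\vartheta_i}$ are fed into $F_k$ (or $\haarisometry$). So I must check that the query state we actually use, $\Ex_{\ket{\vartheta_1},\dots,\ket{\vartheta_s}}\bigl[\bigotimes_i \ketbra{\vartheta_i}{\vartheta_i}^{\otimes t}\bigr]$ with all copies in the blue register and $\ell' = 0$ (so $\ell = st$ coming purely from the $q = st$ count, and the red $\cD(\C^{2^{\ell'}})$ factor is the trivial one-dimensional space), is a legitimate member of $\qclass^{({\sf Haar})}_{n,q,\ell,\secp}$ as defined — I believe it is, taking $\ell' = 0$ so that the red tensor factor $\bigotimes_i \ketbra{\psi_i}{\psi_i}^{\otimes t}$ is simply absorbed and the adversary's "auxiliary" register is empty; but this bookkeeping, and confirming the definition permits $\ell' = 0$ and a trivial $\cD(\C^{2^0})$ factor, is the delicate part. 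A secondary, purely technical point is the final statistical step: one should verify $\TD\bigl(\Ex_U[(U\ketbra{\vartheta}{\vartheta}U^\dagger)^{\otimes t}\text{ averaged over }\ket{\vartheta}], \Ex_{\ket{\gamma}}[\ketbra{\gamma}{\gamma}^{\otimes t}]\bigr)$ is negligible, which reduces to the fact that the image of the maximally mixed state on $\vee^t\C^{2^n}$ under a Haar unitary on $\C^{2^{n+m}}$ is supported on (and uniform over) a sub-symmetric-subspace whose dimension ratio to $\dim(\vee^t\C^{2^{n+m}})$ is $1 - \negl$ when $t = \poly(\secp)$ and $m \ge \secp$; this is a dimension-counting argument of the same flavor as the proof of \Cref{lem:haar_perp_to_iid}.
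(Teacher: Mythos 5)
Your proof follows the paper's own hybrid structure essentially exactly; the paper's Hybrid~2 $\to$ Hybrid~3 step is simply your $\hybrid_1 \to \hybrid_2 \to \hybrid_3$ merged into one line. Two small corrections. First, your handling of the $\qclass_{\sf Haar}$ parameter matching is off: setting $\ell' = 0$ removes only the extra auxiliary $\cD(\C^{2^{\ell'}})$ factor --- it does \emph{not} remove the red copies $\bigotimes_{i=1}^s\ketbra{\psi_i}{\psi_i}^{\otimes t}$, which are always present in a $\qclass^{({\sf Haar})}$ query by definition. The correct resolution is that the $\qclass_{\sf Haar}$-security game hands the adversary strictly \emph{more} information than your length-extension hybrid: it receives the red copies of each $\ket{\vartheta_i}$ \emph{in addition to} the blue (isometry) output, and can simply discard them. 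So given any distinguisher for $\hybrid_1$ versus $\hybrid_2$, the $\qclass_{\sf Haar}$-reduction traces out the red register and runs that distinguisher on the blue output alone, achieving the same advantage; no relabeling of $\ell'$ is needed or correct. Second, your final $\hybrid_2 \to \hybrid_3$ step is an \emph{exact equality}, not merely negligible closeness: for any fixed $\ket{\vartheta}$, the state $\haarisometry\ket{\vartheta} = U\ket{\vartheta}\ket{0^m}$ is exactly Haar-distributed on $\C^{2^{n+m}}$ by unitary invariance of the Haar measure, so
\[
\Ex_{\haarisometry}\bigl[(\haarisometry\ketbra{\vartheta}{\vartheta}\haarisometry^\dagger)^{\otimes t}\bigr] = \Ex_{\ket{\gamma}\gets\Haar_{n+m}}\bigl[\ketbra{\gamma}{\gamma}^{\otimes t}\bigr]
\]
with zero error and no dimension counting. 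Your $\hybrid_0 \to \hybrid_1$ reduction (applying $F_k$ copy-by-copy to the PRSG/Haar samples) is correct and matches the paper.
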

\begin{proof}
The constructions are straightforward. We first construct an $(n+m)$-$\mathrm{PRSG}$ as follows: Let $G$ be an $n$-qubit $\mathrm{PRSG}$ and $\pri$ be a $\qclass_{{\sf Haar}}$-secure $(n,n+m)$-pseudorandom isometry. On input $k = (k_1,k_2)$ where $k_1,k_2\in\bit^\secp$, output $\pri(k_2,G(k_1))$.  Let $t$ be an arbitrary polynomial. Consider the following hybrids:
\begin{itemize}
    \item \textbf{Hybrid~1:} $k_1,k_2\gets\bit^\secp$, output $\pri(k_2,G(k_1))^{\otimes t}$
    \item \textbf{Hybrid~2:} $\ket{\theta}\gets\Haar_n$, $k_2 \gets\bit^\secp$, output $\pri(k_2,\ket{\theta})^{\otimes t}$
    \item \textbf{Hybrid~3:} $\ket{\gamma}\gets\Haar_{n+m}$, output $\ket{\gamma}^{\otimes t}$
\end{itemize}
Hybrids~1 and 2 are computationally indistinguishable from the security of $\mathrm{PRSG}$. Hybrids~2 and 3 are computationally indistinguishable from the security of $\mathrm{PRI}$. We then construct an $(\ell,n+m)$-qubit $\mathrm{PRFSG}$ as follows: Let $F$ be an $(\ell,n)$-qubit $\mathrm{PRFSG}$. On input $k = (k_1,k_2)$ where $k_1,k_2\in\bit^\secp$ and $x\in\bit^\ell$, run $F(k_1,x) = \ket{x}\ket{\theta_x}$ and output $\ket{x}\otimes\pri(k_2,\ket{\theta_x})$. The security follows similarly.
\end{proof}

Next, we introduce another length extension approach that offers an incomparable trade-off compared to the first one. Consider the following scenario: given $t(\secp) = o(\secp)$ copies\footnote{Due to technical issues, we are only able to prove the theorem when $t$ is sublinear in $\secp$.} of a $2n$-qubit Haar state, what is the minimum required randomness in order to generate $t$ copies of a $(2n+m)$-qubit pseudorandom state (where $n(\secp),m(\secp)$ are polynomials)? First, we can ignore the original Haar state and output a truly random state from scratch by employing $t$-designs at the cost of $\poly(t,n+m) = \poly(\secp)$ bits of randomness. Suppose we assume the existence of $(n,n+m)$-PRIs. Trivially, applying the $t$-fold PRI on a fixed initial state can generate $(n+m)$-bit pseudorandom state at the cost of $\secp$ bits of randomness (which serve as the key of the PRI). In the following, we show that the output obtained by applying the $t$-fold PRI on the last $n$ qubits of every Haar state is computationally indistinguishable from $t$-copies of a $(2n+m)$-qubit Haar state.

\begin{theorem}[Another Length Extension Theorem]
\label{thm:length_extension}
Let $\set{F_\secp}_{\secp\in\N}$ be an $(n,n+m)$-$\mathrm{PRI}$, $t=t(\secp)$,
\[
\rho := 
\Ex_{\ket{\theta}\gets\Haar_{2n},k\in\bit^\secp} \left[ (I_n \otimes F_k)^{\otimes t} \ketbra{\theta}{\theta}^{\otimes t} (I_n \otimes F_k^\dagger)^{\otimes t} \right],
\]
where $F_k$ means $F_\secp(k,\cdot)$ and $I_n$ is the identity operator on $n$ qubits, and 
\[
\sigma := \Ex_{\ket{\gamma}\gets\Haar_{2n+m}} \left[ \ketbra{\gamma}{\gamma}^{\otimes t} \right].
\]
Then any non-uniform QPT adversary has at most $O(t!t^2/2^{n+m} + t^2/2^n)$ advantage in distinguishing $\rho$ from $\sigma$.
\end{theorem}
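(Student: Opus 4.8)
The plan is to proceed by a hybrid argument that first eliminates the computational assumption and then reduces to the information-theoretic analysis already sketched in the technical overview. First I would replace the PRI $F_k$ with a Haar-random isometry $\haarisometry$; since the adversary only ever feeds copies of (the last $n$ qubits of) a single pure state $\ket{\theta}$ into the oracle — more precisely, the input to the $t$-fold PRI is $\ketbra{\theta}{\theta}^{\otimes t}$ restricted to the relevant register, which lies in the symmetric subspace and is exactly of the form handled by $\qclass_{\single}$-security (\Cref{lem:single_proof}, via~\Cref{lem:multicopy_info}) — this substitution costs only $\negl(\secp)$ in distinguishing advantage. So it suffices to bound $\TD(\rho', \sigma)$ where
\[
\rho' := \Ex_{\ket{\theta}\gets\Haar_{2n},\ \haarisometry\gets\overline{\Haar_{n,n+m}}} \left[ (I_n \otimes \haarisometry)^{\otimes t} \ketbra{\theta}{\theta}^{\otimes t} (I_n \otimes \haarisometry^\dagger)^{\otimes t} \right].
\]

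Next I would rewrite $\ketbra{\theta}{\theta}^{\otimes t}$ using~\Cref{fact:avg-haar-random} as a uniform mixture of $2n$-qubit type states $\ketbra{\type_T}{\type_T}$ over types $T$ of weight $t$, and then apply a collision bound (essentially~\Cref{fact:random_type_halves_collision} with $\ell = k = n$, since each element of $\setT(T)$ is an $n+n$ bit string $x_i\|y_i$) to argue that up to $O(t^2/2^n)$ loss we may assume $T$ is \emph{good}: the $x_i$'s are pairwise distinct and the $y_i$'s are pairwise distinct. For such a good type, I would expand $\ket{\type_T}$ via~\Cref{lem:type_struc} as $\frac{1}{t!}\sum_{\sigma,\pi\in S_t}\ketbra{\sigma(\vec{x})}{\pi(\vec{x})}$ where $\vec{x} = \vec{a}\|\vec{b}$ with $\vec{a}\in\bit^{nt}$ distinct and $\vec{b}\in\bit^{nt}$ distinct, split each permutation's action across the two halves (the permutation operators $P_\sigma,P_\pi$ commute with $\haarisometry^{\otimes t}$ as they merely permute the $t$ copies), and obtain
\[
\rho'_{T} = \frac{1}{t!}\sum_{\sigma,\pi\in S_t} \ketbra{\sigma(\vec{a})}{\pi(\vec{a})} \otimes P_\sigma \Ex_{\haarisometry}\left[\haarisometry^{\otimes t}\ketbra{\vec{b}}{\vec{b}}(\haarisometry^\dagger)^{\otimes t}\right] P_\pi^\dagger.
\]
Here I would invoke~\Cref{lem:haar_perp} (the $t=1$... rather, the general $s$-fold orthogonal-inputs corollary, or more precisely~\Cref{lem:haar_perp_to_iid}) to see that $\Ex_{\haarisometry}[\haarisometry^{\otimes t}\ketbra{\vec{b}}{\vec{b}}(\haarisometry^\dagger)^{\otimes t}]$ — which, writing $\haarisometry\ket{b_i} = U\ket{b_i}\ket{0^m}$ for a Haar unitary $U$, equals $\Ex_U[U^{\otimes t}\ketbra{\vec{b}}{\vec{b}\,}(U^\dagger)^{\otimes t}]$ on distinct inputs — is $O(t^2/2^{n+m})$-close to the uniform mixture over $\ketbra{\vec{z}}{\vec{z}}$ with $\vec{z}\in\bit^{(n+m)t}$ having pairwise distinct coordinates. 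Substituting this fully-scrambled block back, the operator between $P_\sigma$ and $P_\pi^\dagger$ becomes (close to) a permutation-invariant maximally mixed state on distinct coordinates, and resumming over $\sigma,\pi$ via~\Cref{lem:type_struc} in reverse rebuilds a uniform mixture of $(2n+m)$-qubit type states $\ketbra{\type_{T'}}{\type_{T'}}$ spanning vectors $\ket{\vec{a}}\ket{\vec z}$ with $\vec z$ distinct — which, by one final collision bound, is $O(t^2/2^{n+m})$-close to $\sigma = \Ex_{\ket\gamma}[\ketbra\gamma\gamma^{\otimes t}]$.

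The main obstacle, as flagged in the overview, is the $t!$ blow-up: to replace the inner block $\Ex_{\haarisometry}[\haarisometry^{\otimes t}\ketbra{\vec b}{\vec b}(\haarisometry^\dagger)^{\otimes t}]$ inside the double sum $\sum_{\sigma,\pi}$, I can only afford a union bound over the $(t!)^2$ pairs $(\sigma,\pi)$ — or rather over the $t!$ conjugates, after using $P_\sigma(\cdot)P_\pi^\dagger$ structure — so the error from~\Cref{lem:haar_perp_to_iid} gets multiplied by $t!$, giving the $O(t!\,t^2/2^{n+m})$ term in the statement and forcing $t = o(\secp)$ (e.g. $t=\poly\log\secp$) for the bound to be meaningful. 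I do not see how to correlate the errors across different $(\sigma,\pi)$ to avoid this, so I would simply carry the $t!$ factor through; combining the $O(t!\,t^2/2^{n+m})$ scrambling/collision loss with the $O(t^2/2^n)$ good-type loss and the $\negl(\secp)$ from the initial PRI substitution yields the claimed advantage $O(t!\,t^2/2^{n+m} + t^2/2^n)$.
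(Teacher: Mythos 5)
Your overall plan tracks the paper's proof of \Cref{thm:length_extension} very closely: rewrite $\ket{\theta}^{\otimes t}$ in the type basis via \Cref{fact:avg-haar-random}, discard non-good types at a $O(t^2/2^n)$ cost using \Cref{fact:random_type_halves_collision}, expand a good type state as $\frac{1}{t!}\sum_{\sigma,\pi}\ketbra{\sigma(\vec{x})}{\pi(\vec{x})}\otimes\ketbra{\sigma(\vec{y})}{\pi(\vec{y})}$ via \Cref{lem:type_struc}, commute $P_\sigma,P_\pi^\dagger$ past $\haarisometry^{\otimes t}$, replace the inner block $\Ex_U[U^{\otimes t}\ketbra{\vec{y}\odot 0^m}{\vec{y}\odot 0^m}(U^\dagger)^{\otimes t}]$ by the uniform mixture over $\cS_{n+m,t}$ using \Cref{lem:haar_perp}, union bound over $\sigma,\pi$ to pick up the $t!$ factor, resum to recover a uniform type-state mixture, and finish with another collision bound. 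That is precisely what the paper's \Cref{lemma:type_suffix_mix} does, including where the $t!$ blowup enters and why it cannot be avoided by this route.

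The one genuine gap is the first step. You justify replacing $F_k$ by a Haar isometry via $\qclass_{\single}$-security (\Cref{lem:single_proof}, through \Cref{lem:multicopy_info}). That security notion does not cover the query arising here. $\qclass_{\single}$ requires the query to have the form ${\cal D}(\C^{2^\ell})\otimes\ketbra{\psi}{\psi}^{\otimes q}$: the register the adversary retains must be \emph{in tensor product} with the register fed to the isometry, and the latter must carry $q$ identical copies of a single pure state. Neither condition holds for $\Ex_{\ket{\theta}\gets\Haar_{2n}}\ketbra{\theta}{\theta}^{\otimes t}$: a Haar-random $2n$-qubit state $\ket{\theta}$ is generically entangled across its two $n$-qubit halves, so the half the adversary keeps is entangled with the half fed to $F_k$; and the reduced state of $\ketbra{\theta}{\theta}^{\otimes t}$ on the PRI register is a mixed state supported on the symmetric subspace, not a tensor power of a pure state. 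Living in the symmetric subspace is necessary but far from sufficient for $\qclass_{\single}$. For the same reasons the query is not of the $\qclass_{\sf Haar}$ form either (that class is built from $s$ i.i.d.\ $n$-qubit Haar states with $2t$ copies each, whereas here there is a single $2n$-qubit Haar state with $t$ copies; the two distributions over $(\C^{2^n})^{\otimes 2t}$ are not the same). The theorem statement in fact posits an $(n,n+m)$-PRI as a primitive in the strong (selective) sense, secure against this entangled query — a security guarantee that is beyond the three $\qclass$ classes established in \Cref{sec:closeness_to_rho_uni} and is only conjectured for the paper's construction. So you should drop the appeal to \Cref{lem:single_proof}/\Cref{lem:multicopy_info} and simply invoke the PRI assumption from the theorem hypothesis; with that change, the remainder of the argument is correct and matches the paper.
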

\ifllncs
    The proof of~\Cref{thm:length_extension} can be found in~\Cref{app:proof_extension}.
\else
    \begin{proof}\ifllncs[Proof of~\Cref{thm:length_extension}]\else\fi
By security of the PRI, we will consider
\begin{align*}
\rho' := \Ex_{\ket{\theta}\gets\Haar_{2n}, \haarisometry} \left[ (I_n \otimes \haarisometry)^{\otimes t} \ketbra{\theta}{\theta}^{\otimes t} (I_n \otimes \haarisometry^\dagger)^{\otimes t} \right].
\end{align*}
It's sufficient to prove that $\TD(\rho',\sigma) = O(t!t^2/2^{n+m}+t^2/2^n)$. Expanding $t$-copies of a Haar state in the type basis (\Cref{fact:avg-haar-random}), we can write $\rho'$ as
\[
\rho' = \Ex_{T \gets [t+1]^N\mid_{\hamming(T)=t}, \haarisometry} \left[ (I_n \otimes \haarisometry)^{\otimes t} \ketbra{\type_T}{\type_T} (I_n \otimes \haarisometry^\dagger)^{\otimes t} \right],
\]
where $N := 2^{2n}$.

Given a type $T \in [t+1]^N$ such that $\setT(T) = \Vec{x}||\Vec{y} = \set{x_1||y_1,\dots,x_t||y_t}$, where $x_i,y_i\in\bit^n$. We say $T$ is \emph{good} if and only if (1) all $x_i$'s are pairwise distinct, and (2) all $y_i$'s are pairwise distinct. The following observation regarding good types is the crux of the proof. Intuitively, the Haar isometry scrambles the last $n$ bits of every element in $\setT(T)$, i.e., $\Vec{y}$, to a random vector with no repeating coordinates.

\begin{lemma} \label{lemma:type_suffix_mix}
For every good type $T$ with $\setT(T) = \Vec{x}||\Vec{y} = \set{x_1||y_1,\dots,x_t||y_t}$, where $x_i,y_j\in\bit^n$ and w.l.o.g. $x_1 < x_2 < \dots < x_t$, let
\begin{align*}
\rho_{\sf left} := \Ex_{\haarisometry} \left[ (I_n \otimes \haarisometry)^{\otimes t} \ketbra{\type_T}{\type_T} (I_n \otimes \haarisometry^\dagger)^{\otimes t} \right]
\end{align*}
and
\begin{align*}
\rho_{\sf right} := \Ex\left[ \ketbra{\type_{T'}}{\type_{T'}}: 
\substack{(z_1,z_2,\dots,z_t) \ugets \cS_{n+m,t}, \\ T' := \type(\Vec{x}||\Vec{z})} \right],
\end{align*}
where $\cS_{n+m,t} := \set{ \Vec{z} = (z_1,z_2,\dots,z_t) \in \bit^{(n+m)t}: \vec{z} \text{ has no repeating coordinates} }$. Then $\TD \left( \rho_{\sf left}, \rho_{\sf right} \right) \leq O\left( \frac{t!t^2}{2^{n+m}} \right)$.
\end{lemma}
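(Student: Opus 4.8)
The plan is to prove Lemma~\ref{lemma:type_suffix_mix} by expanding the type state via \Cref{lem:type_struc} and exploiting the fact that the permutation operators $P_\sigma$ commute with the $t$-fold isometry $(I_n\otimes\haarisometry)^{\otimes t}$, then invoking \Cref{lem:haar_perp} (the $t=1$ case of \Cref{lem:haar_perp_to_iid}) to analyze how $\haarisometry^{\otimes t}$ scrambles the suffixes. Concretely, since $T$ is good, all the $x_i$ are distinct, so I may order them $x_1<\dots<x_t$ and write $\Vec{x}$ for this sorted tuple with $\Vec{y}$ the correspondingly ordered suffixes. Using \Cref{lem:type_struc},
\[
\ketbra{\type_T}{\type_T} = \frac{1}{t!}\sum_{\sigma,\pi\in S_t}\ketbra{\sigma(\Vec{x}||\Vec{y})}{\pi(\Vec{x}||\Vec{y})}
= \frac{1}{t!}\sum_{\sigma,\pi\in S_t}\ketbra{\sigma(\Vec{x})}{\pi(\Vec{x})}\otimes\ketbra{\sigma(\Vec{y})}{\pi(\Vec{y})},
\]
where in the last step I reorganize registers so that all $n$-qubit prefixes come first. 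Applying $(I_n\otimes\haarisometry)^{\otimes t}$, the identity leaves $\ketbra{\sigma(\Vec{x})}{\pi(\Vec{x})}$ untouched and $\haarisometry^{\otimes t}$ acts on the suffix part; using that $\haarisometry^{\otimes t}P_\tau = P_\tau\haarisometry^{\otimes t}$ for any $\tau\in S_t$ (the permutation acts on the order of the $t$ copies, commuting with the tensored isometry), I can pull the permutations out:
\[
\rho_{\sf left} = \frac{1}{t!}\sum_{\sigma,\pi\in S_t}\ketbra{\sigma(\Vec{x})}{\pi(\Vec{x})}\otimes P_\sigma\,\Ex_{\haarisometry}\!\left[\haarisometry^{\otimes t}\ketbra{\Vec{y}}{\Vec{y}}(\haarisometry^\dagger)^{\otimes t}\right]P_\pi^\dagger.
\]

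Next I handle the central term $\Ex_{\haarisometry}[\haarisometry^{\otimes t}\ketbra{\Vec{y}}{\Vec{y}}(\haarisometry^\dagger)^{\otimes t}]$. Writing $\haarisometry\ket{y} = U\ket{y}\ket{0^m}$ for a Haar unitary $U$ on $n+m$ qubits (so that $\haarisometry^{\otimes t}\ket{\Vec{y}} = U^{\otimes t}(\ket{\Vec{y}}\ket{0^m}^{\otimes t})$ after suitable register bookkeeping), and using that $\Vec{y}$ has distinct coordinates (goodness condition (2)), \Cref{lem:haar_perp} applied with parameters $(n+m,t)$ to the vector $\Vec{y}||0^m := (y_1||0^m,\dots,y_t||0^m)$ — which also has distinct coordinates — gives that this average is within trace distance $O(t^2/2^{n+m})$ of $\Ex[\ketbra{\Vec{z}}{\Vec{z}}:\Vec{z}\ugets\cS_{n+m,t}]$. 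Plugging this approximation back in and using the triangle inequality together with the fact that conjugating by $P_\sigma(\cdot)P_\pi^\dagger$ and tensoring with a fixed operator are trace-norm contractions, followed by a union bound over the $(t!)^2$ pairs $(\sigma,\pi)$, I get that $\rho_{\sf left}$ is within $\frac{1}{t!}\cdot(t!)^2\cdot O(t^2/2^{n+m}) = O(t!\,t^2/2^{n+m})$ of
\[
\frac{1}{t!}\sum_{\sigma,\pi\in S_t}\ketbra{\sigma(\Vec{x})}{\pi(\Vec{x})}\otimes P_\sigma\,\Ex_{\Vec{z}\ugets\cS_{n+m,t}}\!\left[\ketbra{\Vec{z}}{\Vec{z}}\right]P_\pi^\dagger.
\]
Finally, since $\Ex_{\Vec{z}\ugets\cS_{n+m,t}}[\ketbra{\Vec{z}}{\Vec{z}}]$ is invariant under simultaneous permutation of the $t$ coordinates, I can re-absorb $P_\sigma, P_\pi^\dagger$ into a sum over $\Vec{z}$; carefully, $P_\sigma\ketbra{\Vec{z}}{\Vec{z}}P_\pi^\dagger = \ketbra{\sigma(\Vec{z})}{\pi(\Vec{z})}$, and averaging $\Vec{z}$ over $\cS_{n+m,t}$ makes the map $\Vec{z}\mapsto\sigma(\Vec{z})$ (resp. $\pi$) a bijection of $\cS_{n+m,t}$, so the double sum reassembles as $\frac{1}{t!}\sum_{\sigma,\pi}\ketbra{\sigma(\Vec{x}||\Vec{z})}{\pi(\Vec{x}||\Vec{z})}$ averaged over $\Vec{z}$, which by \Cref{lem:type_struc} in reverse is exactly $\Ex_{\Vec{z}\ugets\cS_{n+m,t}}[\ketbra{\type_{T'}}{\type_{T'}}]$ with $T' = \type(\Vec{x}||\Vec{z}) = \rho_{\sf right}$. (A minor point to check: since $\Vec{x}$ is fixed with distinct entries and $\Vec{z}$ has distinct entries, $\Vec{x}||\Vec{z}$ has distinct entries, so each $T'$ arising this way is a genuine type with $\hamming(T')=t$; and the type state $\ket{\type_{T'}}$ is well-defined.) This yields the claimed bound $\TD(\rho_{\sf left},\rho_{\sf right}) = O(t!\,t^2/2^{n+m})$.

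With Lemma~\ref{lemma:type_suffix_mix} in hand, the theorem follows by two collision bounds. First, by \Cref{fact:random_type_halves_collision} applied with $\ell = k = n$, a uniformly random type $T\in[t+1]^{2^{2n}}$ with $\hamming(T)=t$ is good except with probability $O(t^2/2^n)$, so $\rho'$ is within $O(t^2/2^n)$ (in trace distance, via strong convexity over the choice of $T$) of the average of $\rho_{\sf left}$ over good $T$ only; applying Lemma~\ref{lemma:type_suffix_mix} termwise and strong convexity again, this is within $O(t!\,t^2/2^{n+m})$ of $\Ex_{T\text{ good}}[\rho_{\sf right}]$. Second, I need to argue $\Ex_{T\text{ good}}[\rho_{\sf right}]$ is close to $\sigma = \Ex_{\ket{\gamma}\gets\Haar_{2n+m}}[\ketbra{\gamma}{\gamma}^{\otimes t}]$. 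By \Cref{fact:avg-haar-random}, $\sigma = \Ex_{S\gets[t+1]^{2^{2n+m}}\mid\hamming(S)=t}[\ketbra{\type_S}{\type_S}]$; by a collision bound, a uniformly random such $S$ has distinct support except with probability $O(t^2/2^{2n+m}) = O(t^2/2^{n+m})$, and conditioned on distinct support $S = \type(\Vec{w})$ for $\Vec{w}\ugets\cS_{2n+m,t}$. It remains to compare the distribution of $\type(\Vec{x}||\Vec{z})$ — where $\Vec{x}$ is the sorted support of a uniformly random good $T$ and $\Vec{z}\ugets\cS_{n+m,t}$ independently — with the distribution of $\type(\Vec{w})$, $\Vec{w}\ugets\cS_{2n+m,t}$: both are (up to the already-accounted collision events) just uniform over types of $2n+m$-bit vectors with $t$ distinct entries, since the prefix part of a good $T$ is itself uniform over $t$-subsets of $\bit^n$ and the pairing of prefixes with independent distinct suffixes $\Vec{z}$ produces a uniform distinct $2n+m$-bit tuple up to ordering; any discrepancy (e.g. prefix collisions among the $x_i$ being forced distinct in $\Vec{x}$ but allowed-then-excluded in $\Vec{w}$) contributes only another $O(t^2/2^n)$. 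Summing all the losses gives $\TD(\rho',\sigma) = O(t!\,t^2/2^{n+m}+t^2/2^n)$, hence the same bound on the distinguishing advantage.

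The main obstacle is the $t!$ loss in Lemma~\ref{lemma:type_suffix_mix}: the clean way to bound $\rho_{\sf left} - \rho_{\sf right}$ would be to relate them without expanding into the $(t!)^2$ permutation terms, but the approximation from \Cref{lem:haar_perp} lives inside the ``diagonal'' $\ketbra{\Vec{y}}{\Vec{y}}$ term, and the only route I see is to substitute it into each $(\sigma,\pi)$ summand and union-bound, which unavoidably multiplies the error by $(t!)^2$ before the $1/t!$ normalization. This is exactly why the theorem is stated only for $t$ sublinear in $\secp$ (e.g. $t = \poly\log\secp$), so that $t!/2^{n+m}$ remains negligible; I do not see how to avoid this loss, and eliminating it — by finding a direct comparison of $\rho_{\sf left}$ and $\rho_{\sf right}$ that does not pass through individual permutation terms — would be the natural improvement. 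A secondary subtlety is the register-reorganization bookkeeping: the isometry enlarges the Hilbert space, so one must be careful that the permutation operators $P_\sigma$ before and after applying $\haarisometry^{\otimes t}$ refer to permuting the same $t$ tensor copies (just over a larger single-copy space), which is true but worth stating explicitly in the proof.
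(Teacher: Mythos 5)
Your proof is correct and follows essentially the same route as the paper's: expand $\ket{\type_T}$ into the $(t!)^2$ permutation-pair terms via \Cref{lem:type_struc}, commute $P_\sigma, P_\pi$ through $\haarisometry^{\otimes t}$, apply \Cref{lem:haar_perp} to the central diagonal term $\Ex_U[U^{\otimes t}\ketbra{\Vec{y}\odot 0^m}{\Vec{y}\odot 0^m}(U^\dagger)^{\otimes t}]$, union-bound over all $(\sigma,\pi)$ to pick up the $t!$ factor, and reassemble into $\rho_{\sf right}$ using \Cref{lem:type_struc} in reverse. Your explicit remark about register bookkeeping (that $P_\sigma$ permutes the same $t$ tensor factors before and after the isometry enlarges the per-copy space) and your diagnosis of where the $t!$ loss comes from both match the paper's reasoning.
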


\begin{proof}[Proof of~\Cref{lemma:type_suffix_mix}]
By the definition of type vectors (\Cref{def:type_states}) and the premise that $T$ is good, we have
\begin{align*}
\ketbra{\type_T}{\type_T}
= \frac{1}{t!} \sum_{\sigma,\pi\in S_t} \ketbra{\sigma(\Vec{x}||\Vec{y})}{\pi(\Vec{x}||\Vec{y})}
= \frac{1}{t!} \sum_{\sigma,\pi\in S_t} \ketbra{\sigma(\Vec{x})}{\pi(\Vec{x})} \otimes \ketbra{\sigma(\Vec{y})}{\pi(\Vec{y})}.
\end{align*}

\noindent Thus, it holds that
\begin{align*}
\rho_{\sf left}
= & \Ex_{\haarisometry} \left[ (I \otimes \haarisometry)^{\otimes t} \ketbra{\type_T}{\type_T} (I  \otimes \haarisometry^\dagger)^{\otimes t} \right] \\
= & \frac{1}{t!} \sum_{\sigma,\pi\in S_t} \ketbra{\sigma(\Vec{x})}{\pi(\Vec{x})} \otimes \Ex_{\haarisometry}[ \haarisometry^{\otimes t} \ketbra{\sigma(\Vec{y})}{\pi(\Vec{y})} (\haarisometry^\dagger)^{\otimes t}] \\
= & \frac{1}{t!} \sum_{\sigma,\pi\in S_t} \ketbra{\sigma(\Vec{x})}{\pi(\Vec{x})} \otimes \Ex_{U \gets \overline{\Haar_{n+m}}}[ U^{\otimes t} \ketbra{\sigma(\Vec{y} \odot 0^m)}{\pi(\Vec{y} \odot 0^m)} (U^\dagger)^{\otimes t}] \\
= & \frac{1}{t!} \sum_{\sigma,\pi\in S_t} \ketbra{\sigma(\Vec{x})}{\pi(\Vec{x})} \otimes P_\sigma \Ex_{U \gets \overline{\Haar_{n+m}}} \left[ U^{\otimes t} \ketbra{\Vec{y} \odot 0^m}{\Vec{y} \odot 0^m} (U^\dagger)^{\otimes t} \right] P^\dagger_\pi,
\end{align*}
where $\Vec{y} \odot 0^m$ denotes $(y_1||0^m,\dots,y_t||0^m)$. Note that $\Vec{y} \odot 0^m$ also has no repeating coordinates. From unitary invariance of trace distance and \Cref{lem:haar_perp}, for every $\sigma,\pi\in S_t$,
\begin{align*}
\TD \Bigg( 
P_\sigma \Ex_{U \gets \overline{\Haar_{n+m}}} \left[ U^{\otimes t} \ketbra{\Vec{y} \odot 0^m}{\Vec{y} \odot 0^m} (U^\dagger)^{\otimes t}\right] & P^\dagger_\pi, \\
 P_\sigma \Ex&\left[ \ketbra{\Vec{z}}{\Vec{z}}: \Vec{z} \ugets \cS_{n+m,t} \right] P^\dagger_\pi 
\Bigg) \\
=  \TD \Bigg( 
\Ex_{U \gets \overline{\Haar_{n+m}}} \left[ U^{\otimes t} \ketbra{\Vec{y} \odot 0^m}{\Vec{y} \odot 0^m} (U^\dagger)^{\otimes t} \right]&, 
\Ex\left[ \ketbra{\Vec{z}}{\Vec{z}}: \Vec{z} \ugets \cS_{n+m,t} \right]
\Bigg) \\
\leq & O(t^2/2^{n+m}).
\end{align*}

\noindent By triangle inequalities over all $\sigma,\pi\in S_t$, the density matrix $\rho_{\sf left}$ is $O(t!t^2/2^{n+m})$-close to
\begin{align*}
& \frac{1}{t!} \sum_{\sigma,\pi\in S_t} \ketbra{\sigma(\Vec{x})}{\pi(\Vec{x})} \otimes P_\sigma \Ex\left[ \ketbra{\Vec{z}}{\Vec{z}}: \Vec{z} \ugets \cS_{n+m,t} \right] P^\dagger_\pi \\
= & \frac{1}{t!} \sum_{\sigma,\pi\in S_t} \ketbra{\sigma(\Vec{x})}{\pi(\Vec{x})} \otimes \Ex\left[ \ketbra{\sigma(\Vec{z})}{\pi(\Vec{z})}: \Vec{z} \ugets \cS_{n+m,t} \right] \\
= & \Ex_{\Vec{z} \ugets \cS_{n+m,t}} \left[ \frac{1}{t!} \sum_{\sigma,\pi\in S_t} \ketbra{\sigma(\Vec{x})}{\pi(\Vec{x})} \otimes  \ketbra{\sigma(\Vec{z})}{\pi(\Vec{z})} \right] \\
= & \Ex\left[ \ketbra{\type_{T'}}{\type_{T'}}:
\substack{(z_1,z_2,\dots,z_t) \ugets \cS_{n+m,t}, \\ T' := \type(\Vec{x}||\Vec{z})} \right] \\
= & \rho_{\sf right}.
\end{align*}
This finishes the proof of~\Cref{lemma:type_suffix_mix}.
\end{proof}

\noindent Now, we continue proving~\Cref{thm:length_extension}. In density matrix $\rho'$, the probability of a $t$-size type $T$ sampled uniformly from $[t+1]^N$ being good is at least $1 - O(t^2/2^n)$ from~\Cref{fact:random_type_halves_collision}. Hence, $\TD(\rho',\rho'_{\sf good}) = O(t^2/2^n)$, where
\begin{align*}
\rho'_{\sf good} := \Ex_{ T \gets [t+1]^N\mid_{\hamming(T)=t\land T \text{ is good}} } \Ex_{\haarisometry} \left[ (I_n \otimes \haarisometry)^{\otimes t} \ketbra{\type_T}{\type_T} (I_n \otimes \haarisometry^\dagger)^{\otimes t} \right].
\end{align*}
Then applying~\Cref{lemma:type_suffix_mix} to every (good) $T$ in $\rho'_{\sf good}$, we have $\TD(\rho'_{\sf good},\rho'') = O(t!t^2/2^{n+m})$ where 
\begin{align*}
\rho'' 
& := \Ex \left[ \ketbra{\type_{T''}}{\type_{T''}}: 
    \substack{ 
        T \gets [t+1]^N\mid_{\hamming(T)=t \land T \text{ is good}}, \\
        \setT(T) = \set{x_1||y_1,\dots,x_t||y_t}\ s.t.\ x_1< \dots <x_t, \\ 
        (z_1,z_2,\dots,z_t) \gets \cS_{n+m,t}, \\
        T'' := \type(x_1||z_1,\dots,x_t||z_t)
    } 
\right] \\
& = \Ex \left[ \ketbra{\type_{T''}}{\type_{T''}}: 
    \substack{ 
        (x_1,x_2,\dots,x_t) \gets \cS_{n,t}, \\ 
        (z_1,z_2,\dots,z_t) \gets \cS_{n+m,t}, \\
        T'' := \type(x_1||z_1,\dots,x_t||z_t)
    } 
\right].
\end{align*}
Again, we expand $\sigma$ in the type basis (\Cref{fact:avg-haar-random}),
\[
\sigma = \Ex_{T \gets [t+1]^M\mid_{\hamming(T)=t} } \left[ \ketbra{\type_T}{\type_T} \right],
\]
where $M := 2^{2n+m}$. To upper bound $\TD(\rho'',\sigma)$, it's sufficient to bound the statistical distance between $T''$ defined in $\rho''$ and a uniformly random $t$-size $T$ in $[t+1]^M$. This is at most $O(t^2/2^n) + O(t^2/2^{n+m})$ from~\Cref{fact:random_type_halves_collision}. Combining the bounds completes the proof of~\Cref{thm:length_extension}.
\end{proof}
\fi

\section*{Acknowledgements}
We thank Fermi Ma for useful discussions. 

\printbibliography
\appendix

\ifllncs
    
    \section{Omitted Proofs in~\Cref{sec:prelim}} \label{app:proof_prelim}
\pnote{I'd suggest bringing this entire section into prelims} \fatih{done}
\begin{fact}[Projection onto symmetric subspace stabilizes type states] \label{fact:sym_stab_type}
For all $N,t \in \N$ and $T\in [t+1]^N$ such that $\hamming(T) = t$,
\begin{align*}
    \Pi^{N,t}_\sym \ket{\type_T} = \ket{\type_T}.
\end{align*}
\end{fact}

\begin{fact}[Average inner product with Haar states] \label{fact:average_inner_product}
For any $N\in\N$ and fixed $\ket{\psi}\in\C^N$, $\Ex_{\ket{\vartheta}\gets\Haar(\C^N)}[|\inner{\psi}{\vartheta}|^2] = 1/N$.
\end{fact}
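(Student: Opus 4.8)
The plan is to reduce this to the already-established averaging identity for $t$ copies of a Haar state (Fact~\ref{fact:avg-haar-random}) specialized to $t=1$. First I would rewrite the quantity of interest as a trace: for any unit vectors $\ket{\psi},\ket{\vartheta}$ we have $|\inner{\psi}{\vartheta}|^2 = \Tr\!\left(\ketbra{\psi}{\psi}\,\ketbra{\vartheta}{\vartheta}\right)$. Since $\ketbra{\psi}{\psi}$ is a fixed operator and the trace is linear, taking the expectation over $\ket{\vartheta}\gets\Haar(\C^N)$ moves inside:
\[
\Ex_{\ket{\vartheta}\gets\Haar(\C^N)}\bigl[|\inner{\psi}{\vartheta}|^2\bigr]
= \Tr\!\left(\ketbra{\psi}{\psi}\cdot \Ex_{\ket{\vartheta}\gets\Haar(\C^N)}\bigl[\ketbra{\vartheta}{\vartheta}\bigr]\right).
\]

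Next I would invoke Fact~\ref{fact:avg-haar-random} with $H = \C^N$ and $t=1$, which gives $\Ex_{\ket{\vartheta}\gets\Haar(\C^N)}\ketbra{\vartheta}{\vartheta} = \Pi^{\C^N,1}_\sym / \Tr(\Pi^{\C^N,1}_\sym)$. Because $\vee^1\C^N = \C^N$ (the symmetric subspace of a single copy is the whole space), the projector $\Pi^{\C^N,1}_\sym$ is simply the identity $I_N$, and by Fact~\ref{fact:dim-sym} (or directly) $\Tr(I_N) = N = \binom{N+1-1}{1}$. Hence $\Ex_{\ket{\vartheta}\gets\Haar(\C^N)}\ketbra{\vartheta}{\vartheta} = I_N/N$. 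Substituting back,
\[
\Ex_{\ket{\vartheta}\gets\Haar(\C^N)}\bigl[|\inner{\psi}{\vartheta}|^2\bigr]
= \Tr\!\left(\ketbra{\psi}{\psi}\cdot \frac{I_N}{N}\right)
= \frac{1}{N}\Tr\!\left(\ketbra{\psi}{\psi}\right)
= \frac{\inner{\psi}{\psi}}{N} = \frac{1}{N},
\]
using that $\ket{\psi}$ is a unit vector.

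There is essentially no obstacle here; the only thing to be careful about is making sure the specialization $t=1$ of Fact~\ref{fact:avg-haar-random} is applied to the correct ambient Hilbert space $\C^N$ and that one correctly identifies $\Pi^{\C^N,1}_\sym$ with the identity. As an alternative (self-contained) route, one could instead invoke unitary invariance of the Haar measure: since the distribution of $\ket{\vartheta}$ is invariant under any unitary $U$, we may replace $\ket{\psi}$ by $U\ket{\psi}$ for any $U$, so the expectation depends only on $\|\ket{\psi}\|=1$; choosing $\ket{\psi}=\ket{1}$ and writing $\ket{\vartheta}=\sum_{i=1}^N\vartheta_i\ket{i}$, symmetry forces $\Ex[|\vartheta_i|^2]$ to be independent of $i$, and $\sum_i\Ex[|\vartheta_i|^2]=\Ex[\inner{\vartheta}{\vartheta}]=1$ yields $\Ex[|\vartheta_1|^2]=1/N$. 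I would present the first argument as the main proof since it reuses machinery already in the paper, and mention the second only if a more elementary derivation is preferred.
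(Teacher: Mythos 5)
Your proposal is correct, and both arguments you give are sound. The paper itself does not prove this fact — it is stated without proof as one of the standard facts about Haar measures and symmetric subspaces, with the preceding text deferring to external references (\cite{Harrow13church,mele2023introduction}). Your first route (taking the $t=1$ specialization of Fact~\ref{fact:avg-haar-random}, observing that $\vee^1\C^N=\C^N$ so $\Pi^{\C^N,1}_{\sym}=I_N$, and then folding the fixed rank-one projector into the trace) is exactly the standard derivation and has the virtue of reusing machinery already present in the paper, so it is the natural choice for a self-contained proof. Your second route (unitary invariance plus the normalization $\sum_i \Ex[|\vartheta_i|^2]=1$) is an equally valid elementary argument; it trades the appeal to the symmetric-subspace formula for a direct symmetry argument, and is essentially what underlies the proof of Fact~\ref{fact:avg-haar-random} in the first place. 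Either would serve; your instinct to present the first as primary is reasonable.
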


\begin{lemma}[Theorem~7.5 in~\cite{WatrousBook}, restated] \label{lem:multi_span_sym}
For all $N,t \in \N$, there exists a finite set $\cS$ of vectors in $\C^N$ such that $\vee^t\C^N = \mathsf{span}\set{\ket{v}^{\otimes t}: \ket{v}\in\cS}$.
\end{lemma}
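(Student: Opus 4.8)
The plan is to give the short self-contained linear-algebraic argument; note that the statement is precisely Theorem~7.5 of~\cite{WatrousBook}, so one could alternatively just invoke it. First I would recall from~\Cref{def:sym_subspace} that $\vee^t\C^N = \mathsf{span}\{\ket{\psi}^{\otimes t}: \ket{\psi}\in\C^N\}$, and observe that this span is unchanged if we restrict to unit vectors: the zero vector contributes nothing, and for $\ket{\psi}\neq 0$ we have $\ket{\psi}^{\otimes t} = \|\ket{\psi}\|^t\,(\ket{\psi}/\|\ket{\psi}\|)^{\otimes t}$, a scalar multiple of a $t$-fold tensor power of a unit vector. Hence $\vee^t\C^N = \mathsf{span}\{\ket{\psi}^{\otimes t}: \ket{\psi}\in\sphere(\C^N)\}$.

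Next I would invoke finite-dimensionality: by~\Cref{fact:dim-sym}, $d := \dim(\vee^t\C^N) = \binom{N+t-1}{t} < \infty$, so $\vee^t\C^N$ admits a basis $\ket{b_1},\dots,\ket{b_d}$. Since this basis lies in the span of $\{\ket{\psi}^{\otimes t}: \ket{\psi}\in\sphere(\C^N)\}$, each $\ket{b_i}$ is a \emph{finite} linear combination $\ket{b_i} = \sum_{j} c_{i,j}\ket{\psi_{i,j}}^{\otimes t}$ with $\ket{\psi_{i,j}}\in\sphere(\C^N)$. Let $A$ be the finite set of all unit vectors $\ket{\psi_{i,j}}$ occurring across $i=1,\dots,d$. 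Then $\mathsf{span}\{\ket{\psi}^{\otimes t}: \ket{\psi}\in A\}$ contains every $\ket{b_i}$, hence contains all of $\vee^t\C^N$; the reverse inclusion is immediate since each $\ket{\psi}^{\otimes t}$ with $\ket{\psi}\in A$ lies in $\vee^t\C^N$ by definition. Therefore $\vee^t\C^N = \mathsf{span}\{\ket{\psi}^{\otimes t}: \ket{\psi}\in A\}$, as desired.

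There is no real obstacle here: this is the routine fact that every spanning set of a finite-dimensional space contains a finite spanning set, and the only ingredients are the definition of the symmetric subspace and its finite dimension. If a more explicit $A$ were wanted, one could instead take $A$ to be a sufficiently fine finite net of $\sphere(\C^N)$ and argue by a dimension count that the $t$-fold tensor powers of its elements already span, or cite the explicit construction in~\cite{WatrousBook}; but the abstract argument above suffices for the way the lemma is used (e.g.\ in~\Cref{thm:PRI_expansion}).
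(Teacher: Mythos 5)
Your argument is correct, and the paper itself gives no proof of \Cref{lem:multi_span_sym} to compare against — it is stated as a restatement of \cite[Theorem~7.5]{WatrousBook} and used as a black box. Given the paper's \Cref{def:sym_subspace}, which defines $\vee^t\C^N$ as the span of $t$-fold tensor powers, your reduction to unit vectors by homogeneity and your extraction of a finite spanning set from a finite-dimensional span are exactly right (and, given the paper's convention that the ket notation already denotes unit vectors, the first reduction is arguably redundant but harmless). One thing worth being aware of: in~\cite{WatrousBook} the symmetric subspace is defined as the common fixed subspace of all permutation operators $P_\sigma$, and Theorem~7.5 there is the genuinely nontrivial statement that this invariant subspace coincides with $\mathsf{span}\set{\ket{\psi}^{\otimes t} : \ket{\psi}\in\C^N}$. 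The present paper's \Cref{def:sym_subspace} adopts the span characterization as the \emph{definition}, thereby folding the content of Watrous's theorem into the setup and leaving exactly the finiteness extraction you carried out. Under Watrous's own definition your argument would additionally need the spanning step, which his proof supplies via an explicit construction. For how the lemma is used in \Cref{thm:PRI_expansion}, the existence version is all that is needed, so your proposal stands.
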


\begin{fact} \label{fact:random_type_halves_collision}
Let $T$ be sampled uniformly from $[t+1]^{2^{\ell+k}}$ conditioned on $\hamming(T) = t$, where $\setT(T) = \set{x_1||y_1,x_2||y_2,\dots,x_t||y_t}$ and $x_i\in\bit^\ell$, $y_j\in\bit^k$. Then
$\Pr[\exists i\neq j\ s.t.\ x_i = x_j \lor y_i = y_j ] = O(t^2/2^\ell) + O(t^2/2^k)$.
\end{fact}

\begin{proof}[Proof of~\Cref{lem:type_struc}]
Notice that by \Cref{def:type_states}, 
$$\ket{\type_T} = \sqrt{\frac{\prod_{i\in\supp(T)} \freq{T}{i}!}{t!}}\sum_{\substack{\vec{v}\in[N]^t\\ \type(\vec{v}) = T}}\ket{\vec{v}}.$$
Hence, 
$$\ketbra{\type_T}{\type_T} = \frac{\prod_{i\in\supp(T)} \freq{T}{i}!}{t!}\sum_{\substack{\vec{v}\in[N]^t\\ \type(\vec{v}) = T\\ \vec{v'}\in[N]^t\\ \type(\vec{v'}) = T}}\ketbra{\vec{v}}{\vec{v'}}.$$
Note that since $\type(\vec{v'}) = \type(\vec{v})$, $\vec{v'} = \sigma(\vec{v})$ for some $\sigma\in S_{t}$. Notice that the number of $\vec{v'}$ with $\type(\vec{v'}) = T$ is $\frac{t!}{\prod_{i\in T} i!}$. Summing over all permutations $\sigma$, each $\vec{v'}$ is repeated exactly $\prod_{i\in T} i!$ times. 
Hence, 
$$\ketbra{\type_T}{\type_T} = \frac{1}{t!}\sum_{\sigma\in S_t}\sum_{\substack{\vec{v}\in[N]^t\\ \type(\vec{v}) = T}}\ketbra{\vec{v}}{\sigma(\vec{v})}.$$
\end{proof}

\section{Omitted Proofs in~\Cref{sec:defs}} \label{app:proof_defs}
\begin{proof}[Proof of~\Cref{thm:PRIimplyPRFS}]
Let $\pri$ be an $(n,n+m)$-PRI. The construction of the PRFSG $F$ is the following: on input $k\in\bit^\secp$ and $x\in\bit^n$, apply CNOT on $\ket{x}\ket{0^n}$ to get $\ket{x}\ket{x}$, and then output $\ket{x}\otimes\pri_k\ket{x}$. 

We prove the classical-accessible selective security of $F$ via reduction. Suppose there exists a QPT adversary $\adversary$, polynomials $q(\cdot),t(\cdot)$ and a set of indices $\set{x_1,x_2,\dots,x_q}$ where $x_i\in\bit^n$ and $q(\secp) = \poly(\secp)$ such that, w.l.o.g.,
\begin{multline*}
\Pr_{k\gets\bit^\secp}[\adversary_\secp(1^\secp,x_1,\dots,x_q,F(k,x_1)^{\otimes t},\dots,F(k,x_q)^{\otimes t}) = 1] \ge \\
\Pr_{\ket{\theta_1},\dots,\ket{\theta_q}\gets\Haar_{n+m}}[\adversary_\secp(1^\secp,x_1,\dots,x_q,\ket{\theta_1}^{\otimes t},\dots,\ket{\theta_q}^{\otimes t}) = 1]
+ \nu(\secp),
\end{multline*}
where $\nu(\secp)$ is non-negligible. We construct a distinguisher $\distinguisher$ that uses $\adversary$ to break the security of the underlying PRI. Upon receiving queries $\set{x_1,x_2,\dots,x_q}$ from $\adversary$, the distinguisher $\distinguisher$ simply use its oracle access to $\mathcal{O}$ to reply $\bigotimes_{i=1}^q \ket{x_i} \otimes (\mathcal{O}\ket{x_i})^{\otimes t}$. Then $\distinguisher$ outputs whatever $\adversary$ outputs. Hence, the distinguishing advantage of $\distinguisher$ is
\begin{multline*}
\Bigg| 
\Pr_{k\gets\bit^\secp}[\adversary_\secp(1^\secp,x_1,\dots,x_q,F(k,x_1)^{\otimes t},\dots,F(k,x_q)^{\otimes t}) = 1] - \\
\Pr_{\haarisometry}[\adversary_\secp(1^\secp,x_1,\dots,x_q,(\haarisometry\ket{x_1})^{\otimes t},\dots,(\haarisometry\ket{x_q})^{\otimes t}) = 1]
\Bigg|.
\end{multline*}
From~\Cref{lem:haar_perp_to_iid}, it is at least $\nu(\secp) - O(q^3t^2/2^{n+m})$, which is non-negligible. But it contradicts the security of the underlying PRI.
\end{proof}

\ifllncs
\section{Omitted Proofs in~\Cref{sec:construction}}
\label{app:proof_construction}

\subsection{Omitted Proofs in~\Cref{sec:comp_security}}
\label{app:comp_security}

\begin{proof}[Proof of~\Cref{lem:comp_to_info}] 
We prove this by a standard hybrid argument. Consider the following hybrids: 

\begin{itemize}
\item Hybrid $\hybrid_0$: The oracle is $\prfsi_{\secparam}$ defined in~\Cref{fig:prp_prs}.
\item Hybrid $\hybrid_1$: The oracle is the same as $\prfsi_{\secparam}$ except that $\prf$ is replaced by a random function.
\item Hybrid $\hybrid_2$: The oracle is $G_{(f,\pi)}$ defined in~\Cref{fig:info_prp_prs}.
\end{itemize}

\newcommand{\oracle}{{\cal O}}
\begin{claim}
Assuming the quantum-query security of $\prf$, the output distributions of the hybrids $\hybrid_0$ and $\hybrid_1$ are computationally indistinguishable. 
\end{claim}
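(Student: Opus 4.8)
The plan is to give a direct reduction to the quantum-query security of the underlying $\qprf$ ($f_{k_1}$). Suppose, towards a contradiction, that there is a QPT adversary $\adversary$ distinguishing Hybrid $\hybrid_0$ from Hybrid $\hybrid_1$ with non-negligible advantage $\nu(\secparam)$. I would build a reduction $\distinguisher$ that has oracle access to $\oracle$, which is either the $\qprf$ phase-oracle $O_{f_{k_1}}$ (for a random $k_1$) or a random function phase-oracle; note that $\qprf$ security as stated in~\Cref{def:qqprfs} is for the XOR-type oracle, so a preliminary step is to observe that a standard conjugation-by-QFT argument converts the XOR oracle into the phase oracle $O_{f_k}$ used in~\Cref{fig:prp_prs}, so that distinguishing the phase oracle $O_{f_{k_1}}$ from the phase oracle $O_g$ for a random $g$ still contradicts $\qprf$ security. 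The reduction $\distinguisher$ first samples its own $\qprp$ key $k_2 \xleftarrow{\$} \{0,1\}^{\secparam_1}$. Then, upon each query register $\bfX$ from $\adversary$, it appends $\ket{0^m}_\bfZ$, applies $H^{\otimes m}$ to $\bfZ$, forwards $(\bfX,\bfZ)$ to its own oracle $\oracle$, then applies $O_{g_{k_2}}$ (which it can compute itself using $k_2$, needing one query to $g(k_2,\cdot)$ and one to $g^{-1}(k_2,\cdot)$ as noted in the footnote of~\Cref{sec:construction}), and returns the resulting register to $\adversary$. Finally $\distinguisher$ outputs whatever $\adversary$ outputs.

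The key observation is that $\distinguisher$ perfectly simulates Hybrid $\hybrid_0$ when $\oracle = O_{f_{k_1}}$ and perfectly simulates Hybrid $\hybrid_1$ when $\oracle$ is a random function oracle: in both cases the composed operation applied to $\adversary$'s query is exactly "append $\ket{0^m}$, apply $H^{\otimes m}$ to the ancilla, apply the phase oracle, apply $O_{g_{k_2}}$", which matches $\prfsi_\secparam(k,\cdot)$ with $k = k_1\Vert k_2$ in the first case and matches the modified construction with $f_{k_1}$ replaced by a random function in the second case. (The random key $k_1$ and the random function are sampled by the challenger of the $\qprf$ game, exactly as required by the two hybrids; $k_2$ is sampled uniformly by $\distinguisher$, exactly as in both hybrids.) Since each query made by $\adversary$ induces exactly one query by $\distinguisher$ to $\oracle$, the reduction is QPT and makes polynomially many queries, so its distinguishing advantage is $\nu(\secparam)$, contradicting the $\qprf$ security. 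The main (and only) subtlety I anticipate is the bookkeeping to pass between the XOR-oracle formulation of $\qprf$ security and the phase-oracle formulation used in the construction; this is routine but should be stated explicitly. Everything else is a syntactic check that the simulated oracles match the hybrid definitions.
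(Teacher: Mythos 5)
Your proposal is correct and follows essentially the same reduction as the paper's own proof: construct a QPT distinguisher for the $\qprf$ game that simulates the common part of $\prfsi_\secparam$ (appending $\ket{0^m}$, applying $H^{\otimes m}$, then $O_{g_{k_2}}$ with a self-sampled $k_2$) around a single oracle call per query, so that the oracle being $O_{f_{k_1}}$ or a random-function phase oracle reproduces Hybrid $\hybrid_0$ or $\hybrid_1$ exactly. The only difference is that you explicitly flag the conversion from the XOR-oracle form of $\qprf$ security in~\Cref{def:qqprfs} to the $\mathbb{Z}_p$-valued phase oracle $O_{f_{k_1}}$ used in~\Cref{fig:prp_prs}, a routine but worthwhile point that the paper's proof leaves implicit.
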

\begin{proof}
Suppose there exists some QPT algorithm $\adversary$ that distinguishes Hybrid~0 from Hybrid~1 with a non-negligible advantage $\nu$. We'll construct a reduction $\distinguisher$ that given oracle access to $\oracle$ distinguishes whether $\oracle$ is either the $\prf$ oracle or a random function with the same advantage $\nu$ by using $\adversary$. Upon receiving a query $\ket{\psi}$ from $\adversary$, the reduction $\distinguisher$ responds by first applying  $H^{\otimes m}\otimes I_{n}$ on $\ket{\psi}\ket{0^m}$, querying $\oracle$ and finally, computing $g(k_2,\cdot)$, where $k_2$ is sampled uniformly at random from $\{0,1\}^{\secparam_1}$. Since $\distinguisher$ perfectly simulates the distributions of oracles in hybrids $\hybrid_0$ and $\hybrid_1$, it has the same distinguishing advantage as that of $\adversary$. However, this contradicts the post-quantum security of the underlying $\prf$.
\end{proof}

\begin{claim}
Assuming the quantum-query security of $\prp$, the output distributions of the hybrids $\hybrid_1$ and $\hybrid_2$ are computationally indistinguishable. 
\end{claim}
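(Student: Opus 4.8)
The final statement to prove is the claim that, assuming quantum-query security of the $\qprp$, hybrids $\hybrid_1$ and $\hybrid_2$ are computationally indistinguishable. The plan is to proceed by a reduction nearly identical in spirit to the preceding claim about $\hybrid_0$ versus $\hybrid_1$, but now swapping the random \emph{permutation} for the pseudorandom permutation $g_{k_2}$. The subtlety, compared to the PRF case, is that the reduction $\distinguisher$ must internally simulate the random function $f$ that appears in both $\hybrid_1$ and $\hybrid_2$: a truly random function over $\bit^{n+m} \to \mathbb{Z}_p$ cannot be written down explicitly by an efficient algorithm. The fix is to observe that the adversary $\adversary$ makes only $q = \poly(\secp)$ queries to its oracle, hence the construction invokes the random function $f$ only $q$ times, so by \Cref{thm:zha12} a $2q$-wise independent hash function is a perfect stand-in for $f$ against this many queries.

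Concretely, I would define $\distinguisher$ as follows. Given oracle access to $\oracle$, which is either $O_{\sf prp}(k_2,\cdot)$ (together with $O_{{\sf prp}^{-1}}(k_2,\cdot)$, needed to implement $O_{g_{k_2}}$ as noted in the construction) or a genuinely random permutation oracle $O_g$ (with its inverse), the reduction first samples a $2q$-wise independent hash function $h$ mapping $\bit^{n+m}$ to $\mathbb{Z}_p$. Upon receiving the $i$-th query state $\ket{\psi}$ from $\adversary$, the reduction appends an $m$-qubit register in state $\ket{0^m}$, applies $H^{\otimes m}$ to that register, applies the phase oracle $O_h$ (which maps $\ket{x} \mapsto \omega_p^{h(x)}\ket{x}$), and finally applies $O_\oracle$ to permute the computational-basis labels. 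It returns the resulting register to $\adversary$, and at the end outputs whatever $\adversary$ outputs.

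The correctness of the simulation has two parts. If $\oracle$ implements the $\qprp$, then $\distinguisher$ exactly reproduces the oracle of $\hybrid_1$ \emph{except} that the random function has been replaced by $h$; by \Cref{thm:zha12}, since $\adversary$ (hence the simulation) touches the function on at most $q$ queries, the output distribution is identical to $\hybrid_1$. If $\oracle$ implements a uniformly random permutation, then the same argument shows the simulation is distributed identically to $\hybrid_2$. Therefore $\distinguisher$'s distinguishing advantage equals that of $\adversary$, i.e. the non-negligible $\nu$, contradicting the post-quantum security of the $\qprp$. Finally, chaining the two claims via the triangle inequality over $\hybrid_0, \hybrid_1, \hybrid_2$ yields \Cref{lem:comp_to_info}.

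I do not anticipate a serious obstacle here; the one point requiring care is the bookkeeping around implementing $O_{g_{k_2}}$, which consumes one forward and one inverse query per invocation of the construction, so that $q$ queries by $\adversary$ translate into $q$ queries to each of the permutation and the inverse-permutation oracles — well within the polynomial query budget and hence not affecting the reduction. The only other thing to double-check is that $p$ is chosen large enough (super-polynomial) that $2q$-wise independence over the codomain $\mathbb{Z}_p$ is a meaningful and efficiently sampleable object, which is guaranteed by the construction's parameter choices ($p \sim 2^{\secp_1}$ with $q = \poly(\secp)$).
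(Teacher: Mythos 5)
Your proposal is correct and follows essentially the same approach as the paper: the reduction answers $\adversary$'s queries by appending $\ket{+^m}$, applying a phase oracle for a $2q$-wise independent hash $h$ in place of the random function (justified by the Zhandry theorem), and then forwarding to the challenge permutation oracle. Your extra remarks about needing the inverse-permutation oracle to implement $O_{g_{k_2}}$ and about the sampleability of $2q$-wise independent functions over $\mathbb{Z}_p$ are correct bookkeeping points that the paper's proof leaves implicit.
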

\begin{proof}
Similarly, suppose there exists some QPT algorithm $\adversary'$ that distinguishes hybrids $\hybrid_1$ from $\hybrid_2$ with a non-negligible advantage $\nu'$. We'll construct a reduction $\distinguisher'$ that given access to an oracle $\oracle$ distinguishes where $\oracle$ implements $\prp$ or a random permutation with the same advantage $\nu'$ by using $\adversary'$. Suppose the number of queries made by $\adversary'$ is $q = \poly(\secp)$. Since each query to the oracle needs to invoke the random function once, the number of queries to the random function is also $q$. Upon receiving a query $\ket{\psi}$ from $\adversary'$, the reduction $\distinguisher'$ responds by first applying  $H^{\otimes m}\otimes I_{n}$ on $\ket{\psi}\ket{0^m}$, applying a $2q$-wise independent hash function and finally, querying $\oracle$. From~\Cref{thm:zha12}, it follows that 
a $2q$-wise independent hash function perfectly simulates a random function. Thus, $\distinguisher'$ perfectly simulates the distributions of the oracles in the hybrids $\hybrid_1$ and $\hybrid_2$ and thus has the same distinguishing advantage as that of $\adversary'$. However, this contradicts the post-quantum security of the underlying $\prp$. This completes the proof.
\end{proof} 
\end{proof}
\subsection{Omitted Proofs in~\Cref{sec:haar_invariance}}
\label{app:haar_invariance}
\begin{proof}[Proof of~\Cref{lem:comp_to_invar}]
    We start by proving the following lemma,
    \begin{lemma}
        \label{lem:rand_iso_is_invar}
         Let $C_{U}$ be the channel that appends $\ket{0^m}$ and applies the unitary $U$. Let for any input $\rho\in {\cal D}(\C^{2^{nq+\ell}})$, $\rho_{G}$ be as defined in \Cref{lem:comp_to_invar}. Then, for any input $\rho\in {\cal D}(\C^{2^{nq+\ell}})$,
         $$\E_{U\leftarrow\overline{\Haar_{m+n}}}\left[\left(I_{\ell}\otimes C_{U}^{\otimes q}\right)(\rho)\right] = \E_{ U\leftarrow\overline{\Haar_{m+n}}} \left[\left(I_{\ell}\otimes (U)^{\otimes q}\right)(\rho_G)\right].$$
    \end{lemma}
    \begin{proof}
        Notice that $G_{(f,\pi)}$ can be seen as the following operations, 
        \begin{enumerate}
            \item Apply $C_{\ket{0^{m}}}$ where $C_{\ket{0^{m}}}$ appends $\ket{0^{m}}$.
            \item Apply $I\otimes H^{\otimes m}$.
            \item Apply the unitary $O_f$, where the action of $O_f$ on any $\ket{x}$ is $O_{f}\ket{x} = \omega_p^{x}\ket{x}$.
            \item Apply the unitary $O_{\pi}$, where the action of $O_\pi$ on any $\ket{x}$ is $O_{\pi}\ket{x} = \ket{\pi(x)}$.
        \end{enumerate}
        Similarly, $C_U$ can be seen as the following operations, 
        \begin{enumerate}
            \item Apply $C_{\ket{0^{m}}}$ where $C_{\ket{0^{m}}}$ appends $\ket{0^{m}}$.
            \item Apply $U$.
        \end{enumerate}
        Hence, if we look at 
        $$\xi = \E_{ U\leftarrow\overline{\Haar_{m+n}}} \left[\left(I_{\ell}\otimes (U)^{\otimes q}\right)(\rho_G)\right].$$
        Then from the definition of $\rho_G$,
        $$\xi = \E_{\substack{ U\leftarrow\overline{\Haar_{m+n}}\\ (f,\pi)\leftarrow (\mathcal{F}_{2^{n+m},p},S_{2^{n+m}})}} \left[\left(I_{\ell}\otimes (UG_{(f,\pi)})^{\otimes q}\right)(\rho)\right],$$
        From the definition of $G_{(f,\pi)}$, 
        $$\xi = \E_{\substack{U\leftarrow\overline{\Haar_{m+n}}\\ (f,\pi)\leftarrow (\mathcal{F}_{2^{n+m},p},S_{2^{n+m}})}} \left[\left(I_{\ell}\otimes (UO_{\pi}O_{f}(I\otimes H^{\otimes m})C_{\ket{0^m}})^{\otimes q}\right)(\rho)\right].$$
        By unitary invariance of the Haar measure, we get 
        $$\xi = \E_{\substack{U\leftarrow\overline{\Haar_{m+n}}\\ (f,\pi)\leftarrow (\mathcal{F}_{2^{n+m},p},S_{2^{n+m}})}} \left[\left(I_{\ell}\otimes (UC_{\ket{0^m}})^{\otimes q}\right)(\rho)\right],$$
        or 
        $$\xi = \E_{U\leftarrow\overline{\Haar_{m+n}}} \left[\left(I_{\ell}\otimes (C_U)^{\otimes q}\right)(\rho)\right].$$
        Hence, 
        $$\E_{U\leftarrow\overline{\Haar_{m+n}}}\left[\left(I_{\ell}\otimes C_{U}^{\otimes q}\right)(\rho)\right] = \E_{ U\leftarrow\overline{\Haar_{m+n}}} \left[\left(I_{\ell}\otimes (U)^{\otimes q}\right)(\rho_G)\right].$$
    \end{proof}
    Now we prove the above theorem using the Hybrid argument. Let $C_{\prfsi_k}$ be the channel that applies $\prfsi(k,\cdot)$ as given in \Cref{fig:prp_prs}.
    \paragraph{Hybrid 1.} Sample a random key $k\in\{0,1\}^{\secparam}$ and apply $I_{\ell}\otimes C_{\prfsi_k}^{\otimes q}$ on $\rho$ and output the result.
    \paragraph{Hybrid 2.} Sample a random function $f\leftarrow\mathcal{F}_{2^{n+m},p}$ and $\pi\leftarrow S_{2^{n+m}}$. Apply $I_{\ell}\otimes G_{(f,\pi)}^{\otimes q}$ on $\rho$ and output the result.
    \paragraph{Hybrid  3.} Sample a unitary $U\leftarrow\overline{\Haar_{n+m}}$. Apply $I_{\ell}\otimes C_{U}^{\otimes q}$ on $\rho$ and output the result. \\
    The output of Hybrid 1 and Hybrid 2 is computationally indistinguishable by \Cref{lem:comp_to_info}. 
    \noindent Notice that by \Cref{lem:rand_iso_is_invar}, the output of Hybrid 3 is $$\E_{ U\leftarrow\overline{\Haar_{m+n}}} \left[\left(I_{\ell}\otimes (U)^{\otimes q}\right)(\rho_G)\right].$$ Then by assumption, the output of Hybrid 3 is negligibly close to the output of Hybrid 2. 
    Hence, combining the above results, we get that \Cref{fig:prp_prs} is a $\qclass$-secure Pseudorandom Isometry.
\end{proof}
\subsection{Omitted Proofs in~\Cref{sec:invariance_rho_uni}}
\label{app:invariance_rho_uni}

\subsection{Omitted Proofs in~\Cref{sec:closeness_to_rho_uni}}
\label{app:closeness_to_rho_uni}

\else
\fi

\ifllncs
\section{Omitted Proofs in~\Cref{sec:applications}} \label{app:proof_applications}
 
\subsection{Omitted Proofs in~\Cref{sec:qmacs}} 
\label{app:proof_qmac}

\subsection{Omitted Proofs in~\Cref{sec:prs_extension}}
\label{app:proof_extension}

\else
\fi

\else

\fi
\end{document}